\definecolor{links}{RGB}{0,0,128}
\newtheorem{proposition}{Proposition}[section]
\newtheorem{theorem}{Theorem}[section]
\newtheorem{definition}{Definition}[section]
\newtheorem{lemma}{Lemma}[section]
\newtheorem{corollary}{Corollary}[section]
\newtheorem{remark}{Remark}[section]
\definecolor{todocolor}{RGB}{220,20,60}
\def\sD{{\scriptsize D}}
\newcommand{\sL}{\mbox{ {\scriptsize L}}}
\newcommand{\sLM}{\mbox{ {\scriptsize LM}}}
\newcommand{\sM}{\mbox{ {\scriptsize M}}}
\newcommand{\sE}{\mbox{{\scriptsize E}}}
\def\loss{\ell}
\def\leqas{\stackrel{a.s.}{\leq}}
\def\Exp{\mathbb{E}}
\def\eqas{\stackrel{a.s.}{=}}
\newcommand{\ind}[1]{\mathds{1}_{#1}}
\newcommand{\ds}{\displaystyle}
\newcommand\Tau{\mathrm{T}}
\DeclareMathOperator*{\argmin}{arg\,min\,}
\def\btau{\boldsymbol{\tau}}
\title{Generalized extremiles and risk measures of distorted random variables}
\newcommand{\footremember}[2]{%
    \footnote{#2}
    \newcounter{#1}
    \setcounter{#1}{\value{footnote}}%
}
\newcommand{\footrecall}[1]{%
    \footnotemark[\value{#1}]%
}
\definecolor{dgreen}{rgb}{0.,0.6,0.}
\def\boxit#1{\vbox{\hrule\hbox{\vrule\kern6pt
          \vbox{\kern6pt#1\kern6pt}\kern6pt\vrule}\hrule}}
\author{
Dieter Debrauwer\footremember{KUL}{Department of Mathematics, KU Leuven, Leuven, Belgium},
Irène Gijbels\footnote{Corresponding Author}\, \footrecall{KUL} ,
Klaus Herrmann\footremember{UdeS}{D\'{e}partement de math\'{e}matiques, Universit\'{e} de Sherbrooke, Sherbrooke, Canada}
}
\begin{document}

\maketitle

\begin{abstract}
Quantiles, expectiles and extremiles can be seen as concepts defined via an optimization problem, where this optimization problem is driven by two important ingredients: the loss function as well as a distributional weight function. 
This leads to the formulation of a general class of functionals that contains next to the above concepts many interesting quantities, including also a subclass of distortion risks. The focus of the paper is on developing estimators for such functionals and to establish asymptotic consistency and asymptotic normality of these estimators. The advantage of the general framework is that it allows application to a very broad range of concepts, providing as such estimation tools and tools for statistical inference (for example for construction of confidence intervals) for all involved concepts. After developing the theory for the general functional we apply it to various settings, illustrating the broad applicability. 
In a real data example the developed tools are used in an analysis of  natural disasters. 
\end{abstract}
\textbf{Keywords:}
Distortion risk measure, expected loss minimization, extremile, nonparametric estimation, risk measure, statistical functional.

\section{Introduction}\label{sec: Introduction} 

There is a vast literature on quantiles and expectiles, and their applications in a variety of settings, in particular as risk measures. See \cite{ArtznerEtAl1999}, \cite{Taylor2008}, \cite{BelliniEtAl2017},  among others.  For a random variable $X$, its $\delta$th quantile (with $\delta \in (0,1)$) can be defined as 
\begin{equation}\label{Quantile1}
\argmin_{c\in \mathbb{R}} \Exp[\ell_{\delta}(X,c)] 
\qquad \mbox{with} \quad \ell_{\delta}(x,c) = \left|\delta-1_{\{x\le c\}}\right| |x-c| ,  
\end{equation}
see for example \cite{Ferguson1967} and \cite{KoenkerAndBassett1978}. Herein $1_{A}$ denotes the indicator function on a set $A$, i.e.,  $1_{A}=1$, if $A$ holds, and zero otherwise. 
Expectiles, introduced and first studied by \cite{AignerEtAl1976} and  \cite{NeweyAndPowel1987}, are defined as follows: the $\delta$th expectile (with $\delta\in (0,1)$) of $X$ is given by
\begin{equation}\label{Expectile}
\argmin_{c\in \mathbb{R}} \Exp[\ell_{\delta}(X,c)] 
\qquad \mbox{with} \quad \ell_{\delta}(x,c) = \left|\delta-1_{\{x\le c\}}\right| |x-c|^2  . 
\end{equation}
Quantiles can also be viewed via an alternative optimization problem, namely the $\tau$th quantile (with $\tau\in (0, 1)$) of $X$ (with cumulative distribution function $F_X$) is obtained via 
\begin{equation}\label{Quantile2}
\argmin_{c\in \mathbb{R}} \Exp[J_{\tau} (F_X(X)) \ell_{\delta}(X,c)] 
\qquad \mbox{with} \quad \ell_{\delta}(x,c) =  |x-c| , 
\end{equation}
and  
\begin{equation} \label{densitytauExtremiles} 
J_{\tau}(u) = \begin{cases}
       s(\tau)(1-u)^{s(\tau)-1} &\text{ if } 0<\tau <  \frac{1}{2}\\
       r(\tau) u^{r(\tau)-1} &\text{ if } \frac{1}{2}\leq \tau<1 , 
    \end{cases} 
\end{equation}
with $r(\tau)=s(1-\tau)=\ln(1/2)/ \ln(\tau)$. 
This alternative formulation of quantiles can be found in  
\cite{NeweyAndPowel1987}. Consequently \eqref{Quantile1} and \eqref{Quantile2} (taking in the latter $\tau=\delta$)  
provides two approaches to obtain the $\delta$th quantile of a random variable $X$. Recently \cite{DaouiaAndGijbels2019} studied the class of extremiles, where the $\tau$th extremile of $X$ is obtained via 
\begin{equation}\label{Extremile}
\argmin_{c\in \mathbb{R}} \Exp[J_{\tau} (F_X(X)) \ell_{\delta}(X,c)] 
\qquad \mbox{with} \quad \ell_{\delta}(x,c) =  (x-c)^2 \quad \mbox{and} \quad J_{\tau} \, \mbox{as in \eqref{densitytauExtremiles}} . 
\end{equation}
Of interest is further to note that the function $J_{\tau}$ is in fact the derivative of a function $K_{\tau}$, which is a specific cumulative distribution function defined on the interval $[0,1]$ (see expression \eqref{DtauExtremiles} and Figure \ref{fig: KandJFunctionPlots}).  

The above observations inspire to look at a more general form of an optimization problem, and to study functionals defined via 
\[
\argmin_{c \in \mathbb{R}}  \mathbb{E}[d_{\tau}(F_X(X))\, \ell_{\delta}(X,c)], 
\]
where $d_{\tau}$ is the density associated to a cumulative distribution function $D_{\tau}$, such  that $D_{\tau}(0)=0$ and $D_{\tau}(1)=1$.  
As shown in Section \ref{sec:argmin:Exp}, this minimization problem can also be looked upon as $
\argmin_{c \in \mathbb{R}} \mathbb{E}[\ell_{\delta}(X_{\sD_{\tau}},c)] $, where $X_{\sD_{\tau}}$ denotes the random variable which has as  cumulative distribution  function the transformed function $D_{\tau}\circ F_X$, and the expectation is taken with respect to this distributional distorted (shortly distorted) random variable $X_{\sD_{\tau}}$. 
Moreover, with $\ell_{\delta}(x,c)=(x-c)^2$, i.e. square loss, this leads to 
$\argmin_{c \in \mathbb{R}}  \mathbb{E}[\ell_{\delta}(X_{\sD_{\tau}},c)] 
=  \mathbb{E} \left [ X_{\sD_{\tau}} \right ]$ and subsequently  $\min_{c \in \mathbb{R}} \mathbb{E}[\ell_{\delta}(X_{\sD_{\tau}},c)]= \mbox{Var} (X_{\sD_{\tau}})$.

In the special case of square loss, the functional thus relates to the mean of the transformed random variable $X_{\sD_{\tau}}$. This is linked  to the probability distortions studied by \cite{LiuSchiedWang2021}, as a particular setting of distributional transforms. The paper investigates probabilistic properties of these distributional transforms, and discusses (new) risk measures generated from such distributional transforms. In Section \ref{sec:risks of distorted rvs} we rely on some results of \cite{LiuSchiedWang2021} to study risk measures of the distorted random variable $X_{\sD_{\tau}}$.
Distortion risk measures have found widespread application in the literature, for example, in finance, economics, insurance; see \cite{Wang1996}, \cite{WangEtAl1997}, \cite{FollmerAndSchied2002}, and \cite{ChernyAndMadan2009}, 
among others.
Our general functional involves a loss function $\ell_{\delta}(x,c)$. There are numerous examples of loss functions appearing in the literature. Tables  \ref{LossFunctions1A} and \ref{LossFunctions2A}, as well as Section \ref{sec: ExamplesLossFunctions} review some of these.  

As our main contribution we introduce and study a general class of functionals, which cover and generalize these in \eqref{Quantile1}, \eqref{Expectile}, \eqref{Quantile2}, and \eqref{Extremile}, as  special cases. This generalization allows an interesting link to distortion risks as well as to risks of non-distortion type. We call this general class of functionals \emph{generalized extremiles}. The main contribution of this paper consists of developing estimators for generalized extremiles, and establish consistency and asymptotic normality of these estimators, with explicit expression for the asymptotic variance. This general theory can then be applied to a variety of population target quantities. The asymptotic normality result allows to construct asymptotic confidence intervals of estimators in the general framework.

Specific optimization problems capture  a number of situations studied in actuarial science, financial mathematics or statistics which are a special case of our framework.
For example, our framework applies if a functional $\rho$ can be defined as the expected loss minimizer
\begin{align}\label{eq:argmin:Exp}
	\rho(Y) = \argmin_{c\in \mathbb{R}} \,  \Exp[\loss(Y,c)]
\end{align}
for a suitable loss function $\loss \colon \mathbb{R}\times \mathbb{R}\to\mathbb{R}$ (where we surpressed the index $\delta$), and a random variable $Y$. 
In this case $\rho(X_{\sD_{\tau}})$ can be the mean, quantile or expectile of the distorted random variable $X_{\sD_{\tau}}$. 
Equally, one could for example consider functionals defined as the expected loss minimum
\begin{align}\label{eq:min:Exp}
	\rho(Y) = \min_{c\in\mathbb{R}} \Exp[\loss(Y,c)],
\end{align}
in which case $\rho(X_{\sD_{\tau}})$ can for example be the variance or expected shortfall of the distorted random variable $X_{\sD_{\tau}}$. See  \cite{EmbrechtsMaoWangWang2021} for details and a discussion 
connecting risk measures of the form in \eqref{eq:argmin:Exp} and \eqref{eq:min:Exp}.

While we present some general theory for risk functionals applied to distorted random variables $X_{\sD_{\tau}}$ in Section~\ref{sec:risks of distorted rvs}, our study mainly focusses on the specific case of expected loss minimizers as in \eqref{eq:argmin:Exp},  due to their ubiquitous nature in the literature.

The paper is organized as follows. In Section \ref{sec:argmin:Exp} we introduce the generalized extremile functionals, establish some basic properties, and explain the link with distortion risks. Section \ref{sec:estimation} deals with estimation of the broad class of generalized extremiles, and proves consistency  and asymptotic normality results. These are then applied to various settings, revealing the breath of the  studied functionals. Numerical studies and real data analysis are given in Sections \ref{sec: simulations} and \ref{sec: realdata}. The proofs of the main theoretical results are provided in the Appendices. Some further discussions are given in a final section.
Some additional information and explanations are in the Supplementary Material.

\section{Generalized extremiles and their properties }\label{sec:argmin:Exp}
In this section we introduce the concept of generalized extremile, study its basic probabilistic properties, highlight some issues toward solving the associated optimization problem, and review and discuss some loss functions. In a final subsection we discuss how generalized extremiles link up to distortion risk measures. In Section \ref{sec:estimation} we then turn to statistical estimation of generalized extremiles.

\subsection{Generalized extremiles and distorted random variables}
The class of functionals that we study is based on a pair of key ingredients,
denoted by $(D_{\btau}, \ell_{\delta})$, consisting of a distribution function $D_{\btau}$ and a loss function $ \ell_{\delta}$. Their formal definitions are provided below. 
\begin{definition}\label{def: LossFunction}
Consider a set $\Delta \subset \mathbb{R}$. For every $\delta \in \Delta$ we denote by $\ell_{\delta}$ the loss function 
$$\ell_{\delta}(x,c) : \mathbb{R} \times \mathbb{R} \to \mathbb{R}: (x,c)\mapsto \ell_{\delta}(x,c).$$
\end{definition}
\noindent
Where appropriate we use $\ell_{\delta}^{\prime}(x,c):=\frac{\mathrm{d}}{\mathrm{d} c}\ell_{\delta}(x,c)$ to denote the partial derivative of $\ell_{\delta}$ with respect to $c$.

\begin{definition}\label{def: DistribFunction}
    Let $\Tau$ be some subset  of $\mathbb{R}^k$ where $k$ is some positive integer. For each $\btau \in \Tau$ denote by $D_{\btau}$ an absolutely continuous cumulative distribution function supported on $[0,1]$, with density $d_{\btau}$. We denote by $\mathbb{D}$ for the set of all such $D_{\btau}$.
\end{definition}

\begin{definition}[Generalized extremile] \label{def: generalized extremiles}
Consider a cumulative distribution function $D_{\btau} \in \mathbb{D}$, and a loss function $\ell_{\delta}$. For a random variable $X\sim F_X$ we define the generalized extremile  of index $(\btau,\delta)$, denoted $e_{\btau,\delta}(X;D_{\btau},\ell_{\delta})$, as 
\begin{equation}
e_{\btau,\delta}(X;D_{\btau},\ell_{\delta}) \in \argmin_{c \in \mathbb{R}} \mathbb{E}[d_{\btau}(F_X(X))(\ell_{\delta}(X,c)-\ell_{\delta}(X,0))].
\label{GenExtremiles}  
\end{equation}
\end{definition}

Taking $D_{\tau}$, with $\btau=\tau 
\in (0,1)$,  equal to
\begin{equation}
K_{\tau}(u)=\begin{cases}
        1-(1-u)^{s(\tau)} &\text{ if } 0<\tau <  \frac{1}{2}\\
        u^{r(\tau)} &\text{ if } \frac{1}{2}\leq \tau<1 , 
    \end{cases} \label{DtauExtremiles}  
\end{equation}
with $r(\tau)=s(1-\tau)=\ln(1/2)/ \ln(\tau)$, in minimization problem  \eqref{GenExtremiles},  leads on the one hand to the concept of $\tau$th quantiles when taking as loss function $\ell_{\delta}(x,c)=|x-c|$ (see \eqref{Quantile2}), and on the other hand to the concept of $\tau$th extremile with squared loss  
$\ell_{\delta}(x,c)=(x-c)^2$ (see  \eqref{Extremile}). We  refer to the class of functionals in Definition \ref{def: generalized extremiles} as \textit{generalized extremiles}, since it steps away from the specific choice of 
the pair $(D_{\btau}, \ell_{\delta})$ in quantiles and extremiles, but having its inspiration strongly influenced by these two concepts. 

Note that the choice of the distribution function $D_{\tau}$ in \eqref{DtauExtremiles} already reveals why we allow  this distribution function to depend on a parameter (vector) $\btau$. From the viewpoint on quantiles expressed in \eqref{Quantile1} it is also clear why we allow the loss function to depend on a parameter $\delta$. Figure \ref{fig: KandJFunctionPlots} depicts the function $K_{\tau}$ and its derivative $d_{\tau}$ for various values of $\tau$.
\begin{figure}[htbp]
    \centering
    \includegraphics[width=\textwidth]{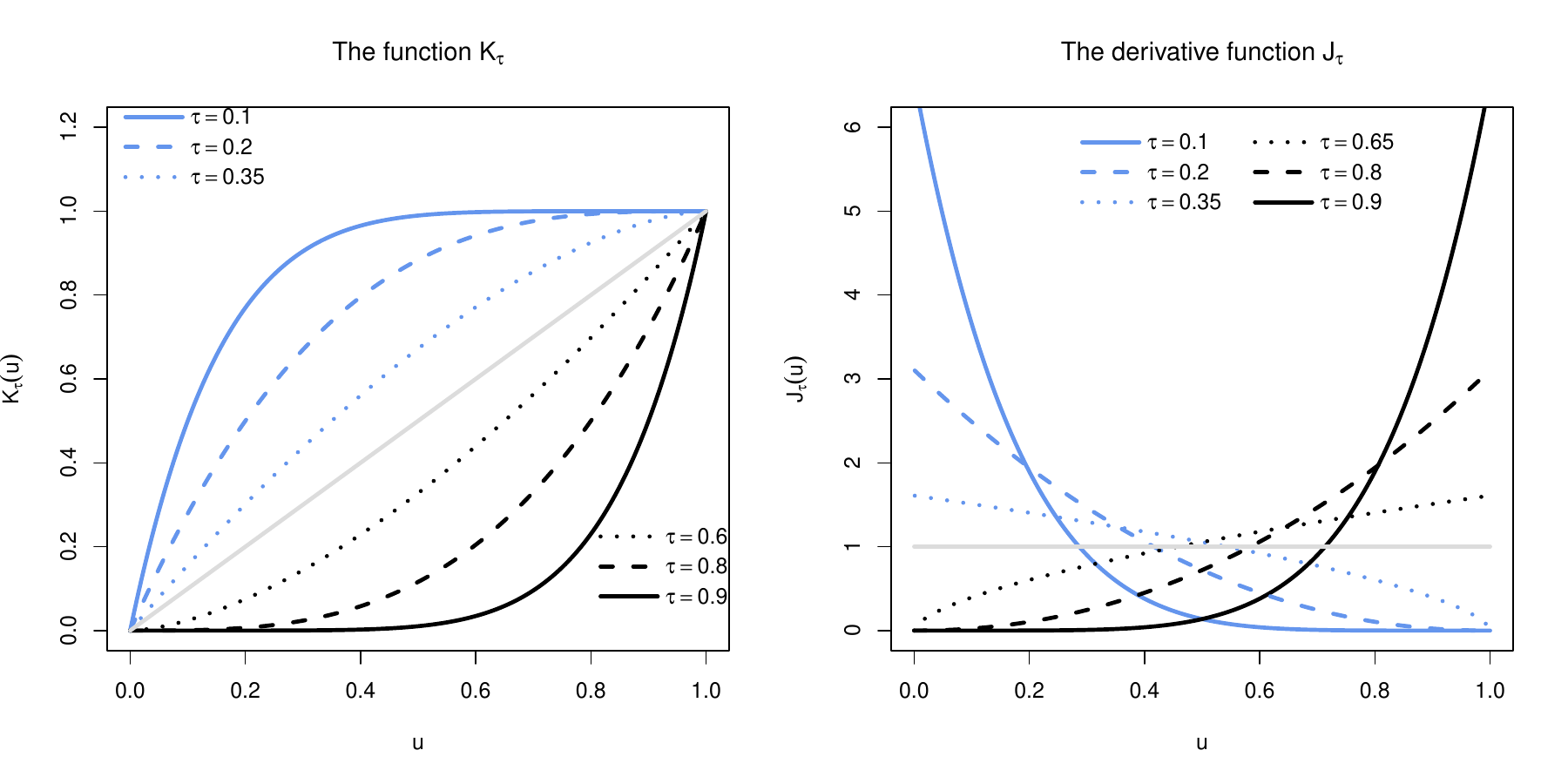}
    \vspace*{-1.2 cm}
    
    \noindent
    \caption{The functions $K_{\tau}$ in \eqref{DtauExtremiles} and its derivative $d_{\tau}$ for various values of $\tau$.}
    \label{fig: KandJFunctionPlots}
\end{figure}

\begin{table}[htb]
\caption{Commonly used nonnegative loss functions and their basic properties.}
\label{LossFunctions1A}
\vspace*{0.2 cm}

\noindent
\begin{tabular}{l|l|l|l|l|}
\hline
 & loss function  &  sign symmetric & shift invariant & positive homogeneous  \\ 
 & &  & & of degree $k$  \\[1.2 ex]
 & $\ell_{\delta}(x, c)$ &  $\ell_{\delta}(-x, c)= \ell_{\delta}(x, -c)$
 & $\ell_{\delta}(x+b, c+b)$
 & $\ell_{\delta}(ax, ac)= a^k\ell_{\delta}(x, c)$
 \\
 & & & $\quad = \ell_{\delta}(x, c)$ & \\
 \hline
 & & & & \\
 1 & $|x-c|$ & & &   \\
  & absolute value loss & yes & yes & yes, of degree 1   \\[1.6 ex] 
 2 & $|x-c|^{p}$ with $p>0$ & & &   \\
  & power $p$ absolute  loss & yes & yes & yes, of degree $p$   \\[1.6 ex]   
  3 & $ \left|\delta-1_{\{x\le c\}}\right| |x-c| $ & & &  \\
  & quantile  loss & no & yes & yes, of degree 1   \\[1.6 ex]
  4 & $ \left|\delta-1_{\{x\le c\}}\right| |x-c|^2  $ & & &   \\
  & expectile  loss & no & yes &  yes, of degree 2 \\[1.6 ex]
   5 & $\ell_{\delta}(x,c)$  &  yes & yes  &  no \\
 &  $=\left \{
  \begin{array}{lll}
  0.5 (x-c)^2 & \mbox{if} & |x-c|\le \delta \\
  \delta ( |x-c| - 0.5 \delta)& \mbox{if} & |x-c|> \delta 
  \end{array}
  \right . $ & & &\\
  & Huber loss function & &  &  \\[1.6 ex]
  6 & $\ell_\delta(x,c)=(c-x)^2 e^{\delta x}$ & no & no & no \\
  & Esscher loss (see \cite{VANHEERWAARDEN1989261})& & &  \\[1.6 ex]
 \hline 
\end{tabular}
\end{table}

\begin{table}[h!]
\caption{Commonly used real-valued loss functions and their basic properties.}
\label{LossFunctions2A}
\vspace*{0.2 cm}

\noindent
\begin{tabular}{l|l|l|l|l|}
\hline
 & loss function  &  sign symmetric & shift invariant & positive homogeneous  \\ 
 & &  & & of degree $k$  \\[1.2 ex]
 & $\ell_{\delta}(x, c)$ &  $\ell_{\delta}(-x, c)= \ell_{\delta}(x, -c)$
 & $\ell_{\delta}(x+b, c+b)$
 & $\ell_{\delta}(ax, ac)= a^k\ell_{\delta}(x, c)$
 \\
 & & & $\quad = \ell_{\delta}(x, c)$ & \\
 \hline
 & & & & \\
 G1 & $\ell_{\delta}(x,c)=-c (x^2-x)+c^2/2$ & no & no & no   \\[1.6 ex] 
 G2 & $\ell_{\delta}(x,c)=-c \lvert x - b \rvert^{\delta} +c^2/2$,  & & &   \\
 & (see e.g.~\cite{Hurlimann2006}) & & &\\[1.2 ex]
 & \hspace*{2.2 cm}  with  $b \in \mathbb{R}$ & no & no & no \\[1.6 ex] 
  G3 & $\ell_{\delta}(x,c)=1_{\{c<x\}}(x-c)+cx$ & no & no & no  \\[1.6 ex]
  G4 & $\ell_{\delta}(x,c)=\frac{c^2}{2}-(1+\delta) cx $   & yes & no & yes, of degree 2  \\
  & (see \cite{Dickson2005}) & & & \\[1.6 ex]
 \hline 
\end{tabular}
\end{table}

Note that in Definition \ref{def: LossFunction} we allow a loss function to take values in  $\mathbb{R}$. Several commonly used  loss functions are restricted to take  values in  $[0,+\infty)$. Table \ref{LossFunctions1A} lists some of these.  Table \ref{LossFunctions2A} provides some examples of real-valued loss functions. Throughout the paper we allow for a general loss function, and state explicitly when we restrict it to be nonnegative. 

When defining generalized extremiles in Definition \ref{def: generalized extremiles} we require the distribution function $D_{\btau}$ to be absolutely continuous admitting a density function $d_{\btau}$. The latter can be relaxed, as we illustrate next.

Recall that $X_{\sD_{\btau}}$ denotes the random variable with cumulative distribution function $D_{\btau} \circ F_X$, i.e., $X_{\sD_{\btau}} \sim D_{\btau} \circ F_X$. 
\begin{proposition}\label{prop:equivalentDefinition}
Consider  $X_{\sD_{\btau}}  \sim D_{\btau} \circ F_X$. We then have  
$$\argmin_{c\in \mathbb{R}} \mathbb{E}[d_{\btau}(F_X(X))(\ell_{\delta}(X,c)-\ell_{\delta}(X,0))]=\argmin_{c\in \mathbb{R}}\ \mathbb{E}[\ell_{\delta}(X_{\sD_{\btau}},c)-\ell_{\delta} (X_{\sD_{\btau}},0)],$$
where the expectation on the right-hand side is taken with respect to the random variable $X_{\sD_{\btau}}$. 
\end{proposition}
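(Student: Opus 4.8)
The plan is to prove the stronger pointwise statement that for every $c\in\mathbb{R}$,
$$\mathbb{E}\bigl[d_{\btau}(F_X(X))(\ell_{\delta}(X,c)-\ell_{\delta}(X,0))\bigr]=\mathbb{E}\bigl[\ell_{\delta}(X_{\sD_{\btau}},c)-\ell_{\delta}(X_{\sD_{\btau}},0)\bigr],$$
with both sides finite for exactly the same set of $c$; equality of the two $\argmin$ sets is then immediate. So the whole content is a change-of-variables identity, which I would obtain by identifying the measure against which $\ell_{\delta}(\cdot,c)-\ell_{\delta}(\cdot,0)$ is integrated on the left.

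First I would introduce the Borel measure $\mu(B):=\mathbb{E}[d_{\btau}(F_X(X))\mathds{1}_{\{X\in B\}}]=\int_B d_{\btau}(F_X(x))\,F_X(\mathrm{d}x)$, which is nonnegative since $d_{\btau}\ge 0$, and show it equals the law of $X_{\sD_{\btau}}$. Both being measures on $\mathbb{R}$, this reduces to matching distribution functions: for each $t\in\mathbb{R}$,
$$\mu\bigl((-\infty,t]\bigr)=\int_{(-\infty,t]}d_{\btau}(F_X(x))\,F_X(\mathrm{d}x)=\int_0^{F_X(t)}d_{\btau}(u)\,\mathrm{d}u=D_{\btau}(F_X(t)),$$
where the middle step is the image-measure formula for the map $x\mapsto F_X(x)$ (which pushes $F_X(\mathrm{d}x)$ forward to Lebesgue measure on $(0,1)$) and the last step is the fundamental theorem of calculus applied to the absolutely continuous $D_{\btau}$. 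Since $D_{\btau}\circ F_X$ is by definition the distribution function of $X_{\sD_{\btau}}$, this gives $\mu=\mathcal{L}(X_{\sD_{\btau}})$. Equivalently and more concretely, one can use the probability integral transform $(F_X(X),X)\overset{d}{=}(U,F_X^{-1}(U))$ with $U\sim\mathrm{Unif}(0,1)$, followed by the substitution $v=D_{\btau}(u)$, to rewrite $\mathbb{E}[d_{\btau}(F_X(X))h(X)]$ as $\mathbb{E}[h(F_X^{-1}(D_{\btau}^{-1}(V)))]$ with $V\sim\mathrm{Unif}(0,1)$ and $F_X^{-1}(D_{\btau}^{-1}(V))\sim D_{\btau}\circ F_X$.

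Having identified $\mu=\mathcal{L}(X_{\sD_{\btau}})$, I would apply the change-of-variables theorem to the Borel function $h(x)=\ell_{\delta}(x,c)-\ell_{\delta}(x,0)$, giving $\mathbb{E}[d_{\btau}(F_X(X))h(X)]=\int_{\mathbb{R}}h\,\mathrm{d}\mu=\mathbb{E}[h(X_{\sD_{\btau}})]$ for each $c$. This is the desired pointwise identity, and since the left side is defined and finite exactly when the right side is, the two objectives have the same set of points of finiteness, the same values there, and hence the same minimizers. The role of the subtracted term $\ell_{\delta}(X,0)$ is only to guarantee integrability; the same identity holds for $\mathbb{E}[d_{\btau}(F_X(X))\ell_{\delta}(X,c)]$ whenever it is well defined, which is what lets the right-hand formulation be stated without requiring $D_{\btau}$ to possess a density.

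The one delicate point is the middle equality in the display above: re-weighting $F_X(\mathrm{d}x)$ by $d_{\btau}\circ F_X$ reproduces the measure with distribution function $D_{\btau}\circ F_X$ only when $F_X$ is continuous. If $F_X$ has an atom at $x_0$, the left side attaches to it mass $d_{\btau}(F_X(x_0))(F_X(x_0)-F_X(x_0^-))$ while the right side attaches $D_{\btau}(F_X(x_0))-D_{\btau}(F_X(x_0^-))$, and these differ in general. I would therefore either state the result for absolutely continuous $F_X$ — in line with the regularity assumptions used later for the estimation theory — or invoke the randomized distributional transform to reduce to that case. Everything else (measurability of $h$, and the transfer of integrability across the identity) is automatic once the image measure has been pinned down.
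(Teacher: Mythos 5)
Your argument is essentially the paper's own proof: the paper passes from $d_{\btau}(F_X(x))\,\mathrm{d}F_X(x)$ to $\mathrm{d}D_{\btau}(F_X(x))$ in one line and then reads off the right-hand side as an expectation under the law of $X_{\sD_{\btau}}$, which is exactly the measure identification you carry out (with the image-measure formula and the fundamental theorem of calculus making the middle step explicit). Your closing caveat about atoms of $F_X$ is correct and is a point where you are more careful than the paper: the identity $d_{\btau}(F_X(x))\,\mathrm{d}F_X(x)=\mathrm{d}(D_{\btau}\circ F_X)(x)$ does fail at an atom (mass $d_{\btau}(F_X(x_0))(F_X(x_0)-F_X(x_0^-))$ versus $D_{\btau}(F_X(x_0))-D_{\btau}(F_X(x_0^-))$), so the proposition as stated implicitly requires $F_X$ continuous, an assumption the paper only imposes later in the estimation sections.
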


\begin{proof}
The proof is straightforward by noting that 
\begin{align*}
    \mathbb{E}[d_{\btau}(F_X(X))(\ell_{\delta}(X,c)-\ell_{\delta}(X,0)]&=\int_{\mathbb{R}} d_{\btau}(F_X(x))(\ell_{\delta}(x,c)-\ell_{\delta}(x,0))\mathrm{d} F_X(x)\\
    &=\int_{\mathbb{R}} (\ell_{\delta}(x,c)-\ell_{\delta}(x,0))\mathrm{d} D_{\btau}(F_X(x))\\
    &=\mathbb{E}[\ell_{\delta}(X_{\sD_{\btau}},c)-\ell_{\delta}(X_{\sD_{\btau}},0)].
\end{align*}
\end{proof}
An alternative would be to define generalized extremiles as stipulated in 
Proposition \ref{prop:equivalentDefinition}, via 
\[
 \argmin_{c\in \mathbb{R}}\ \mathbb{E}[\ell_{\delta}(X_{\sD_{\btau}},c)-\ell_{\delta} (X_{\sD_{\btau}},0)] ,
\]
as such not requiring the existing of the density $d_{\btau}$. 
Although this would be a slightly broader definition, it would make it more tedious to establish (asymptotic) properties when dealing with statistical inference for the studied functionals. The latter is the main goal of the paper. 

In Tables \ref{LossFunctions1B} and \ref{LossFunctions2B} (see later) we indicate  for each of the  loss functions in Tables \ref{LossFunctions1A} and \ref{LossFunctions2A}  this alternative view, with focus on the loss function applied to the random variable $X_{\sD_{\btau}}$. 
Table \ref{DistrFunctions} lists a selection of distribution (distortion) functions (with their associated densities).

\begin{table}[h!]
\caption{Commonly used distribution functions/distortion functions and their basic properties.}
\label{DistrFunctions}
\vspace*{0.2 cm}

\noindent
\hspace*{-1.6 cm}
\begin{tabular}{l|l|l|l|}
\hline
 & distrib. function  or distortion function &  density & bounded density  \\
 & $D_{\btau}(u)\quad $ or $\quad g_{\btau}(u)=1-D_{\btau}(1-u)$ & $d_{\btau}(u)$ &  \\
 & \hspace*{2.2 cm} and $D_{\btau}(u)=1-g_{\tau}(1-u)$  & &  \\[1.2 ex]
 \hline
 & & &  \\
1 & $D_{\btau}(u)=u$ & $d_{\tau}(u)=1$ & yes   \\
& uniform distr. & &  \\[1.6 ex]
2 & $D_{\btau}(u)=K_{\tau}(u)$ as in \eqref{DtauExtremiles} & $d_{\tau}(u)=J_{\tau}(u)$ as in \eqref{densitytauExtremiles}   & yes   
\\[2 ex]
3 & $D_{\btau}(u) =[B(a,b)]^{-1} \int_{0}^{u} t^{a-1} (1-t)^{b-1} \mathrm{d}t  $ & $d_{\tau}(u) = [B(a,b)]^{-1} u^{a-1} (1-u)^{b-1}$ & possibly unbounded\\[1.2 ex]
&\hspace*{4.2 cm}  $ a, b >0 $ & & \hspace*{2 mm} in 0 and/or 1 \\
& Beta distr. & &  bounded if $a, b >1$ \\[2 ex]
4 & $D_{\btau}(u) =1-(1-u^a)^b \quad a, b >0 $ & $d_{\btau}(u)=ab u^{a-1}(1-u^a)^{b-1}$&  possibly unbounded  \\
& Kumaraswamy distr. & &  \hspace*{2 mm} in 0 and/or 1  \\
& & & bounded if $a, b >1$  \\[2 ex]
5& $g_{\tau}(u)= u (1-\tau)^{-1} 1_{\{ u \le 1 - \tau \}} + 1_{\{ u >1 - \tau \}} \quad \tau \in [0,1] $ & $d_{\tau}(u) = (1-\tau)^{-1}1_{\{ u > \tau \}}$ & yes, for $\tau \in [0, 1)$  \\
& or $D_{\tau} (u)=(1-\tau)^{-1}(u-\tau)1{\{u \ge \tau\}}$  & & \\
& \hspace*{0.4 cm} or shortly $D_{\tau} \sim \mbox{U}[\tau,1]$ & & \\
& expected shortfall & & \\[2 ex]
6 & $g_{\btau}(u)= \Phi \left (\Phi^{-1} (u) + \tau \right) \quad \tau \in \mathbb{R} $ & $d_{\btau}(u)= \ds \frac{\phi \left ( \Phi^{-1} (1-u)+ \tau \right )}{\phi \left ( \Phi^{-1} (1-u)\right )}$ & for $\tau=0$ on $[0, 1]$\\
& & & unbounded in 1 for $\tau>0$ \\
& Wang Transform risk measure  & & unbounded in 0 for $\tau<0$  \\[2 ex]
7 & $g_{\btau}(u)= u^{1 / \tau} \quad \tau \ge 1  $ & $d_{\tau}(u) = \tau^{-1} (1-u)^{\{\tau^{-1} -1\}}$&  \\
& Proportional Hazard transform  & & unbounded in 1 \\[2 ex]
8 & $D_{\tau}(u) = u^{\tau +1}$, with $\tau \in \mathbb{R}_+$    & $d_{\btau}(u)=  (\tau +1) u^{\tau} $ & bounded \\
& {\sc minvar} (see \cite{ChernyAndMadan2009})   & &   \\[2 ex]
9 & $D_{\tau}(u) = 1 -(1-u)^{1/(\tau +1)}$, with $\tau \in \mathbb{R}_+$ &
$d_{\btau}(u)=  (\tau +1)^{-1} (1-u)^{- \tau/(\tau +1)}$  & unbounded in 1,  \\
& {\sc maxvar}  (see \cite{ChernyAndMadan2009}) & & unless $\tau=0$   \\
& & & \hspace*{2 mm} (but then $\mbox{U}[0,1]$)  \\[2 ex]
10 &  $D_{\tau}(u) = \left (1 -(1-u)^{1/(\tau +1)}\right )^{\tau+1}$, with $\tau \in \mathbb{R}_+$ & $d_{\btau}(u) $   & unbounded in 1  \\
& {\sc minmaxvar} (see \cite{ChernyAndMadan2009})  & \hspace*{2 mm} $= \left (1 -(1-u)^{1/(\tau +1)}\right )^{\tau}
(1-u)^{-\tau/(\tau+1)}$  &   \\[2 ex]
11 &$D_{\tau}(u) =1 -\left (1-u^{\tau +1}\right )^{1/(\tau +1)}$, with $\tau \in \mathbb{R}_+$& $d_{\tau}(u) =\left (1-u^{\tau +1}\right )^{-\tau/(\tau +1)}u^{\tau}$ & unbounded in 1   \\
& {\sc maxminvar} (see \cite{ChernyAndMadan2009})  & &  \\[2 ex]
12 & $ g_{\btau}(u)= G\left ( G^{-1}(u) + \tau \right ) \quad \tau \ge 0   $ &
$d_{\tau}(u) =\ds \frac{ g\left ( G^{-1} (1-u) + \tau \right )}{ g\left ( G^{-1} (1-u) \right )}$  & yes, on $(0,1)$   \\
& $G$ cumul. distr. function, with $G'=g$ log-concave & & possibly unbounded \\
& Junike's distortion function (see \cite{Junike2019})   & &  in 0 or 1, depending on $g$ \\[2 ex]
13& $g_{\tau}(u)= \tau^{-1} C(u, \tau)$ $ \quad \tau \in (0, 1]$ & 
$d_{\tau}(u) = \tau^{-1} \frac{\partial C}{\partial u} (1-u; \tau) $ & yes  \\
& $C(\cdot, \cdot)$ bivariate copula  & &  \\
& (see \cite{YinAndZhu2018}) & & \\[2 ex]
 \hline 
\end{tabular}
\end{table}

\subsection{Generalized extremiles: properties}\label{sec: GenExtremilesProperties}
Before stating some probabilistic properties of generalized extremiles, we first define some properties of loss functions, as well as the notion of a dual of a cumulative distribution function $D_{\btau}$. 

\begin{definition}\label{def: propertiesLossFunction}
   Consider a loss function $\ell_{\delta}: \mathbb{R}\times \mathbb{R}  \to \mathbb{R}$. It is called 
   \begin{enumerate}
       \item sign symmetric if $\ell_{\delta}(-x,c)=\ell_{\delta}(x,-c) \text{ for all } x,c \in \mathbb{R}$;
        \item shift invariant if $\ell_{\delta}(x+b,c+b)=\ell_{\delta}(x,c)$ for all $x,c,b \in \mathbb{R}$;
        \item positive homogeneous of degree $k>0$ if
        $\ell_{\delta}(ax,ac)=a^k \ell_{\delta}(x,c)$ for all $a>0$ and $x,c \in \mathbb{R}$.
   \end{enumerate}
\end{definition}

In Tables \ref{LossFunctions1A} and \ref{LossFunctions2A} we indicate for each of the listed loss functions whether they possess the property of sign symmetry, shift invariance  and positive homogeneity.\\

For a given cumulative distribution function $D_{\btau}$ in the set $\mathbb{D}$ it is of interest to consider its dual. 
\begin{definition}\label{def: DualDistribFunction} 
    Given $D_{\btau} \in \mathbb{D}$ we define its dual $\widetilde{D}_{\btau}$ as $\widetilde{D}_{\btau}(t)=1-D_{\btau}(1-t)$.
\end{definition}
It is clear that the dual function is again part of the set $\mathbb{D}$. Furthermore, the dual of $\widetilde{D}_{\btau}$ is equal to $D_{\btau}$. \\

The proofs of all propositions in this section can be found in 
 Appendix \ref{App: ProofsPropositions}.
Proposition \ref{prop: finiteness} provides sufficient conditions under which the expectation in Definition \ref{def: generalized extremiles} is finite. The proof is straightforward and omitted.  
\begin{proposition}\label{prop: finiteness}
    If the cumulative distribution function $D_{\btau}$ is differentiable with bounded density $d_{\btau}$ and $\mathbb{E}[\lvert \ell_{\delta}(X,c)-\ell_{\delta}(X,0)) \rvert ]<\infty$ for all $c\in \mathbb{R}$ then 
    $$\lvert \mathbb{E}[d_{\btau}(F_X(X))(\ell_{\delta}(X,c)-\ell_{\delta}(X,0))] \rvert < \infty $$
    for all $c \in \mathbb{R}$.
\end{proposition}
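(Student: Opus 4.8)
The plan is to bound the absolute value of the expectation by pulling the absolute value inside the integral, using the boundedness of the density $d_{\btau}$, and then invoking the hypothesis that $\mathbb{E}[\lvert \ell_{\delta}(X,c)-\ell_{\delta}(X,0)\rvert]<\infty$. Concretely, write $M:=\sup_{u\in[0,1]} d_{\btau}(u)<\infty$, which exists by the assumed boundedness of the density. Since $F_X(X)$ takes values in $[0,1]$ almost surely, we have $0\le d_{\btau}(F_X(X))\le M$ almost surely (using that a density is nonnegative).

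Next I would estimate
\[
\lvert \mathbb{E}[d_{\btau}(F_X(X))(\ell_{\delta}(X,c)-\ell_{\delta}(X,0))] \rvert
\le \mathbb{E}\big[ d_{\btau}(F_X(X))\, \lvert \ell_{\delta}(X,c)-\ell_{\delta}(X,0)\rvert \big]
\le M\, \mathbb{E}\big[ \lvert \ell_{\delta}(X,c)-\ell_{\delta}(X,0)\rvert \big],
\]
where the first inequality is Jensen's inequality (or simply $\lvert \mathbb{E}[Z]\rvert \le \mathbb{E}[\lvert Z\rvert]$) and the second uses the pointwise bound on the density. The right-hand side is finite for every $c\in\mathbb{R}$ by hypothesis, which gives the claim.

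There is essentially no real obstacle here; the only point requiring a word of care is that the random variable inside the expectation is integrable in the first place, so that $\mathbb{E}[\,\cdot\,]$ is well-defined rather than merely formally written — but this is immediate from the same chain of inequalities, since the final bound $M\,\mathbb{E}[\lvert \ell_{\delta}(X,c)-\ell_{\delta}(X,0)\rvert]$ is finite, so $d_{\btau}(F_X(X))(\ell_{\delta}(X,c)-\ell_{\delta}(X,0))$ is absolutely integrable. This is precisely why the authors call the proof straightforward and omit it.
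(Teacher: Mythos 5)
Your proof is correct and is precisely the straightforward argument the paper has in mind when it omits the proof: bound $\lvert \mathbb{E}[Z]\rvert \le \mathbb{E}[\lvert Z\rvert]$, use $0\le d_{\btau}(F_X(X))\le M$ almost surely, and invoke the integrability hypothesis. Nothing is missing, and your remark that the same chain of inequalities also shows the expectation is well defined is the right (and only) point of care.
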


\begin{remark}\label{Remark1}  The condition on the loss function (and $X$) in Proposition \ref{prop: finiteness} is very mild, and fulfilled for several commonly used loss functions. In case of the absolute value loss $\ell(x, c)=\lvert x-c\rvert $ the reverse triangle inequality yields $\lvert \lvert x-c \rvert-\lvert x\lvert \lvert  \leq \lvert c\rvert $ and hence $\mathbb{E}[\lvert \ell(X, c)-\ell(X, 0)\rvert ]<\infty$ for all random variables $X$. In case of the square loss $\ell(x, c)=(x-c)^2$ we have $\lvert(x-c)^2-x^2\rvert =\lvert c^2-2 c x\rvert \leq c^2+2 \lvert  c \rvert \cdot \lvert  x \rvert $ and hence $\mathbb{E}[\lvert \ell(X, c)-\ell(X, 0) \rvert ]<\infty$ for all random variables $X$ with $\mathbb{E}[\lvert X \rvert]< \infty$.
\end{remark}

From Remark \ref{Remark1} it also becomes clear why the term $\ell_{\delta}(X, 0)$ is included in \eqref{GenExtremiles}, whereas it does not depend on the argument $c$. Due to the inclusion of that term we have, for example, that quantiles obtained by taking $D_{\tau}=K_{\tau}$ and $\ell_{\delta}(x,c)=|x-c|$, exist for all random variables. Similarly, expectiles of $X$ only require the finiteness of the first absolute moment of $X$.\\

Propositions \ref{prop:signSymmetry}, \ref{prop:transformations} and \ref{prop:kTransformation} establish the behaviour of generalized extremiles when the underlying random variable is transformated either by taking the negative of it (Proposition \ref{prop:signSymmetry}), or by an affine transformation  (Proposition \ref{prop:transformations}), or by applying a strictly monotone function to it (Proposition  \ref{prop:kTransformation}). From these propositions we can conclude that generalized extremiles have similar properties as quantiles and expectiles, among others.

\begin{proposition}[Sign symmetry]\label{prop:signSymmetry}
Denote by $\ell_{\delta}$ a sign symmetric loss function. If the distribution function $F_X$ of $X$ is continuous then we have 
\[
e_{\btau,\delta}\left(X ; D_{\btau}, \ell_{\delta} \right)=-e_{\btau,\delta}(-X ; \widetilde{D}_{\btau}, \ell_{\delta} ) \qquad \text{ and hence also }
\qquad 
    e_{\btau,\delta}(X ; \widetilde{D}_{\btau}, \ell_{\delta} )=-e_{\btau,\delta}(-X ; D_{\btau}, \ell_{\delta}).
    \]
\end{proposition}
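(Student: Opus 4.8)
The plan is to use the equivalent formulation from Proposition~\ref{prop:equivalentDefinition} and reduce the statement to a change of variables in the optimization problem. By Proposition~\ref{prop:equivalentDefinition}, $e_{\btau,\delta}(X;D_{\btau},\ell_{\delta})$ is the minimizer over $c\in\mathbb{R}$ of $\mathbb{E}[\ell_{\delta}(X_{\sD_{\btau}},c)-\ell_{\delta}(X_{\sD_{\btau}},0)]$, where $X_{\sD_{\btau}}\sim D_{\btau}\circ F_X$. The key observation is that if $F_X$ is continuous, then the random variable $-X$ has continuous distribution function $F_{-X}(t)=1-F_X(-t^-)=1-F_X(-t)$, and composing with $\widetilde{D}_{\btau}$ gives $\widetilde{D}_{\btau}(F_{-X}(t)) = 1-D_{\btau}(1-F_{-X}(t)) = 1-D_{\btau}(F_X(-t))$. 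Since $X_{\sD_{\btau}}\sim D_{\btau}\circ F_X$, one checks that $-X_{\sD_{\btau}}$ has distribution function $t\mapsto 1-D_{\btau}(F_X(-t))$ (again using continuity to avoid atom issues at the point of evaluation), which is exactly the distribution function of $(-X)_{\widetilde{D}_{\btau}}$. Hence $(-X)_{\widetilde{D}_{\btau}}$ and $-X_{\sD_{\btau}}$ have the same distribution.

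From there the argument is a short computation. Write $Y=X_{\sD_{\btau}}$, so that $(-X)_{\widetilde{D}_{\btau}}\stackrel{d}{=}-Y$. Then
\begin{align*}
e_{\btau,\delta}(-X;\widetilde{D}_{\btau},\ell_{\delta})
&\in \argmin_{c\in\mathbb{R}} \mathbb{E}[\ell_{\delta}((-X)_{\widetilde{D}_{\btau}},c)-\ell_{\delta}((-X)_{\widetilde{D}_{\btau}},0)]\\
&= \argmin_{c\in\mathbb{R}} \mathbb{E}[\ell_{\delta}(-Y,c)-\ell_{\delta}(-Y,0)].
\end{align*}
Now invoke sign symmetry of $\ell_{\delta}$, namely $\ell_{\delta}(-x,c)=\ell_{\delta}(x,-c)$, to rewrite $\ell_{\delta}(-Y,c)=\ell_{\delta}(Y,-c)$ and $\ell_{\delta}(-Y,0)=\ell_{\delta}(Y,0)$, so the objective becomes $\mathbb{E}[\ell_{\delta}(Y,-c)-\ell_{\delta}(Y,0)]$. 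Substituting $c\mapsto -c$ in the minimization shows that $c^\star$ minimizes $\mathbb{E}[\ell_{\delta}(-Y,c)-\ell_{\delta}(-Y,0)]$ if and only if $-c^\star$ minimizes $\mathbb{E}[\ell_{\delta}(Y,c)-\ell_{\delta}(Y,0)]$, i.e. $-c^\star = e_{\btau,\delta}(X;D_{\btau},\ell_{\delta})$. This yields $e_{\btau,\delta}(X;D_{\btau},\ell_{\delta})=-e_{\btau,\delta}(-X;\widetilde{D}_{\btau},\ell_{\delta})$. The second identity follows by applying the first with $X$ replaced by $-X$ and $D_{\btau}$ replaced by $\widetilde{D}_{\btau}$, together with the fact (noted after Definition~\ref{def: DualDistribFunction}) that the dual of $\widetilde{D}_{\btau}$ is $D_{\btau}$.

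I expect the main obstacle to be the careful verification of the distributional identity $(-X)_{\widetilde{D}_{\btau}}\stackrel{d}{=}-X_{\sD_{\btau}}$, where continuity of $F_X$ is essential: one needs $F_X$ continuous to identify $F_{-X}(t)$ with $1-F_X(-t)$ without left-limit corrections, and to ensure $D_{\btau}\circ F_X$ is itself a continuous distribution function so that its reflection has the clean form above. A secondary but minor point is the well-definedness of the $\argmin$: since Definition~\ref{def: generalized extremiles} only asks for an element of the argmin set, the substitution-invariance argument should be phrased as an equality of argmin sets (or of selected elements), which requires no extra assumptions beyond those already in play. The remaining steps---the sign-symmetry rewrite and the $c\mapsto -c$ substitution---are purely formal.
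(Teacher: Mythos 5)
Your proof is correct and rests on the same two ingredients as the paper's: the reflection identity for the distribution function of $-X$ (valid because $F_X$ is continuous) and the sign-symmetry rewrite $\ell_{\delta}(-y,c)=\ell_{\delta}(y,-c)$ followed by the substitution $c\mapsto -c$. The packaging differs slightly: the paper works directly with the weighted objective $L_{\btau,\delta}(c;Y,D_{\btau},\ell_{\delta})=\mathbb{E}[d_{\btau}(F_Y(Y))(\ell_{\delta}(Y,c)-\ell_{\delta}(Y,0))]$ and shows $L_{\btau,\delta}(c;-X,\widetilde{D}_{\btau},\ell_{\delta})=L_{\btau,\delta}(-c;X,D_{\btau},\ell_{\delta})$ using $\widetilde{d}_{\btau}(t)=d_{\btau}(1-t)$ and $F_X(y)=1-F_{-X}(-y)$, whereas you route through Proposition \ref{prop:equivalentDefinition} and first establish the distributional identity $(-X)_{\widetilde{D}_{\btau}}\stackrel{d}{=}-X_{\sD_{\btau}}$; this is a clean and slightly more conceptual reorganization of the same computation, and both are valid. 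One small slip: applying the first identity with \emph{both} $X\mapsto -X$ and $D_{\btau}\mapsto\widetilde{D}_{\btau}$ simply reproduces the first identity (the two reflections cancel); to obtain the second identity you should substitute only one of the two, e.g.\ replace $D_{\btau}$ by $\widetilde{D}_{\btau}$ while keeping $X$, and then use that the dual of $\widetilde{D}_{\btau}$ is $D_{\btau}$. This is a bookkeeping matter, not a gap in the argument.
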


\begin{proposition}\label{prop:transformations}
Consider scalars $a>0$ and $b \in \mathbb{R}$, and a random variable $X \sim F_X$ such that $\lvert e_{\btau,\delta}(X;D_{\btau},\ell_{\delta}) \rvert < \infty $.   If the loss function $\ell_{\delta}$ is shift invariant, positive homogeneous of degree $k>0$ and if
\begin{equation}\label{eq:vglInTranslationInvarianceProp}
    \lvert \mathbb{E}[ d_{\btau}(F_X(X))(\ell_{\delta}(aX,0)-\ell_{\delta}(aX+b,0))  ]\rvert< \infty
\end{equation}
then
$$e_{\btau,\delta}(aX+b;D_{\btau},\ell_{\delta})=a e_{\btau,\delta}(X;D_{\btau},\ell_{\delta})+b.$$
If in addition the loss function $\ell_{\delta}$ is sign symmetric,   $F_X$ is continuous and  (\ref{eq:vglInTranslationInvarianceProp}) holds for $a<0$ and $b \in \mathbb{R}$ then
\[
  e_{\btau,\delta}(aX+b;D_{\btau},\ell_{\delta})=a e_{\btau,\delta}(X;\widetilde{D}_{\btau},\ell_{\delta})+b \qquad \text{ and hence also }\qquad  e_{\btau,\delta}(aX+b;\widetilde{D}_{\btau},\ell_{\delta})=a e_{\btau,\delta}(X;D_{\btau},\ell_{\delta})+b.
  \]  
\end{proposition}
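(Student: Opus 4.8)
The plan is to prove both identities by an explicit change of the optimization variable in the minimization problem \eqref{GenExtremiles} defining $e_{\btau,\delta}(aX+b;\cdot,\ell_{\delta})$, using the three structural properties of $\ell_{\delta}$ to rescale the objective into a positive multiple of, plus a constant added to, the objective defining $e_{\btau,\delta}(X;\cdot,\ell_{\delta})$.

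For the first identity ($a>0$) the key, distribution-free observation is that $F_{aX+b}(y)=F_X\big((y-b)/a\big)$ for all $y$, whence $F_{aX+b}(aX+b)=F_X(X)$ almost surely and $d_{\btau}\big(F_{aX+b}(aX+b)\big)=d_{\btau}\big(F_X(X)\big)$ a.s.\ (no continuity of $F_X$ is needed). Writing $Y=aX+b$ and putting $c=ac'+b$, which is a bijection of $\mathbb{R}$, I would split the normalized loss through the auxiliary value $c=b$:
\[
\ell_{\delta}(Y,c)-\ell_{\delta}(Y,0)=\big[\ell_{\delta}(Y,c)-\ell_{\delta}(Y,b)\big]+\big[\ell_{\delta}(Y,b)-\ell_{\delta}(Y,0)\big].
\]
Shift invariance gives $\ell_{\delta}(aX+b,ac'+b)=\ell_{\delta}(aX,ac')$ and $\ell_{\delta}(aX+b,b)=\ell_{\delta}(aX,0)$, and positive homogeneity of degree $k$ (with $a>0$) converts these into $a^{k}\ell_{\delta}(X,c')$ and $a^{k}\ell_{\delta}(X,0)=\ell_{\delta}(aX,0)$; hence the first bracket equals $a^{k}\big(\ell_{\delta}(X,c')-\ell_{\delta}(X,0)\big)$ and the second equals $\ell_{\delta}(aX,0)-\ell_{\delta}(aX+b,0)$, which does not depend on $c$. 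Taking $\mathbb{E}[d_{\btau}(F_X(X))\,\cdot\,]$ then yields
\[
\mathbb{E}\big[d_{\btau}(F_Y(Y))\big(\ell_{\delta}(Y,c)-\ell_{\delta}(Y,0)\big)\big]=a^{k}\,\mathbb{E}\big[d_{\btau}(F_X(X))\big(\ell_{\delta}(X,c')-\ell_{\delta}(X,0)\big)\big]+C_{a,b},
\]
with $C_{a,b}=\mathbb{E}[d_{\btau}(F_X(X))(\ell_{\delta}(aX,0)-\ell_{\delta}(aX+b,0))]$ finite by exactly the hypothesis \eqref{eq:vglInTranslationInvarianceProp} and independent of $c$. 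Since $a^{k}>0$, the minimizers in $c$ and in $c'$ correspond under $c=ac'+b$, and since $e_{\btau,\delta}(X;D_{\btau},\ell_{\delta})$ is a finite minimizer on the right, $a\,e_{\btau,\delta}(X;D_{\btau},\ell_{\delta})+b$ is a minimizer on the left, which is the first claim.

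For the second identity I would repeat the computation in the case $a<0$, with two changes. First, using continuity of $F_X$ one now has $F_{aX+b}(y)=1-F_X\big((y-b)/a\big)$, hence $F_{aX+b}(aX+b)=1-F_X(X)$ a.s., so $d_{\btau}\big(F_{aX+b}(aX+b)\big)=d_{\btau}\big(1-F_X(X)\big)=\widetilde d_{\btau}(F_X(X))$, the density of the dual $\widetilde D_{\btau}$; this is precisely where the dual enters the conclusion. Second, in the homogeneity step one first applies sign symmetry, $\ell_{\delta}(aX,ac')=\ell_{\delta}(-|a|X,-|a|c')=\ell_{\delta}(|a|X,|a|c')$, and then positive homogeneity with the positive scalar $|a|$, obtaining $|a|^{k}\ell_{\delta}(X,c')$ (and likewise $\ell_{\delta}(aX,0)=|a|^{k}\ell_{\delta}(X,0)$). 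The same decomposition through $c=b$ then shows that the objective for $aX+b$ equals $|a|^{k}$ times $\mathbb{E}[\widetilde d_{\btau}(F_X(X))(\ell_{\delta}(X,c')-\ell_{\delta}(X,0))]$ — which is exactly the objective defining $e_{\btau,\delta}(X;\widetilde D_{\btau},\ell_{\delta})$ — plus a $c$-independent constant, finite by the $a<0$ instance of \eqref{eq:vglInTranslationInvarianceProp}. As $c=ac'+b$ with $a=-|a|$, the minimizer is $a\,e_{\btau,\delta}(X;\widetilde D_{\btau},\ell_{\delta})+b$, giving the displayed identity (this presumes $|e_{\btau,\delta}(X;\widetilde D_{\btau},\ell_{\delta})|<\infty$, which is in any case needed for the right-hand side to be meaningful); the "hence also" follows by applying the identity with $\widetilde D_{\btau}$ in place of $D_{\btau}$, using $\widetilde{\widetilde D}_{\btau}=D_{\btau}$.

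The main — and essentially the only delicate — point is the treatment of the normalization term $\ell_{\delta}(Y,0)$: unlike $\ell_{\delta}(Y,c)$, it cannot be reduced to $\ell_{\delta}(X,\cdot)$ by shift invariance and homogeneity alone, which is exactly why one inserts the auxiliary value $c=b$ and why the extra integrability condition \eqref{eq:vglInTranslationInvarianceProp} is imposed — it guarantees finiteness of the leftover additive constant. Everything else is routine: the change of variable $c=ac'+b$, the algebra with the three loss properties, and checking that the required hypotheses carry over to the case $a<0$.
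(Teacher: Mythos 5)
Your proposal is correct and follows essentially the same route as the paper: the same use of $F_{aX+b}(aX+b)=F_X(X)$, the same shift-invariance/homogeneity algebra producing $a^{k}\big(\ell_{\delta}(X,c')-\ell_{\delta}(X,0)\big)$ plus the $c$-independent remainder $\ell_{\delta}(aX,0)-\ell_{\delta}(aX+b,0)$ controlled by \eqref{eq:vglInTranslationInvarianceProp}, and the same bijective change of variable $c=ac'+b$. The only difference is presentational: for $a<0$ the paper reduces to the first part applied to $-X$ and then cites Proposition \ref{prop:signSymmetry}, whereas you redo the computation directly (using $F_{aX+b}(aX+b)=1-F_X(X)$ and sign symmetry), which amounts to inlining the proof of that proposition.
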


\begin{proposition}[Monotone transformations]\label{prop:kTransformation}
    Suppose $\alpha:\mathbb{R}\to \mathbb{R}$ is a strictly increasing function and hence measurable function, then
    $$e_{\btau,\delta}(\alpha(X);D_{\btau},\ell_{\delta})=e_{\btau,\delta}(X;D_{\btau},\widetilde{\ell}_{\delta}),$$
    where $\widetilde{\ell}_{\delta}(x,c)=\ell_{\delta}(\alpha(x),c)$.
    If $\alpha:\mathbb{R}\to \mathbb{R}$ is strictly decreasing and $X$ is a random variable with continuous $F_X$  then
    $$e_{\btau,\delta}(\alpha(X);D_{\btau},\ell_{\delta})=e_{\btau,\delta}(X;\widetilde{D}_{\btau},\widetilde{\ell}_{\delta}).$$
\end{proposition}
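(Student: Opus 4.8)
The plan is to show that in both cases the random objective function minimised on the left-hand side is, as a function of $c$, literally identical to the one minimised on the right-hand side; once the two objectives agree pointwise in $c$, their $\argmin$ sets coincide and there is nothing further to prove. The only ingredients needed are a change of variables inside the expectation, the behaviour of the distribution function under $\alpha$, and the elementary identity relating $d_{\btau}$ to the density of the dual $\widetilde{D}_{\btau}$.

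I would first treat the strictly increasing case. Put $Y=\alpha(X)$; since a monotone function is Borel measurable, $Y$ is a random variable, and strict monotonicity gives $\{\alpha(X)\le\alpha(x)\}=\{X\le x\}$ for every $x\in\mathbb{R}$, hence $F_Y(\alpha(x))=F_X(x)$, so that $d_{\btau}(F_Y(Y))=d_{\btau}(F_X(X))$ almost surely. Writing out the expectation that defines $e_{\btau,\delta}(\alpha(X);D_{\btau},\ell_{\delta})$ and substituting $Y=\alpha(X)$ gives
\[
\mathbb{E}\big[d_{\btau}(F_Y(Y))\big(\ell_{\delta}(Y,c)-\ell_{\delta}(Y,0)\big)\big]
=\mathbb{E}\big[d_{\btau}(F_X(X))\big(\widetilde{\ell}_{\delta}(X,c)-\widetilde{\ell}_{\delta}(X,0)\big)\big],
\]
with $\widetilde{\ell}_{\delta}(x,c)=\ell_{\delta}(\alpha(x),c)$; the right-hand side is precisely the objective defining $e_{\btau,\delta}(X;D_{\btau},\widetilde{\ell}_{\delta})$, and taking $\argmin$ over $c$ yields the first identity.

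For the strictly decreasing case, assume $F_X$ continuous and set $Y=\alpha(X)$ again. Now $\{\alpha(X)\le\alpha(x)\}=\{X\ge x\}$, so $F_Y(\alpha(x))=\mathbb{P}(X\ge x)=1-F_X(x^-)=1-F_X(x)$, where the last equality uses continuity of $F_X$. By Definition \ref{def: DualDistribFunction} we have $\widetilde{D}_{\btau}(u)=1-D_{\btau}(1-u)$, which is differentiable with density $\widetilde{d}_{\btau}(u)=d_{\btau}(1-u)$; combining this with $F_Y(\alpha(x))=1-F_X(x)$ gives $d_{\btau}(F_Y(Y))=d_{\btau}(1-F_X(X))=\widetilde{d}_{\btau}(F_X(X))$ almost surely. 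The same substitution then turns the objective defining $e_{\btau,\delta}(\alpha(X);D_{\btau},\ell_{\delta})$ into $\mathbb{E}\big[\widetilde{d}_{\btau}(F_X(X))\big(\widetilde{\ell}_{\delta}(X,c)-\widetilde{\ell}_{\delta}(X,0)\big)\big]$, which is the objective defining $e_{\btau,\delta}(X;\widetilde{D}_{\btau},\widetilde{\ell}_{\delta})$; passing to $\argmin$ over $c$ gives the claim.

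There is no deep obstacle here, only a few bookkeeping points that must not be skipped: one must invoke \emph{strict} monotonicity (not mere monotonicity) to identify the events $\{\alpha(X)\le\alpha(x)\}$ with $\{X\le x\}$ or $\{X\ge x\}$; the continuity of $F_X$ is used in the decreasing case precisely to replace the left limit $F_X(x^-)$ by $F_X(x)$; and one needs the identity $\widetilde{d}_{\btau}(u)=d_{\btau}(1-u)$ for the density of the dual. I also expect it is worth stating explicitly that the argument needs no integrability assumption on $\ell_{\delta}(X,c)-\ell_{\delta}(X,0)$: whatever (possibly infinite) value the two objectives take at a given $c$, it is the same number, so the two minimisation problems are the same problem.
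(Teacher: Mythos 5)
Your proposal is correct and follows essentially the same route as the paper's proof: in both cases one identifies the objective $\mathbb{E}[d_{\btau}(F_{\alpha(X)}(\alpha(X)))(\ell_{\delta}(\alpha(X),c)-\ell_{\delta}(\alpha(X),0))]$ pointwise in $c$ with the objective defining the right-hand side, using $F_{\alpha(X)}(\alpha(x))=F_X(x)$ in the increasing case and $F_{\alpha(X)}(\alpha(x))=1-F_X(x)$ together with $\widetilde{d}_{\btau}(u)=d_{\btau}(1-u)$ in the decreasing case. Your explicit bookkeeping of where strict monotonicity and continuity of $F_X$ enter matches the paper's stated justifications.
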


When a random variable $X$ is symmetric and has finite expectation it is possible to express this mean in terms of differences or sums of generalized extremiles, as stated in Proposition \ref{prop:MeanSymmetry}. 

\begin{proposition}[Mean symmetry] \label{prop:MeanSymmetry} Denote by $\ell_{\delta}$ a shift invariant loss function. If $\mathbb{E}[X]$ is finite and $X$ is symmetric, i.e., $X \stackrel{d}{=}-X$, then
$$
\frac{1}{2}(e_{\btau,\delta}(X ; D_{\btau}, \ell_{\delta})-e_{\btau,\delta}(-X ; D_{\btau}, \ell_{\delta}))=\mathbb{E}[X]
$$
If in addition $\ell_{\delta}$ is a sign symmetric loss function and the distribution function $F_X$ of $X$ is continuous then
$$\frac{1}{2}\left( e_{\btau,\delta}(X;D_{\btau},\ell_{\delta}) + e_{\btau,\delta}(X;\widetilde{D}_{\btau},\ell_{\delta}) \right)=\mathbb{E}[X].$$
\end{proposition}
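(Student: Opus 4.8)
The plan is to reduce the first identity to the translation property of generalized extremiles (Proposition~\ref{prop:transformations}) together with the fact that $e_{\btau,\delta}(\,\cdot\,;D_{\btau},\ell_{\delta})$ depends on its first argument only through its distribution, and then to obtain the second identity from the first by a single application of Proposition~\ref{prop:signSymmetry}. Throughout, $e_{\btau,\delta}$ is read as a fixed (consistently chosen) element of the argmin set, exactly as in Propositions~\ref{prop:signSymmetry}--\ref{prop:kTransformation}.

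For the first identity, set $\mu=\mathbb{E}[X]$. Symmetry of $X$ about its mean, i.e.\ $X-\mu\stackrel{d}{=}-(X-\mu)$ — which in the literal formulation $X\stackrel{d}{=}-X$ simply means $\mu=0$ — is equivalent to $-X\stackrel{d}{=}X-2\mu$. Because the objective $c\mapsto\mathbb{E}[d_{\btau}(F_X(X))(\ell_{\delta}(X,c)-\ell_{\delta}(X,0))]$ in Definition~\ref{def: generalized extremiles} is a functional of the law of $X$ alone, equality in distribution transfers to the minimizers, so $e_{\btau,\delta}(-X;D_{\btau},\ell_{\delta})=e_{\btau,\delta}(X-2\mu;D_{\btau},\ell_{\delta})$. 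Since $\ell_{\delta}$ is shift invariant, Proposition~\ref{prop:transformations} applied with $a=1$ and $b=-2\mu$ gives $e_{\btau,\delta}(X-2\mu;D_{\btau},\ell_{\delta})=e_{\btau,\delta}(X;D_{\btau},\ell_{\delta})-2\mu$. Subtracting yields $e_{\btau,\delta}(X;D_{\btau},\ell_{\delta})-e_{\btau,\delta}(-X;D_{\btau},\ell_{\delta})=2\mu$, and dividing by $2$ is the claim.

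For the second identity, invoke the second form of Proposition~\ref{prop:signSymmetry} — whose hypotheses, continuity of $F_X$ and sign symmetry of $\ell_{\delta}$, are precisely the "in addition" assumptions here — to write $e_{\btau,\delta}(-X;D_{\btau},\ell_{\delta})=-e_{\btau,\delta}(X;\widetilde D_{\btau},\ell_{\delta})$. Substituting this into the first identity $\tfrac12\big(e_{\btau,\delta}(X;D_{\btau},\ell_{\delta})-e_{\btau,\delta}(-X;D_{\btau},\ell_{\delta})\big)=\mathbb{E}[X]$ converts it into $\tfrac12\big(e_{\btau,\delta}(X;D_{\btau},\ell_{\delta})+e_{\btau,\delta}(X;\widetilde D_{\btau},\ell_{\delta})\big)=\mathbb{E}[X]$, which is the assertion.

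The only delicate point is that Proposition~\ref{prop:transformations} carries side conditions — finiteness of $e_{\btau,\delta}(X;D_{\btau},\ell_{\delta})$ and the integrability requirement~\eqref{eq:vglInTranslationInvarianceProp} — which must be checked before the translation step is legitimate. With $a=1$, condition~\eqref{eq:vglInTranslationInvarianceProp} reduces, using shift invariance, to $\big|\mathbb{E}[d_{\btau}(F_X(X))(\ell_{\delta}(X,0)-\ell_{\delta}(X,2\mu))]\big|<\infty$; for the loss functions of primary interest (absolute-value and square loss) this holds whenever $\mathbb{E}|X|<\infty$ by Remark~\ref{Remark1}, and in the general statement it is most cleanly handled by treating it as a standing well-definedness assumption rather than leaving it implicit. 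Everything else is routine bookkeeping.
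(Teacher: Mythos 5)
Your argument is correct, and for the first identity it takes a genuinely different route from the paper. The paper disposes of the first display by citing Proposition 2.1 of an external reference (Herrmann et al., 2020) and asserting that the argument "adapts slightly" to shift invariant losses; you instead give a self-contained derivation: observe that the objective in Definition~\ref{def: generalized extremiles} is a functional of the law of its argument, so $X\stackrel{d}{=}-X$ (or more generally $-X\stackrel{d}{=}X-2\mu$) lets you replace $e_{\btau,\delta}(-X;D_{\btau},\ell_{\delta})$ by $e_{\btau,\delta}(X-2\mu;D_{\btau},\ell_{\delta})$, after which the translation property finishes the job. This buys a proof that is verifiable inside the paper and that also exposes the fact that, under the literal hypothesis $X\stackrel{d}{=}-X$, the first identity degenerates to $e_{\btau,\delta}(X;D_{\btau},\ell_{\delta})=e_{\btau,\delta}(-X;D_{\btau},\ell_{\delta})$ with both sides of the claimed equation equal to zero. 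Your treatment of the second identity via Proposition~\ref{prop:signSymmetry} is exactly the paper's.

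Two small points of hygiene. First, you invoke Proposition~\ref{prop:transformations} with $a=1$, but that proposition as stated also assumes positive homogeneity of $\ell_{\delta}$, which is not among the hypotheses of Proposition~\ref{prop:MeanSymmetry}; you should say explicitly that for $a=1$ the homogeneity step in the proof of Proposition~\ref{prop:transformations} is vacuous ($\ell_{\delta}(X+b,c+b)=\ell_{\delta}(X,c)$ follows from shift invariance alone), so only the translation part of that proof is being reused. Second, you are right that the side conditions (finiteness of $e_{\btau,\delta}(X;D_{\btau},\ell_{\delta})$ and condition~\eqref{eq:vglInTranslationInvarianceProp}) are not stated in Proposition~\ref{prop:MeanSymmetry}; flagging them as standing well-definedness assumptions is the honest resolution, and is consistent with how the paper itself treats such integrability issues.
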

\medskip

For two random variables $X$ and $Y$, it is interesting to know how the generalized extremile of their sum, i.e., $e_{\btau,\delta}(X+Y ; D_{\btau}, \ell_{\delta})$, relates to the sum of the generalized extremiles of each of the random variables. Proposition \ref{prop:ComonotonicAdditivity} below provides some answer to this question, establishing the property of comonotonic additivity for the absolute value and square loss functions.

\begin{proposition}[Comonotonic additivity]\label{prop:ComonotonicAdditivity}
Denote by $X \sim F_X$ and $Y \sim F_Y$ two random variables that have a comonotonic dependence structure, i.e., $F_{X, Y}(x, y)=\min \{F_X(x), F_Y(y)\}$. If $\ell_{\delta}$ is either the absolute value loss or the square loss we have (assuming the considered generalized extremiles are finite) that 
$$
e_{\btau,\delta}(X+Y ; D_{\btau}, \ell_{\delta})=e_{\btau,\delta}(X ; D_{\btau}, \ell_{\delta})+e_{\btau,\delta}(Y ; D_{\btau}, \ell_{\delta}).
$$
\end{proposition}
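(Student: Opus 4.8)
The plan is to use Proposition~\ref{prop:equivalentDefinition} to rewrite the generalized extremile as a law-invariant functional of the distorted variable $X_{\sD_{\btau}}$ --- the mean when $\ell_{\delta}$ is the square loss and the median when $\ell_{\delta}$ is the absolute value loss --- to express that functional through the quantile function $F_X^{-1}$, and then to invoke comonotonic additivity of quantile functions.

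First I would record the two reductions, in each case combining Proposition~\ref{prop:equivalentDefinition} with the identity $F_{X_{\sD_{\btau}}}^{-1}=F_X^{-1}\circ D_{\btau}^{-1}$ for the quantile function of a variable with distribution function $D_{\btau}\circ F_X$ (valid since $D_{\btau}$ is continuous). For the square loss $\ell_{\delta}(x,c)=(x-c)^2$ the minimiser over $c$ of $\mathbb{E}[\ell_{\delta}(X_{\sD_{\btau}},c)-\ell_{\delta}(X_{\sD_{\btau}},0)]$ is $\mathbb{E}[X_{\sD_{\btau}}]$ (the mean minimises expected squared deviation), which is finite by the standing assumption; writing $\mathbb{E}[X_{\sD_{\btau}}]=\int_0^1 F_{X_{\sD_{\btau}}}^{-1}(u)\,\mathrm{d}u$ and substituting $v=D_{\btau}^{-1}(u)$ then yields the spectral representation $e_{\btau,\delta}(X;D_{\btau},\ell_{\delta})=\int_0^1 F_X^{-1}(v)\,d_{\btau}(v)\,\mathrm{d}v$. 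For the absolute value loss $\ell_{\delta}(x,c)=|x-c|$ the minimisers over $c$ are exactly the medians of $X_{\sD_{\btau}}$, and since $X_{\sD_{\btau}}$ has distribution function $D_{\btau}\circ F_X$ its lower median equals $F_X^{-1}\bigl(D_{\btau}^{-1}(1/2)\bigr)$. In both cases the generalized extremile thus depends on $X$ only through $F_X^{-1}$: either linearly in it (square loss) or via evaluation at the fixed level $D_{\btau}^{-1}(1/2)$ (absolute value loss).

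Next I would bring in the one structural input: for comonotonic $X$ and $Y$ one has $F_{X+Y}^{-1}(u)=F_X^{-1}(u)+F_Y^{-1}(u)$ for every $u\in(0,1)$. A short self-contained argument runs as follows. Taking $U$ uniform on $(0,1)$, the pair $\bigl(F_X^{-1}(U),F_Y^{-1}(U)\bigr)$ has joint distribution function $\min\{F_X,F_Y\}$ and hence the same law as the given comonotonic pair $(X,Y)$, so $X+Y\stackrel{d}{=}F_X^{-1}(U)+F_Y^{-1}(U)$; and $F_X^{-1}+F_Y^{-1}$ is non-decreasing and left-continuous on $(0,1)$, so it coincides with the quantile function of $\bigl(F_X^{-1}+F_Y^{-1}\bigr)(U)$, which is the claim.

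Finally I would combine the pieces. Substituting $F_{X+Y}^{-1}=F_X^{-1}+F_Y^{-1}$ into the spectral representation and using linearity of the integral gives the claimed additivity for the square loss; substituting it into the identity $e_{\btau,\delta}(Z;D_{\btau},\ell_{\delta})=F_Z^{-1}\bigl(D_{\btau}^{-1}(1/2)\bigr)$ applied to $Z\in\{X,Y,X+Y\}$ gives it for the absolute value loss. The step requiring the most care is the comonotonic quantile-additivity lemma above (it is classical, so I would either cite a standard reference or include the short argument); a secondary point is that for the absolute value loss the set of medians of $X_{\sD_{\btau}}$ may be a nondegenerate interval, so the stated equality should be read for the canonical lower-quantile representative (equivalently, for compatible choices of the three generalized extremiles). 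It is worth noting explicitly why only these two losses appear: they are precisely the ones for which the generalized extremile collapses to a law-invariant, comonotonic-additive functional of $X_{\sD_{\btau}}$, a property that fails for a generic loss.
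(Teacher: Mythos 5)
Your proposal is correct and follows essentially the same route as the paper: for the square loss it reduces to the spectral representation $e_{\btau,\delta}(X;D_{\btau},\ell_{\delta})=\int_0^1 F_X^{-1}(u)\,d_{\btau}(u)\,\mathrm{d}u$ together with $F_{X+Y}^{-1}=F_X^{-1}+F_Y^{-1}$ for comonotonic pairs, and for the absolute value loss it observes that the generalized extremile is a fixed quantile of the underlying variable and applies the same quantile additivity. Your write-up is merely more explicit than the paper's (you prove the quantile-additivity lemma, identify the quantile level as $D_{\btau}^{-1}(1/2)$, and flag the non-uniqueness of the median), but the substance is identical.
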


A final result  states the equivalence of an injective transformation of a generalized extremile and a generalized extremile of the same random variable with a transformed loss function. 
It is similar to  results  in \cite[see Theorem 4]{Gneiting2011} and in \cite[see p. 9]{Osband1985} known as revelation principle. 
\begin{proposition}[Revelation principle]\label{prop:injectiveTransformation}
    Let $\beta:\mathbb{R}\to \mathbb{R}$ be an injective function. Define $\widetilde{\ell}_{\delta}(x,c)=\ell_{\delta}(x,\beta^{-1}(c))$, then
    $$\beta(e_{\btau,\delta}(X;D_{\btau},\ell_{\delta}))=e_{\btau,\delta}(X;D_{\btau},\widetilde{\ell}_{\delta}).$$
\end{proposition}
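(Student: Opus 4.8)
The plan is to read the claimed identity as a change of variables inside the optimization problem that defines the generalized extremile. Abbreviate
\[
G(c) := \mathbb{E}\big[d_{\btau}(F_X(X))\big(\ell_{\delta}(X,c)-\ell_{\delta}(X,0)\big)\big],
\]
so that, by Definition~\ref{def: generalized extremiles}, $e_{\btau,\delta}(X;D_{\btau},\ell_{\delta})\in\argmin_{c\in\mathbb{R}}G(c)$, and let $\widetilde{G}$ denote the same object with $\ell_{\delta}$ replaced by $\widetilde{\ell}_{\delta}$. Since $\beta^{-1}$ lives on $\mathrm{range}(\beta)$, I treat the $\argmin$ defining $e_{\btau,\delta}(X;D_{\btau},\widetilde{\ell}_{\delta})$ as ranging over $c\in\mathrm{range}(\beta)$; this restriction is vacuous in the case (the relevant one in applications) that $\beta$ is a bijection of $\mathbb{R}$, which I assume for readability.

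First I would rewrite $\widetilde{G}$ through $G$. By definition $\widetilde{\ell}_{\delta}(x,c)=\ell_{\delta}(x,\beta^{-1}(c))$, so additivity of the expectation and cancellation of the $\ell_{\delta}(X,0)$ terms give, for $c\in\mathrm{range}(\beta)$,
\[
\widetilde{G}(c)=\mathbb{E}\big[d_{\btau}(F_X(X))\big(\ell_{\delta}(X,\beta^{-1}(c))-\ell_{\delta}(X,\beta^{-1}(0))\big)\big]=G(\beta^{-1}(c))-G(\beta^{-1}(0)).
\]
Along the way one checks, via the triangle inequality exactly as in Remark~\ref{Remark1}, that $\widetilde{G}(c)$ is finite whenever $G$ is finite on $\mathbb{R}$, so no new integrability hypothesis is needed. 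Because $G(\beta^{-1}(0))$ does not depend on $c$, we get $\argmin_{c}\widetilde{G}(c)=\argmin_{c}G(\beta^{-1}(c))$.

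Next I would use injectivity of $\beta$ to push the minimization through the composition. Since $\beta$ is a bijection, the substitution $c=\beta(c')$ identifies $\{c\in\mathrm{range}(\beta)\}$ with $\{c'\in\mathbb{R}\}$ and satisfies $G(\beta^{-1}(c))=G(c')$; hence $c^{\star}\in\argmin_{c}\widetilde{G}(c)$ if and only if $\beta^{-1}(c^{\star})\in\argmin_{c'\in\mathbb{R}}G(c')$, i.e. $\argmin_{c}\widetilde{G}(c)=\beta\big(\argmin_{c'}G(c')\big)$. In particular $\beta(e_{\btau,\delta}(X;D_{\btau},\ell_{\delta}))$ belongs to $\argmin_{c}\widetilde{G}(c)$ and therefore is an admissible value of $e_{\btau,\delta}(X;D_{\btau},\widetilde{\ell}_{\delta})$, which is the asserted equality.

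The only delicate point is bookkeeping rather than analysis: $e_{\btau,\delta}$ is defined as \emph{a} member of an $\argmin$ set that need not be a singleton, and $\beta^{-1}$ is a priori defined only on $\mathrm{range}(\beta)$. Both are handled, as is standard for such revelation statements, by reading the conclusion as the assertion that the two $\argmin$ sets correspond under $\beta$; when the minimizer is unique (or a selection is fixed consistently) the displayed pointwise identity follows at once. I do not expect to need any structural property of $\ell_{\delta}$, $D_{\btau}$ or $F_X$ beyond what already makes $G$ finite.
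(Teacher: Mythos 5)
Your argument is correct and follows essentially the same route as the paper's proof: both rewrite the objective for $\widetilde{\ell}_{\delta}$ as the original objective evaluated at $\beta^{-1}(c)$ plus a constant not depending on $c$, and then use injectivity of $\beta$ to transport the minimizer. Your extra bookkeeping about $\mathrm{range}(\beta)$ and the set-valued $\argmin$ is a welcome refinement of points the paper passes over silently, but it does not change the substance of the argument.
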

\medskip 

The result in Proposition \ref{prop:injectiveTransformation} has some interesting consequences. Consider the square loss function $\ell_{\delta}(x,c)=(x-c)^2$. Then we know that 
\[
e_{\btau,\delta}(X;D_{\btau},(x-c)^2)= \mathbb{E}[X_{\sD_{\btau}}].
\]
Applying Proposition \ref{prop:injectiveTransformation}  with injective function $\beta:\mathbb{R} \to \mathbb{R}: x \mapsto -x$, gives that 
\[ 
e_{\btau,\delta}(X;D_{\btau},(x+c)^2)= -\mathbb{E}[X_{\sD_{\btau}}].
\] 
\smallskip

\subsection{Towards solving the minimization problem}\label{sec: MinimizationProblem}
When it comes to finding a generalized extremile, defined in \eqref{GenExtremiles}, Proposition \ref{prop:equivalentDefinition} is a good starting point, showing that we need to find a minimizer of 
$\mathbb{E}[\ell_{\delta}(X_{\sD_{\btau}},c)-\ell_{\delta} (X_{\sD_{\btau}},0)],$ with respect to $c \in \mathbb{R}$. 
Proposition \ref{prop: convexLossDeriviativeZero} below  provides a way to find such a minimizer, and states sufficient conditions on the pair $(D_{\btau}, \ell_{\delta})$. 

\begin{proposition}\label{prop: convexLossDeriviativeZero}
Let $\ell_\delta$ be a function such that $c \mapsto \ell_\delta(x, c)$ is (strictly) convex for every $x$ and $c$. Additionally, assume that both $D_{\btau}$ and $F_X$ are differentiable functions. Under these conditions, the function $c \mapsto \mathbb{E}\left[\ell_\delta\left(X_{D_{\btau}}, c\right)-\ell_\delta\left(X_{D_{\btau}}, 0\right)\right]$ is (strictly) convex, and the generalized extremile is equal to any value $c$ for which
\begin{equation}\label{eq:uniqueness}
\frac{\mathrm{d}}{\mathrm{d} c}\left(\int_{-\infty}^{+\infty} \ell_\delta(x, c) \, \mathrm{d}  D_{\btau}\left(F_X(x)\right)\right)=0,
\end{equation}
provided the derivative exists. When the mapping $c\mapsto \ell_\delta(x,c)$ is strictly convex, there exists at most one $c$ value fulfilling (\ref{eq:uniqueness}). The existence of such a $c$ value is ensured when $c\mapsto \ell_\delta(x,c)$ is convex and $c \mapsto \mathbb{E}\left[\ell_\delta\left(X_{D_{\btau}}, c\right)-\ell_\delta\left(X_{D_{\btau}}, 0\right)\right]$ is coercive.
\end{proposition}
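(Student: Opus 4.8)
The plan is to prove the three assertions of Proposition~\ref{prop: convexLossDeriviativeZero} in turn, reducing everything to the equivalent formulation in Proposition~\ref{prop:equivalentDefinition}: the generalized extremile is a minimizer of $\varphi(c):=\mathbb{E}[\ell_\delta(X_{\sD_{\btau}},c)-\ell_\delta(X_{\sD_{\btau}},0)]$. First I would establish convexity (resp.\ strict convexity) of $\varphi$. For each fixed realization $x$ the map $c\mapsto \ell_\delta(x,c)$ is convex by hypothesis, hence so is $c\mapsto \ell_\delta(x,c)-\ell_\delta(x,0)$ (subtracting a constant in $c$). Taking the expectation with respect to the distribution $D_{\btau}\circ F_X$ of $X_{\sD_{\btau}}$ preserves convexity, since for $\lambda\in[0,1]$ and $c_1,c_2\in\mathbb{R}$ the pointwise inequality $\ell_\delta(x,\lambda c_1+(1-\lambda)c_2)-\ell_\delta(x,0)\le \lambda(\ell_\delta(x,c_1)-\ell_\delta(x,0))+(1-\lambda)(\ell_\delta(x,c_2)-\ell_\delta(x,0))$ integrates to the corresponding inequality for $\varphi$. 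For the strict case, pointwise strict convexity gives a strict inequality for every $x$, which is preserved under integration (against any probability measure that is not a point mass concentrated where equality could hold; since the inequality is strict for \emph{all} $x$ one does not even need this caveat), so $\varphi$ is strictly convex.

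Next I would treat the first-order characterization. A convex function on $\mathbb{R}$ attains its minimum exactly at the points where $0$ belongs to its subdifferential; when $\varphi$ is differentiable at $c$ this means $\varphi'(c)=0$. So it remains only to identify $\varphi'(c)$ with the quantity in \eqref{eq:uniqueness}, i.e.\ to justify differentiating under the integral sign:
\begin{equation*}
\frac{\mathrm{d}}{\mathrm{d}c}\int_{-\infty}^{+\infty}\bigl(\ell_\delta(x,c)-\ell_\delta(x,0)\bigr)\,\mathrm{d}D_{\btau}(F_X(x)) = \int_{-\infty}^{+\infty}\frac{\partial}{\partial c}\ell_\delta(x,c)\,\mathrm{d}D_{\btau}(F_X(x)),
\end{equation*}
which is exactly the statement that the derivative in \eqref{eq:uniqueness} equals $\varphi'(c)$. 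Under the convexity hypothesis, difference quotients of $c\mapsto \ell_\delta(x,c)-\ell_\delta(x,0)$ are monotone in the increment and are dominated (locally in $c$) by one-sided derivatives, so a monotone/dominated convergence argument legitimizes the interchange wherever \eqref{eq:uniqueness} is assumed to exist; this is the place where the hypothesis "provided the derivative exists" is used. Strict convexity then yields uniqueness: if $c_1<c_2$ both minimized $\varphi$, convexity would force $\varphi$ constant on $[c_1,c_2]$, contradicting strict convexity.

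Finally, for existence I would invoke the standard fact that a convex, coercive (i.e.\ $\varphi(c)\to+\infty$ as $|c|\to\infty$) function on $\mathbb{R}$ is lower semicontinuous (convex functions on $\mathbb{R}$ are continuous on the interior of their effective domain, here all of $\mathbb{R}$) and hence attains its infimum on a nonempty compact sublevel set; combined with differentiability this minimizer satisfies \eqref{eq:uniqueness}. The main obstacle is the rigorous interchange of differentiation and integration in the second step: one must carry a clean domination argument using the monotonicity of difference quotients of convex functions (bounding $|\ell_\delta(x,c+h)-\ell_\delta(x,c)|/|h|$ by, say, $\max(|\partial_c^{-}\ell_\delta(x,c-1)|,|\partial_c^{+}\ell_\delta(x,c+1)|)$ for $|h|\le 1$) and check integrability of this bound, which is implicit in assuming the integral in \eqref{eq:uniqueness} exists. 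Everything else is routine convex-analysis bookkeeping together with a direct appeal to Proposition~\ref{prop:equivalentDefinition}.
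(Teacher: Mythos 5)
Your proposal is correct and follows essentially the same route as the paper's proof: convexity of $c\mapsto\ell_\delta(x,c)$ is integrated to convexity of the expected loss, strict convexity yields uniqueness of the critical point, and coercivity plus convexity yields existence of a minimizer; the paper simply delegates these three steps to citations (Boyd--Vandenberghe, Niculescu--Persson, and Herrmann et al.) where you spell out the elementary arguments. Your additional discussion of differentiating under the integral sign is sound but is handled in the paper separately, by the Leibniz-rule theorems stated immediately after the proposition.
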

\begin{proof} 
To solve the minimization problem we want to find the value $c$ that minimizes
$$
I(c)=\mathbb{E}\left[\ell_\delta\left(X_{D_\tau}, c\right)\right]=\int_{-\infty}^{+\infty} \ell_\delta(x, c) d_{\boldsymbol{\tau}}\left(F_X(x)\right) f_X(x) \mathrm{d} x .
$$
By \cite{BoydAndVandenberghe2004}, p. 79, we know that $I(c)$ is convex in $c$ if $c \mapsto \ell_\delta(x, c)$ is convex for every $x$.
It is clear that strict convexity of $c\mapsto \ell_\delta(x,c)$ implies that $I(c)$ is also strict convex and hence the minimum is unique (see Theorem 3.4.4 on p. 114 in \cite{niculescuEtal2006}).\\
The existence of a value $c$ follows by Theorem 4.4 on p. 638 in \cite{HerrmannEtal2018}.
\end{proof}

In Proposition~\ref{prop: convexLossDeriviativeZero} uniqueness of  $c$ solving \eqref{eq:uniqueness} is guaranteed when $c \mapsto \mathbb{E}\left[\ell_\delta\left(X_{D_{\btau}}, c\right)-\ell_\delta\left(X_{D_{\btau}}, 0\right)\right]$ is a coercive function. Some sufficient conditions for this to hold are discussed in the following remark.

\begin{remark}\label{remark:SufficientConditionsCoercive}
Consider a random variable $Y$ (with associated probability measure $\mathbb{P}_{Y}$) such that $E_0=\Exp[\ell_{\delta}(Y,0)]$ is finite.
Additional to convexity, assume that $c\mapsto\ell_{\delta}(y,c)$ is (i) coercive for $\mathbb{P}_{Y}$ almost all $y\in\mathbb{R}$, i.e.,
$ \lim_{c_n \to \pm \infty }  \ell_{\delta}(y,c_n) = \infty$, and (ii) bounded from below, i.e.,  $\exists \, C>0$, such that $\ell_{\delta}(y,c) \geq C$ $\mathbb{P}_Y$-a.s.
Then also $c\mapsto \Exp[\ell_{\delta}(Y,c)-\ell_{\delta}(Y,0)]$ is coercive. 
To see this consider a sequence $(c_n)$ such that $\lvert c_n \rvert \to\infty$ and define $Y_n:=\ell_{\delta}(Y,c_n)-C$ to have $Y_n\geq 0$ $\mathbb{P}_Y$-a.s. and $Y_n\to\infty$ due to the coercivity of $\ell_{\delta}$.
A slight modification of the monotone convergence theorem, see \cite{Feinstein2007}, then implies $\lim_{n\to\infty} \Exp[\ell_{\delta}(Y,c_n)-\ell_{\delta}(Y,0)] = \lim_{n\to\infty} \Exp[\ell_{\delta}(Y,c_n)-C] + C-E_0 = \Exp[\lim_{n\to\infty}\ell_{\delta}(Y,c_n)-C] + C-E_0= \infty$, showing that $c\mapsto \Exp[\ell_{\delta}(Y,c)-\ell_{\delta}(Y,0)]$ is coercive.
The result is now applicable in the context of Proposition~\ref{prop: convexLossDeriviativeZero} if the necessary conditions apply for the choice of $Y=X_{D_{\tau}}$.
\end{remark}
\smallskip

As is clear from the above result, convexity of the function $c\mapsto \ell_{\delta}(x,c)$ will be an important requirement when it comes to calculating generalized extremiles. In addition it also makes it transparant that interchanging integral and differentiation operators will be helpful. For any random variable $X$ (not necessarily absolutely continuous) and any distribution function $D_{\btau}$ (not necessarily admitting a density) the next theorem (stating a type of Leibniz rule) is helpful.  

\begin{theorem}[see \cite{Klenke2013}, Theorem 6.28, p. 142]
    Denote $\mu$ the measure induced by $D_{\btau} \circ F_X$. Suppose that a loss function $\ell_{\delta}(x,c)$ satisfies the following properties.
    \begin{itemize}
        \item For every $c$, the map $x\mapsto \ell_{\delta}(x,c)$ is integrable (with respect to the measure $\mu$);
        \item For $\mu$-almost all $x$, the map $c\mapsto \ell_{\delta}(x,c)$ is differentiable with derivative $\ell_{\delta}'(x,c)$;
        \item There is an integrable function $h: \mathbb{R} \to (0, \infty)$ (with respect to the measure $\mu$) such that $\lvert \ell_{\delta}'(x,c) \rvert \leq h(x)$ for all $c$ and $\mu$-almost all $x$.
    \end{itemize}
    Then for all $c \in \mathbb{R}$ we have
    $$\frac{\mathrm{d}}{\mathrm{d}c}\Big( \int_{-\infty}^{+\infty} \ell_{\delta}(x,c)\mathrm{d} D_{\btau}(F_X(x)) \Big)= \int_{-\infty}^{+\infty} \ell_{\delta}'(x,c) \mathrm{d}D_{\btau}(F_X(x)).$$
\end{theorem}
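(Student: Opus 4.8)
The plan is to reduce the statement to the classical ``differentiation under the integral sign'' lemma, i.e., to an application of the dominated convergence theorem combined with the mean value theorem. Write $\mu$ for the measure induced by $D_{\btau}\circ F_X$, so that $\int g(x)\,\mathrm{d}D_{\btau}(F_X(x)) = \int g\,\mathrm{d}\mu$ whenever the right-hand side is defined; by the first hypothesis this holds with $g(\cdot)=\ell_{\delta}(\cdot,c)$ for every $c$. Fixing $c\in\mathbb{R}$ and an arbitrary sequence $(h_n)_{n\ge 1}$ of nonzero reals with $h_n\to 0$, I would study the difference quotients
\[
g_n(x) := \frac{\ell_{\delta}(x,c+h_n)-\ell_{\delta}(x,c)}{h_n},
\]
each of which is $\mu$-measurable, being a constant multiple of a difference of two $\mu$-integrable functions, independently of any mean-value considerations.

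Next I would verify the two ingredients needed for dominated convergence. Pointwise convergence is immediate from the second hypothesis: for $\mu$-almost every $x$ the map $c'\mapsto\ell_{\delta}(x,c')$ is differentiable on $\mathbb{R}$, hence $g_n(x)\to\ell_{\delta}'(x,c)$ $\mu$-a.e.\ as $n\to\infty$. For the dominating function, apply the mean value theorem to $c'\mapsto\ell_{\delta}(x,c')$ on the interval with endpoints $c$ and $c+h_n$: for $\mu$-almost every $x$ there is a point $\xi_{n,x}$ between $c$ and $c+h_n$ with $g_n(x)=\ell_{\delta}'(x,\xi_{n,x})$, whence $|g_n(x)|\le h(x)$ for all $n$ and $\mu$-a.e.\ $x$ by the third hypothesis, with $h$ being $\mu$-integrable. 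The dominated convergence theorem then gives
\[
\lim_{n\to\infty}\int g_n\,\mathrm{d}\mu \;=\; \int \ell_{\delta}'(x,c)\,\mathrm{d}\mu(x).
\]
Since, by linearity of the integral (legitimate because both $\ell_{\delta}(\cdot,c+h_n)$ and $\ell_{\delta}(\cdot,c)$ are $\mu$-integrable), the left-hand side equals $\lim_{n\to\infty} h_n^{-1}\bigl(\int\ell_{\delta}(x,c+h_n)\,\mathrm{d}\mu(x)-\int\ell_{\delta}(x,c)\,\mathrm{d}\mu(x)\bigr)$, and since the null sequence $(h_n)$ was arbitrary, the map $c\mapsto\int\ell_{\delta}(x,c)\,\mathrm{d}\mu(x)$ is differentiable at $c$ with derivative $\int\ell_{\delta}'(x,c)\,\mathrm{d}\mu(x)$. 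Rewriting both integrals in terms of $D_{\btau}(F_X(\cdot))$ yields the stated identity.

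The only delicate point — and it is a minor one — is the domination step: one must ensure that the bound on the difference quotient holds uniformly in $n$. This is exactly why hypothesis (iii) is formulated as a bound $|\ell_{\delta}'(x,c')|\le h(x)$ valid at \emph{every} argument $c'$ rather than only at $c$, so that the mean-value point $\xi_{n,x}$ — whatever it is — is automatically controlled; no joint measurability of $(n,x)\mapsto\xi_{n,x}$ is required because $g_n$ is already measurable by construction. As this is Theorem~6.28 in \cite{Klenke2013}, the argument above is the standard textbook proof and presents no further obstacle.
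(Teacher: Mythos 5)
Your proof is correct and is precisely the standard difference-quotient / mean-value-theorem / dominated-convergence argument; the paper itself gives no proof of this statement but simply cites it as Theorem 6.28 of \cite{Klenke2013}, where essentially this argument appears. Your remark on why hypothesis (iii) must bound $\lvert\ell_{\delta}'(x,c')\rvert$ uniformly over all $c'$ (so that the unknown mean-value point is controlled without any measurability issue) correctly identifies the one delicate step.
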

\medskip 

For absolutely continuous $X$ with density function $f_X$, and $D_{\btau}$ admitting a density $d_{\btau}$, the adaptation of this theorem is as follows. 
\begin{theorem}\label{theorem:leibnizAbsolutelyContinuous}
    Consider an absolutely continuous random variable $X$, and a cumulative distribution function $D_{\btau} \in \mathbb{D}$ with density $d_{\btau}$. Suppose that a loss function $\ell_{\delta}(x,c)$ satisfies the following properties.
    \begin{itemize}
        \item For every $c$, the map $x\mapsto \ell_{\delta}(x,c) d_{\btau}(F_X(x))f_X(x)$ is integrable;
        \item For almost all $x$, the map $c\mapsto \ell_{\delta}(x,c)$ is differentiable with derivative $\ell_{\delta}'(x,c)$;
        \item There is an integrable function $h: \mathbb{R} \to (0, \infty)$ (with respect to the measure $\mu$) such that \newline  $\lvert \ell_{\delta}'(x,c)d_{\btau}(F_X(x))f_X(x) \rvert \leq h(x)$ for all $c$ and almost all $x$.
    \end{itemize}
Then for all $c \in \mathbb{R}$ we have
    $$\frac{\mathrm{d}}{\mathrm{d}c}\Big( \int_{-\infty}^{+\infty} \ell_{\delta}(x,c)d_{\btau}(F_X(x))f_X(x)\mathrm{d} x \Big)= \int_{-\infty}^{+\infty} \ell_{\delta}'(x,c) d_{\btau}(F_X(x))f_X(x)\mathrm{d}x.$$
\end{theorem}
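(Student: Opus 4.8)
The plan is to derive Theorem \ref{theorem:leibnizAbsolutelyContinuous} as a direct corollary of the preceding measure-theoretic Leibniz rule (Klenke, Theorem 6.28). The key observation is that when $X$ is absolutely continuous with density $f_X$ and $D_{\btau}$ admits a density $d_{\btau}$, the pushforward measure $\mu$ induced by $D_{\btau}\circ F_X$ is itself absolutely continuous with respect to Lebesgue measure, with Radon--Nikodym derivative $x\mapsto d_{\btau}(F_X(x))\,f_X(x)$. Indeed, writing $G = D_{\btau}\circ F_X$, the chain rule gives $G'(x) = d_{\btau}(F_X(x)) f_X(x)$ almost everywhere, so for every Borel set $A$ one has $\mu(A) = \int_A d_{\btau}(F_X(x))f_X(x)\,\mathrm{d}x$. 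Consequently, for any Borel-measurable $g$, $\int g\,\mathrm{d}\mu = \int g(x)\, d_{\btau}(F_X(x)) f_X(x)\,\mathrm{d}x$ whenever either side is defined.

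With that translation in hand, the proof amounts to checking that the three hypotheses of Theorem \ref{theorem:leibnizAbsolutelyContinuous} are exactly the three hypotheses of the general theorem, once the abstract integral $\int \cdot\,\mathrm{d}\mu$ is rewritten as a Lebesgue integral against the density $d_{\btau}(F_X(x))f_X(x)$. First I would note that $\mu$-integrability of $x\mapsto\ell_{\delta}(x,c)$ is the same as Lebesgue-integrability of $x\mapsto \ell_{\delta}(x,c)\, d_{\btau}(F_X(x))f_X(x)$, matching the first bullet. The second bullet (differentiability of $c\mapsto\ell_{\delta}(x,c)$ for $\mu$-almost every $x$) is stated identically; here one should remark that a $\mu$-null set need not be Lebesgue-null, but since $\mu\ll\mathrm{Leb}$ the hypothesis ``for almost all $x$'' (Lebesgue) in Theorem \ref{theorem:leibnizAbsolutelyContinuous} is \emph{stronger} than ``for $\mu$-almost all $x$'', so it certainly suffices. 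For the third bullet, the dominating function condition $|\ell_{\delta}'(x,c)\, d_{\btau}(F_X(x))f_X(x)| \le h(x)$ with $h$ $\mu$-integrable plays the role of the abstract domination hypothesis; strictly, the cleanest statement would require $h$ to be $\mathrm{Leb}$-integrable (so that $\int h\,\mathrm{d}x<\infty$), but again $\mu$-integrability of $h$ combined with the product form of the density makes the relevant quantity $\int |\ell_{\delta}'(x,c)|\,\mathrm{d}\mu(x)$ finite, which is what the general theorem actually needs.

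The main obstacle is essentially bookkeeping rather than mathematics: one must be careful about which reference measure governs each ``almost everywhere'' qualifier and about whether ``$h$ integrable with respect to $\mu$'' versus ``$h$ integrable with respect to Lebesgue measure'' is intended — the statement as written says ``with respect to the measure $\mu$'' for $h$ but phrases the other conditions in Lebesgue terms, so I would either reconcile these by invoking $\mu\ll\mathrm{Leb}$ explicitly or, more transparently, absorb the density into $h$ and state everything with respect to Lebesgue measure on $\mathbb{R}$. Once the reference measures are pinned down, the argument is: (1) identify $\mu$ and its density via the chain rule; (2) rewrite each of the three hypotheses as the corresponding hypothesis of the general Leibniz rule; (3) apply that theorem to obtain $\frac{\mathrm{d}}{\mathrm{d}c}\int \ell_{\delta}(x,c)\,\mathrm{d}\mu(x) = \int \ell_{\delta}'(x,c)\,\mathrm{d}\mu(x)$; and (4) rewrite both abstract integrals back in the Lebesgue form $\int\cdot\, d_{\btau}(F_X(x))f_X(x)\,\mathrm{d}x$ to recover the claimed identity. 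No delicate estimate is required — the substance is entirely in the already-cited general theorem.
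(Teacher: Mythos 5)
Your proposal is correct and follows exactly the route the paper intends: the paper states this result without proof as the ``adaptation'' of the cited Klenke Leibniz rule, and your argument supplies precisely the missing translation, namely that the measure $\mu$ induced by $D_{\btau}\circ F_X$ has Lebesgue density $x\mapsto d_{\btau}(F_X(x))f_X(x)$, after which the three hypotheses map onto those of the general theorem. Your side remark about the mismatch between the Lebesgue and $\mu$ reference measures in the domination hypothesis is a fair observation about the statement itself, and your proposed reconciliation (absorb the density and read the bound against Lebesgue measure) is the right one.
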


In Tables \ref{LossFunctions1B} and \ref{LossFunctions2B} we return to the loss functions listed in respectively Tables \ref{LossFunctions1A} and \ref{LossFunctions2A}, and now focus on the properties of convexity and differentiability of these loss functions. 

\begin{table}[h]
\caption{Commonly used nonnegative loss functions and basic properties (convexity, differentiability, ... ).}
\label{LossFunctions1B}
\vspace*{0.2 cm}

\noindent
\hspace*{-2 cm}
\begin{tabular}{l|l|l|l|l|l|}
\hline
 & loss function  &  alternative view  & convex & differentiability & derivative \\   
 & & from $X_{\sD_{\btau}}$ & in $c$ & in $c$ &  with respect to $c$\\
 \hline
 & & & & &\\
 1 & $|x-c|$ & $\mbox{Median}(X_{\sD_{\btau}})$& yes & yes, for $c\neq x$ &  $\ell'_{\delta}(x,c)=1_{\{x\leq c\}}-1_{\{x> c\}}$\\
  & absolute value loss & & & &   \\[1.6 ex] 
 2 & $|x-c|^{p}$ with $p>0$ & $\mathbb{E} [X_{\sD_{\btau}}]$ for $p=2$  & yes,  & depends on the &\\
  & power $p$ absolute  loss & &if $p \ge 1$ &  value of $p$&  \\[1.6 ex]   
  3 & $ \left|\delta-1_{\{x\le c\}}\right| |x-c| $ & $\delta$th quantile of $X_{\sD_{\btau}}$ & yes & yes, for $c\neq x$ & $\ell'_{\delta}(x,c)=1_{\{x \le  c\}} -\delta$ \\
  & quantile  loss &  & & &    \\[1.6 ex]
  4 & $ \left|\delta-1_{\{x\le c\}}\right| |x-c|^2  $ & $\delta$th expectile of $X_{\sD_{\btau}}$ & yes & yes, $\forall c$ &  $\ell'_{\delta}(x,c)$ \\
  & expectile  loss &  & & &$\quad =-2 \delta \lvert x-c\rvert -2 (x-c) 1_{\{ x\leq c\}}$  \\[1.6 ex]
  5 & $\ell_{\delta}(x,c)$  & ${}^{(\star 1)}$ & yes & yes, for $c$  &  $\ell'_{\delta}(x,c)=-(x-c)1_{\{|x-c| \le \delta\}}$ \\
 &  $=\left \{
  \begin{array}{ll}
  0.5 (x-c)^2 & \mbox{if }  |x-c|\le \delta \\
  \delta ( |x-c| - 0.5 \delta)& \mbox{if }  |x-c|> \delta 
  \end{array}
  \right . $ & & &\hspace*{1.2 mm} such that  &$ \hspace*{1.6 cm}  -\mbox{sign}(x-c) \delta \, 1_{\{|x-c| > \delta\}}$ \\
  & Huber loss function & &  & \hspace*{1.2 mm}  $|x-c|\neq \delta$   & \\[1.6 ex]
  6 & $\ell_\delta(x,c)=(c-x)^2 e^{\delta x}$ & $\ds \frac{\mathbb{E}[X_{\sD_{\btau}} e^{\delta X_{\sD_{\btau}}}]}{\mathbb{E}[e^{\delta X_{\sD_{\btau}}}]}$ & yes & yes, $\forall c$ & $\ell_\delta'(x,c)=2(c-x)e^{\delta x}$\\
  & Esscher loss & & & & \\[1.6 ex]
 \hline 
\end{tabular}
\vspace*{0.12 cm}

\noindent ${}^{(\star 1)}$: Value c for which
$c=\frac{1}{\mathbb{E}[1_{\{\lvert X_{\sD_{\btau}}-c \rvert \leq \delta\}}]}\Big( \mathbb{E}[X_{\sD_{\btau}} 1_{\{\lvert X_{\sD_{\btau}}-c \rvert \leq \delta \}}] - \delta \mathbb{E}[1_{\{X_{\sD_{\btau}} -c\leq -\delta \}}]+\delta \mathbb{E}[1_{\{X_{\sD_{\btau}}-c \geq \delta  \}}] \Big)$).
\end{table}
\begin{table}[h!]
\caption{Commonly used real-valued loss functions and basic properties (convexity, differentiability, ....).}
\label{LossFunctions2B}
\vspace*{0.2 cm}

\noindent
\begin{tabular}{l|l|l|l|l|l|}
\hline
 & loss function  &  alternative view  & convex & differentiability & derivative \\   
 & &  from $X_{\sD_{\btau}}$ & in $c$ & in $c$ & with respect to $c$ \\
 \hline
 & & & & &\\
  G1 & $\ell_{\delta}(x,c)=-c (x^2-x)+c^2/2$ & $\mathbb{E}[(X_{\sD_{\btau}})^2-X_{\sD_{\btau}}]$ & yes & yes, for all $c\neq x$&  $\ell'_\delta(x,c)=-(x^2-x)+c$   \\[2 ex] 
 G2 & $\ell_{\delta}(x,c)=-c \lvert x - b \rvert^{\delta} +c^2/2$,  & $\mathbb{E}[\lvert X_{\sD_{\btau}} - b\rvert^\delta]$ & yes & yes, $\forall c$ & $\ell_\delta'(x,c)=-\lvert x - b \rvert^\delta + c$ \\[1.2 ex]
 & \hspace*{2.2 cm}  with  $b \in \mathbb{R}$ & & & & \\[2 ex] 
  G3 & $\ell_{\delta}(x,c)=1_{\{c<x\}}(x-c)+cx$ & value $c$ for which & yes & yes, for all $c\neq x$ & $\ell'_\delta(x,c)=-1_{\{c<x\}}+x$\\
  && $P(X_{\sD_{\btau}} \geq c)=\mathbb{E}[X_{\sD_{\btau}}]$ & & & \\[2 ex]
  G4 & $\ell_{\delta}(x,c)=\frac{c^2}{2}-(1+\delta) cx $ & $(1+\delta)\mathbb{E}[X_{\sD_{\btau}}]$ & yes & yes, $\forall c$ &  $\ell_\delta'(x,c)=c-(1+\delta)x$ \\
 & & expected value & & & \\
 & & premium principle  & & & \\[2 ex]
 \hline 
  \hline 
\end{tabular}
\end{table}

\subsection{Examples}\label{sec: ExamplesLossFunctions}
Complementary to the loss functions listed in Tables \ref{LossFunctions1A} and \ref{LossFunctions2A}
we review here some loss functions that have been proposed in recent literature, illustrating as such the broad framework we consider. 
Some more loss functions are given in Section S1 in the Supplementary Material
(non exhaustive list). 
 
\subsubsection{Nonnegative valued loss functions}
\textbf{Adaptive quantile loss function for censoring}\\
In the context of quantile estimation under right random censoring,
\cite{DeBackerEtAL2019} established an approach based on considering a loss function that adapts to the cumulative  distribution function $F_C$ of the censoring variable $C$: 
\begin{equation}\label{eq: QuantileLossCensoring}
\ell_{\delta}(x,c) = 
\left|\delta-1_{\{x\le c\}}\right| |x-c|  - (1 - \delta) \ds \int_{0}^c F_C(s)\mathrm{d} s . 
\end{equation}
Note that the mapping  $c \mapsto\ell_{\delta}(x,c)$ is in general not convex, nor is it monotone. In case of no censoring, i.e. when $P\{ C < \infty\}=0$, the adapted loss function reduces to the usual quantile loss function. 
\medskip

\noindent
\textbf{M-quantile loss}\\
For a  function $\psi:\mathbb{R} \to [0,+\infty)$, consider the loss function 
\[
\ell_{\delta}(x,c)=\delta \psi(x-c)1_{\{x>c\}}+(1-\delta)\psi(x-c)1_{\{x\leq c\}}. 
\]
This is the M-quantile loss function. Taking $\psi(u)=|u|^p$ leads to the  power $p$ absolute  loss function, also called $L_p-$quantiles as defined in \cite{Chen1996}. See also \cite{JiangEtAl2021}. Obviously, the qualitative properties of $\ell_{\delta}(x,c)$ depend on the qualitative properties of the function $\psi$. 
\medskip

\noindent
\noindent
\textbf{Generalized quantile loss}\\
An extension of the M-quantiles are obtained by involving two different functions $\psi_1$ and $\psi_2$. Namely consider $\psi_1,\psi_2: [0,+\infty)\to [0,+\infty)$, convex and  strictly increasing functions satisfying  
$
\psi_j(0)=0 \text{ and } \psi_j(1)=1 \text{ for } j=1,2.
$
The generalized quantile loss function is then
\[
\ell_{\delta}(x,c)=\delta \psi_1(x-c)1_{\{x>c\}}+(1-\delta)\psi_2(x-c)1_{\{x\leq c\}},
\] 
leading to the $\delta$th generalized quantile of a random variable. 
See \cite{BelliniEtAl2014}.   
The qualitative properties of the generalized quantile loss depend on these of the functions $\psi_1$ and $\psi_2$.

    In \cite{MaoCai2018}, the authors further generalized this class by using rank-dependent expected utility (RDEU) theory. This corresponding class of generalized quantiles based on RDEU theory is given by 
    \begin{equation}\label{eq:generalizedQuantilesRDUE}
       \argmin_{c \in \mathbb{R}} \Big\{ \delta \int_c^{+\infty} \phi_1(x-c)dh_1(F_X(x))+(1-\delta)\int_{-\infty}^c \phi_2(c-x)dh_2(F_X(x))   \Big\}. 
    \end{equation}
    The functions $h_1,h_2$ are two distortion functions on $[0,1]$, which means they are right-continuous and increasing with $h_j(0)=0$ and $h_j(1)=1$ for $j=1,2$, with no jumps at $0$ and $1$. Furthermore, $\phi_1$ and $\phi_2$ are nondegenerate increasing convex functions on $[0,\infty)$. When $h_1=h_2:=D_{\btau}$ we can describe (part of) this class using the generalized extremile framework. Indeed, to do so, we have to take
    $$D_{\btau}=h_1=h_2 \text{ and } \ell_{\delta}(x,c)=\delta \phi_1((x-c)_+)+(1-\delta)\phi_2((c-x)_-).$$
    Here we used the notations:  $x_+=\max\{x,0\}$ and $x_-=-\min\{x,0\}$.
\medskip

\noindent
\textbf{Generalized shortfall loss}\\
    In \cite{MaoCai2018} and \cite{MaoEtAl2023}, the authors study generalized shortfall risk measures. These are obtained by replacing $\phi_1,\phi_2$ with their derivatives $\phi_1',\phi_2'$ in (\ref{eq:generalizedQuantilesRDUE}). Just as before, part of these generalized shortfall risk measures can be described in our generalized extremile framework when $h_1=h_2:=D_{\btau}$. 
\medskip

\noindent
\textbf{Loss function for trimmed mean}\\
The Hampel family of functions (see \cite{Hampel1986}) is defined as 
$$\ell(x,c)=\begin{cases}
        \frac{1}{2}(x-c)^2 &\text{ if } \lvert x-c\rvert <  a\\
        a(x-c-a/2)  &\text{ if } a \le \lvert x-c \rvert < b \\
    \ds\frac{a(x-c-d)^2}{2(b-d)} + a \frac{(b+d-a)}{2}      &\text{ if } b \le \lvert x-c \rvert < d \\
    \frac{1}{2} a(b+d-a) &\text{ if } \lvert x-c \rvert \ge d , \\
    \end{cases} $$
    where the function $\ell$ depends on parameters $a,b$ and $d$, with $0<a< b<d$. This is a non-convex function.
    
A limiting case of this family is obtained by letting $a,b$ and $d$ taking the limiting value $\delta >0$, leading to the loss function (see p. 79 in \cite{Huber1964})
    $$\ell_\delta(x,c)=\begin{cases}
        (x-c)^2 &\text{ if } \lvert x-c\rvert <  \delta\\
        \delta^2 &\text{ if } \lvert x-c \rvert  \ge  \delta, 
    \end{cases} $$
appearing in the context  of trimmed means. 
Note that the  function $c\mapsto \ell_\delta(x,c)$
is quadratic in $c$ around $x$, but attributes always the same constant value ($\delta^2$)  when $c$ is such that $\lvert x -c \rvert \ge \delta$. Consequently the function $c \mapsto \ell_{\delta}(x,c)$ is not convex.
\\

\subsubsection{Real-valued loss functions}
\textbf{Various functionals involving moments}\\
Let $I\subset \mathbb{R}$ be an interval. Consider $r:I \to \mathbb{R}$ and $s:I \to (0,+\infty)$ measurable functions. Let $\phi$ be a strictly convex function with subgradient $\phi'$. Let $X_{\sD_{\btau}}$ be a random variable supported on $I$ such that $\mathbb{E}[r(X_{\sD_{\btau}})]$, $\mathbb{E}[s(X_{\sD_{\btau}})]$, $\mathbb{E}[r(X_{\sD_{\btau}})\phi'(X_{\sD_{\btau}})]$,  $\mathbb{E}[s(X_{\sD_{\btau}})\phi(X_{\sD_{\btau}})]$ and  $\mathbb{E}[X_{\sD_{\btau}} s(X_{\sD_{\btau}})\phi'(X_{\sD_{\btau}})]$ exist and are finite. 
Consider the loss function 
$$\ell(x,c)=s(x)(\phi(x)-\phi(c))-\phi'(c)(r(x)-cs(x))+\phi'(x)(r(x)-xs(x)).$$
The corresponding generalized extremile equals 
$$\frac{\mathbb{E}[r(X_{\sD_{\btau}})]}{\mathbb{E}[s(X_{\sD_{\btau}})]}.$$
We refer to \citeauthor{Gneiting2011} (see e.g. Theorem 8 on p. 754 in \cite{Gneiting2011}) for an in-depth discussion.\\

Taking $r(x)=x^2$, $s(x)=x$ and $\phi(x)=x^2$ leads  to the special case of 
\[
\frac{\mathbb{E}[\left ( X_{\sD_{\btau}}\right )^2]}{\mathbb{E}[X_{\sD_{\btau}}]}=\frac{\text{var}(X_{\sD_{\btau}})}{\mathbb{E}[X_{\sD_{\btau}}]}+\mathbb{E}[X_{\sD_{\btau}}].
\] 
In this special case the loss function is $\ell(x,c)=x(c-x)^2$ (see also p. 83 in \cite{Heilmann1989}). This is a convex function in $c$ for $x>0$, and the derivative of $c\mapsto \ell(x,c)$ exists for every $x$ and is given by
$\ell_\delta'(x,c)=-2x(x-c)$.\\

Consider $w:[0,+\infty)\to [0,+\infty)$; and  take $r(x)=xw(x)$ and $s(x)=w(x)$. The interpretation of the resulting generalized extremile 
\[
\frac{\mathbb{E}[X_{\sD_{\btau}} w(X_{\sD_{\btau}})]}{\mathbb{E}[w(X_{\sD_{\btau}})]}, 
\]
 for different weighting functions $w$ is discussed in \cite{Furman2009}.\\
 \medskip

\subsection{Square loss and distortion risk measures}\label{sec: SquareLossDistortion}
A special subclass of functionals of generalized extremiles is obtained when focussing on the square loss $\ell_{\delta}(x,c)=(x-c)^2$. Indeed, as already pointed out in Section \ref{sec: Introduction}, we know that $e_{\btau, \delta}(X; D_{\btau}, \ell_{\delta})=\mathbb{E}[X_{\sD_{\btau}}]$ where $X_{\sD_{\btau}}$ is distributed with cumulative distribution function $D_{\btau} \circ F_X$.
This implies that
\begin{equation}\label{eq: DistortionRisk}
e_{\btau}(X; D_{\btau}):= e_{\btau}(X,D_{\btau},(x-c)^2)=\int_{-\infty}^0 [g(1-F_X(y))-1]\mathrm{d}y+\int_0^{+\infty}g(1-F_X(y))\mathrm{d}y , 
\end{equation} 
where $g(y)=1-D_{\btau}(1-y)=\widetilde{D}_{\btau}(y)$.  
An expression as in \eqref{eq: DistortionRisk}, with $g: [0,1] \to [0,1]$, a non-decreasing function and $g(0)=0$ and $g(1)=1$, is termed a distortion risk of $X$. For this distortion risk to be a coherent risk measure (see for example \cite{Wang1996} and \cite{ArtznerEtAl1999}) one needs to require in addition that $g$ is a concave function. See, for example, \cite{WangEtAl1997} and 
\cite{CaiWangMao2017}. 
Note that generalized extremiles, in case of  square loss function, can be seen as a distortion risk measure with distortion function $1-D_{\btau}(1-y)$. This implies that $$D_{\btau}(y)=1-g(1-y).$$
Note that concavity of $g$ is equivalent with convexity of $D_{\btau}$. Therefore not every 
coherent distortion risk measure is necessarily part of our class. If the function $g$ is differentiable then
$d_{\tau}(u) = g'(1-u)$.  In summary, the class of generalized extremiles contains part of the distortion risk measures, but it also contains  non-distortion risk measures, for example, expectiles.
Table \ref{DistrFunctions}  lists several distortion risk measures that can be seen as part of our general class of functionals.

\section{Estimation}\label{sec:estimation}

Based on an i.i.d. sample $X_1, \dots, X_n$ from $X$, with unknown distribution function $F_X$, the aim is to estimate the generalized extremile
$e_{\btau, \delta}(X; D_{\btau}, \ell_{\delta})$ defined in Definition \ref{def: generalized extremiles}, and this for a general pair $(D_{\btau}, \ell_{\delta})$. 
We  firstly focus on the setting of a square loss function, but with general $D_{\btau}$; and secondly treat the setting of a general pair $(D_{\btau}, \ell_{\delta})$. The reason is that in the former setting estimation  of the generalized extremile is simpler, and subsequently establishing (asymptotic) properties for it is somewhat less involved. 

\subsection{Square loss and general distribution function $D_{\btau}$}\label{SquareLossGeneralDtau}
As discussed in Section \ref{sec: SquareLossDistortion} in case of square loss the generalized extremile reduces to
\begin{equation}\label{eq: SquareLossGenExtrelime} 
e_{\btau}(X;D_{\btau})=\mathbb{E}[X_{{\sD}_{\btau}}]=\int_{0}^{1}F_X^{-1}(u)\mathrm{d} D_{\btau}(u). 
\end{equation}
An estimator of $e_{\btau}(X;D_{\btau})$ (and hence of the corresponding distortion risk) can be easily derived from \eqref{eq: SquareLossGenExtrelime}. 
 Denote by $X_{k: n}$ the $k$th order statistic in the  random sample $X_1, \dots, X_n$ of size $n$. Let $F_n(x)= (n+1)^{-1} \sum_{i=1}^n 1_{\{ X_i \le x\}}$ be the empirical cumulative distribution function  (with factor $1/(n+1)$ instead of $1/n$), and denote by 
  $F_n^{-1}(p)$ the corresponding empirical quantile function, i.e. for $p\in (0,1)$,   $F_n^{-1}(p)=X_{k: n}$, where $k$ is chosen such that $(k-1) / (n+1) <p \leq k / (n+1)$. A plug-in type of estimator for $e_{\btau}(X;D_{\btau})$ is then
\[
\widehat{e}_{\btau,n}^{\sL} =\sum_{k=1}^n\left(D_{\btau}(k / (n+1))-D_{\btau}((k-1) / (n +1))\right) X_{k: n}
:= \frac{1}{n+1} \sum_{k=1}^n w_{k, n} X_{k: n},
\]
where it is thus readily seen that $\widehat{e}_{\btau,n}^{\sL}$ is proportional (with factor $n/(n+1)$) to an  $\mathrm{L}$-statistic with weights $w_{k, n}=(n+1)\left (D_{\btau}(k / (n+1))-D_{\btau}((k-1) / (n+1))\right)$.

A second estimator, denoted $\widehat{e}_{\btau,n}^{\sLM}$, relies on the fact that in case of square loss we have
\[
 e_{\btau}(X; D_{\btau})= \mathbb{E}[X d_{\btau} (F_X(X))], 
 \]
leading to the  estimator
 $$\widehat{e}^{\sLM}_{\btau,n}=\frac{1}{n} \sum_{i=1}^n d_{\btau}\left (\frac{i}{n+1}\right )X_{i:n}. $$

A third estimator is obtained by 
 solving the empirical minimization problem.  That is by solving 
$$\argmin_{c\in \mathbb{R}} \frac{1}{n} \sum_{i=1}^n d_{\btau}\left (\frac{i}{n+1}\right )\Big((X_{i:n}-c)^2-X_{i:n}^2  \Big) = \argmin_{c \in \mathbb{R}} \frac{1}{n} \sum_{i=1}^n d_{\btau}\left (\frac{i}{n+1}\right )\Big(-2 c X_{i:n}+c^2  \Big). $$
Differentiating the objective function with respect to $c$ and equating this to zero gives
\[
\frac{1}{n}\sum_{i=1}^n d_{\btau}\left ( \frac{i}{n+1}\right) (-2X_{i:n}+2c)=0 
\quad  \Longleftrightarrow  \quad c=\ds \frac{\ds\sum_{i=1}^n d_{\btau}\left (\frac{i}{n+1}\right )X_{i:n}}{\ds\sum_{i=1}^n d_{\btau}\left (\frac{i}{n+1}\right )}. 
\]
We thus obtain a new M-type estimator  
\[
  \widehat{e}^{\sM}_{\btau,n}=\frac{\ds \sum_{i=1}^n d_{\btau}\left (\frac{i}{n+1}\right)X_{i:n}}{\ds \sum_{i=1}^n d_{\btau}\left (\ds \frac{i}{n+1}\right )}=\frac{\widehat{e}^{\sLM}_{\btau,n}}{\ds \frac{1}{n} \ds \sum_{i=1}^n d_{\btau}\left (\frac{i}{n+1}\right )}.
  \]  

Note that, since we work with the empirical cumulative  distribution function rescaled with the factor $1/(n+1)$, which takes values between $1/(n+1)$ and $n/(n+1)$, when evaluated in  the observations $X_1, \dots, X_n$, we avoid evaluation of the density $d_{\btau}$ in the endpoints 0 and 1. 

All the above three  estimators are consistent and have the same (first order) asymptotic distribution. The proofs of the results stated in this section are given in Appendix \ref{App: ProofsConsistencyANresultsSquareLoss}.

\begin{theorem}\label{TheorySquareLossEstimation}
    Given is a fixed index $\btau$. Assume that $D_{\btau}$ is differentiable with density $d_{\btau}$. Let $\widehat{e}_{\btau,n}$ be any of the two estimators  $\widehat{e}_{\btau,n}^{\sLM}$ or $\widehat{e}_{\btau,n}^{\sM}$.
    \begin{itemize}
    \item[(i)] \textbf{[Consistency]}  Assume $\mathbb{E}[\lvert 
     X \rvert^\kappa]<\infty$ for some $\kappa>1$. Let $d_{\btau}$ be continuous almost everywhere. If for some $0<M<\infty$
    $$\lvert d_{\btau} (t) \rvert \leq M [t(1-t)]^{-1+1/\kappa+\gamma} \quad \text{ for } t \in (0,1)$$
    for some $\gamma>0$, then $\widehat{e}_{\btau,n}\xrightarrow{\text{a.s.}} e_{\btau}(X; D_{\btau})$ as $n\to \infty$.\\
    If in addition $d_{\btau}$ is Lipschitz continuous or bounded (uniformly) on $(0,1)$ then $\widehat{e}_{\btau,n}^{\sL} \xrightarrow{\text{a.s.}} e_{\btau}(X; D_{\btau})$ as $n\to \infty$
    \item[(ii)] \textbf{[Asymptotic normality result]} Assume  $\mathbb{E}[\lvert 
     X \rvert^\kappa]<\infty$ for some $\kappa>2$. Suppose $d_{\btau}'$ exists and is continuous on $(0,1)$. Furthermore, suppose there exists some $\gamma>0$ such that
     $$\lvert d_{\btau} (t) \rvert \leq M [t(1-t)]^{-1/2+1/\kappa+\gamma} \quad \text{ for } t \in (0,1) $$
     and
      $$\lvert d_{\btau}'(t) \rvert \leq M [t(1-t)]^{-3/2+1/\kappa+\gamma} \quad \text{ for } t \in (0,1).$$
      Then  $\sqrt{n}(\widehat{e}_{\btau,n}-e_{\btau}(X; D_{\btau}) )\xrightarrow{d} \mathcal{N}(0; \sigma_{\btau}^2)$ as $n\to \infty$ where
\begin{equation}\label{eq:asymptoticVarianceSquaredLoss}
  \sigma_{\btau}^2=\int_{-\infty}^\infty \int_{-\infty}^\infty(F_X(\min (s, t))-F_X(s) F_X(t)) d_{\btau}(F_X(s)) d_{\btau}(F_X(t)) \mathrm{d}s \mathrm{d}t.  
\end{equation}
If in addition $d_{\btau}$ is Lipschitz continuous  or bounded (uniformly) then $\sqrt{n}(\widehat{e}_{\btau,n}^{\sL}-e_{\btau}(X; D_{\btau}) )\xrightarrow{d} \mathcal{N}(0; \sigma_{\btau}^2)$ as $n\to \infty$.
    \end{itemize}
\end{theorem}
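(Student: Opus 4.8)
The plan is to treat the estimators $\widehat{e}_{\btau,n}^{\sLM}$, $\widehat{e}_{\btau,n}^{\sM}$ and $\widehat{e}_{\btau,n}^{\sL}$ as (essentially) $L$-statistics and invoke the classical asymptotic theory for linear combinations of order statistics with smooth weight functions (in the spirit of Stigler, Shorack--Wellner, or the treatment in van der Vaart). The target can be written as $e_{\btau}(X;D_{\btau})=\int_0^1 F_X^{-1}(u)\,J(u)\,\mathrm{d}u$ with $J=d_{\btau}$, and all three estimators are $\int_0^1 F_n^{-1}(u)\,J_n(u)\,\mathrm{d}u$-type objects with slightly different discretizations of the weights. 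First I would reduce the three estimators to one another: $\widehat{e}^{\sM}_{\btau,n}=\widehat{e}^{\sLM}_{\btau,n}\big/\big(\tfrac1n\sum_i d_{\btau}(i/(n+1))\big)$, and since $\tfrac1n\sum_i d_{\btau}(i/(n+1))\to\int_0^1 d_{\btau}(u)\,\mathrm{d}u=D_{\btau}(1)-D_{\btau}(0)=1$ under the stated tail bounds on $d_{\btau}$ (a Riemann-sum argument, with the endpoint tail contributions controlled by the integrability exponent $-1+1/\kappa+\gamma>-1$), the estimator $\widehat{e}^{\sM}_{\btau,n}$ inherits consistency and, because the normalizing denominator converges to $1$ at rate faster than $n^{-1/2}$ would be needed — in fact one shows $\tfrac1n\sum_i d_{\btau}(i/(n+1))-1=o(n^{-1/2})$ using $d_{\btau}'$ bounds — the $\sqrt n$-asymptotics as well. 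The comparison of $\widehat{e}^{\sL}_{\btau,n}$ (increments of $D_{\btau}$ as weights) with $\widehat{e}^{\sLM}_{\btau,n}$ (values of $d_{\btau}$ as weights) is a mean-value-theorem argument: $D_{\btau}(k/(n+1))-D_{\btau}((k-1)/(n+1))=d_{\btau}(\xi_{k,n})/(n+1)$ for some intermediate point, and the difference $d_{\btau}(\xi_{k,n})-d_{\btau}(k/(n+1))$ is bounded via $d_{\btau}'$ and the $1/(n+1)$ spacing; summing against $|X_{k:n}|$ and using the moment and weight-derivative bounds shows the two differ by $o_p(n^{-1/2})$ (hence the need for the extra Lipschitz/boundedness hypothesis on $d_{\btau}$ in the $\widehat{e}^{\sL}$ statements).

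For the consistency part (i), I would write $\widehat{e}^{\sLM}_{\btau,n}-e_{\btau}(X;D_{\btau})=\int_0^1\big(F_n^{-1}(u)-F_X^{-1}(u)\big)\,d_{\btau}(u)\,\mathrm{d}u+\big(\text{Riemann-sum error}\big)$ and control each piece. The Riemann-sum error goes to $0$ a.s.\ by almost-everywhere continuity of $d_{\btau}$ together with the weight bound. The main term is handled by the Glivenko--Cantelli theorem for $F_n^{-1}$ combined with a uniform integrability / dominated convergence argument: the weight bound $|d_{\btau}(t)|\le M[t(1-t)]^{-1+1/\kappa+\gamma}$ is exactly what is needed so that $\int_0^1 |F_X^{-1}(u)|\,|d_{\btau}(u)|\,\mathrm{d}u<\infty$ under $\mathbb E|X|^\kappa<\infty$ (Hölder: $\|F_X^{-1}\|_{L^\kappa}<\infty$ and $[t(1-t)]^{-1+1/\kappa+\gamma}\in L^{\kappa/(\kappa-1)}$ when $(1-1/\kappa-\gamma)\kappa/(\kappa-1)<1$, which holds for small $\gamma$), and the analogous bound for the empirical quantile tails is obtained via known a.s.\ bounds on order statistics (e.g.\ $X_{1:n}$, $X_{n:n}$) or a truncation argument. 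This gives $\widehat{e}^{\sLM}_{\btau,n}\xrightarrow{a.s.}e_{\btau}(X;D_{\btau})$, and then the reductions above transfer it to the other two estimators.

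For the asymptotic normality part (ii), the backbone is the Bahadur-type / Hájek-projection representation of the $L$-statistic: under the stronger moment assumption $\mathbb E|X|^\kappa<\infty$, $\kappa>2$, the existence and continuity of $d_{\btau}'$, and the sharper weight bounds (exponents $-1/2+1/\kappa+\gamma$ for $d_{\btau}$ and $-3/2+1/\kappa+\gamma$ for $d_{\btau}'$, which are exactly the conditions ensuring square-integrability of the influence function and negligibility of the remainder), one obtains
\begin{equation*}
\sqrt n\big(\widehat{e}^{\sLM}_{\btau,n}-e_{\btau}(X;D_{\btau})\big)=-\frac{1}{\sqrt n}\sum_{i=1}^n \int_{-\infty}^{\infty}\big(1_{\{X_i\le t\}}-F_X(t)\big)\,d_{\btau}(F_X(t))\,\mathrm{d}t+o_p(1).
\end{equation*}
The summands are i.i.d., mean zero, with variance exactly the $\sigma_{\btau}^2$ in \eqref{eq:asymptoticVarianceSquaredLoss} (by Fubini and $\mathrm{Cov}(1_{\{X\le s\}},1_{\{X\le t\}})=F_X(\min(s,t))-F_X(s)F_X(t)$), so the classical CLT finishes it; the reductions then carry the result to $\widehat{e}^{\sM}_{\btau,n}$ and, under the extra regularity, to $\widehat{e}^{\sL}_{\btau,n}$.

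The main obstacle, and where the real work lies, is establishing the $o_p(n^{-1/2})$ negligibility of the remainder in the linearization — i.e.\ justifying the interchange that replaces $F_n^{-1}$ by $F_X^{-1}$ plus the empirical-process linear term, uniformly including the two tails $u\downarrow 0$ and $u\uparrow 1$ where both $F_X^{-1}$ and $d_{\btau}$ may blow up. This requires quantitative control of the quantile-process tails (weighted weak convergence of $\sqrt n (F_n^{-1}-F_X^{-1})$ against $d_{\btau}$, à la Csörgő--Révész / Shorack--Wellner weighted empirical process theory), and it is precisely to make these tail integrals converge that the delicate exponents $1/\kappa$ and $\gamma$ in the hypotheses are calibrated; the bound on $d_{\btau}'$ enters through an integration-by-parts (or summation-by-parts) step that trades a derivative on the weight against the increment of the empirical process. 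I would organize the proof so that this weighted-process estimate is isolated as a lemma and the rest of the argument is bookkeeping.
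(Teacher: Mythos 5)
Your proposal is correct and follows essentially the same route as the paper: all three estimators are treated as $L$-statistics, $\widehat{e}^{\sM}_{\btau,n}$ is reduced to $\widehat{e}^{\sLM}_{\btau,n}$ by a Riemann-sum argument on the normalizing denominator, $\widehat{e}^{\sL}_{\btau,n}$ is compared to $\widehat{e}^{\sLM}_{\btau,n}$ via the mean value theorem under the extra Lipschitz/boundedness hypothesis, and the core limit theorems are those for linear combinations of order statistics with weight functions satisfying the stated $[t(1-t)]$-power bounds. The only difference is presentational: where you sketch the Hájek-projection/weighted-quantile-process argument yourself, the paper simply invokes Wellner (1977, Theorem 3) for the strong law (via its Theorem B.1, adapted to avoid evaluating $d_{\btau}$ at $0$ and $1$) and Shorack and Wellner (2009, Theorem 1(ii), p.~664) for the CLT; you are in fact slightly more careful than the paper in noting that the $\sqrt{n}$-asymptotics of $\widehat{e}^{\sM}_{\btau,n}$ require $\frac{1}{n}\sum_i d_{\btau}(i/(n+1))-1=o(n^{-1/2})$ rather than mere convergence.
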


When $d_{\btau}$ is bounded, we obtain the almost sure convergence for any random variable $X$ with a finite absolute $\kappa$th moment (for some $\kappa>1$). In particular, when $D_{\btau}(u)=K_{\btau}(u)$, that is the case of extremiles, we obtain the same result as in Theorem 1(i) on p. 1370 of \cite{DaouiaAndGijbels2019}. In the case of extremiles, also the function $\lvert d_{\btau}'(t) \rvert t(1-t) $ is bounded. Hence, also the asymptotic normality result in Theorem 1(ii) on p. 1370 of \cite{DaouiaAndGijbels2019} is included as a special case of Theorem \ref{TheorySquareLossEstimation} in our more general framework.

\subsection{General loss functions and distribution functions}\label{sec:estimation--GeneralCase}

We now turn to the general setting of any pair $(D_{\btau}, \ell_{\delta})$.
From now on we assume that  $X$ is an absolutely continuous random variable, i.e., $F_X$ has an associated density function $f_X$. 
We only consider loss functions $\ell_{\delta}(x,c)$ for which 
$
c\mapsto \ell_{\delta}(x,c) \, \mbox{ is convex for every $x$}. 
$
By Proposition \ref{prop: convexLossDeriviativeZero} we know that solving the minimization problem is equivalent to finding a value $c$ for which
$$\frac{\mathrm{d} }{\mathrm{d} c}\Big( \int_{-\infty}^{+\infty} \ell_{\delta}(x,c\, )\mathrm{d} D_{\btau} (F_X(x))\Big)=0.$$
Since  $X$ is an absolutely continuous random variable,  Theorem \ref{theorem:leibnizAbsolutelyContinuous} then ensures the right-hand side equals 
$$\frac{\mathrm{d}}{\mathrm{d}c}\Big( \int_{-\infty}^{+\infty} \ell_{\delta}(x,c)d_{\btau}(F_X(x))f_X(x)\mathrm{d}x \Big)= \int_{-\infty}^{+\infty} \ell_{\delta}'(x,c) d_{\btau}(F_X(x))f_X(x)\mathrm{d}x.$$
We are now in the setting of an $M$-functional problem: finding a value $c$ such that
$$\lambda_F(c):=\int_{-\infty}^{+\infty} d_{\btau}(F_X(x))\ell_{\delta}'(x,c) f_X(x)\mathrm{d}x=0.$$

Given a random  sample $X_1,\dots, X_n$ from $X$, the $M$-estimator is defined as the solution $T_n$ to 
$$\lambda_{F,n}(c):=\frac{1}{n} \sum_{i=1}^n  d_{\btau}(F_X(X_i))\ell_{\delta}'(X_i,c) =0.$$
However, the appearance of $F_X$ leads to an extra complication. 
Since $F_X$ is unknown we replace it by a consistent estimator $\widehat{F}_n$ for it, for example,  by its empirical version $F_n(x)= (n+1)^{-1} \sum_{i=1}^n 1_{\{ X_i \le x\}}$. The estimator $T_{n}^*$ is then defined as the value $c$ for which 
$$\lambda_{\widehat{F}_n}^*(c):=\frac{1}{n}  \sum_{i=1}^n d_{\btau}(\widehat{F}_n(X_i))\ell_{\delta}'(X_i,c)=0.$$
If there is no such value $c$, then we define $T_{n}^*$ as the value minimizing $\lvert \lambda_{\widehat{F}_n}^*(c) \rvert$.
If, in practice, there are multiple values of $c$ for which $\lambda_{\widehat{F}_n}^*(c)=0$, we take the smallest of them.

In what follows we establish the consistency and the asymptotic normality of the estimator $T_{n}^*$. Before stating the results, we introduce some assumptions, needed in various parts of the theoretical results. 
\begin{itemize}
    \item[\textbf{(A1)}] 
    \hspace*{-0.4 cm}\begin{minipage}[t]{10 cm}
    \vspace*{-0.50 cm}
    
    \noindent
    \begin{enumerate}
        \item $X$ is an absolutely continuous random variable;  \item $\lambda_F(c)$ is finite for every $c$ and has a unique root $t_0$;
        \item $c\mapsto \ell_{\delta}(x,c)$ is convex for every $x$;
        \item $x\mapsto\ell_\delta'(x,c)$ is measurable for every $c$.
    \end{enumerate}
    \end{minipage}
    \item[\textbf{(A2)}] $\mathbb{E}[\lvert \ell_{\delta}'(X,c)\rvert]<\infty$.
    \item[\textbf{(A3)}] There exist functions $h_1,h_2$ with 
    $x\mapsto h_j(x,c)$ non-decreasing and left continuous for $j=1,2$ such that $\ell_\delta'(x,c)=h_1(x,c)-h_2(x,c)$ for every $c$ and $x$.
    \item[\textbf{(A4)}] There exists an $r>0$, such that $\mathbb{E}[\lvert h_j(X,c)\rvert^r]<\infty$ for $j=1,2$ with $h_j$ from \textbf{(A3)}.
    \item[\textbf{(A5)}] There exist constants $M,\gamma >0$ such that 
    $\lvert d_{\btau}(t) \rvert  \leq M [t(1-t)]^{-1+1/r+\gamma}$ with $r$ from \textbf{(A4)}.
    \item[\textbf{(A6)}] $d_{\btau}$ has a continuous derivative on $(0,1)$. Furthermore, with $r$ from \textbf{(A4)} there exist $M,\gamma >0$ such that  
    $\lvert d_{\btau}(t) \rvert  \leq M [t(1-t)]^{-1/2+1/r+\gamma}$ and $\lvert d_{\btau}'(t) \rvert  \leq M [t(1-t)]^{-3/2+1/r+\gamma}$.
    \item[\textbf{(A7)}] $d_{\btau}$ is continuous except at possibly a finite number of points.
    \item[\textbf{(A8)}] 
    \begin{enumerate}
        \item $d_{\btau}$ is bounded on $[0,1]$;
         \item $d_{\btau}$ is locally uniformly continuous except at a finite number of points.
    \end{enumerate}
    \item[\textbf{(A9)}] $\lambda_F(c)$ is differentiable at $t_0$ with derivative $\lambda_F'(t_0)\neq 0$.
    \item[\textbf{(A10)}]  For any sequence $\omega_n \rightarrow 0$ as $n \to \infty$ $$\sup_{ \lvert c-t_0 \rvert \leq \omega_n} \frac{\sqrt{n} \lvert \lambda_{\widehat{F}_n}^*(c)-\lambda_{\widehat{F}_n}^*(t_0) - \lambda_F(c) \rvert}{1+\sqrt{n} \lvert c-t_0 \rvert }  \xrightarrow{\text{P}} 0.$$ 
\end{itemize}

Throughout this section we assume the existence (i.e. finiteness), as well as uniqueness, of the target quantity of interest $e_{\btau, \delta}(X; D_{\btau}, \ell_{\delta})$ (also shortly denoted  as $t_0$). \\

Appendix \ref{App: ProofsConsistencyANresultsGenLoss} contains the proofs of the results stated in the various lemmas in this section. 
Lemma \ref{lemma:consistency} expresses the consistency of  $T_{n}^*$ when using a general estimator $\widehat{F}_n$ of $F_X$, satisfying some condition.
Note that this lemma requires the Lipschitz continuity of $d_{\btau}$ which might be a quite restrictive condition. For this reason, we also prove the consistency of  $T_{n}^*$, when using as an estimator for $F_X$ the empirical distribution function. This then allows to show consistency under weaker conditions on $d_{\btau}$. See Lemma \ref{lemma:almostSureConvergence}.

\begin{lemma}\label{lemma:consistency}
Suppose Assumption \textbf{(A1)} and Assumption \textbf{(A2)} hold. Furthermore,  assume that $d_{\btau}$ is Lipschitz continuous. Then for any estimator $\widehat{F}_n$ of $F_X$ for which $\sup_{x\in \mathbb{R}} \lvert \widehat{F}_n(x)-F_X(x) \rvert \xrightarrow{\text{a.s.}} 0$ as $n\to \infty$, we have that $T_n^* \xrightarrow{\text{a.s.}} e_{\btau,\delta}(X;D_{\btau},\ell_{\delta})$ as $n \to \infty$. 
\end{lemma}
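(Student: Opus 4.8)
**Proof plan for Lemma 4.2 (consistency of $T_n^*$ under Lipschitz $d_{\btau}$).**

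The plan is to show that $\lambda_{\widehat{F}_n}^*$ converges uniformly (on compacts, or uniformly over $c$) almost surely to the deterministic function $\lambda_F$, and then to deduce convergence of the approximate zero $T_n^*$ to the unique zero $t_0 = e_{\btau,\delta}(X;D_{\btau},\ell_{\delta})$ of $\lambda_F$ using a standard M-estimation / Z-estimation argument together with the monotonicity furnished by convexity of $c\mapsto\ell_\delta(x,c)$ (so that $c\mapsto\ell_\delta'(x,c)$ is nondecreasing, hence $\lambda_F$ and $\lambda_{F,n}$ are monotone in $c$). The monotonicity is what replaces the usual need for a global uniform law of large numbers: for monotone functions, pointwise a.s.\ convergence at a countable dense set plus monotonicity of the limit already gives the conclusion that any sequence of approximate roots converges to the root, provided the root is unique (Assumption \textbf{(A1)}(2)).

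First I would decompose the difference
\[
\lambda_{\widehat{F}_n}^*(c) - \lambda_F(c)
= \underbrace{\frac{1}{n}\sum_{i=1}^n \big(d_{\btau}(\widehat{F}_n(X_i)) - d_{\btau}(F_X(X_i))\big)\ell_\delta'(X_i,c)}_{=:A_n(c)}
 + \underbrace{\frac{1}{n}\sum_{i=1}^n d_{\btau}(F_X(X_i))\ell_\delta'(X_i,c) - \lambda_F(c)}_{=:B_n(c)}.
\]
For the term $B_n(c)$ I would apply the strong law of large numbers pointwise in $c$: the summands are i.i.d.\ with mean $\mathbb{E}[d_{\btau}(F_X(X))\ell_\delta'(X,c)] = \lambda_F(c)$, which is finite for every $c$ by Assumption \textbf{(A1)}(2); finiteness of the first absolute moment of the summand follows since $d_{\btau}$ is bounded (Lipschitz on $[0,1]$ hence bounded) and $\mathbb{E}[|\ell_\delta'(X,c)|]<\infty$ by Assumption \textbf{(A2)}. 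For the term $A_n(c)$ I would use the Lipschitz property of $d_{\btau}$ with constant $L$, giving
\[
|A_n(c)| \le L\,\Big(\sup_{x\in\mathbb{R}}|\widehat{F}_n(x) - F_X(x)|\Big)\cdot \frac{1}{n}\sum_{i=1}^n |\ell_\delta'(X_i,c)|,
\]
where the supremum term tends to $0$ a.s.\ by hypothesis and the averaged term converges a.s.\ to $\mathbb{E}[|\ell_\delta'(X,c)|]<\infty$ by the SLLN; hence $A_n(c)\to 0$ a.s.\ for each fixed $c$. Combining, $\lambda_{\widehat{F}_n}^*(c)\to\lambda_F(c)$ a.s.\ for every fixed $c$, and after intersecting over a countable dense set of $c$-values we get a single almost-sure event on which this holds for all rational $c$.

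On that event I would finish with the monotone Z-estimator argument: $c\mapsto\ell_\delta'(x,c)$ is nondecreasing for every $x$ (convexity, Assumption \textbf{(A1)}(3)), so both $\lambda_{\widehat{F}_n}^*$ and $\lambda_F$ are nondecreasing; $\lambda_F$ is continuous at $t_0$ with a sign change there (unique root by \textbf{(A1)}(2)), so for any $\varepsilon>0$ we have $\lambda_F(t_0-\varepsilon)<0<\lambda_F(t_0+\varepsilon)$ after possibly shrinking $\varepsilon$ to rational offsets. Pointwise convergence at $t_0\pm\varepsilon$ then forces $\lambda_{\widehat{F}_n}^*(t_0-\varepsilon)<0$ and $\lambda_{\widehat{F}_n}^*(t_0+\varepsilon)>0$ for all large $n$, and since $T_n^*$ is (by definition) a zero of $\lambda_{\widehat{F}_n}^*$ when one exists, or a minimizer of $|\lambda_{\widehat{F}_n}^*|$ otherwise, monotonicity pins $T_n^*$ into $[t_0-\varepsilon,t_0+\varepsilon]$ eventually; letting $\varepsilon\downarrow 0$ along rationals gives $T_n^*\xrightarrow{\text{a.s.}} t_0$. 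A small point to handle carefully is the "no zero exists" branch of the definition of $T_n^*$: here I would note that since $\lambda_{\widehat{F}_n}^*$ is a finite sum of nondecreasing step-like functions it changes sign across $[t_0-\varepsilon,t_0+\varepsilon]$, so $\inf_c|\lambda_{\widehat{F}_n}^*(c)|$ is attained near that interval anyway, keeping the minimizer inside it. The main obstacle I anticipate is precisely this interplay between the (possibly discontinuous, only monotone) structure of $\ell_\delta'$ and the selection rule defining $T_n^*$ — making the sign-change / squeezing argument fully rigorous without assuming continuity of $\lambda_{\widehat{F}_n}^*$ — rather than any of the probabilistic estimates, which are routine given boundedness and Lipschitzness of $d_{\btau}$.
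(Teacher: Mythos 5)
Your proposal is correct and follows essentially the same route as the paper's proof: the identical decomposition into the $F_X$-replacement error (controlled via the Lipschitz constant of $d_{\btau}$ and the uniform a.s.\ convergence of $\widehat{F}_n$, combined with the SLLN for $\tfrac{1}{n}\sum_i|\ell_{\delta}'(X_i,c)|$) plus the empirical-vs-population error (SLLN), followed by the monotone Z-estimator squeeze at $t_0\pm\varepsilon$ using the non-decreasingness of $\lambda_{\widehat{F}_n}^*$ inherited from convexity. Your extra care about strict sign changes at $t_0\pm\varepsilon$ (forced by uniqueness of the root) and about the ``no exact zero'' branch of the definition of $T_n^*$ are refinements the paper glosses over, but they do not change the argument.
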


\begin{remark}\label{remark:existenceOfRootLambda}
\textbf{(i)} 
    Suppose in addition to the assumptions of Lemma \ref{lemma:consistency} that $\ell_{\delta}'(x,c)$ is continuous in $c$ in a neighborhood of $e_{\btau,\delta}(X;D_{\btau},\ell_{\delta})$, for every $x$. Then $\lambda_{\widehat{F}_n}^*$ has a root and this root is equal to $T_n^*$.\\
\textbf{(ii)} The requirement on the nonparametric estimator $\widehat{F}_n$  for $F_X$ is satisfied by the usual empirical  distribution function (or its rescaled version $F_n$). Indeed, this is expressed  by the Glivenko-Cantelli theorem (see p. 62 in \cite{Serfling1980}). 
\end{remark}

In Lemma \ref{lemma:almostSureConvergence}, and in all subsequent results,  we work with the estimator 
$\widehat{F}_n(x):=F_n(x)=(n+1)^{-1} \sum_{i=1}^n 1_{\{X_i\leq x\}}$.
This allows us to weaken the assumption on the density $d_{\btau}$. 

\begin{lemma}\label{lemma:almostSureConvergence}
Suppose Assumptions \textbf{(A1)}, \textbf{(A3)},  \textbf{(A4)},  \textbf{(A5)}  and \textbf{(A7)} hold, then $T_n^*\xrightarrow{\text{a.s.}} e_{\btau, \delta}(X; D_{\btau}, \ell_{\delta})$ as $n\to \infty$.
\end{lemma}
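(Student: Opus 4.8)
\textbf{Proof plan for Lemma \ref{lemma:almostSureConvergence}.}
The plan is to establish almost sure convergence of $T_n^*$ by a standard $M$-estimation argument built around the monotonicity decomposition \textbf{(A3)}. Since $c\mapsto\ell_\delta(x,c)$ is convex, $\lambda_F(c)$ is nondecreasing in $c$, and by \textbf{(A1)} it has the unique root $t_0=e_{\btau,\delta}(X;D_{\btau},\ell_{\delta})$. The key is to show that the empirical criterion $\lambda_{\widehat F_n}^*(c)=\frac{1}{n}\sum_{i=1}^n d_{\btau}(F_n(X_i))\ell_\delta'(X_i,c)$ converges almost surely to $\lambda_F(c)$ at every fixed $c$ (or at $c=t_0\pm\varepsilon$), after which the usual sandwiching argument for monotone estimating equations yields $T_n^*\to t_0$ a.s. First I would fix $c$ and write $\ell_\delta'(X_i,c)=h_1(X_i,c)-h_2(X_i,c)$ with $h_j(\cdot,c)$ nondecreasing and left continuous by \textbf{(A3)}, so that it suffices to treat each $\frac{1}{n}\sum_i d_{\btau}(F_n(X_i))h_j(X_i,c)$ separately.

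For a fixed $c$, observe that when $X_1,\dots,X_n$ are ordered, $F_n(X_{i:n})=i/(n+1)$, so $\frac{1}{n}\sum_i d_{\btau}(F_n(X_i))h_j(X_i,c)=\frac{1}{n}\sum_{i=1}^n d_{\btau}(i/(n+1))\,h_j(X_{i:n},c)$, which is an $L$-statistic with a smooth weight generating function and the (bounded measurable, monotone in the first argument) score $h_j(\cdot,c)$. The target is $\mathbb{E}[d_{\btau}(F_X(X))h_j(X,c)]=\int_0^1 d_{\btau}(u)h_j(F_X^{-1}(u),c)\,\mathrm{d}u$, which is finite: by \textbf{(A4)} the map $u\mapsto h_j(F_X^{-1}(u),c)$ is in $L^r$, by \textbf{(A5)} $d_{\btau}(u)=O([u(1-u)]^{-1+1/r+\gamma})$, and Hölder's inequality with exponents $r$ and $r/(r-1)$ closes the integrability (the exponent $-r/(r-1)+r/(r-1)(1/r+\gamma)$ stays above $-1$ near the endpoints). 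Almost sure convergence of this $L$-statistic then follows from a classical strong law for $L$-statistics with weights given by a continuous-a.e. density satisfying a polynomial growth bound at $0$ and $1$ — this is exactly the kind of hypothesis encoded in \textbf{(A5)} and \textbf{(A7)}, and is of the same flavour as the argument used in Theorem \ref{TheorySquareLossEstimation}(i) (with the identity score replaced by $h_j(\cdot,c)$). One must only take care that replacing $F_X$ by $F_n$ inside $d_{\btau}$ is harmless here because $d_{\btau}(F_n(X_{i:n}))=d_{\btau}(i/(n+1))$ is deterministic given the ranks; the randomness enters only through $h_j(X_{i:n},c)$, so no uniform control of $d_{\btau}\circ F_n$ versus $d_{\btau}\circ F_X$ is needed.

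With pointwise a.s. convergence $\lambda_{\widehat F_n}^*(c)\to\lambda_F(c)$ secured at $c=t_0-\varepsilon$ and $c=t_0+\varepsilon$ for arbitrary small $\varepsilon>0$, I would finish as follows. Since $\lambda_F$ is nondecreasing with unique root $t_0$, we have $\lambda_F(t_0-\varepsilon)<0<\lambda_F(t_0+\varepsilon)$ (strict because $t_0$ is the unique root and $\lambda_F$ is monotone; if $\lambda_F$ were flat at $0$ on an interval the root would not be unique). Hence on an event of probability one, for all $n$ large, $\lambda_{\widehat F_n}^*(t_0-\varepsilon)<0<\lambda_{\widehat F_n}^*(t_0+\varepsilon)$. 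Because $c\mapsto\ell_\delta'(x,c)$ is nondecreasing (convexity of $\ell_\delta$), the finite average $c\mapsto\lambda_{\widehat F_n}^*(c)$ is nondecreasing wherever $d_{\btau}(F_n(X_i))\geq0$; since $d_{\btau}\geq0$ as a density, $\lambda_{\widehat F_n}^*$ is genuinely nondecreasing, so its sign change forces any minimizer of $|\lambda_{\widehat F_n}^*(c)|$ — and in particular the smallest such value, which is $T_n^*$ by definition — to lie in $[t_0-\varepsilon,t_0+\varepsilon]$. Letting $\varepsilon\downarrow0$ along a countable sequence gives $T_n^*\to t_0$ almost surely.

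\textbf{Main obstacle.} The delicate point is the strong law for the weighted $L$-statistic $\frac{1}{n}\sum_i d_{\btau}(i/(n+1))h_j(X_{i:n},c)$ when $d_{\btau}$ is unbounded near $0$ and $1$: one needs the moment condition \textbf{(A4)} together with the precise endpoint growth rate \textbf{(A5)} (and continuity a.e. from \textbf{(A7)}) to control the contribution of the extreme order statistics. I expect this to require invoking (or adapting) a known a.s.\ convergence result for $L$-statistics — of the type behind Theorem \ref{TheorySquareLossEstimation}(i) — and verifying that the chosen exponents $-1+1/r+\gamma$ make the relevant integral $\int_0^1|d_{\btau}(u)|\,|h_j(F_X^{-1}(u),c)|\,\mathrm{d}u$ converge via Hölder; everything else (monotonicity sandwiching, the pointwise-to-uniform upgrade not even being needed) is routine.
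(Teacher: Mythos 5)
Your proposal follows essentially the same route as the paper: rewrite $\lambda_{F_n}^*(c)$ via the decomposition in \textbf{(A3)} as a difference of two $L$-statistics $\tfrac{1}{n}\sum_i d_{\btau}(i/(n+1))h_j(X_{i:n},c)$ of the monotonely transformed samples, apply the Wellner-type strong law (the paper's Theorem \ref{theorem:WellnerReplacement}, with $\kappa=r$ from \textbf{(A4)} and the weight bound \textbf{(A5)}, continuity a.e.\ from \textbf{(A7)}) together with the identity $F_{h_j(X,c)}^{-1}(t)=h_j(F_X^{-1}(t),c)$, and then conclude by the same monotone sandwiching argument as in Lemma \ref{lemma:consistency}. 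The only cosmetic slip is your parenthetical description of the score $h_j(\cdot,c)$ as ``bounded,'' which is neither assumed nor needed since the moment condition \textbf{(A4)} is what drives the integrability; this does not affect the argument.
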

\medskip 

We next turn to establishing an asymptotic normality result for the estimator $T_n^*$. We therefore  first prove the asymptotic normality of $\lambda_{F_n}^*(t_0)$. In Lemma \ref{lemma:asympLambda} this is done under, among others, the assumption that the density $d_{\btau}$ has a continuous derivative. In Lemma \ref{lemma:asympLambdaSecondVersion} this assumption is relaxed.
To ease calculations in practice, we provide two equivalent expressions for the asymptotic variance in Lemma \ref{lemma:asympLambda} (and subsequently Lemma \ref{lemma:asympLambdaSecondVersion}). 
\begin{lemma}\label{lemma:asympLambda}
  Suppose Assumptions \textbf{(A1)}, \textbf{(A3)}, \textbf{(A4)} and \textbf{(A6)} are satisfied. Then $\sqrt{n} \lambda_{F_n}^*(t_0) \xrightarrow{\text{d}} \mathcal{N}(0;\sigma^2_{t_0})$ as $n\to \infty$, where
\begin{align}
   \sigma^2_{t_0}&= \int_{-\infty}^{+\infty} \int_{-\infty}^{+\infty} [F_X(\min\{ x,y\}) - F_X(x)F_X(y)] \cdot d_{\btau}(F_X(x))d_{\btau}(F_X(y)) \mathrm{d}\ell_{\delta}'(x,t_0) \mathrm{d}\ell_{\delta}'(y,t_0)\label{eq: sigmat01}\\
   &=\sum_{i,j\in \{1,2\}} (-1)^{i+j} \int_{-\infty}^{+\infty} \int_{-\infty}^{+\infty} [\min\{ F_{h_i(X,t_0)}(x),F_{h_j(X,t_0)}(y)\} - F_{h_i(X,t_0)}(x)F_{h_j(X,t_0)}(y)]\notag\\ 
   &\hspace{5cm}\cdot d_{\btau}(F_{h_i(X,t_0)}(x))d_{\btau}(F_{h_j(X,t_0)}(y)) \mathrm{d}x \mathrm{d}y.\label{eq: sigmat02}
\end{align}
\end{lemma}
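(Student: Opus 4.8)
The plan is to express $\sqrt{n}\,\lambda_{F_n}^*(t_0)$ as an $L$-statistic (or a sum of such) plus an asymptotically negligible remainder, and then invoke a central limit theorem for $L$-statistics. Concretely, writing $\ell_\delta'(x,t_0) = h_1(x,t_0) - h_2(x,t_0)$ via Assumption \textbf{(A3)} and using the monotone left-continuous functions $h_j$, I would first rewrite
\[
\lambda_{F_n}^*(t_0) = \frac1n \sum_{i=1}^n d_{\btau}(F_n(X_i))\,\ell_\delta'(X_i,t_0),
\]
and, after ordering the sample, recognize that $F_n(X_{k:n}) = k/(n+1)$, so that $\lambda_{F_n}^*(t_0)$ is a weighted sum $\frac1n\sum_k d_{\btau}(k/(n+1))\,\ell_\delta'(X_{k:n},t_0)$. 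Since $\ell_\delta'(\cdot,t_0)$ need not be monotone but is a difference of monotone functions, I would split the sum into the two pieces coming from $h_1$ and $h_2$; each piece is (up to the $n/(n+1)$ factor) an $L$-statistic with score function $d_{\btau}$ applied to a transformed sample $h_j(X_i,t_0)$, whose order statistics coincide with $h_j(X_{k:n},t_0)$ by monotonicity of $h_j$.

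Next I would apply a Stigler-type CLT for $L$-statistics (e.g.\ the results of Stigler (1974) or the treatment in Serfling (1980), Chapter 8), whose hypotheses are exactly controlled by Assumptions \textbf{(A4)} (finite $r$th moment of the $h_j(X,t_0)$, giving the needed tail control on the quantile function) and \textbf{(A6)} (the density $d_{\btau}$ has a continuous derivative and satisfies the growth bounds $\lvert d_{\btau}(t)\rvert \le M[t(1-t)]^{-1/2+1/r+\gamma}$, $\lvert d_{\btau}'(t)\rvert \le M[t(1-t)]^{-3/2+1/r+\gamma}$). These growth conditions are precisely what is required to ensure the score function and its derivative do not blow up too fast near the endpoints relative to the available moments, so that the $L$-statistic is asymptotically linear with a Gaussian limit. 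Assumption \textbf{(A1)} (absolute continuity of $X$) guarantees $F_X$ is continuous so the order-statistic representation is clean and $F_X(X)$ is uniform. The conclusion is joint asymptotic normality of the vector of the two $L$-statistics, hence of their difference $\sqrt n\,\lambda_{F_n}^*(t_0)$, with a Gaussian limit of mean zero (the centering being $\int_0^1 d_{\btau}(u)\,\ell_\delta'(F_X^{-1}(u),t_0)\,\mathrm{d}u = \lambda_F(t_0) = 0$ since $t_0$ is the root).

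For the variance I would use the standard Hájek-projection / influence-function computation: the $L$-statistic $\frac1n\sum_k d_{\btau}(k/(n+1))\, g(X_{k:n})$ is asymptotically equivalent to $\frac1n\sum_i \int (\mathbf{1}_{\{X_i\le y\}} - F_X(y))\, d_{\btau}(F_X(y))\,\mathrm{d}g(y)$ (a Riemann-Stieltjes integral against $g = \ell_\delta'(\cdot,t_0)$ or against $h_j$), whose variance is $\iint [F_X(\min\{x,y\}) - F_X(x)F_X(y)]\, d_{\btau}(F_X(x)) d_{\btau}(F_X(y))\,\mathrm{d}g(x)\,\mathrm{d}g(y)$. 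Taking $g = \ell_\delta'(\cdot,t_0)$ directly gives \eqref{eq: sigmat01}; alternatively, writing $g = h_1 - h_2$, expanding the bilinear form over $i,j\in\{1,2\}$ with the signs $(-1)^{i+j}$, and performing the change of variables $x\mapsto h_i(x,t_0)$ (so that $d\,h_i(x,t_0)$ becomes $\mathrm{d}x$ on the transformed scale and $F_X(x)$ becomes $F_{h_i(X,t_0)}(x)$) yields \eqref{eq: sigmat02}. I would verify that all the interchanges of sum, integral and limit in this change of variables are justified by the growth and moment bounds, and that the covariance kernel between the $h_i$- and $h_j$-scales is indeed $\min\{F_{h_i}, F_{h_j}\} - F_{h_i}F_{h_j}$, which follows from $\mathrm{Cov}(\mathbf 1_{\{h_i(X,t_0)\le x\}}, \mathbf 1_{\{h_j(X,t_0)\le y\}})$ and the fact that both indicators are driven by the same $X$.

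The main obstacle I anticipate is the endpoint behaviour: since $d_{\btau}$ is allowed to be unbounded near $0$ and $1$ and $\ell_\delta'(\cdot,t_0)$ may be unbounded, one must carefully match the integrability/moment assumptions \textbf{(A4)}–\textbf{(A6)} to the precise hypotheses of the $L$-statistic CLT, controlling the contribution of the extreme order statistics and showing the difference between $d_{\btau}(k/(n+1))$ and $d_{\btau}(F_n(X_{k:n}))$ (here actually equal) and between using $F_n$ versus $F_X$ in the score is negligible — the growth rate $-1/2+1/r+\gamma$ (resp.\ $-3/2+1/r+\gamma$) is exactly calibrated for this, and making that quantitative is the delicate part. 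A secondary technical point is justifying the Riemann-Stieltjes manipulations and Fubini steps when $g=\ell_\delta'(\cdot,t_0)$ is only of bounded variation on compacts rather than globally, which is where writing it as $h_1-h_2$ and handling each monotone piece separately pays off.
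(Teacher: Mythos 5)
Your proposal follows essentially the same route as the paper: both represent $\lambda_{F_n}^*(t_0)$ as the $L$-statistic $\frac{1}{n}\sum_{k} d_{\btau}(k/(n+1))\,\ell_{\delta}'(X_{k:n},t_0)$, invoke an $L$-statistic CLT whose hypotheses are matched by \textbf{(A3)}, \textbf{(A4)} and \textbf{(A6)} (the paper cites Theorem~1(ii), p.~664, of Shorack and Wellner, which handles the bounded-variation score $\ell_{\delta}'=h_1-h_2$ directly rather than via a joint CLT for the two monotone pieces), and then obtain \eqref{eq: sigmat02} from \eqref{eq: sigmat01} by the change of variables $F_{h_j(X,t_0)}^{-1}(x)=h_j(F_X^{-1}(x),t_0)$. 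The differences are only in the choice of reference and in bookkeeping, so your argument is correct and aligned with the paper's.
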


As Lemma \ref{lemma:asympLambda} does not allow for a discontinuous density $d_{\btau}$, it does not cover the case for $D_{\btau}$ corresponding to expected shortfall (entry number 5 in Table 3). The following result covers this specific case.

\begin{lemma}\label{lemma:asympLambdaSecondVersion}
Suppose Assumptions \textbf{(A1)}, \textbf{(A3)}, \textbf{(A4)} with $r>2$ and \textbf{(A8)} are satisfied. Then $\sqrt{n} \lambda_{F_n}^*(t_0) \xrightarrow{\text{d}} \mathcal{N}(0;\sigma^2_{t_0})$ as $n\to \infty$.
\end{lemma}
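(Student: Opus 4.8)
The plan is to rewrite $\sqrt n\,\lambda_{F_n}^*(t_0)$ as a linear combination of order statistics and then invoke a central limit theorem for such statistics that tolerates a discontinuous weight (score) function, at the price of the strengthened moment condition $r>2$ in \textbf{(A4)}. First, since $X$ is absolutely continuous by \textbf{(A1)}, the observations $X_1,\dots,X_n$ are almost surely pairwise distinct, so ordering them as $X_{(1)}<\dots<X_{(n)}$ yields $F_n(X_{(k)})=k/(n+1)$ and hence
\[
\sqrt n\,\lambda_{F_n}^*(t_0)=\frac{1}{\sqrt n}\sum_{k=1}^n d_{\btau}\!\left(\frac{k}{n+1}\right)\ell_{\delta}'\bigl(X_{(k)},t_0\bigr).
\]
Writing $\ell_{\delta}'(x,t_0)=h_1(x,t_0)-h_2(x,t_0)$ with each $x\mapsto h_j(x,t_0)$ non-decreasing and left-continuous (Assumption \textbf{(A3)}), monotonicity gives $h_j(X_{(k)},t_0)=Y^{(j)}_{(k)}$, the $k$th order statistic of the i.i.d.\ sample $Y^{(j)}_i:=h_j(X_i,t_0)$ whose law $G_j$ has a finite absolute moment of order $r>2$ by \textbf{(A4)}. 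Thus $\sqrt n\,\lambda_{F_n}^*(t_0)=\sqrt n\,T_n^{(1)}-\sqrt n\,T_n^{(2)}$ with $T_n^{(j)}:=\tfrac1n\sum_{k=1}^n d_{\btau}(k/(n+1))\,Y^{(j)}_{(k)}$, a classical $L$-statistic with score function $J:=d_{\btau}$.

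Next I would apply a CLT for $L$-statistics whose score function is bounded and Lebesgue-almost-everywhere continuous --- both granted by \textbf{(A8)} --- over an underlying distribution with a finite absolute moment of order $r>2$ (see, e.g., \cite{Serfling1980}, Section~8.2, and the references therein). This furnishes, for $j=1,2$, the asymptotically linear (H\'ajek) representation
\[
\sqrt n\Bigl(T_n^{(j)}-\int_0^1 d_{\btau}(u)\,G_j^{-1}(u)\,\mathrm{d}u\Bigr)=\frac1{\sqrt n}\sum_{i=1}^n\psi_j\!\bigl(Y^{(j)}_i\bigr)+o_P(1),
\]
with centered, square-integrable influence function $\psi_j(y)=-\int_{\mathbb R}\bigl(1_{\{y\le z\}}-G_j(z)\bigr)\,d_{\btau}(G_j(z))\,\mathrm{d}z$. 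Since $F_X$ is continuous, $G_j^{-1}(u)=h_j(F_X^{-1}(u),t_0)$ for Lebesgue-almost every $u$, so the two centering constants combine into $\int_0^1 d_{\btau}(u)\,\ell_{\delta}'(F_X^{-1}(u),t_0)\,\mathrm{d}u=\mathbb E[d_{\btau}(F_X(X))\,\ell_{\delta}'(X,t_0)]=\lambda_F(t_0)=0$. Setting $\Psi(x):=\psi_1(h_1(x,t_0))-\psi_2(h_2(x,t_0))$ we obtain
\[
\sqrt n\,\lambda_{F_n}^*(t_0)=\frac{1}{\sqrt n}\sum_{i=1}^n\Psi(X_i)+o_P(1),
\]
with $\mathbb E[\Psi(X)]=0$ and $\mathrm{Var}(\Psi(X))<\infty$, whence the ordinary CLT gives the asserted asymptotic normality.

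It then remains to check that $\mathrm{Var}(\Psi(X))=\sigma^2_{t_0}$. Because $\ell_{\delta}'(\cdot,t_0)$ is of locally bounded variation by \textbf{(A3)}, $\Psi$ can be rewritten as $\Psi(x)=-\int_{\mathbb R}\bigl(1_{\{x\le z\}}-F_X(z)\bigr)\,d_{\btau}(F_X(z))\,\mathrm{d}\ell_{\delta}'(z,t_0)$, and Fubini together with $\mathrm{Cov}(1_{\{X\le x\}},1_{\{X\le y\}})=F_X(\min\{x,y\})-F_X(x)F_X(y)$ recovers \eqref{eq: sigmat01}; the equivalent form \eqref{eq: sigmat02} then follows by transporting the integrals to the $h_i$-scale exactly as in the proof of Lemma~\ref{lemma:asympLambda}. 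I expect the crux to be the $L$-statistic step: unlike in Lemma~\ref{lemma:asympLambda}, no Taylor expansion of $d_{\btau}$ around $F_X$ is available, so the negligibility of the remainder must be extracted from the local uniform continuity of $d_{\btau}$ away from its finitely many jumps, together with the strengthened moment $r>2$, which is precisely what controls $\sqrt n$ times the oscillation of the associated weighted empirical process both near the jump locations and in the tails. One should also verify that the invoked $L$-statistic theorem allows $G_j$ to have atoms (as may occur where $h_j(\cdot,t_0)$ is flat or jumps) and that the finitely many discontinuities of $u\mapsto d_{\btau}(u)$, pulled back through the continuous map $F_X$, remain Lebesgue-null in the integration variable.
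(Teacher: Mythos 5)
Your proposal is correct and follows essentially the same route as the paper: both reduce $\sqrt{n}\,\lambda_{F_n}^*(t_0)$ to an $L$-statistic with the bounded, almost-everywhere continuous score $d_{\btau}$ acting on order statistics, and both lean on the decomposition $\ell_{\delta}'=h_1-h_2$ from \textbf{(A3)} and the moment condition $r>2$ to invoke an $L$-statistic CLT that tolerates finitely many discontinuities of the score. The only mechanical difference is the citation and the bookkeeping: the paper applies Theorem~2 of Wozabal (2009) in a single stroke to $h=\ell_{\delta}'(\cdot,t_0)$ with constant weights $c_{ni}=d_{\btau}(i/(n+1))$, verifying its boundedness, moment, equicontinuity and local-uniform-convergence hypotheses from \textbf{(A3)}, \textbf{(A4)} and \textbf{(A8)}, whereas you split into two $L$-statistics $T_n^{(1)},T_n^{(2)}$ and recombine them through their H\'ajek (asymptotically linear) representations before applying the ordinary CLT. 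Your route therefore needs the stronger output of asymptotic linearity (not just marginal normality of each piece) so the two pieces can be added with the correct joint law; this is available for bounded a.e.-continuous scores under $r>2$ (e.g.\ Serfling, Ch.~8, or Shorack and Wellner), and the caveats you flag at the end --- possible atoms of $G_j$ and the pullback of the finitely many jump points of $d_{\btau}$ --- are exactly the points the paper's appeal to Wozabal's local uniform continuity condition in \textbf{(A8)} is designed to absorb. The finiteness of $\sigma^2_{t_0}$, which you obtain from square-integrability of the influence functions, is settled in the paper by Lemma~1 on p.~663 of Shorack and Wellner.
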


We can now  state (and prove) the asymptotic normality result for $T_n^*$.

\begin{theorem}\label{theorem: asymptoticNormalityEmpirical}
Suppose that in addition to Assumptions \textbf{(A1)}, \textbf{(A3)}, \textbf{(A9)} and \textbf{(A10)} one of the following sets of assumptions is satisfied
\begin{itemize}
    \item  \textbf{(A4)} and \textbf{(A6)}
    \vspace*{-0.44 cm}
    
    \noindent
    \item[] \hspace*{-0.8 cm} or 
        \vspace*{-0.44 cm}
    
    \noindent
    \item \textbf{(A4)} with $r>2$ and \textbf{(A8)}.
\end{itemize}
Then $\sqrt{n}(T_n^* - t_0)  \xrightarrow{\text{d}} \mathcal{N}\left ( 0; \frac{\sigma^2_{t_0}}{[\lambda_F'(t_0)]^2}\right )$ as $n\to \infty$, with $\sigma^2_{t_0}$ as in \eqref{eq: sigmat01} (or \eqref{eq: sigmat02}) in  Lemma \ref{lemma:asympLambda}.
\end{theorem}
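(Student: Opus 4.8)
The plan is to derive the asymptotic normality of $T_n^*$ from the asymptotic normality of $\lambda_{F_n}^*(t_0)$ via a standard $M$-estimation linearization argument. First I would recall that by Lemma~\ref{lemma:almostSureConvergence} (whose hypotheses are implied by either set of assumptions in the theorem, since \textbf{(A1)}, \textbf{(A3)}, \textbf{(A4)} are assumed and \textbf{(A5)}, \textbf{(A7)} follow from \textbf{(A6)} or \textbf{(A8)}) we already have consistency, i.e.\ $T_n^* \xrightarrow{\text{a.s.}} t_0$, so in particular $\lvert T_n^* - t_0 \rvert \xrightarrow{\text{P}} 0$, and one may take a deterministic sequence $\omega_n \to 0$ with $\mathbb{P}(\lvert T_n^* - t_0\rvert \le \omega_n) \to 1$. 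Then I would invoke the stochastic equicontinuity/asymptotic linearity condition \textbf{(A10)}: applied at $c = T_n^*$ (which lies in the $\omega_n$-ball with probability tending to one) it gives
$$
\sqrt{n}\,\bigl(\lambda_{\widehat{F}_n}^*(T_n^*) - \lambda_{\widehat{F}_n}^*(t_0)\bigr) - \sqrt{n}\,\lambda_F(T_n^*) = o_{\mathbb{P}}\bigl(1 + \sqrt{n}\lvert T_n^* - t_0\rvert\bigr).
$$

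Next I would use the (near-)root property of $T_n^*$: either $\lambda_{\widehat{F}_n}^*(T_n^*) = 0$ exactly (this holds under the mild continuity condition of Remark~\ref{remark:existenceOfRootLambda}(i)), or $\lambda_{\widehat{F}_n}^*(T_n^*) = o_{\mathbb{P}}(n^{-1/2})$ by the minimizing definition together with $\lambda_{F_n}^*(t_0) = O_{\mathbb{P}}(n^{-1/2})$ from Lemma~\ref{lemma:asympLambda}/\ref{lemma:asympLambdaSecondVersion}; in either case $\sqrt{n}\,\lambda_{\widehat{F}_n}^*(T_n^*) = o_{\mathbb{P}}(1)$. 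Substituting $\widehat{F}_n = F_n$ and rearranging the display above yields
$$
-\sqrt{n}\,\lambda_{F_n}^*(t_0) = \sqrt{n}\,\lambda_F(T_n^*) + o_{\mathbb{P}}\bigl(1 + \sqrt{n}\lvert T_n^* - t_0\rvert\bigr).
$$
Then I would Taylor-expand $\lambda_F$ at $t_0$ using differentiability \textbf{(A9)} and $\lambda_F(t_0) = 0$ (Assumption \textbf{(A1)}, part 2): $\lambda_F(T_n^*) = \lambda_F'(t_0)(T_n^* - t_0) + o(\lvert T_n^* - t_0\rvert)$. Combining, $-\sqrt{n}\,\lambda_{F_n}^*(t_0) = \lambda_F'(t_0)\sqrt{n}(T_n^* - t_0) + o_{\mathbb{P}}(1 + \sqrt{n}\lvert T_n^* - t_0\rvert)$.

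The final step is to solve for $\sqrt{n}(T_n^* - t_0)$. Since $\lambda_F'(t_0) \neq 0$ and $\sqrt{n}\,\lambda_{F_n}^*(t_0)$ is $O_{\mathbb{P}}(1)$ (Lemma~\ref{lemma:asympLambda} or \ref{lemma:asympLambdaSecondVersion}, whichever set of assumptions is in force), a routine argument shows $\sqrt{n}(T_n^* - t_0) = O_{\mathbb{P}}(1)$, which absorbs the $\sqrt{n}\lvert T_n^* - t_0\rvert$ remainder into $o_{\mathbb{P}}(1)$; hence
$$
\sqrt{n}(T_n^* - t_0) = -\frac{1}{\lambda_F'(t_0)}\,\sqrt{n}\,\lambda_{F_n}^*(t_0) + o_{\mathbb{P}}(1).
$$
By Lemma~\ref{lemma:asympLambda} (under \textbf{(A4)}+\textbf{(A6)}) or Lemma~\ref{lemma:asympLambdaSecondVersion} (under \textbf{(A4)} with $r>2$ and \textbf{(A8)}) the right-hand side converges in distribution to $\mathcal{N}\bigl(0; \sigma_{t_0}^2/[\lambda_F'(t_0)]^2\bigr)$, and Slutsky's theorem finishes the proof. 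I expect the main obstacle to be the bootstrapping of the rate $\sqrt{n}(T_n^*-t_0) = O_{\mathbb{P}}(1)$ so that \textbf{(A10)}'s remainder term is genuinely negligible; this is the delicate point that justifies why \textbf{(A10)} is stated with the $1 + \sqrt{n}\lvert c - t_0\rvert$ normalization rather than assuming a direct $o_{\mathbb{P}}(n^{-1/2})$ expansion, and it is handled by a short two-step argument (first get $T_n^*-t_0 = o_{\mathbb{P}}(1)$, feed it back to obtain tightness of the rescaled quantity, then conclude).
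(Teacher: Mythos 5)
Your argument is correct in substance, but it is worth noting that it is not really a different route from the paper's: the paper's own proof consists of a one-line appeal to Theorem 7.2 of \cite{NeweyAndMcFadden1986} (equivalently Theorem 3.3 of \cite{PakesAndPollard1989}), after observing that the approximate-minimizer condition $(\lambda_{F_n}^*(T_n^*))^2 \leq \inf_c (\lambda_{F_n}^*(c))^2 + o_P(n^{-1})$ holds by the very definition of $T_n^*$. Assumptions \textbf{(A9)} and \textbf{(A10)} are tailored to be the hypotheses of that cited theorem, and what you have written out is precisely the internal linearization argument of its proof: consistency, stochastic equicontinuity at $c=T_n^*$, Taylor expansion of $\lambda_F$ at $t_0$, the rate-bootstrapping step to absorb the $\sqrt{n}\lvert T_n^*-t_0\rvert$ normalization, and Slutsky combined with Lemma \ref{lemma:asympLambda} or \ref{lemma:asympLambdaSecondVersion}. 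So the two proofs are conceptually identical; yours simply makes the citation self-contained, which is a legitimate (and arguably more informative) way to present it.

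The one step that is under-justified is the near-root property. You claim $\lambda_{\widehat{F}_n}^*(T_n^*) = o_P(n^{-1/2})$ ``by the minimizing definition together with $\lambda_{F_n}^*(t_0) = O_P(n^{-1/2})$,'' but the direct comparison $\lvert \lambda_{\widehat{F}_n}^*(T_n^*)\rvert = \inf_c \lvert \lambda_{\widehat{F}_n}^*(c)\rvert \leq \lvert \lambda_{\widehat{F}_n}^*(t_0)\rvert$ only yields $O_P(n^{-1/2})$, which is not enough: an $O_P(1)$ term surviving in $\sqrt{n}\lambda_{\widehat{F}_n}^*(T_n^*)$ would contaminate the limit distribution. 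The standard repair (and the one used inside the Pakes--Pollard proof) is to evaluate the linearization at the specific point $\tilde{c}_n = t_0 - \lambda_{F_n}^*(t_0)/\lambda_F'(t_0)$, which lies within $O_P(n^{-1/2})$ of $t_0$; Assumption \textbf{(A10)} together with \textbf{(A9)} then gives $\sqrt{n}\,\lambda_{\widehat{F}_n}^*(\tilde{c}_n) = \sqrt{n}\,\lambda_{F_n}^*(t_0) + \sqrt{n}\,\lambda_F(\tilde{c}_n) + o_P(1) = o_P(1)$, whence $\inf_c\lvert\lambda_{\widehat{F}_n}^*(c)\rvert = o_P(n^{-1/2})$ and therefore $\lvert\lambda_{\widehat{F}_n}^*(T_n^*)\rvert = o_P(n^{-1/2})$ as you need. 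With that insertion your proof is complete; the remaining steps, including the verification that \textbf{(A5)} and \textbf{(A7)} follow from either \textbf{(A6)} or \textbf{(A8)} so that consistency is available, are sound.
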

\begin{proof}
    This follows by Theorem 7.2 on p. 2186 in \cite{NeweyAndMcFadden1986}. See also Theorem 3.3 on p. 1040 in \cite{PakesAndPollard1989}. Both references state the assumption (using our notations)
$$(\lambda_{F_n}^*(T_n^*))^2 \leq \inf_{c \in \mathbb{R}} \Big(\lambda_{F_n}^*(c)\Big)^2 + o_P(n^{-1}).$$ By definition of $T_n^*$ this is always satisfied.
\end{proof}

\begin{remark}\label{remark:equicontcontinuity condition}
\textbf{(i)} 
Assumption \textbf{(A10)} is a kind of  equicontinuity assumption. 
In Section S2 of the Supplementary Material it is shown that this assumption is satisfied for several loss functions, among which the square loss, the Esscher loss (entry number 6 in Table \ref{LossFunctions1A}), and the loss function with entry G2 in Table \ref{LossFunctions2A}. \\
\noindent
\textbf{(ii)} In Section \ref{SquareLossGeneralDtau} we established an asymptotic normality result for the considered estimators, in the specific setting of square loss ($\ell_{\delta}(x,c)=(x-c)^2$). See Theorem \ref{TheorySquareLossEstimation}. Theorem  \ref{theorem: asymptoticNormalityEmpirical}
on the other hand is for general loss functions. 
It is easily seen that in case of square loss the expression for the asymptotic variance in \eqref{eq: sigmat02} reduces to that in \eqref{eq:asymptoticVarianceSquaredLoss}. Indeed, note that  for the square loss function: $\ell_{\delta}'(x,c)=-2(x-c)$ and  $\lambda_F'(c)=-2 $. A simple substitution into \eqref{eq: sigmat02} leads to \eqref{eq:asymptoticVarianceSquaredLoss}. A setting with square loss was also considered in \cite{JonesAndZitikis2003}.
Although the authors consider another estimator, the asymptotic variances coincide. Since \cite{JonesAndZitikis2003} (see pp. 49--50) also discuss a consistent estimator for the asymptotic variance, this estimator can also be used in our context.
\end{remark} 

\subsection{Applications of the asymptotic normality result}\label{sec:estimation--ApplicationsANResult}

The asymptotic normality results established in Theorem \ref{TheorySquareLossEstimation}(ii) and Theorem \ref{theorem: asymptoticNormalityEmpirical} allow, for example, to create asymptotic (pointwise) confidence intervals for the generalized extremile $e_{\btau, \delta}(X; D_{\btau}, \ell_{\delta})$. 

In this section, we apply Theorem \ref{theorem: asymptoticNormalityEmpirical} for various loss functions, establishing as such results on the corresponding estimator $T_n^*$  for the generalized extremile $e_{\btau, \delta}(X; D_{\btau}, \ell_{\delta})$ under study. 
In each of the illustrative examples, we focus on the asymptotic normality result and in particular the expression for the asymptotic variance. 

We start by looking into the two loss functions encountered when estimating quantiles. Recall from Section \ref{sec: Introduction} that quantiles  of $X$ can be obtained using at least two different pairs $(D_{\btau}, \ell_{\delta})$.  
\begin{eqnarray}
\mbox{approach 1}: \qquad & & D_{\btau}(u)=u \qquad \mbox{and} \qquad \ell_{\delta}(x,c)=\left|\delta-1_{\{x\le c\}}\right| |x-c| \label{QuantileApproach1}\\
\mbox{approach 2}: \qquad  & & D_{\btau}(u)=K_{\tau}(u) \quad \mbox{in \eqref{DtauExtremiles}}  \qquad \mbox{and} \qquad \ell_{\delta}(x,c)= |x-c| , 
\label{QuantileApproach2}
\end{eqnarray}
where in approach 2, one takes $\tau=\delta$ to obtain the $\delta$th quantile of $X$. 

In Corollary \ref{cor:absoluteValueLossAsymptoticVariance} we consider an absolute value loss function, whereas in Corollary \ref{cor:QuantileLossAsymptoticVariance} we apply Theorem \ref{theorem: asymptoticNormalityEmpirical} with the loss function 
$\ell_{\delta}(x,c)=\left|\delta-1_{\{x\le c\}}\right| |x-c|$. 
Before stating the first corollary we make an important remark.
\begin{remark}
    Consider the absolute value loss $\ell_{\delta}(x,c)=\lvert x-c\rvert.$ The derivative with respect to $c$ of this function is not defined in the point $c=x$. In order for $x \mapsto \ell_{\delta}'(x,c)$ to be left continuous (see Assumption \textbf{(A3)}), we will choose 
    $\ell_{\delta}'$ to be the right-hand side  derivative. In the specific case of the absolute value loss, this means $\ell_{\delta}'(x,c)=1_{\{x\leq c\}}-1_{\{x> c\}}$. This choice for the right-hand side derivative is justified since we are integrating with respect to $x$. Meaning that the integrand is uniquely defined almost everywhere.
\end{remark} 

\begin{corollary}\label{cor:absoluteValueLossAsymptoticVariance}
    Consider the absolute value loss function $\ell_{\delta}(x,c)=\lvert x-c\rvert.$  Consider a random variable $X$ and a cumulative distribution function $D_{\btau}$ such that the assumptions of Theorem \ref{theorem: asymptoticNormalityEmpirical} are satisfied. Then the asymptotic variance reduces to
\begin{equation}\label{eq: ANVarianceQuantile1}
\frac{\text{Median}(D_{\btau})(1-\text{Median}(D_{\btau}))}{\left [ f_X\left (  \mbox{Median}(X_{\sD_{\btau}})\right ) \right ]^2}, 
\end{equation} 
where $\mbox{Median}(D_{\tau})$ denotes the median of a random variable with cumulative distribution function $D_{\btau}$. 
\end{corollary}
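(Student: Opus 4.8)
The plan is to apply Theorem~\ref{theorem: asymptoticNormalityEmpirical} directly, so the task reduces to two computations: identifying the asymptotic variance factor $\sigma^2_{t_0}$ from \eqref{eq: sigmat02} and the derivative $\lambda_F'(t_0)$, for the specific loss $\ell_\delta(x,c)=\lvert x-c\rvert$. First I would record that, with the right-hand derivative convention from the preceding remark, $\ell_\delta'(x,c)=1_{\{x\le c\}}-1_{\{x>c\}}=2\cdot 1_{\{x\le c\}}-1$, so in the decomposition from Assumption~\textbf{(A3)} one can take $h_1(x,c)=1_{\{x\le c\}}$ and $h_2(x,c)=1_{\{x>c\}}$ (both non-decreasing and left-continuous in $x$ after the convention fix). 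Next I would recall from Proposition~\ref{prop:equivalentDefinition} (and Table~\ref{LossFunctions1B}, entry~1) that the generalized extremile here is $t_0=e_{\btau,\delta}(X;D_{\btau},\ell_\delta)=\mathrm{Median}(X_{\sD_{\btau}})$, i.e.\ the point where $D_{\btau}(F_X(t_0))=\tfrac12$.

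For the denominator, $\lambda_F(c)=\int_{-\infty}^{+\infty} d_{\btau}(F_X(x))\,\ell_\delta'(x,c)\,f_X(x)\,\mathrm{d}x = \int_{-\infty}^{+\infty} d_{\btau}(F_X(x))\,(2\cdot 1_{\{x\le c\}}-1)\,f_X(x)\,\mathrm{d}x = 2D_{\btau}(F_X(c)) - 1$. Differentiating with respect to $c$ at $t_0$ gives $\lambda_F'(t_0)=2\,d_{\btau}(F_X(t_0))\,f_X(t_0)$. Since $t_0$ is the median of $X_{\sD_{\btau}}$, this is $2\,d_{\btau}(F_X(t_0))\,f_X(t_0)$, and I note that $f_X(t_0)=f_{X_{\sD_{\btau}}}(t_0)/d_{\btau}(F_X(t_0))$ would be one way to rewrite it, but it is cleaner to keep it as is and cancel later. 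For the numerator, I would use the representation \eqref{eq: sigmat02}: with $h_1(X,t_0)=1_{\{X\le t_0\}}$ a Bernoulli random variable taking value $1$ with probability $p:=F_X(t_0)$, its distribution function $F_{h_1(X,t_0)}$ is a two-atom step function, and similarly $h_2(X,t_0)=1_{\{X>t_0\}}$ is Bernoulli with parameter $1-p$. Plugging these Bernoulli laws into the double integral in \eqref{eq: sigmat02} with $d_{\btau}$ evaluated at the corresponding atoms, and collecting the four sign-weighted terms, should collapse the whole expression to $\mathrm{Var}(\ell_\delta'(X,t_0)\,d_{\btau}(F_X(X)))$-type quantity; more directly, since $\ell_\delta'(X,t_0)=2\cdot 1_{\{X\le t_0\}}-1$ is $\pm1$-valued, $\sigma^2_{t_0}$ reduces to the variance of $d_{\btau}(F_X(X))\,(2\cdot 1_{\{X\le t_0\}}-1)$ evaluated via the covariance structure, which after simplification equals $4\,[d_{\btau}(F_X(t_0))]^2\,p(1-p)$ up to the integrability bookkeeping — here $p(1-p)=\mathrm{Median}(D_{\btau})(1-\mathrm{Median}(D_{\btau}))$ since $D_{\btau}(p)=\tfrac12$ means $p$ is the median of $D_{\btau}$, wait, more precisely $D_{\btau}(F_X(t_0))=\tfrac12$ so $\mathrm{Median}(D_{\btau})=F_X(t_0)=p$.

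Combining, the asymptotic variance is $\sigma^2_{t_0}/[\lambda_F'(t_0)]^2 = \dfrac{4\,[d_{\btau}(F_X(t_0))]^2\,p(1-p)}{4\,[d_{\btau}(F_X(t_0))]^2\,[f_X(t_0)]^2} = \dfrac{p(1-p)}{[f_X(t_0)]^2} = \dfrac{\mathrm{Median}(D_{\btau})(1-\mathrm{Median}(D_{\btau}))}{[f_X(\mathrm{Median}(X_{\sD_{\btau}}))]^2}$, which is exactly \eqref{eq: ANVarianceQuantile1}. I would be slightly careful to note that the squared-loss shortcut in Remark~\ref{remark:equicontcontinuity condition}(ii) does not apply here, so the reduction of \eqref{eq: sigmat02} must be done honestly; but the Bernoulli structure makes the double integral a genuinely finite sum of four terms, so no real analysis is needed beyond matching atoms.

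The main obstacle I expect is the careful evaluation of the double integral in \eqref{eq: sigmat02}: one must correctly handle the Stieltjes measures $\mathrm{d}\ell_\delta'(x,t_0)$ (equivalently, the step distribution functions $F_{h_i(X,t_0)}$), keep track of where $d_{\btau}$ is evaluated (at the two atoms $0$ and $F_X(t_0)$ of the relevant Bernoulli laws, noting that evaluation at $0$ or $1$ could be problematic unless the atom there carries zero mass or the corresponding covariance term vanishes), and verify the cross terms between $h_1$ and $h_2$ combine with the right signs $(-1)^{i+j}$. An alternative, possibly cleaner route that I would mention is to bypass \eqref{eq: sigmat02} and instead directly compute $\sqrt{n}\,\lambda_{F_n}^*(t_0)$: since $\ell_\delta'(x,t_0)$ is a bounded monotone-difference function and $t_0$ is fixed, $\sqrt{n}(\lambda_{F_n}^*(t_0)-\lambda_F(t_0))$ is asymptotically normal with variance equal to the asymptotic variance of the $L$-statistic $\sqrt n$ times $\tfrac1n\sum_i d_{\btau}(\cdot)\,\ell_\delta'(X_i,t_0)$ plus the correction from replacing $F_X$ by $F_n$ inside $d_{\btau}$, and one can read off the variance via the Hájek projection; both routes give the same answer and \eqref{eq: sigmat01}–\eqref{eq: sigmat02} is really just that projection written out, so I would present whichever is shorter.
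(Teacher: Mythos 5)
Your overall route is the paper's route: apply Theorem \ref{theorem: asymptoticNormalityEmpirical}, identify $t_0=\mathrm{Median}(X_{\sD_{\btau}})$ from $\lambda_F(c)=2D_{\btau}(F_X(c))-1$, compute $\lambda_F'(t_0)=2d_{\btau}(F_X(t_0))f_X(t_0)$ and $\sigma^2_{t_0}=4F_X(t_0)(1-F_X(t_0))[d_{\btau}(F_X(t_0))]^2$, and cancel the $d_{\btau}$ factors. The final answer and the denominator computation are correct. Two points need fixing, though. First, your decomposition for Assumption \textbf{(A3)} is wrong: $h_1(x,c)=1_{\{x\le c\}}$ is \emph{non-increasing} in $x$, so it does not satisfy \textbf{(A3)}, and the identity $F_{h_j(X,t_0)}^{-1}(u)=h_j(F_X^{-1}(u),t_0)$ underlying \eqref{eq: sigmat02} breaks for it. The paper takes $h_1(x,c)=0$ and $h_2(x,c)=-\ell_{\delta}'(x,c)=1_{\{x>c\}}-1_{\{x\le c\}}$, which is non-decreasing and left-continuous. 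Second, your evaluation of $\sigma^2_{t_0}$ is asserted ("should collapse \dots after simplification") rather than carried out. The clean, fully executed version is to use \eqref{eq: sigmat01} directly: as a function of $x$, $\ell_{\delta}'(x,t_0)$ drops by $2$ at $x=t_0$ and is constant elsewhere, so the Stieltjes measure $\mathrm{d}\ell_{\delta}'(\cdot,t_0)$ is $-2$ times a point mass at $t_0$, i.e.\ $\int k(s)\,\mathrm{d}\ell_{\delta}'(s,t_0)=-2k(t_0)$, and the double integral collapses in one line to $4[F_X(t_0)-F_X^2(t_0)][d_{\btau}(F_X(t_0))]^2$ — no Bernoulli bookkeeping, no worry about evaluating $d_{\btau}$ at $0$ or $1$. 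Your alternative H\'ajek-projection remark is fine as a sanity check but is not needed.
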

\begin{proof}
We have $\ell'_{\delta}(x,c)=1_{\{x\leq c\}}-1_{\{x> c\}}$, with thus $x\mapsto \ell'_{\delta}(x,c) $ non-increasing. This means we can take $h_1(x,c)=0$ and $h_2(x,c)=-\ell'_{\delta}(x,c)$ for Assumption \textbf{(A3)}. 
    When integrating a function, say $k(s)$, with respect to this measure, we get
    $$\int_{\mathbb{R}} \,   k(s) \mathrm{d}\ell'_{\delta}(s,t_0)=-2k(t_0).$$
    This implies $\sigma_{t_0}^2=4[F_X(t_0)-F_X^2(t_0)][d_{\btau}(F_X(t_0))]^2$.
    Further, we have that
    \[
    \lambda_F(c)=\int_{-\infty}^{c}d_{\btau}(F_X(x))f_X(x)\mathrm{d}x-\int_{c}^{+\infty} d_{\btau}(F_X(x))f_X(x)\mathrm{d}x 
    = P \left \{ X_{\sD_{\btau}} \le c\right \} -  P \left \{ X_{\sD_{\btau}} > c \right \}= 2 D_{\btau}\left (F_X(c) \right ) -1 , 
    \]
    from which it follows that $t_0 = \mbox{Median}\left (X_{\sD_{\btau}} \right )$, and also $F_X(t_0)=\mbox{Median}( D_{\btau})$.
    
    Furthermore $\lambda_F'(c)=2 f_X(c) d_{\btau}(F_X(c))$. In conclusion, 
 $\sigma_{t_0}^2/ \left [ \lambda_F'(t_0) \right ]^2$ leads to expression \eqref{eq: ANVarianceQuantile1}.
\end{proof}

\begin{remark}\label{AVarQuantileEstimators} 
When taking $D_{\btau}(u)=K_{\tau}(u)$, expression \eqref{eq: ANVarianceQuantile1} leads to the asymptotic variance for the quantile estimator, obtained via approach 2 (see \eqref{QuantileApproach2}). It is easily seen from \eqref{DtauExtremiles}
that $\mbox{Median}(K_{\tau})=\tau$  and $\mbox{Median}\left ( X_{\sD_{\tau}} \right ) = q_{\tau}(X)$ the $\tau$th order quantile of $X$. Hence for the setting of approach 2 in \eqref{QuantileApproach2}, expression \eqref{eq: ANVarianceQuantile1} reduces to 
\begin{equation}
\label{ANQuantile}
\frac{\tau(1-\tau)}{\left [ f_X \left (q_{\tau}(X)\right ) \right ]^2}, 
\end{equation}
which is the well-known asymptotic variance for the empirical quantile estimator (see p. 77 in \cite{Serfling1980}).
\end{remark}

We next apply Theorem \ref{theorem: asymptoticNormalityEmpirical} for the loss function $\ell_{\delta}(x,c)=\left|\delta-1_{\{x\le c\}}\right| |x-c|$, leading to the statement in Corollary \ref{cor:QuantileLossAsymptoticVariance}. 

\begin{corollary}\label{cor:QuantileLossAsymptoticVariance}
    Consider the quantile loss function $\ell_{\delta}(x,c)=\left|\delta-1_{\{x\le c\}}\right| |x-c|$ (with $\delta \in (0, 1)$). 
    Consider a random variable $X$ and a cumulative distribution function $D_{\btau}$ such that the assumptions of Theorem \ref{theorem: asymptoticNormalityEmpirical} are satisfied. Then the asymptotic variance reduces to
\begin{equation}\label{eq: ANVarianceQuantile2}
\frac{F_X(q_{\delta}(X_{\sD_{\btau}}))\left [ 1 -F_X(q_{\delta}(X_{\sD_{\btau}}))\right ]}{\left [ f_X(q_{\delta}(X_{\sD_{\btau}}))\right ]^2}.
\end{equation}
\end{corollary}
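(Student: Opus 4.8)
The strategy follows the template of the proof of Corollary~\ref{cor:absoluteValueLossAsymptoticVariance}: identify $\ell_\delta'$, describe the Lebesgue--Stieltjes measure it induces in the variable $x$, substitute into the variance formula \eqref{eq: sigmat01}, and separately compute $\lambda_F$ and $\lambda_F'$ at the root $t_0$. First I would record that for the quantile loss $\ell_\delta'(x,c)=1_{\{x\le c\}}-\delta$ (entry~3 of Table~\ref{LossFunctions1B}), and check Assumption~\textbf{(A3)}: writing $1_{\{x\le c\}}=1-1_{\{x>c\}}$ gives $\ell_\delta'(x,c)=(1-\delta)-1_{\{x>c\}}$, the difference of the constant $h_1(x,c)=1-\delta$ and the non-decreasing, left-continuous function $h_2(x,c)=1_{\{x>c\}}$; the remaining hypotheses are inherited from those assumed in Theorem~\ref{theorem: asymptoticNormalityEmpirical}. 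As a function of $x$, $\ell_\delta'(\cdot,t_0)$ equals $1-\delta$ on $(-\infty,t_0]$ and $-\delta$ on $(t_0,\infty)$, so the signed measure $\mathrm{d}\ell_\delta'(\cdot,t_0)$ is a point mass of weight $-1$ at $t_0$, whence $\int_{\mathbb{R}}k(s)\,\mathrm{d}\ell_\delta'(s,t_0)=-k(t_0)$ for bounded $k$ — exactly as in the $|x-c|$ case, but with jump $-1$ instead of $-2$.

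Substituting this into \eqref{eq: sigmat01}, the double integral collapses to its value at $(t_0,t_0)$, giving
\[
\sigma^2_{t_0}=\big(F_X(t_0)-F_X(t_0)^2\big)\,[d_{\btau}(F_X(t_0))]^2 .
\]
Next I would identify the root $t_0$. Using the change of variables $u=F_X(x)$ and $D_{\btau}(0)=0$, $D_{\btau}(1)=1$,
\[
\lambda_F(c)=\int_{-\infty}^{c}d_{\btau}(F_X(x))f_X(x)\,\mathrm{d}x-\delta\int_{-\infty}^{+\infty}d_{\btau}(F_X(x))f_X(x)\,\mathrm{d}x=D_{\btau}(F_X(c))-\delta ,
\]
so $t_0$ solves $D_{\btau}(F_X(t_0))=\delta$; since $X_{\sD_{\btau}}\sim D_{\btau}\circ F_X$, this says precisely $t_0=q_\delta(X_{\sD_{\btau}})$ and hence $F_X(t_0)=F_X(q_\delta(X_{\sD_{\btau}}))$. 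Differentiating, $\lambda_F'(c)=d_{\btau}(F_X(c))f_X(c)$, so $\lambda_F'(t_0)=d_{\btau}(F_X(t_0))f_X(t_0)$, which is nonzero by Assumption~\textbf{(A9)}. Finally, forming the ratio $\sigma^2_{t_0}/[\lambda_F'(t_0)]^2$ from Theorem~\ref{theorem: asymptoticNormalityEmpirical}, the factor $[d_{\btau}(F_X(t_0))]^2$ cancels and one is left with $F_X(t_0)\big(1-F_X(t_0)\big)/[f_X(t_0)]^2$, i.e.\ \eqref{eq: ANVarianceQuantile2}.

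The only point requiring care is the handling of the Stieltjes measure $\mathrm{d}\ell_\delta'(\cdot,t_0)$: getting its sign and unit mass correct, and noting that $t_0$ is not a discontinuity point of $d_{\btau}$ (guaranteed implicitly once the hypotheses of Theorem~\ref{theorem: asymptoticNormalityEmpirical} are assumed, so that $d_{\btau}$ is evaluated at a point of continuity). Everything else is a direct substitution together with the change of variables $u=F_X(x)$. One could instead obtain $\sigma^2_{t_0}$ from the equivalent expression \eqref{eq: sigmat02} via the laws of $h_1(X,t_0)$ and $h_2(X,t_0)$, but \eqref{eq: sigmat01} is the shorter route here.
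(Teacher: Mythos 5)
Your proposal is correct and follows essentially the same route as the paper's proof: identify $\ell_\delta'(x,c)=1_{\{x\le c\}}-\delta$, observe that $\mathrm{d}\ell_\delta'(\cdot,t_0)$ is a unit negative point mass at $t_0$ so that \eqref{eq: sigmat01} collapses to $[F_X(t_0)-F_X(t_0)^2][d_{\btau}(F_X(t_0))]^2$, compute $\lambda_F(c)=D_{\btau}(F_X(c))-\delta$ to get $t_0=q_\delta(X_{\sD_{\btau}})$ and $\lambda_F'(t_0)=d_{\btau}(F_X(t_0))f_X(t_0)$, and cancel. The only cosmetic difference is your choice of decomposition in \textbf{(A3)} ($h_1=1-\delta$, $h_2=1_{\{x>c\}}$ versus the paper's $h_1=0$, $h_2=-\ell_\delta'$), which is immaterial.
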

\begin{proof}
  We have $\ell'_{\delta}(x,c)=1_{\{x \leq c\}} -\delta$ (see also Table \ref{LossFunctions1B}), with $x\mapsto\ell'_{\delta}(x,c)$ non-increasing. This means we can take $h_1(x,c)=0$ and $h_2(x,c)=-\ell'_{\delta}(x,c)$ for Assumption \textbf{(A3)}. 
    If we integrate with respect to this measure then
    $$\int k(s)\mathrm{d}\ell'_{\delta}(s,t_0)=-k(t_0).$$
    For this reason $\sigma_{t_0}^2=[F_X(t_0)-F^2_X(t_0)][d_{\btau}( F_X( t_0))]^2$.
    Furthermore,
    $$\lambda_F(c)=\int_{-\infty}^c d_{\btau}(F_X(x))f_X(x)\mathrm{d}x-\delta
    = P \left \{ X_{\sD_{\btau}} \le c \right \} - \delta ,$$
    leading to $t_0= q_{\delta}(X_{\sD_{\btau}})$, the $\delta$th quantile of the random variable $X_{\sD_{\btau}}$. 
    Moreover, $\lambda_F'(t_0)=d_{\btau}(F_X(t_0))f_X(t_0)$. This concludes the proof.
\end{proof}
\begin{remark}\label{AVarQuantileEstimatorsBis} 
When estimating a quantile, via approach 1 (see \eqref{QuantileApproach1}), we take $D_{\btau}(u)=u$, and hence $X_{\sD_{\btau}}=X$, and expression \eqref{eq: ANVarianceQuantile2} reduces to 
\[
\ds \frac{\delta(1-\delta)}{\left [ f_X(q_{\delta}(X)) \right ]^2}, 
\]
corresponding to \eqref{ANQuantile} (where herein $\tau=\delta$). 
\end{remark}

The asymptotic variance stated in Corollary \ref{cor:QuantileLossAsymptoticVariance}
can be estimated by 
\begin{equation}\label{eq:quantileLossEstimatorAsymptoticVariance}
    \frac{[F_n(T_n^*)-(F_n(T_n^*))^2]}{\left [ \widehat{f}_X(T_n^*)\right ]^2},
\end{equation}
where $F_n$ is the empirical cumulative distribution and $\widehat{f}_X$ a kernel density estimator.\\

Corollary \ref{cor:expectilesAsymptoticVariance}
results from applying Theorem \ref{theorem: asymptoticNormalityEmpirical} 
with the loss function $\ell_{\delta}(x,c)=\left|\delta-1_{\{x\le c\}}\right| |x-c|^2$, i.e.  the expectile loss function (entry number 4 in Table \ref{LossFunctions1A}). 

\begin{corollary}\label{cor:expectilesAsymptoticVariance}
    Consider the loss function $\ell_{\delta}(x,c)= \left|\delta-1_{\{x\le c\}}\right| |x-c|^2$ (with $\delta \in (0, 1)$), corresponding to expectiles. Consider a random variable $X$ and a cumulative distribution function $D_{\btau}$ such that the assumptions of Theorem \ref{theorem: asymptoticNormalityEmpirical} are satisfied. Then the asymptotic variance reduces to
$\sigma_{\sE}^2= \mbox{N}_{\sE}/ \mbox{D}_{\sE}$, where
\begin{eqnarray*}
  \mbox{N}_{\sE} &= & \int_{-\infty}^{+\infty}(-2\delta 1_{\{ y>t_0\}} + (2\delta -2) 1_{\{ y<t_0 \}}) \int_{-\infty}^{+\infty}(-2\delta 1_{\{ x>t_0\}} + (2\delta -2) 1_{\{ x<t_0 \}})  \\ 
    & & \hspace{1cm}\times [F_X(\min \{ x,y \})-F_X(x)F_X(y)]d_{\btau}(F_X(x))d_{\btau}(F_X(y))\mathrm{d}x \mathrm{d}y\\
    & =& 4\delta^2 \int_{t_0}^{+\infty} \int_{t_0}^{+\infty}
    [F_X(\min \{ x,y \})-F_X(x)F_X(y)]d_{\btau}(F_X(x))d_{\btau}(F_X(y))\mathrm{d}x \mathrm{d}y\\ 
    & & \quad- 8\delta(\delta-1)  \int_{t_0}^{+\infty} \int_{-\infty}^{t_0} [F_X(x)-F_X(x)F_X(y)]d_{\btau}(F_X(x))d_{\btau}(F_X(y))\mathrm{d}x \mathrm{d}y \\ 
    & & \quad + 4(\delta-1)^2
    \int_{-\infty}^{t_0} \int_{-\infty}^{t_0}[F_X(\min \{ x,y \})-F_X(x)F_X(y)]d_{\btau}(F_X(x))d_{\btau}(F_X(y))\mathrm{d}x \mathrm{d}y,
\\
 \mbox{D}_{\sE}&= & (-4\delta D_{\btau}(F_X(t_0))+2\delta+2D_{\btau}(F_X(t_0)))^2  .
\end{eqnarray*}
\end{corollary}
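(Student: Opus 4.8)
The plan is to specialize Theorem~\ref{theorem: asymptoticNormalityEmpirical}, whose hypotheses are assumed in the statement, to the expectile loss. Once that theorem is in force the whole argument reduces to three mechanical steps: checking Assumption~\textbf{(A3)} for this loss, identifying the Lebesgue--Stieltjes measure $\mathrm{d}\ell_{\delta}'(x,t_0)$ and substituting it into the variance formula \eqref{eq: sigmat01}, and computing $\lambda_F'(t_0)$.

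First I would recall (cf.\ Table~\ref{LossFunctions1B}) that for $\ell_{\delta}(x,c)=\left|\delta-1_{\{x\le c\}}\right||x-c|^2$ the map $c\mapsto\ell_{\delta}(x,c)$ is convex and everywhere differentiable, with $\ell_{\delta}'(x,c)=-2\delta(x-c)1_{\{x>c\}}+2(1-\delta)(c-x)1_{\{x<c\}}$. For fixed $c$ the map $x\mapsto\ell_{\delta}'(x,c)$ is continuous and non-increasing, so Assumption~\textbf{(A3)} holds with $h_1\equiv 0$ and $h_2(x,c)=-\ell_{\delta}'(x,c)$ (left-continuous, being continuous); since $h_2$ grows at most linearly in $x$, the moment-type conditions \textbf{(A4)}--\textbf{(A8)} reduce to integrability hypotheses on $X$ already subsumed under ``the assumptions of Theorem~\ref{theorem: asymptoticNormalityEmpirical}''. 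Next I would determine $\mathrm{d}\ell_{\delta}'(x,t_0)$: on $\{x>t_0\}$ the function $\ell_{\delta}'(\cdot,t_0)$ is affine with slope $-2\delta$, on $\{x<t_0\}$ affine with slope $2\delta-2$, and it is continuous at $t_0$ (both one-sided values equal $0$), so there is no atom and $\mathrm{d}\ell_{\delta}'(x,t_0)=\bigl(-2\delta\,1_{\{x>t_0\}}+(2\delta-2)\,1_{\{x<t_0\}}\bigr)\mathrm{d}x$. Inserting this into both the $x$- and $y$-integrals of \eqref{eq: sigmat01} yields the first displayed expression for $\mathrm{N}_{\sE}$. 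To obtain the three-term form I would split $\mathbb{R}^2$ into the four quadrants around $(t_0,t_0)$: on $\{x>t_0,\,y>t_0\}$ the prefactor is $(-2\delta)^2=4\delta^2$; on $\{x<t_0,\,y<t_0\}$ it is $(2\delta-2)^2=4(\delta-1)^2$; the two mixed quadrants give equal contributions by the symmetry of $F_X(\min\{x,y\})-F_X(x)F_X(y)$ in $(x,y)$, with combined prefactor $2(-2\delta)(2\delta-2)=-8\delta(\delta-1)$ and $F_X(\min\{x,y\})$ equal to $F_X$ evaluated at the argument lying below $t_0$. Collecting the pieces gives the stated $\mathrm{N}_{\sE}$, i.e.\ $\sigma_{t_0}^2=\mathrm{N}_{\sE}$.

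For the denominator, write $G(x):=D_{\btau}(F_X(x))$ with $\mathrm{d}G(x)=d_{\btau}(F_X(x))f_X(x)\,\mathrm{d}x$, so that $\lambda_F(c)=-2\delta\int_c^{\infty}(x-c)\,\mathrm{d}G(x)+2(1-\delta)\int_{-\infty}^{c}(c-x)\,\mathrm{d}G(x)$; in particular $\lambda_F(t_0)=0$ is exactly the defining equation of the $\delta$th expectile of $X_{\sD_{\btau}}$. Differentiating under the integral sign---legitimate because the integrand vanishes at the moving endpoint $x=c$ and its $c$-derivative is dominated by an integrable function---gives $\lambda_F'(c)=2\delta\bigl(1-G(c)\bigr)+2(1-\delta)G(c)=2\delta+2G(c)-4\delta G(c)$, hence $[\lambda_F'(t_0)]^2=\bigl(-4\delta D_{\btau}(F_X(t_0))+2\delta+2D_{\btau}(F_X(t_0))\bigr)^2=\mathrm{D}_{\sE}$. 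Substituting $\sigma_{t_0}^2=\mathrm{N}_{\sE}$ and $[\lambda_F'(t_0)]^2=\mathrm{D}_{\sE}$ into the conclusion $\sqrt{n}(T_n^*-t_0)\xrightarrow{\text{d}}\mathcal{N}\!\left(0;\sigma_{t_0}^2/[\lambda_F'(t_0)]^2\right)$ of Theorem~\ref{theorem: asymptoticNormalityEmpirical} completes the proof.

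The only genuinely delicate points are the correct identification of the measure $\mathrm{d}\ell_{\delta}'(x,t_0)$---in particular confirming there is no atom at $t_0$---and the sign and factor bookkeeping in the quadrant decomposition (the factor $2$ from the two mixed regions, producing the coefficient $-8\delta(\delta-1)$, and keeping track of which of $x,y$ realizes the minimum). The differentiation-under-the-integral step needs only a one-line domination argument, and verification of the equicontinuity Assumption~\textbf{(A10)} is not required here since it is part of the standing hypotheses of Theorem~\ref{theorem: asymptoticNormalityEmpirical}.
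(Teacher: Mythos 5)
Your proposal is correct and follows essentially the same route as the paper: identify $\mathrm{d}\ell_{\delta}'(x,t_0)$ as the atomless measure $(-2\delta 1_{\{x>t_0\}}+(2\delta-2)1_{\{x<t_0\}})\mathrm{d}x$ via $h_1\equiv 0$, $h_2=-\ell_{\delta}'$, substitute into \eqref{eq: sigmat01}, and expand over the four quadrants with the symmetry argument giving the $-8\delta(\delta-1)$ cross term. The only addition is that you explicitly derive $\lambda_F'(c)=2\delta+2D_{\btau}(F_X(c))-4\delta D_{\btau}(F_X(c))$ for the denominator, a step the paper's proof leaves implicit; your computation of it is correct.
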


\begin{proof}
We have $\ell'_{\delta}(x,c)=-2 \delta \lvert x-c\rvert -2 (x-c) 1_{\{ x\leq c\}}$, for which $x\mapsto\ell'_{\delta}(x,c)$ is non-increasing. This means we can take $h_1(x,c)=0$ and $h_2(x,c)=-\ell'_{\delta}(x,c)$ for Assumption \textbf{(A3)}. If we integrate with respect to this measure, then we obtain
    $$\int_{a}^b g(x)\mathrm{d}\ell'_{\delta}(x,c) = \int_a^b (-2\delta 1_{\{ x>c\}} + (2\delta -2) 1_{\{x<c\}})g(x)\mathrm{d}x.$$
    This explains the factors in the numerator. By expanding the factors we obtain the second expression for the numerator $\mbox{N}_{\sE}$. Here we take into account that
    by changing the order of integration 
    \begin{align*}
         &4\delta(\delta-1) \int_{t_0}^{+\infty} \int_{-\infty}^{t_0} [F_X(\min \{ x,y \})-F_X(x)F_X(y)]d_{\btau}(F_X(x))d_{\btau}(F_X(y))\mathrm{d}x \mathrm{d}y\\
         &= 4\delta(\delta-1) \int_{-\infty}^{t_0} \int_{t_0}^{+\infty} [F_X(\min \{ x,y \})-F_X(x)F_X(y)]d_{\btau}(F_X(x))d_{\btau}(F_X(y))\mathrm{d}y \mathrm{d}x.
    \end{align*}
    This explains the term with factor $8\delta(\delta-1)$.
\end{proof}
\begin{remark}\label{remark:expectileVariance}
When using the expectile loss function, the corresponding generalized extremile equals the $\delta$th expectile of $X_{\sD_{\btau}}$. Hence, if we consider $D_{\btau}$ the uniform cumulative distribution function, we can use our estimator to estimate the $\delta$th expectile of $X$. Doing so, our estimator coincides with the estimator studied in \cite{HolzmannAndKlar2016}.
The latter authors also established the asymptotic normality of their estimator (see p. 2359 in \cite{HolzmannAndKlar2016}), with asymptotic variance given by
\begin{equation}\label{eq:expectilesRemarkVariance}
    \frac{\mathbb{E}\left [\left ( I_{\delta}(t_0,X)\right )^2\right ]}{\left [ \delta(1-F_X(t_0))+(1-\delta)(F_X(t_0))\right ]^2} , 
\end{equation}
where $$I_{\delta}(t_0,X)=\delta(X-t_0) 1_{\{X \geq t_0\}} -(1-\delta)(t_0-X)1_{\{X <t_0\}}.$$ In Section S3 of the Supplementary Material we prove that the asymptotic variance as stated in Corollary \ref{cor:expectilesAsymptoticVariance}, equals that in  \eqref{eq:expectilesRemarkVariance}, for the specific case that $D_{\btau}(u)=u$. Hence the asymptotic normality result for expectiles is obtained as special case of our Theorem \ref{theorem: asymptoticNormalityEmpirical} and Corollary \ref{cor:expectilesAsymptoticVariance}. 
\end{remark}

Applying Theorem \ref{theorem: asymptoticNormalityEmpirical}  to the loss function
$\ell_{\delta}(x,c)=-c \lvert x - b \rvert^\delta +c^2/2$ establishes the asymptotic normality results for the proposed estimator for $\mathbb{E}[\lvert X_{\sD_{\btau}} - b \rvert^\delta ]$. 
\begin{corollary}\label{cor: generalizedDistrortionAsymptoticNormality}
    Consider the loss function $\ell_{\delta}(x,c)=-c \lvert x - b \rvert^\delta +c^2/2$, for $\delta>0$. This corresponds to $t_0=\mathbb{E}[\lvert X_{\sD_{\btau}} - b \rvert^\delta ]$. Consider a random variable $X$ and a cumulative distribution function $D_{\btau}$ such that the assumptions of Theorem \ref{theorem: asymptoticNormalityEmpirical} are satisfied.
    Then the asymptotic variance equals 
    \begin{align*}
         &\hspace*{-0.2 cm}\int_{-\infty}^{t_0} \int_{-\infty}^{t_0} [F_{h_1(X,t_0)}(\min\{x,y\})-F_{h_1(X,t_0)}(x)F_{h_1(X,t_0)}(y)]d_{\btau}(F_{h_1(X,t_0)}(x))d_{\btau}(F_{h_1(X,t_0)}(y))\mathrm{d}x\mathrm{d}y\\
         &\hspace*{-0.2 cm}+\int_{0}^{+\infty} \int_{0}^{+\infty} [F_{h_2(X,t_0)}(\min\{x,y\})-F_{h_2(X,t_0)}(x)F_{h_2(X,t_0)}(y)]d_{\btau}(F_{h_2(X,t_0)}(x))d_{\btau}(F_{h_2(X,t_0)}(y))\mathrm{d}x\mathrm{d}y\\
         &\hspace*{-0.2 cm}-2\int_{0}^{+\infty} \int_{-\infty}^{t_0} [\min\{F_{h_1(X,t_0)}(x),F_{h_2(X,t_0)}(y) \}-F_{h_1(X,t_0)}(x)F_{h_2(X,t_0)}(y)]d_{\btau}(F_{h_1(X,t_0)}(x))d_{\btau}(F_{h_2(X,t_0)}(y))\mathrm{d}x\mathrm{d}y,
    \end{align*}
    where $$F_{h_1(X,t_0)}(z)=\begin{cases}
        1 &\text { if } z \geq t_0\\
        F_X(b-(t_0-z)^{1/\delta}) &\text{ if } z<t_0
    \end{cases}$$
    and
    $$F_{h_2(X,t_0)}(z)=\begin{cases}
        0 &\text { if } z < 0 \\
        F_X(z^{1/\delta}+b) & \text{ if } z\ge 0.
    \end{cases}$$
\end{corollary}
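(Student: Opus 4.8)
The plan is to read off the corollary as a direct specialization of Theorem~\ref{theorem: asymptoticNormalityEmpirical}: since that theorem gives $\sqrt{n}(T_n^*-t_0)\xrightarrow{d}\mathcal{N}(0,\sigma^2_{t_0}/[\lambda_F'(t_0)]^2)$ with $\sigma^2_{t_0}$ as in \eqref{eq: sigmat02}, all that is needed is (i) to identify $\ell_\delta'$, $\lambda_F$, $t_0$ and $\lambda_F'(t_0)$, (ii) to exhibit the decomposition $\ell_\delta'(x,c)=h_1(x,c)-h_2(x,c)$ demanded by Assumption~\textbf{(A3)}, (iii) to compute the laws of $h_1(X,t_0)$ and $h_2(X,t_0)$, and (iv) to substitute everything into \eqref{eq: sigmat02}. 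The remaining hypotheses of Theorem~\ref{theorem: asymptoticNormalityEmpirical} (Assumptions \textbf{(A1)}, \textbf{(A4)}, \textbf{(A6)} or \textbf{(A8)}, \textbf{(A9)}, \textbf{(A10)}) are assumed in the statement, so I would not verify them beyond the trivial observations below.

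First I would differentiate: $\ell_\delta(x,c)=-c\lvert x-b\rvert^\delta+c^2/2$ is quadratic, hence convex, in $c$, with $\ell_\delta'(x,c)=c-\lvert x-b\rvert^\delta$. Plugging into the definition of $\lambda_F$ gives $\lambda_F(c)=c-\mathbb{E}[\lvert X_{\sD_{\btau}}-b\rvert^\delta]$, which is affine and strictly increasing; therefore its unique root is $t_0=\mathbb{E}[\lvert X_{\sD_{\btau}}-b\rvert^\delta]$ (confirming part of \textbf{(A1)} and \textbf{(A9)}), and $\lambda_F'(t_0)=1$. Consequently the denominator $[\lambda_F'(t_0)]^2$ disappears and the asymptotic variance equals $\sigma^2_{t_0}$ itself.

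Next I would handle \textbf{(A3)}. The map $x\mapsto c-\lvert x-b\rvert^\delta$ is not monotone, but splitting at $b$ it is the difference of two monotone pieces: take $h_1(x,c)=c-(b-x)_+^\delta$ and $h_2(x,c)=(x-b)_+^\delta$, where $u_+=\max\{u,0\}$, so that $\ell_\delta'(x,c)=h_1(x,c)-h_2(x,c)$ and a one-line check shows $x\mapsto h_1(x,c)$, $x\mapsto h_2(x,c)$ are non-decreasing and continuous (hence left continuous). Then I would compute the distribution functions: on $\{X\ge b\}$ one has $h_1(X,t_0)=t_0$, while for $z<t_0$ inverting $z=t_0-(b-x)^\delta$ gives $x=b-(t_0-z)^{1/\delta}$, yielding $F_{h_1(X,t_0)}(z)=F_X(b-(t_0-z)^{1/\delta})$ for $z<t_0$ and $=1$ for $z\ge t_0$; similarly $h_2(X,t_0)=0$ on $\{X\le b\}$, and inverting $z=(x-b)^\delta$ for $z\ge 0$ gives $F_{h_2(X,t_0)}(z)=F_X(z^{1/\delta}+b)$ for $z\ge 0$ and $=0$ for $z<0$. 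These are exactly the two functions displayed in the statement.

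Finally I would substitute into \eqref{eq: sigmat02}. Because $\min\{F_{h_i},F_{h_j}\}-F_{h_i}F_{h_j}$ vanishes wherever $F_{h_1}$ equals $1$ (i.e.\ on $\{z\ge t_0\}$) or $F_{h_2}$ equals $0$ (i.e.\ on $\{z<0\}$), the $(i,j)=(1,1)$ integral collapses to $(-\infty,t_0)^2$, the $(2,2)$ integral to $(0,+\infty)^2$, and each of the two cross integrals to $(-\infty,t_0)\times(0,+\infty)$; the sign $(-1)^{i+j}$ is $+1$ on the diagonal and $-1$ off it, and the two off-diagonal integrals are equal after relabeling, so they combine into the prefactor $-2$. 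Together with $\min\{F_{h_i}(x),F_{h_i}(y)\}=F_{h_i}(\min\{x,y\})$ this reproduces the claimed three-term expression. I expect this last substitution to be the only place requiring care — keeping the integration domains, the atoms of $h_1(X,t_0)$ at $t_0$ and of $h_2(X,t_0)$ at $0$, and the signs consistent — but it is bookkeeping rather than a genuine analytic obstacle, everything else being an immediate specialization of Lemma~\ref{lemma:asympLambda} and Theorem~\ref{theorem: asymptoticNormalityEmpirical}.
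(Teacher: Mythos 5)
Your proposal is correct and follows essentially the same route as the paper: the same decomposition $\ell_\delta'(x,c)=h_1(x,c)-h_2(x,c)$ with $h_1(x,c)=c-(b-x)_+^\delta$ and $h_2(x,c)=(x-b)_+^\delta$ (identical to the paper's $h_1(x,c)=-(-x+b)^\delta 1_{\{x\le b\}}+c$, $h_2(x,c)=(x-b)^\delta 1_{\{x>b\}}$), the same inversion to obtain $F_{h_1(X,t_0)}$ and $F_{h_2(X,t_0)}$, the observation that $\lambda_F'\equiv 1$, and substitution into \eqref{eq: sigmat02}. Your explicit bookkeeping of how the integration domains collapse and the two cross terms merge into the $-2$ prefactor is left implicit in the paper but is the correct justification of the displayed three-term formula.
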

\begin{proof}
We calculate $\sigma_{t_0}^2$ using (\ref{eq: sigmat02}), where
we use $\ell_\delta'(x,c)=h_1(x,c)-h_2(x,c)$ with 
$$h_1(x,c)=-(-x+b)^\delta 1_{\{x\leq b\}}+c \hspace{0.5cm} \text{ and } \hspace{0.5cm} h_2(x,c)=(x-b)^\delta 1_{\{x>b\}},$$
non-decreasing and left-continuous in $x$.
First, we find an expression for $F_{h_1(X,t_0)}(z)$.  This is done by observing that $h_1(x,t_0)\leq t_0$, consequently $F_{h_1(X,t_0)}(z)=1$ for $z\geq t_0$. For $z< t_0$ the result follows by noting that 
\[
    F_{h_1(X,t_0)}(z)=P((-X+b)^\delta 1_{\{X\leq b\}} \geq -z+t_0)
    =P(-X+b\geq (-z+t_0)^{1/\delta})
    =F_X(b-(t_0-z)^{1/\delta}).
\]
Similarly, we obtain the stated expression for $F_{h_2(X,t_0)}(z)$. 
Furthermore, it is clear that 
$\lambda_{F}'(c)=1$, and application of Theorem \ref{theorem: asymptoticNormalityEmpirical} concludes the proof.
\end{proof}

\section{Simulation study}\label{sec: simulations} 
\subsection{Aims of the simulation study, and simulation models} 

In this section we investigate the finite-sample behaviour of the estimator $T_n^*$ of the generalized extremile $e_{\btau, \delta}(X; D_{\btau}, \ell_{\delta})$. We conduct five small simulation studies considering different pairs for $(D_{\btau}, \ell_{\delta})$, and hence looking into estimating different quantities. Tables \ref{Table: Studies1--3En5} and \ref{Table: Study4}  summarize the settings for the various parts (simulation studies (St) 1--5). The various scenarios  have the following aims: investigating
\begin{itemize}
\item[\textbf{St1}]  the performance of $T_n^*$ for different choices of $F_X$; 
\item[\textbf{St2}] the performance of $T_n^*$ for different choices of parameters $(\tau, \delta)$; 
\item[\textbf{St3}]  estimation of quantiles via two different pairs $(D_{\btau}, \ell_{\delta})$; 
\item[\textbf{St4}] the asymptotic normality result for quantile estimation under censoring; 
\item[\textbf{St5}] the asymptotic normality result when using as loss function 
$-c \lvert x \rvert +c^2/2$ (see Corollary \ref{cor: generalizedDistrortionAsymptoticNormality}). 
\end{itemize}
In each simulation study we draw 500 samples of size $n$ from the simulation model. We adapt various ways to present the results. We evaluate the finite-sample performance of $T_n^*$ by reporting on the bias, variance and Mean Squared Error (MSE) across the Monte Carlo runs. For illustration of the sampling distribution of $T_n^*$ we present a kernel density estimate based on the 500 values for $\sqrt{n} (T_n^* - t_0)$.  

\begin{table}[h]
\centering
\caption{Model elements in Simulation studies  1--3 and 5.}
\label{Table: Studies1--3En5}
\vspace*{0.2 cm} 

\noindent
{\small
\hspace*{-1 cm}
\begin{tabular}{|l|l|l|ll|l|}
\hline
& \multicolumn{5}{c|}{Simulation study (St)}\\
\cline{2-6}
& St1 & St2 & \multicolumn{2}{c}{St3} & St5 \\
\hline
$\ell_{\delta}(x,c)$ & expectile  & expectile  & absolute  value &  quantile  & \\
&loss & loss & loss & loss  & loss\\[1.4 ex]
&  $ \left|\delta-1_{\{x\le c\}}\right| |x-c|^2$ & same as in St1 & $|x-c|$  & $\left|\delta-1_{\{x\le c\}}\right| |x-c|$ & $ -c|x|+c^2/2$ \\
& $\delta =0.9$ & $\delta=0.1, 0.9, 0.95$ & &  $\delta=0.01, 0.05, 0.1, $  & \\
& & & & $\, 0.5, 0.9, 0.95, 0.99 $   & \\
& & & & &  \\[2 ex]
$D_{\btau}(u)$ & $K_{\tau}(u)$ in \eqref{DtauExtremiles}& $D_{\tau}(u)$ & $K_{\tau}(u)$ in \eqref{DtauExtremiles} & $D_{\btau}(u)=u$     & same as in St2\\
& $\tau=0.1, 0.9, 0.95 $ & $ \,\,  =(1-\tau)^{-1} (u-\tau) \, 1_{\{u \ge \tau \}}$    &  $\tau=0.01, 0.05, 0.1, $ & &  \\
& & $\tau=0.1, 0.9, 0.95 $ &  $\, 0.5, 0.9, 0.95, 0.99 $ &  &  \\[2 ex]
$F_X$ & $X\sim \mathcal{N}(0; 1)$  &$X\sim \mathcal{N}(0; 1)$   & \multicolumn{2}{c|}{$X \sim \mbox{Expo}(1)$}  &  same as in St1\\
&  or $X \sim \mbox{Expo}(1)$ & &  &  & \\[2 ex] 
$n$ & $n=50 \, \mbox{or} \, 800$ & same as in St1  & \multicolumn{2}{c|}{$n=50 \, \mbox{or} \, 400$}  &  same as in St1  \\
\hline 
\end{tabular}}
\end{table}

\begin{table}[htb]
\centering
\caption{Model elements in Simulation study 4 (St4).}
\label{Table: Study4} 
\vspace*{0.2 cm} 

\noindent
\begin{tabular}{|l|l|l|l|l|l|}
\hline
$\ell_{\delta}(x,c)$  & $D_{\btau}(u)$ &  $F_X$ & $F_C$  & $Y$ & $n$ \\[1.2 ex]
\hline 
adapted quantile & $D_{\btau}(u)=u$ & $X\sim \mbox{Expo}(1)$ & $C\sim \mbox{Expo}(\frac{p_C}{1-p_C})$  & $ Y= \min(X,C)$ & $n=100 \, \mbox{or} \, 400$ \\[1.2 ex]
loss; see \eqref{eq: QuantileLossCensoring} & & &  $p_C=0.1, 0.3, 0.5$  & & \\
$\delta=0.1, 0.5$ & & & &  &  \\ 
\hline 
\end{tabular}
\end{table}

\subsection{Simulation results} 
\subsubsection*{Simulation study 1}
 Figure \ref{fig:densityEstimates} shows a kernel density estimate of  $\sqrt{n}(T_n^*-t_0)$ for $n=50,800$. The left (respectively right) panels are for samples drawn from $X\sim \mathcal{N}(0,1)$ (respectively $X\sim \mbox{Expo}(1)$). Presented are the results for sample size $n=50$ (blue lines) and $n=800$ (red lines), together with the normal density with variance equal to the asymptotic variance of Theorem \ref{theorem: asymptoticNormalityEmpirical}. Note that with increasing $n$ the kernel density estimates approach the asymptotically normal density. To be noted is also that for values of $\tau$ closer to 1, the variance of the sampling distribution of $T_n^*$ as well as the asymptotic variance is largest (see the scale of the horizontale axis). Table \ref{table:MSEtableNormalAndExp} in the Supplementary Material provides the estimated bias, variance and MSE  of the estimator $T_n^*$. 
\begin{figure}[htb]
    \centering
    \includegraphics[width=0.44\textwidth]{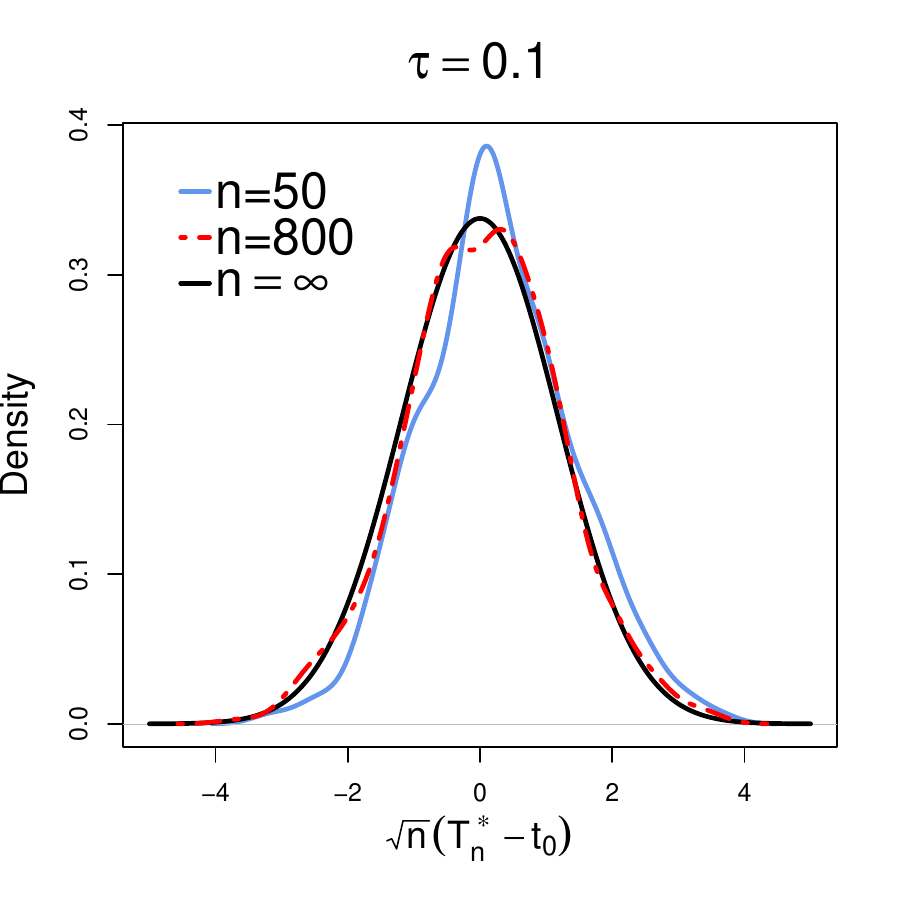}
    \includegraphics[width=0.44\textwidth]{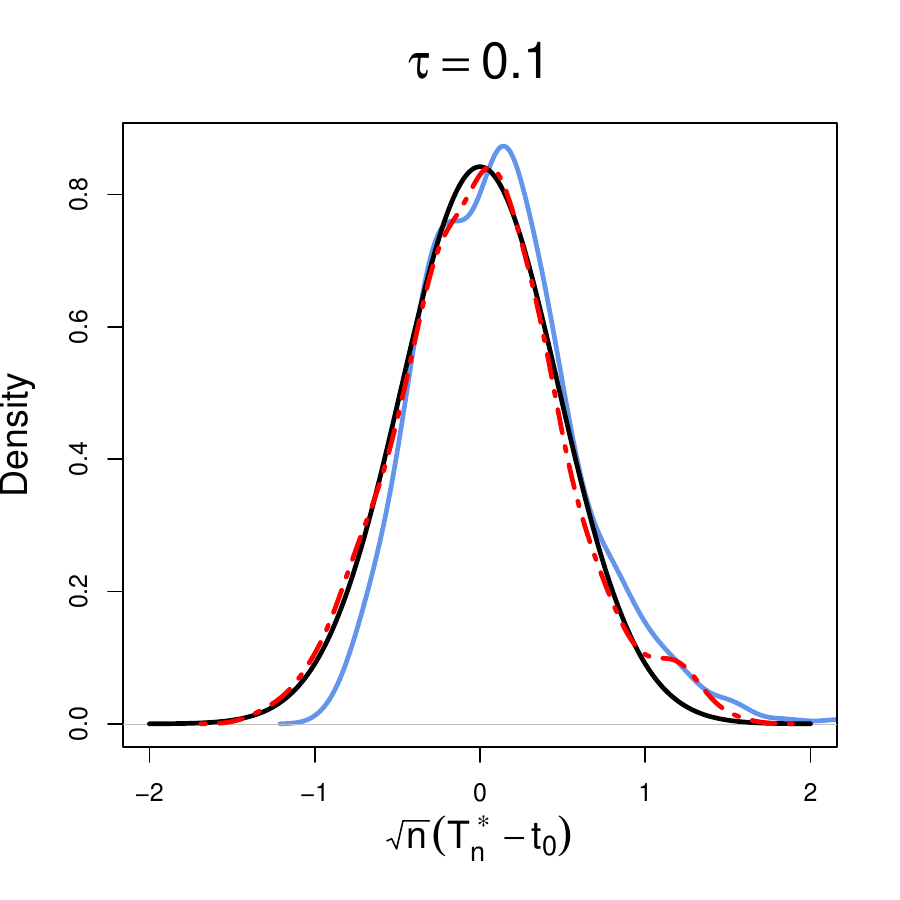}\\
\includegraphics[width=0.44\textwidth]{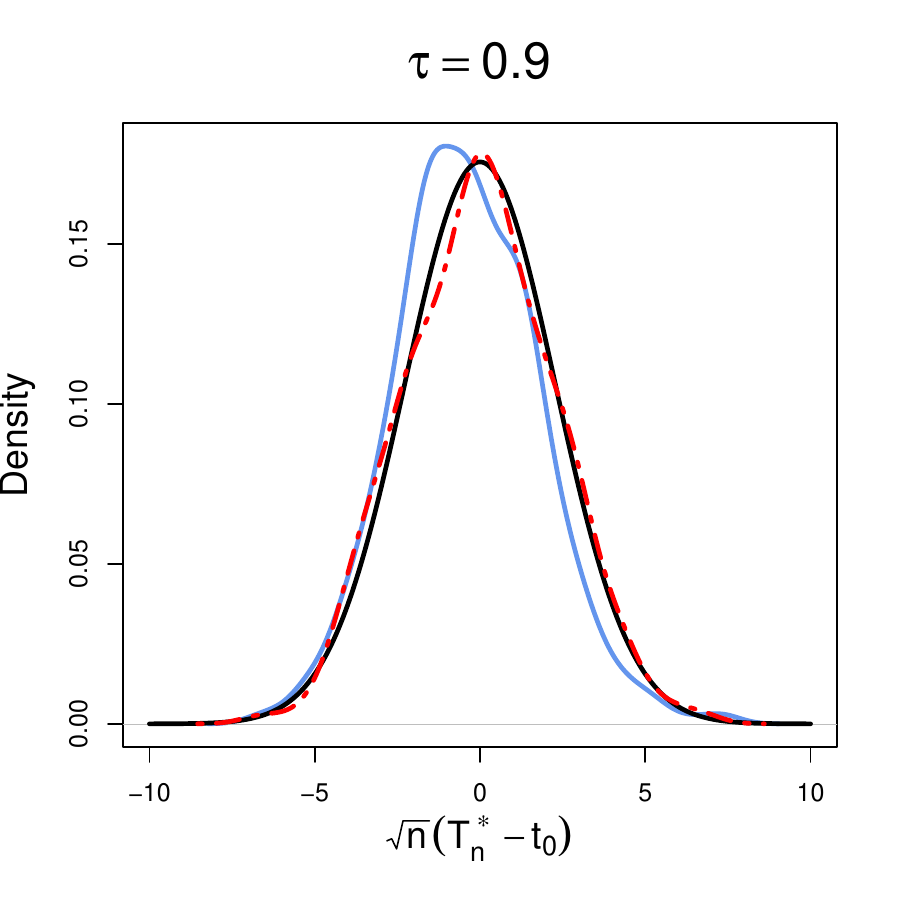}
    \includegraphics[width=0.44\textwidth]{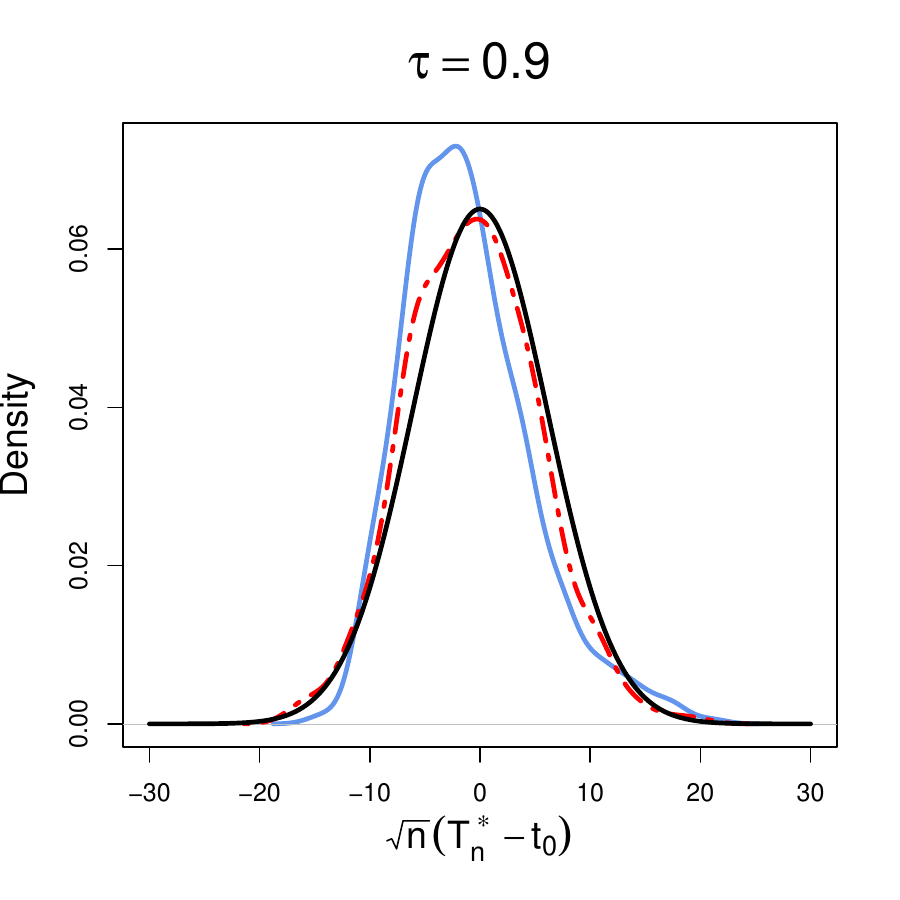}  \\
    \includegraphics[width=0.44\textwidth]{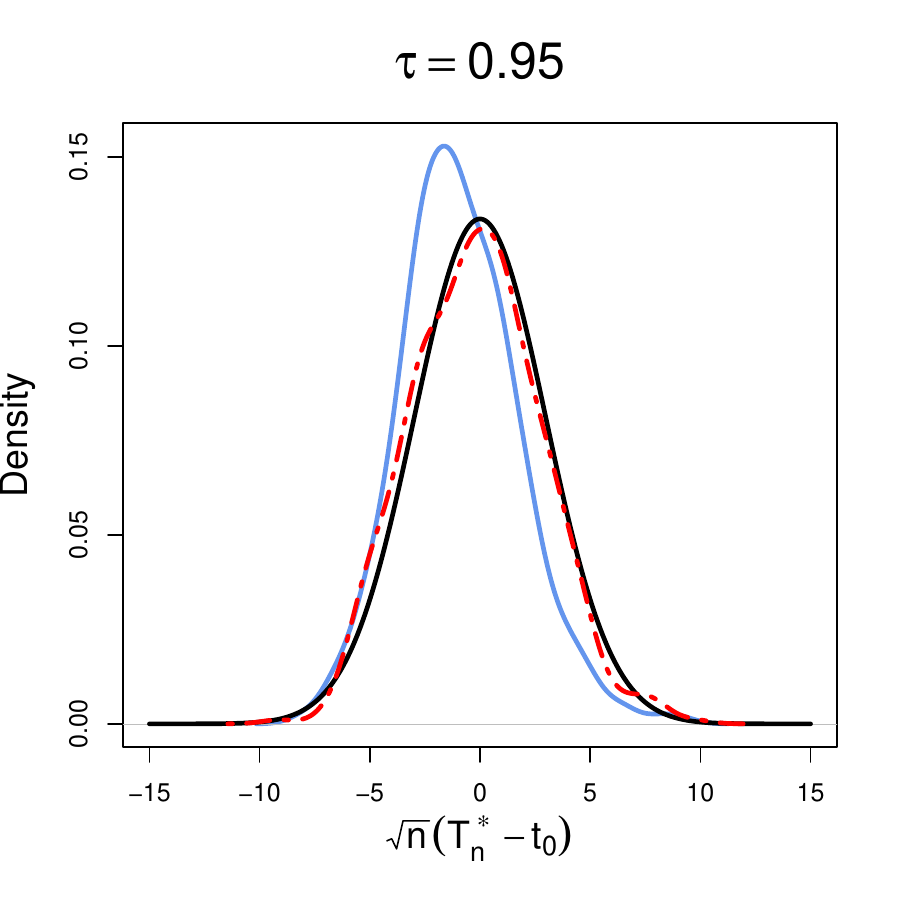}
    \includegraphics[width=0.44\textwidth]{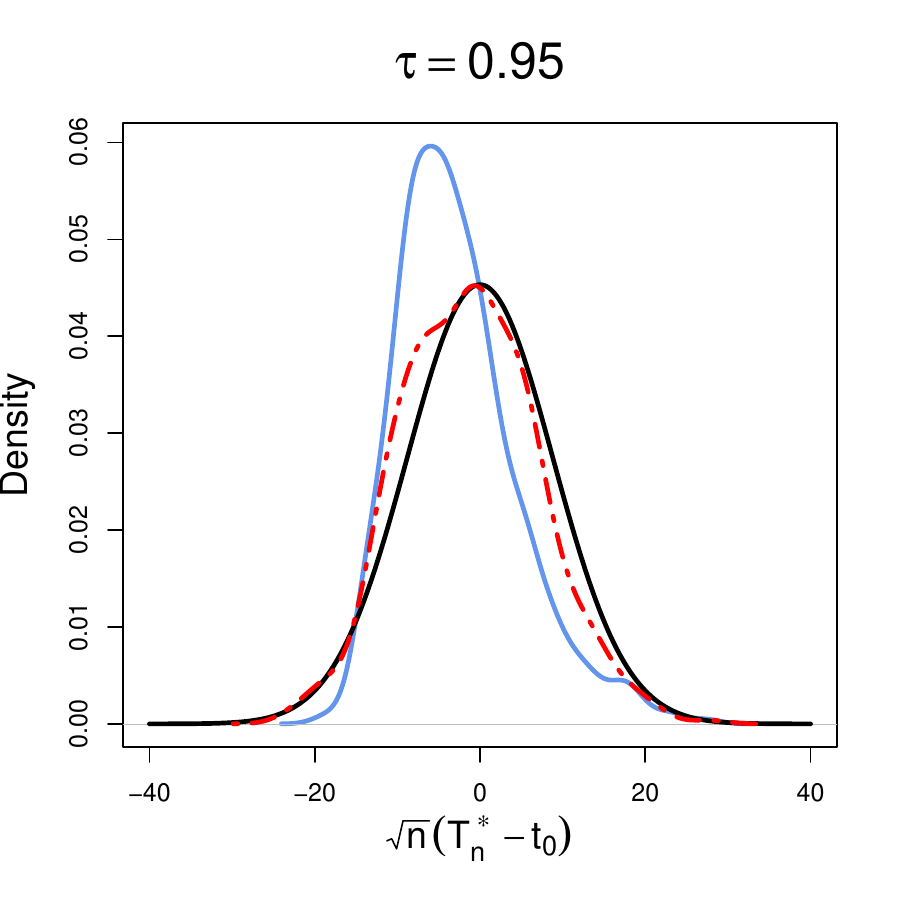}     
    \vspace*{-0.64 cm}
    
    \noindent
    \caption{Simulation study 1. 
    Density estimates based on 500 values of $\sqrt{n}(T_n^*-t_0)$ for $n=50$ and $n=800$. Samples drawn from $X\sim \mathcal{N}(0;1)$ (left panels); and from  $X \sim \mbox{Expo}(1)$ (right panels).  Black lines: the normal density with variance equal to the asymptotic variance of Theorem \ref{theorem: asymptoticNormalityEmpirical}. }
    \label{fig:densityEstimates}
\end{figure}

\subsubsection*{Simulation study 2}
In this part, the function $D_{\tau}$ corresponds to the distortion function of expected shortfall (see entry 5 in Table \ref{DistrFunctions}).
Several values of the parameters $(\tau, \delta)$ of respectively the distribution and the loss function are considered. The simulation results are summarized in 
Figure \ref{fig:boxplotExpectilesAndESn50And800FromNormal} as boxplots of  $T_n^*-t_0$ (with $t_0$ the true generalized extremile) over all Monte Carlo runs. The value of $t_0$ is indicated on the right-hand side of each plot. Note that when passing from sample size $n=50$ to  $n=800$ all boxplots become narrower, and more positioned around zero. For $n=50$, the median is close to zero for the cases where $\delta=0.1$ and/or $\tau=0.1$. In all cases the boxplots are relatively symmetric.  As already seen from Figure \ref{fig:densityEstimates}, for fixed $\delta$, our estimates are more variable for larger values of $\tau$. Similarly, for fixed $\tau$, the estimates are more variable  for larger  $\delta$.
\begin{figure}[htb]
    \centering
    \includegraphics[scale=0.44]{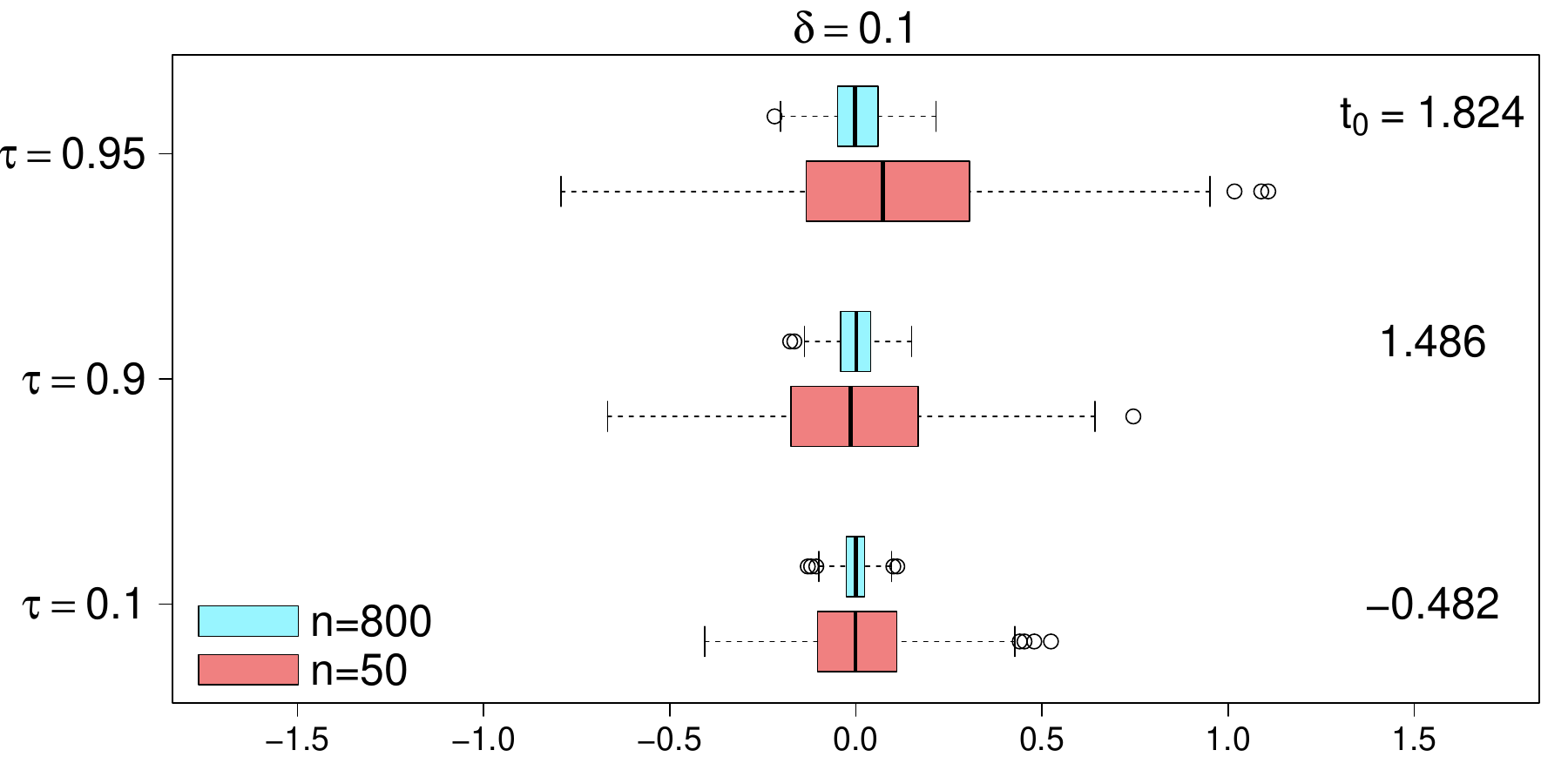}\\
  \includegraphics[scale=0.44]{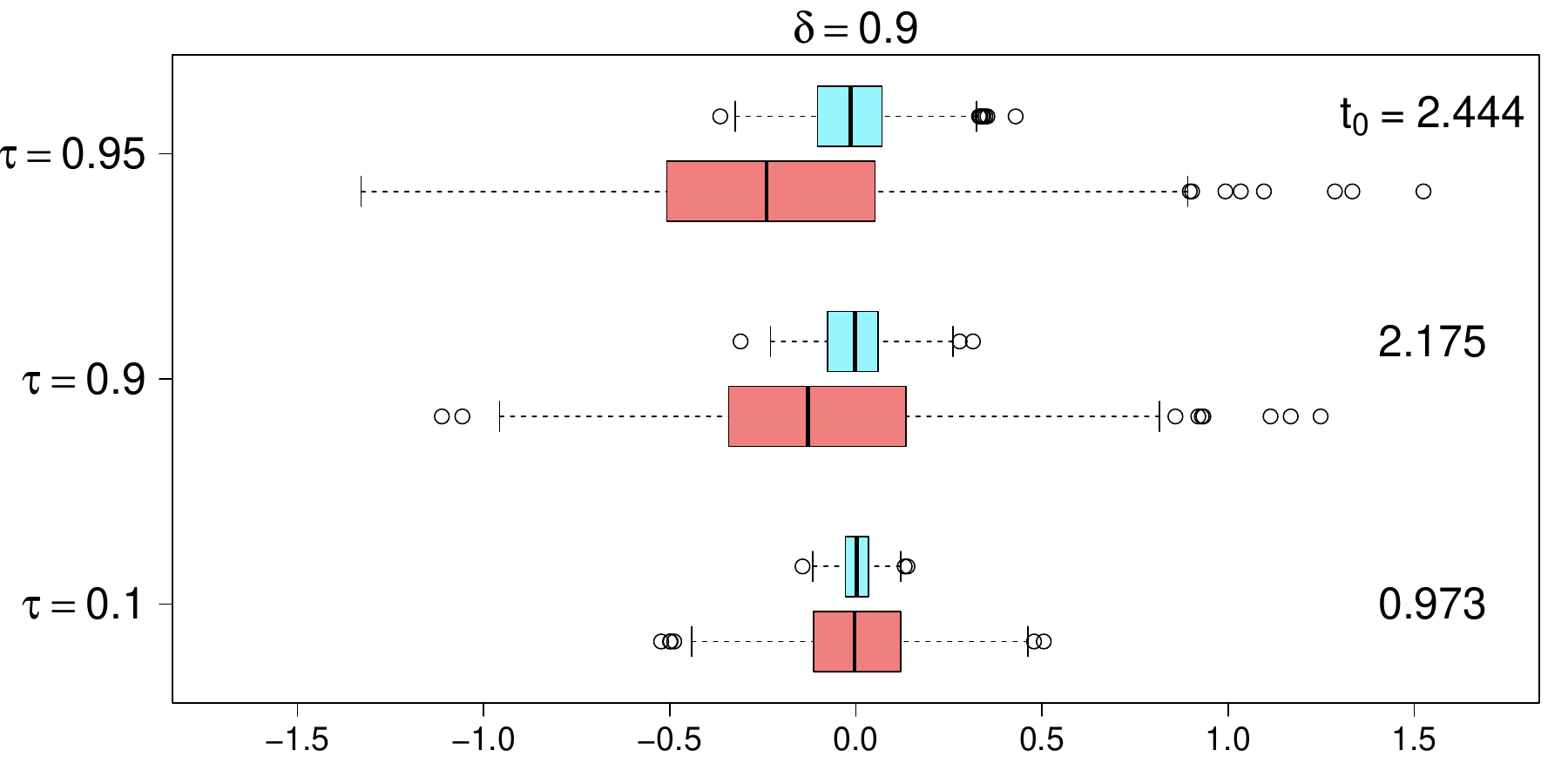}\\
    \includegraphics[scale=0.44]{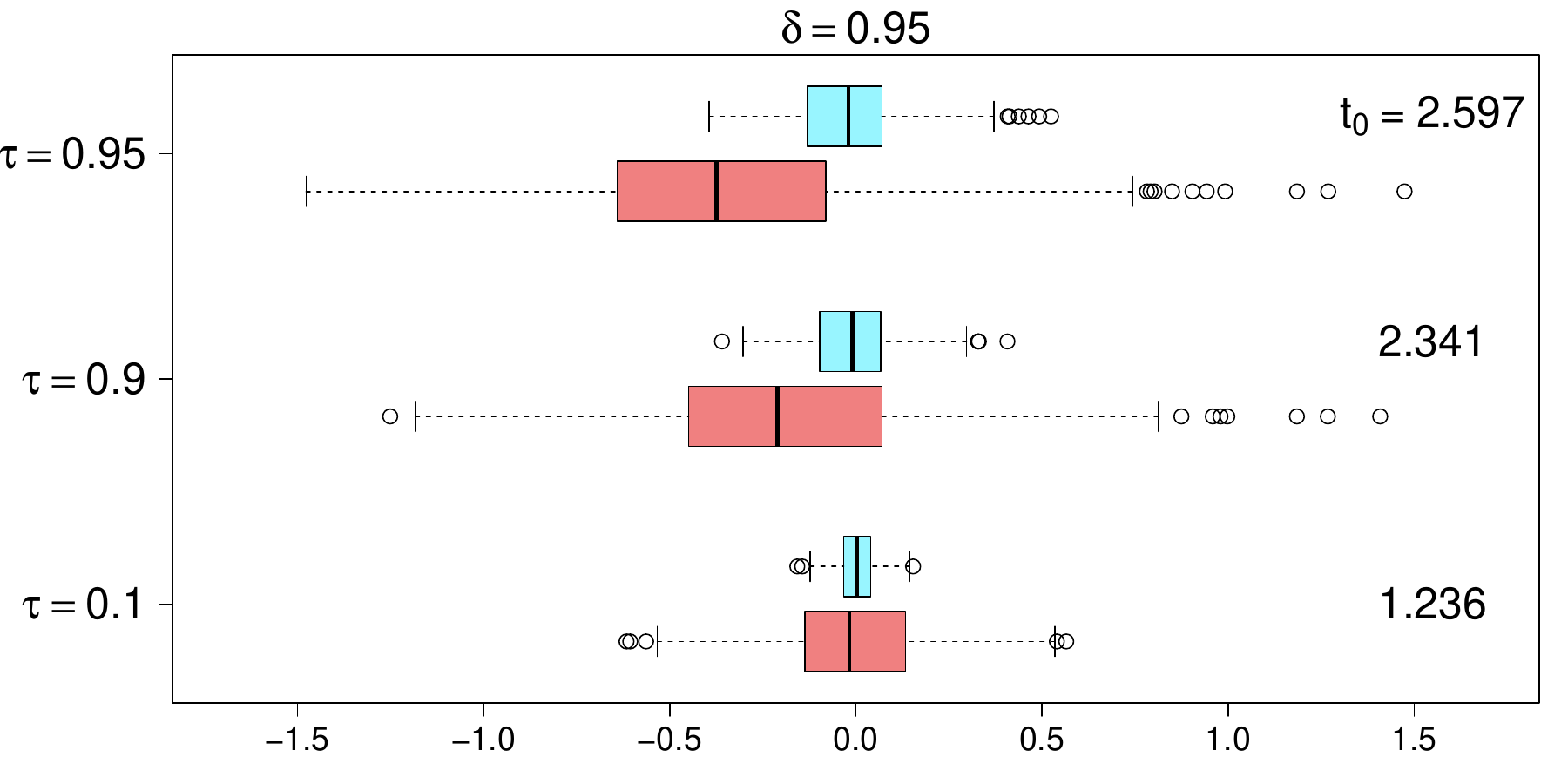} 
    \vspace*{-0.44 cm}
    
    \noindent
    \caption{Simulation study 2. Finite-sample performance of the estimator $T_n^*$. Boxplots of $T_n^*-t_0$ based on 500 samples of size $n=50,800$ drawn from $X\sim \mathcal{N}(0;1)$  
    for different combination of parameters $(\delta,\btau)$. The true value of $t_0$ is indicated next to each boxplot.}
    \label{fig:boxplotExpectilesAndESn50And800FromNormal}
\end{figure}
\FloatBarrier
\subsubsection*{Simulation study 3}
As already mentioned at the start of Section \ref{sec:estimation--ApplicationsANResult} quantiles can be estimated from two viewpoints, i.e. using two different pairs $(D_{\btau}, \ell_{\delta})$. We present simulation results of the two resulting estimators:
\begin{itemize}
\item[] \textbf{Estimator 1}: based on using $D_{\btau}(u)=u$ and the quantile loss function $\ell_{\delta}(x,c) = | \delta- 1_{\{ x \le c\}} \, |x-c| $;
\item[] \textbf{Estimator 2}: based on using $D_{\btau}=K_{\tau}$ and absolute value loss function $\ell_{\delta}(x,c) =|x-c|$.
\end{itemize}
 By Corollaries \ref{cor:absoluteValueLossAsymptoticVariance} and  \ref{cor:QuantileLossAsymptoticVariance}, and Remarks \ref{AVarQuantileEstimators}  and  \ref{AVarQuantileEstimatorsBis}  we know that both estimators are asymptotically equivalent to the empirical quantile estimator.
\FloatBarrier

Table \ref{table:MSEtableExpComparingQuantiles} lists the  mean squared error (MSE) for the two estimators, together with their Monte Carlo approximated variance (between brackets). The results for both estimators for sample size $n=50$ 
are identical, except for quantile level $0.01$. When the sample size increases to $n=400$, the estimators are indistinguishable when only considering the mean squared error. The difference between the two estimators for quantile level $0.01$ is due to a larger bias of estimator 1.
\begin{table}[htb]
\centering
\caption{Simulation study 3. Mean squared error (MSE) of two different quantile estimators ($\tau$th = $\delta$th quantile). The approximate variance of the estimators is given between brackets.}
\label{table:MSEtableExpComparingQuantiles}
\vspace*{0.2 cm}

\noindent
\begin{tabular}{l|cc}
               & \multicolumn{2}{c}{Estimator 2}  \\
$\btau$ & $n=50$     & $n=400$                          \\ \hline
$0.01$       & $1.01 \cdot 10^{-3}$  $(5.65 \cdot 10^{-4})$   & $3.68\cdot  10^{-5} (3.08 \cdot 10^{-5})$          \\
$0.05$       & $1.99 \cdot 10^{-3}$ $(1.49 \cdot 10^{-3})$    & $1.33 \cdot 10^{-4} $ $(1.28 \cdot 10^{-4})$                                           \\
$0.1$        & $ 2.53 \cdot 10^{-3}$ $(2.35 \cdot 10^{-3})$    & $2.69 \cdot 10^{-4}$ $(2.69 \cdot 10^{-4})$                                           \\
$0.5$        & $1.98 \cdot 10^{-2}$  $(1.96 \cdot 10^{-2})$    & $2.62 \cdot 10^{-3}$ $(2.61 \cdot 10^{-3})$                                             \\
$0.9$        & $1.66 \cdot 10^{-1}$  $(1.66 \cdot 10^{-1})$     & $2.27 \cdot 10^{-2} $    $(2.25 \cdot 10^{-2})$                                              \\
$0.95$       & $3.08 \cdot 10^{-1}$ $(2.69 \cdot 10^{-1})$    & $4.70 \cdot 10^{-2}$  $(4.67 \cdot 10^{-2})$                                              \\
$0.99$       & 1.26  $(9.29 \cdot 10^{-1})$      & $1.97 \cdot 10^{-1}$   $(1.97 \cdot 10^{-1})$  \\                          
\hline \hline 
\multicolumn{2}{c}{}\\[0.02cm]
 & \multicolumn{2}{c}{Estimator 1} \\   
$\btau$ & $n=50$     & $n=400$ \\ \cline{1-3} \\
0.01 & $9.54 \cdot 10^{-1} (6.75 \cdot 10^{-4})$ & $3.68\cdot 10^{-5} (3.08 \cdot 10^{-5}) $\\
Other & \multicolumn{2}{c}{Identical to results for Estimator 2}
\end{tabular}%
\end{table}

\subsubsection*{Simulation study 4}
The adaptive loss function for quantile estimation in case of censoring was briefly discussed in Section \ref{sec: ExamplesLossFunctions}. This loss function is not convex, and hence does not satisfy the conditions stated in Theorem \ref{theorem: asymptoticNormalityEmpirical}. Since these are however sufficient conditions, we might wonder whether the finite-sample distribution can nevertheless be well approximated with the asymptotic normal distribution as stated in the theorem. 

In the right-random censoring setting, the variable of interest $X$ is possibly censored by a  variable $C$. We are interested in a $\delta$th quantile of $X$, but only have data of the form $(Y,\Delta_C)$, where $Y=\min\{X,C\}$ and $\Delta_C=1_{\{X \leq C\}}$. Given a sample from $(Y,\Delta_C)$, we can use our estimator to estimate such a $\delta$th quantile of $X$.  Table \ref{Table: Study4} summarizes the various elements of the simulation model. 

When calculating the estimator we have to deal with the non-convexity of $\lambda_F(c)$ and $\lambda_{\widehat{F}_n}^*(c)$. 
Globally minimizing $\lambda_{\widehat{F}_n}^*(c)$ is not feasible, and instead  we implemented  the following procedure. Based on a sample from $Y$, we construct a sequence $\boldsymbol{c}=(c_1,\cdots,c_k)$ of values ranging from $
\min(Y_i)$ to $\max(Y_i)$. More precisely we take a equispaced grid of points, with an increment of 0.01, i.e. for $m, m^* \in \mathbb{N}$, with $m^* > m$, we have $ c_{m^*}-c_m=(m^*-m) 0.01$. We then  search for the smallest value in this sequence, say $c_m$, which satisfies the following inequalities 
$$\frac{\lambda_{\widehat{F}_n}^*(c_{m+j})-\lambda_{\widehat{F}_n}^*(c_{m})}{j\cdot 0.01}>0.01 \quad \text{ for } j=1,\cdots,20.$$
This requires difference quotients (discrete slopes) to be strictly positive in an interval past a given point. In this manner, we search for a (local) minimum. By considering multiple subsequent  difference quotients, we avoid selecting a value due to a small spike in the function $\lambda_{\widehat{F}_n}^*(c)$.\\

Recall that $f_X, F_X$ denote respectively  the density and the cumulative distribution function of $X$. Similarly, we use the notations $f_C,F_C$ for these quantities for the censoring variable $C$.
The asymptotic variance stated in Theorem \ref{theorem: asymptoticNormalityEmpirical} equals in this case
\begin{equation}\label{eq:asymptoticVarianceMyCensoring}
    \frac{F_X(t_0)(1-F_X(t_0))}{(f_X(t_0)-(1-\delta)f_C(t_0))^2}.
\end{equation}
When there is no censoring (meaning that $F_C(t)=0$ for all $t$), this expression reduces to the asymptotic variance of the empirical quantile estimator.\\

\cite{SanderJM1975} estimated a quantile for censored data relying on the Kaplan-Meier estimator for a distribution function. We refer to this quantile estimator as the Kaplan Meier-based (KM-based) estimator. The author established asymptotic normality for the estimator (see Corollary 1 on p. 5 in \cite{SanderJM1975}). The asymptotic variance of this estimator equals
\begin{equation}\label{eq:asymptoticVarianceOtherCensoring}
  \frac{(1-\delta)^2}{\left( f_X(t_0)\right )^2}\int_0^{t_0} \frac{f_X(x)}{(1-F_X(x))^2 (1-F_C(x))}\mathrm{d} x.  
\end{equation}
Again, in case of no censoring, this asymptotic variance reduces to the one of the empirical quantile estimator.

\FloatBarrier
\begin{table}[h!]
\centering
\caption{Simulation study 4. Mean squared error (MSE) for the loss-based estimator and the KM-based estimator for different censoring proportions $p_C$, different quantile levels $\delta$ and different sample sizes $n$. Variances are given  between brackets, and should be multiplied with the same factor $10^{-a}$ as appearing in the corresponding MSE-value. }
\label{table:MSESimulation4}
\vspace*{0.2 cm}

\noindent
\begin{tabular}{l|l|cc|cc|cc|cc}
                           &              & \multicolumn{4}{c|}{MSE Loss-based estimator} & \multicolumn{4}{c}{MSE KM-based estimator} \\
 $p_C$ & $\delta$   & \multicolumn{2}{c}{$n=100$}          & \multicolumn{2}{c|}{$n=400$}          & \multicolumn{2}{c}{$n=100$}        & \multicolumn{2}{c}{ $n=400$}         \\ \hline
\multirow{4}{*}{$0.1$} 
& & & & & \\
& $0.1$ &  $1.43 \cdot 10^{-3}$ & ($1.39$)            & $3.53 \cdot 10^{-4}$ & ($3.52$)         &  $1.14 \cdot 10^{-3}$  & ($1.13$)       & $3.17 \cdot 10^{-4}$ & ($3.17$)        \\
                & $0.5$ & $1.20 \cdot 10^{-2}$ & ($1.19$)            & 
$2.90 \cdot 10^{-3}$  & ($2.90$)          
& $9.90 \cdot 10^{-3}$  & ($9.85$)       &$2.51 \cdot 10^{-3}$            & ($2.51$) \\[1.2 ex]
\cline{1-2}                          
\multirow{4}{*}{$0.3$}& & & & & & & \\
& $0.1$ &  $2.09 \cdot 10^{-3}$   & ($2.01 $)           & 
$4.78 \cdot 10^{-4}$  & ($4.76 $)          & $1.23 \cdot 10^{-3}$  
& ($1.23 $)       & $3.13 \cdot 10^{-4}$ & ($3.12$)         \\
                           & $0.5$ &  $1.74 \cdot 10^{-2}$  & ($1.73$)           & 
$4.24 \cdot 10^{-3}$  & ($4.24$)         & $1.15 \cdot 10^{-2}$    & ($1.15 $)                       & $2.76 \cdot 10^{-3}$     & ($2.76$)     \\[1.2 ex]
\cline{1-2}                   
\multirow{4}{*}{$0.5$} & & & & & & & \\
& $0.1$ &   $4.23 \cdot 10^{-3}$  & ($3.98$)            & 
$6.57 \cdot 10^{-4}$    & ($6.40$)        &  $1.41 \cdot 10^{-3}$ & ($1.37$)                         & $3.40 \cdot 10^{-4}$   & ($3.38$)      \\
& $0.5$ &  $3.34 \cdot 10^{-2} $  & ($3.28$)          & 
  $8.37 \cdot 10^{-3}$  & ($8.32$)           & $1.69 \cdot 10^{-2}$  & ($1.65 $)                        &  $3.64 \cdot 10^{-3}$ & ($3.64$)    \\                    
\end{tabular}
\end{table}
\begin{figure}[h!]
    \centering
    \includegraphics[scale=0.56]{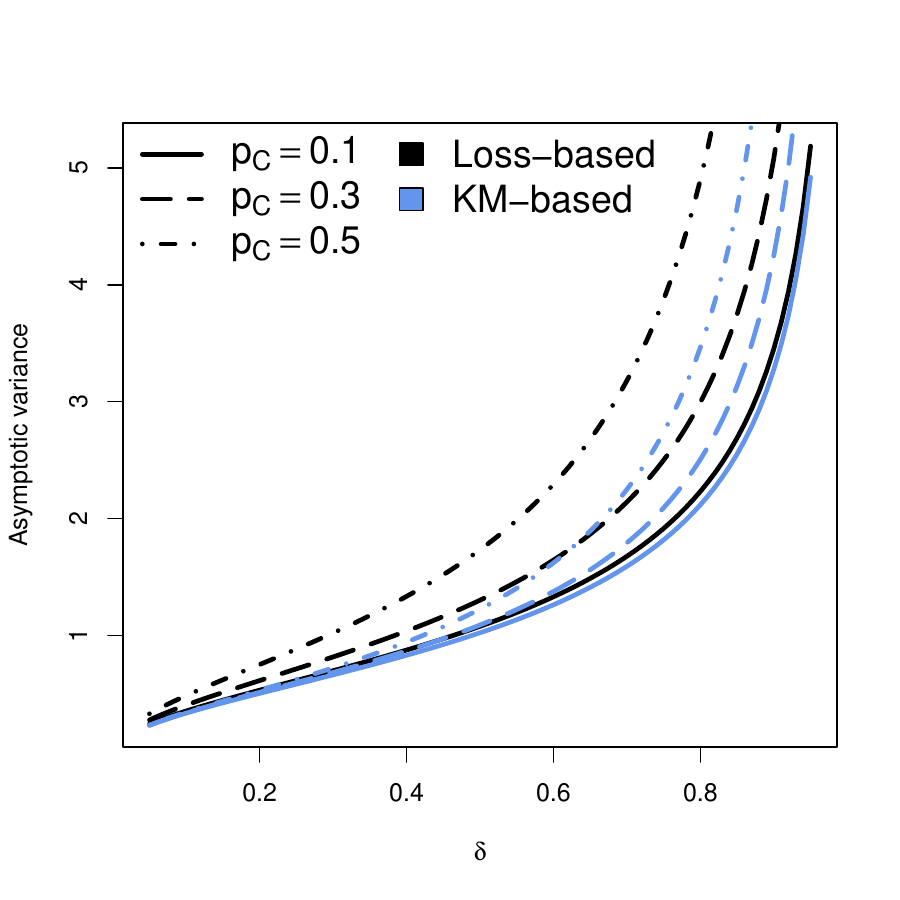}
    \vspace*{-0.94 cm}
    
    \noindent
    \caption{Simulation 4. Asymptotic variance expressions, i.e. (\ref{eq:asymptoticVarianceMyCensoring}) and (\ref{eq:asymptoticVarianceOtherCensoring}) for different values of $\delta$ and different censoring proportions $p_C$. Distributions of $X$ and $C$ as in Table \ref{Table: Study4}.}
    \label{fig:comparingAsymptoticVarianceCensoring}
\end{figure}

Figure \ref{fig:adaptiveQuantileLossDensityPlots} in the Supplementary Material depicts the density estimates of the loss-based estimator $\sqrt{n}(T_n^*-t_0)$, as well of the KM-based estimator. The figure also superimposes the asymptotic normal distributions with asymptotic variances as in \eqref{eq:asymptoticVarianceMyCensoring} and  \eqref{eq:asymptoticVarianceOtherCensoring} respectively. These results confirm the appropriateness of the asymptotic variance \eqref{eq:asymptoticVarianceMyCensoring}, and seem to indicate that the asymptotic normality result of Theorem \ref{theorem: asymptoticNormalityEmpirical} holds, even in this case where the loss function does not satisfy the stipulated conditions.  

Some findings from the simulation results are as follows.  
\begin{itemize}
\item[$\bullet$]
The  KM-based estimator shows a smaller (finite-sample) variance compared to the loss-based estimator.
\vspace*{-0.54 cm}

\noindent
\item[$\bullet$] 
 For the different censoring proportions, the asymptotic variance of both estimators is larger when $\delta$ is $0.5$.  
 \vspace*{-0.34 cm}

\noindent
\item[$\bullet$]
The finite-sample variances of both estimators increase with increasing censoring proportion.
\end{itemize}

Table \ref{table:MSESimulation4} provides a summary of bias, variance and MSE for both estimators, and confirms the visual findings from Figure \ref{fig:adaptiveQuantileLossDensityPlots}.

Figure \ref{fig:comparingAsymptoticVarianceCensoring} presents the asymptotic variances in (\ref{eq:asymptoticVarianceMyCensoring}) and (\ref{eq:asymptoticVarianceOtherCensoring}) for different quantile levels $\delta$ and different censoring proportions $p_C$.  As to be expected the (asymptotic) variances of both estimators increase as the censoring proportion increases. For small censoring proportion $p_C=0.1$ the asymptotic variances are almost identical. As the censoring proportion increases, so does the difference between the two asymptotic variances.  For larger censoring proportions $p_C$ it becomes more advantageous to use the KM-based estimator. We end this simulation regarding censored quantiles by noting that in \cite{DeBackerEtAL2019}, the authors established, in a linear regression setting, asymptotic normality of an estimator using the adapted quantile loss. They coped with non-convexity by using a majorize-minimize (MM) algorithm.

\subsubsection*{Simulation study 5}
With this simulation we aim to illustrate the result in Corollary \ref{cor: generalizedDistrortionAsymptoticNormality}. 
Figure \ref{fig:denistyEstimatesSimulation5} shows density estimates (based on 500 values) of $\sqrt{n}(T_n^*-t_0)$ for $n=50,800$. The left panels are for samples drawn from $X\sim \mathcal{N}(0;1)$. The right panels are for samples drawn from $X\sim \mbox{Expo}(1)$. The normal density with variance equal to the asymptotic variance is presented as the black lines. 
\begin{figure}[h!]
    \centering
    \includegraphics[width=0.44\textwidth]{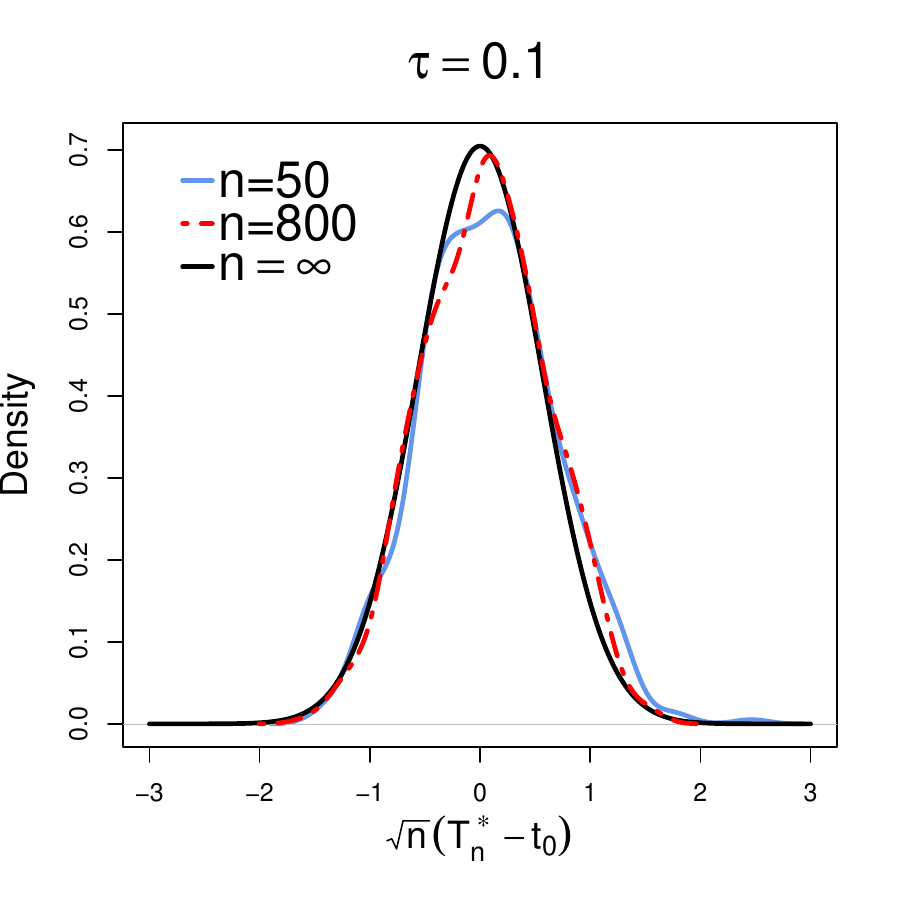}
    \includegraphics[width=0.44\textwidth]{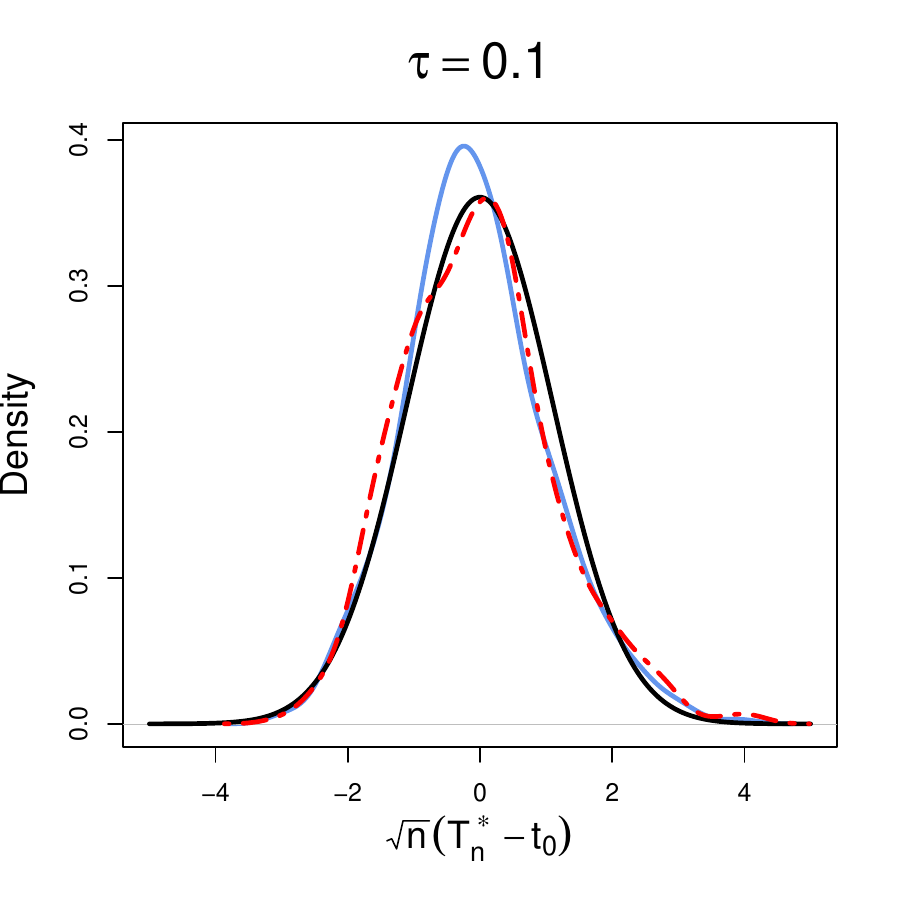}\\
\includegraphics[width=0.44\textwidth]{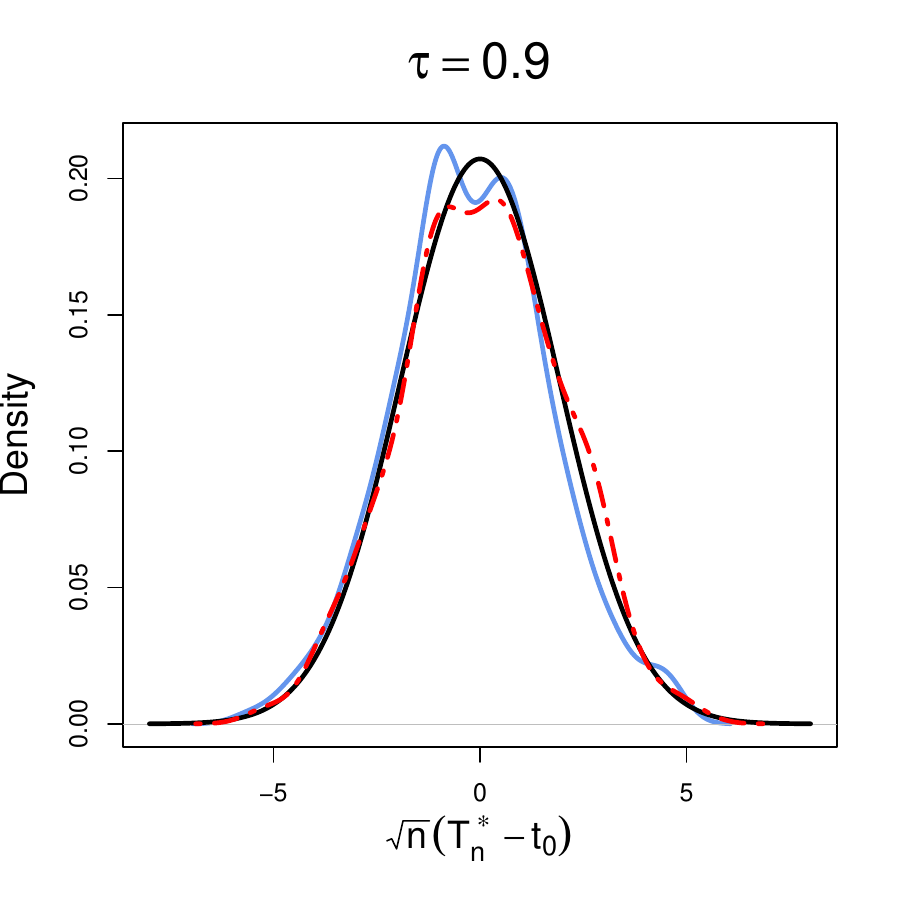}
    \includegraphics[width=0.44\textwidth]{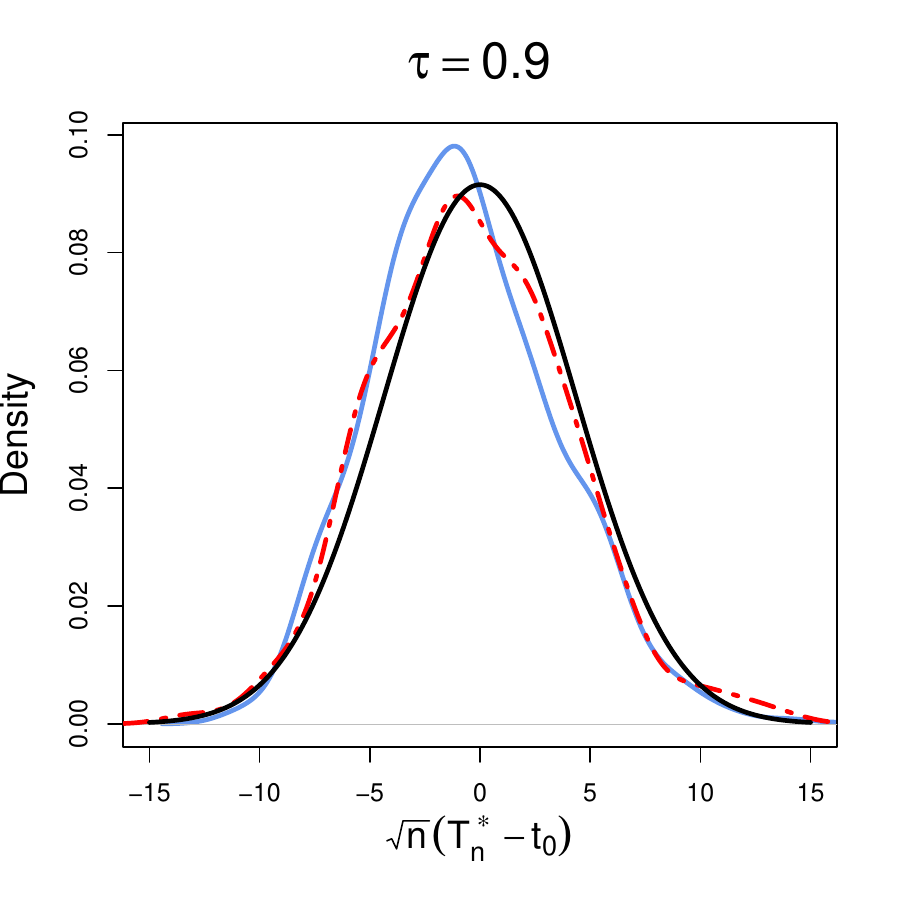}  \\
    \includegraphics[width=0.44\textwidth]{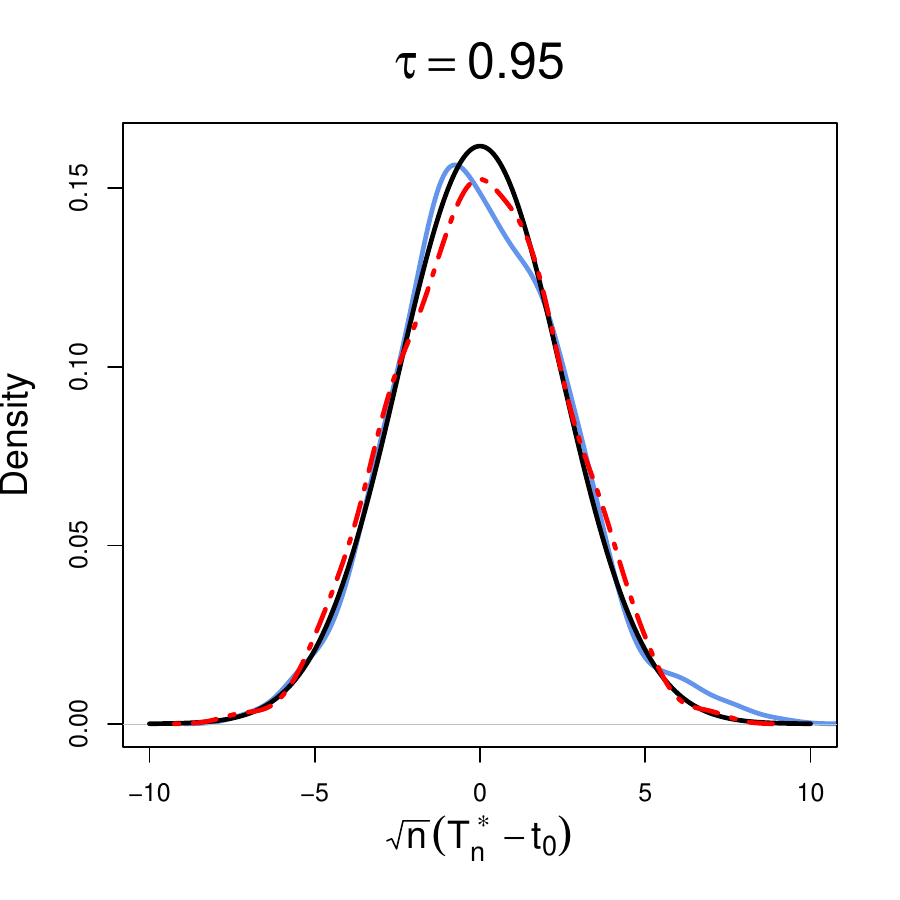}
    \includegraphics[width=0.44\textwidth]{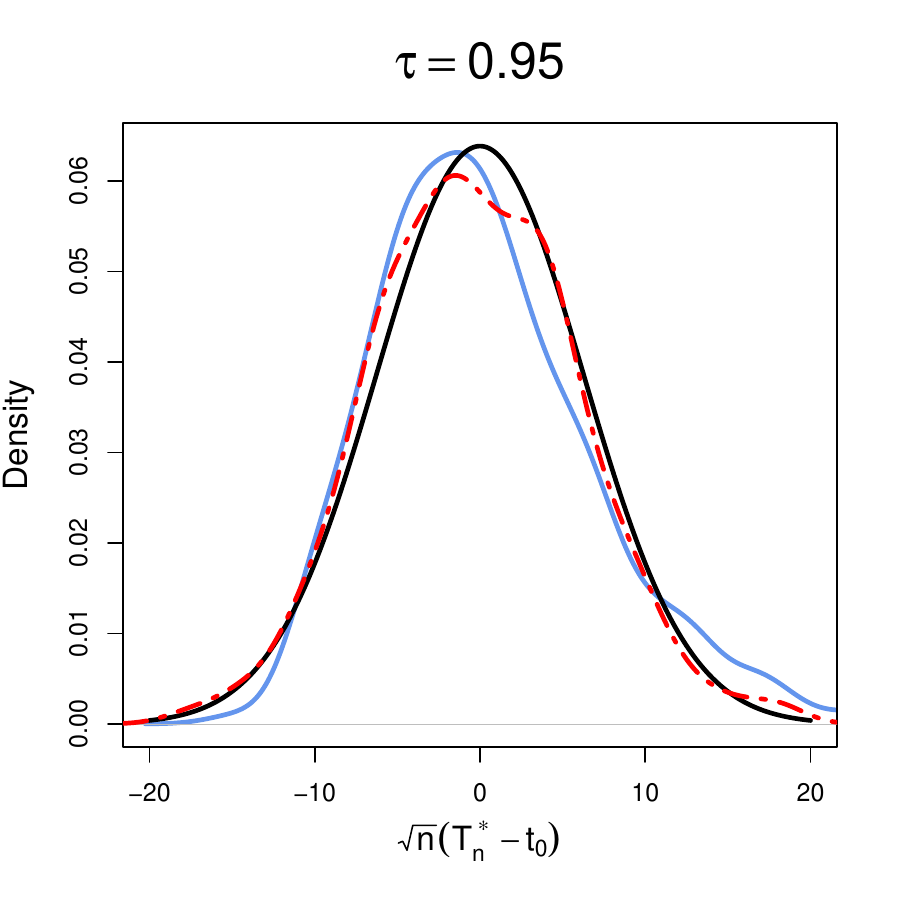}     
    \vspace*{-0.64 cm}
    
    \noindent
    \vspace*{-0.44 cm}
    
    \noindent
    \caption{Simulation study 5. Finite-sample distribution of the estimator $T_n^*$.
    Density estimates of 500 values of $\sqrt{n}(T_n^*-t_0)$ for $n=50$ and $n=800$. Samples drawn from $X\sim \mathcal{N}(0,1)$ (left panels); and from  $X \sim \mbox{Expo}(1)$ (right panels).  In black, the normal density with variance equal to the asymptotic variance of Theorem \ref{theorem: asymptoticNormalityEmpirical}. }
    \label{fig:denistyEstimatesSimulation5}
\end{figure}

Note that even for small sample  size $n=50$, there is a good correspondence between the finite-sample distribution and the asymptotic disribution. Recall from Section S2.2 that the equicontinuity assumption of Theorem \ref{theorem: asymptoticNormalityEmpirical} is satisfied under the sufficient condition that $d_{\tau}$ is Lipschitz continuous. Despite the fact that this not holds for $d_{\tau}$ in this simulation model, the asymptotic normality result still seems to hold. 

\FloatBarrier

\section{General considerations for  risks of distorted random variables}\label{sec:risks of distorted rvs}
In this section we establish some general properties about risk measures $\rho(X_{\sD_{\btau}})$. 
In the real data example in Section \ref{sec: realdata} we refer to these results. 
For notational simplicity we simply write $X_{\sD}$ instead of $X_{\sD_{\btau}}$ in this section. 
Consider a random variable $X$ defined on the probability space $(\Omega, \mathcal{F}, P)$,  with $\mathcal{F}$ the sigma-algebra, and $P$ the probability measure. Denote by $\mathcal{X}$ the linear space of random variables defined on $(\Omega, \mathcal{F}, P)$, and for a given $D \in \mathbb{D}$ consider a risk measure 
\[
\begin{array}{ll}
\rho:  & \mathcal{X} \to \mathbb{R} \\
& X \to \rho(X_{\sD})  .
\end{array} 
\]

Before stating the main result of this section, we recall some basic properties of a risk measure $\rho$. 
\begin{itemize}
\item[$\star$] A risk measure $\rho$ is called \textit{monotone} if: $ \forall X,Y \in \mathcal{X} :  X \leqas Y \Longrightarrow \rho(X)\le \rho(Y)$.
\item[$\star$] A risk measure $\rho$ is called \textit{translation invariant} if:  $\forall X \in \mathcal{X}$, $c \in \mathbb{R} :  \rho(X+c)=\rho(X)+c$.
\item[$\star$] A risk measure $\rho$ is called  \textit{positive homogeneous} if: $\forall X \in \mathcal{X}$, $a \in [0, +\infty)$, such that  $aX\in \mathcal{X}:  \rho(aX)=a\rho(X)$.
\item[$\star$] A risk measure $\rho$ is called  \textit{subadditive} if: $\forall X,Y, X+Y \in \mathcal{X}$ it holds that  $\rho(X+Y)\le \rho(X)+\rho(Y)$.
\end{itemize}
A risk measure $\rho$ is coherent when it is monotone, translation invariant, positive homogeneous and subadditive. 

Theorem \ref{GenPropDistRisk} investigates risk measures for the distorted random variable $X_{\sD}$. In establishing some of these statements, we can rely on results of \cite{LiuSchiedWang2021}. 
The proof of the theorem is provided in Appendix \ref{App: RisksComposition}.\\

\begin{theorem}\label{GenPropDistRisk}
Denote by $D$ a distortion function and by $\rho$ a risk measure defined on $\mathcal{X}$, the  linear vector space of random variables.
For two random variables $X$ and $Y$ in the domain of $\rho$ we then have:
\begin{itemize}
	\item[(i)] If $\rho$ is positive homogeneous, then also $\rho((aX)_{\sD}) = a\rho(X_{\sD})$ for $a\in[0,\infty)$.
	\item[(ii)] If $\rho$ is translation invariant, then also $\rho((X+c)_{\sD}) = \rho(X_{\sD})+c$ for $c\in\mathbb{R}$.
	\item[(iii)] If $\rho$ is monotone and $X \leqas Y$ then $\rho(X_{\sD}) \leq \rho(Y_{\sD})$. 
	\item[(iv)] Suppose $\rho_{\delta}$ is coherent. If $D$ is convex, then 
     the composition (distortion) risk measure is also coherent (and thus in particular subadditive), i.e. for all  $X$ and $Y$, we have $\rho((X+Y)_{\sD}) \leq \rho(X_{\sD}) + \rho(Y_{\sD})$.
	\item[(v)] Denote by $D_1$ and $D_2$ two distortion functions such that $D_1(u) \leq D_2(u)$ for all $u\in[0,1]$. If $\rho$ is monotone, then $\rho(X_{\sD_2})\leq \rho(X_{\sD_1})$. Specifically for $D_1(u) \leq u \leq D_2(u)$ we have $\rho(X_{\sD_2}) \leq \rho(X)\leq \rho (X_{\sD_1})$.
	\item[(vi)] Suppose that the risk measure $\rho$ depends on a parameter $\gamma$, and denote the risk measure by $\rho_{\gamma}$. If $\rho_{\gamma}$ is monotone increasing in $\gamma$, then so is the composition risk measure $\rho_{\gamma}(X_{\sD})$. 
\end{itemize}
\end{theorem}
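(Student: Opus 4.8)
Every part rests on the single identity $F_{X_{\sD}}=D\circ F_X$, equivalently $F_{X_{\sD}}^{-1}=F_X^{-1}\circ D^{-1}$ for the (generalized) quantile functions, combined with the elementary rules $F_{aX}(x)=F_X(x/a)$ for $a>0$ and $F_{X+c}(x)=F_X(x-c)$. Since $X_{\sD}$ enters $\rho$ only through its law $D\circ F_X$, I treat $\rho$ as a law-invariant functional of the distribution and am free to replace $X_{\sD}$ by any convenient representative of that law; for (iv) I will in particular use comonotonic couplings drawn from a common uniform variable, and for (iii) and (v) Strassen-type couplings realizing stochastic dominance.

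\textbf{Parts (i), (ii) and (vi).} For (i) with $a>0$, $F_{(aX)_{\sD}}(x)=D(F_X(x/a))=F_{aX_{\sD}}(x)$, so $(aX)_{\sD}\stackrel{d}{=}aX_{\sD}$ and positive homogeneity of $\rho$ gives $\rho((aX)_{\sD})=a\rho(X_{\sD})$; for $a=0$, $(0\cdot X)_{\sD}$ is a.s.\ $0$ since $D(0)=0,D(1)=1$, and $\rho(0)=0$ by positive homogeneity. Part (ii) is the same computation: $F_{(X+c)_{\sD}}(x)=D(F_X(x-c))=F_{X_{\sD}+c}(x)$, hence $(X+c)_{\sD}\stackrel{d}{=}X_{\sD}+c$ and translation invariance concludes. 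Part (vi) is immediate: $\gamma_1\le\gamma_2$ together with monotonicity of $\gamma\mapsto\rho_\gamma$ applied to the fixed random variable $X_{\sD}$ gives $\rho_{\gamma_1}(X_{\sD})\le\rho_{\gamma_2}(X_{\sD})$.

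\textbf{Parts (iii) and (v).} If $X\leqas Y$ then $F_X\ge F_Y$ pointwise, and monotonicity of $D$ gives $D\circ F_X\ge D\circ F_Y$, i.e.\ $X_{\sD}$ is stochastically dominated by $Y_{\sD}$; a coupling then provides $\widetilde X\stackrel{d}{=}X_{\sD}$ and $\widetilde Y\stackrel{d}{=}Y_{\sD}$ with $\widetilde X\le\widetilde Y$ a.s., and monotonicity of $\rho$ (with law-invariance) yields $\rho(X_{\sD})\le\rho(Y_{\sD})$. Part (v) is the identical argument applied to $D_1\le D_2$: then $D_1\circ F_X\le D_2\circ F_X$, so $X_{\sD_2}$ is stochastically dominated by $X_{\sD_1}$ and $\rho(X_{\sD_2})\le\rho(X_{\sD_1})$; the special case follows because $D$ equal to the identity gives $X_{\sD}\stackrel{d}{=}X$, whence $D_1(u)\le u\le D_2(u)$ yields $\rho(X_{\sD_2})\le\rho(X)\le\rho(X_{\sD_1})$.

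\textbf{Part (iv): the main obstacle.} This is the substantive step, and it is where I lean on \cite{LiuSchiedWang2021}. The plan has three ingredients. First, the comonotonic sum is maximal in the convex order, $X+Y\preceq_{\mathrm{cx}}X^{c}+Y^{c}$, where $(X^c,Y^c)$ is the comonotonic coupling of $X$ and $Y$. Second, a \emph{convex} distortion $D$ preserves the stop-loss (increasing convex) order $\preceq_{\mathrm{icx}}$: using $F_{Z_{\sD}}^{-1}=F_Z^{-1}\circ D^{-1}$ and the substitution $v=D^{-1}(u)$, the tail quantile integral of $Z_{\sD}$ becomes $\int_{D^{-1}(p)}^{1}F_Z^{-1}(v)\,\mathrm{d}D(v)$, and an integration by parts reduces the required inequality to the stop-loss inequality for $Z$ weighted by the non-decreasing function $D'$, which is exactly where convexity of $D$ is used. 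Since $F_{(X^c+Y^c)_{\sD}}^{-1}=F_X^{-1}\circ D^{-1}+F_Y^{-1}\circ D^{-1}=F_{X_{\sD}}^{-1}+F_{Y_{\sD}}^{-1}$ is the quantile function of the comonotonic coupling of $X_{\sD}$ and $Y_{\sD}$, the first two ingredients give $(X+Y)_{\sD}\preceq_{\mathrm{icx}}X_{\sD}^{c}+Y_{\sD}^{c}$ with $(X_{\sD}^{c},Y_{\sD}^{c})$ comonotonic. Third, on an atomless probability space a law-invariant coherent risk measure is monotone and consistent with the convex order, hence with the increasing convex order (factor $\preceq_{\mathrm{icx}}$ through a stochastic-dominance step followed by a convex-order step), so
\[
\rho((X+Y)_{\sD})\le\rho(X_{\sD}^{c}+Y_{\sD}^{c})\le\rho(X_{\sD}^{c})+\rho(Y_{\sD}^{c})=\rho(X_{\sD})+\rho(Y_{\sD}),
\]
where the middle step is subadditivity and the last equality is law-invariance. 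Combining this subadditivity of $X\mapsto\rho(X_{\sD})$ with parts (i)--(iii) shows that the composed map is monotone, translation invariant, positively homogeneous and subadditive, i.e.\ coherent. The two points that need care — and the reason for invoking \cite{LiuSchiedWang2021} — are the stop-loss preservation under a general (possibly non-differentiable) convex distortion, and the implicit atomlessness/law-invariance hypotheses under which a coherent $\rho$ is consistent with the increasing convex order.
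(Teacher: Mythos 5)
Your proof is correct, and for parts (i), (ii), (iii), (v) and (vi) it follows essentially the same route as the paper: the identities $F_{(aX)_{\sD}}=D\circ F_{aX}$ and $F_{(X+c)_{\sD}}=D\circ F_{X+c}$ for (i)--(ii), monotonicity of $D$ turning $F_X\ge F_Y$ into $D\circ F_X\ge D\circ F_Y$ for (iii) and (v), and triviality for (vi). You are in fact slightly more careful than the paper in two places: you treat the $a=0$ case in (i), and in (iii)/(v) you make explicit that a.s.\ dominance only yields first-order stochastic dominance of the distorted laws, so that one needs a coupling plus law-invariance of $\rho$ before invoking monotonicity (the paper's own proof asserts an equivalence between a.s.\ ordering and cdf ordering that is really only an implication, and silently uses the same law-invariance). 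The genuine divergence is part (iv): the paper disposes of it in one line by citing Proposition 5 of \cite{LiuSchiedWang2021}, whereas you reconstruct the argument behind that citation — comonotonic maximality in the convex order, preservation of the increasing convex (stop-loss) order by a convex distortion via the substitution $u=D(v)$ and the non-decreasing weight $\mathrm{d}D$, the identity $F^{-1}_{(X^c+Y^c)_{\sD}}=F^{-1}_{X_{\sD}}+F^{-1}_{Y_{\sD}}$, and consistency of law-invariant coherent risk measures with $\preceq_{\mathrm{icx}}$ on an atomless space. Your reconstruction is sound (your weighted tail-integral argument for the icx-preservation step checks out, since $Z_1\preceq_{\mathrm{icx}}Z_2$ orders all integrals $\int_q^1 F^{-1}(v)w(v)\,\mathrm{d}v$ with $w\ge 0$ non-decreasing), and it has the merit of surfacing the hypotheses — law-invariance of $\rho$ and atomlessness — that the theorem statement leaves implicit but that are needed for the cited result to apply; the cost is that you still have to defer to the literature for the non-smooth-$D$ and icx-consistency technicalities, so the citation is not fully eliminated, only localized. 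Either route is acceptable; yours is more informative about where convexity of $D$ actually enters.
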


Risk measures defined as the minimizer of an expected loss minimization procedure as in \eqref{eq:argmin:Exp} avoid unjustified risk-loading under general conditions.
If $\ell$ is such that $\arg\min_{c \in \mathbb{R}} \ell(x,c)=x$ for all $x\in\mathbb{R}$ then we have for an almost surely constant random variable $X\sim\ind{[b,\infty)}$, i.e., $X\eqas b$ for some $b\in\mathbb{R}$, that $\rho(X) = \arg\min_{c\in\mathbb{R}}\Exp[\ell(X,c)] = \arg\min_{c\in\mathbb{R}} \ell(b,c) = b$.
This property also carries over to the distorted version $X_{\sD}$, where we have $\sD\circ\ind{[b,\infty)} = \ind{[b,\infty)}$ due to $D(0)=0$ and $D(1)=1$.
As such we have $X_{\sD}\eqas b$ and hence also $\rho(X_{\sD}) = b$.

\section{Real data example: US disaster data 1980-2023}\label{sec: realdata}

As an illustration we include some analysis of  a dataset from \cite{disasterData2023}. The data concern measurements on  weather and climate disasters in the U.S. since 1980, restricted to disasters that led to an overall damage/costs of more than or equal to \$1 billion (Consumer Price Index (CPI) adjusted to 2023). There are 360 such disasters amounting to a total cost of \$2575,7 billion. Each disaster event has been classified into one of seven classes. Some summary statistics regarding the  different (disaster) events are in Table \ref{tab:disasterSummary}.
Our analysis mainly focusses on the  three disaster types with the largest number of observations: severe storms, tropical cyclones and flooding. Per disaster type, we treat the entries as outcomes of i.i.d.  nonnegative random variables $X_1,\cdots, X_{n}$. The left panels of Figure \ref{fig:histogramAndQuantilesESV2} display histograms and scatter plots of the data.

\begin{table}[h]
\centering
\caption{US disaster data. Summary statistics. Columns indicated with $^*$: numbers are in billions of US dollars.}
\label{tab:disasterSummary}
\vspace*{0.2 cm}

\noindent
\resizebox{\textwidth}{!}{%
\begin{tabular}{l|llllll}
\hline \hline
Disaster Type    & Events & Events/Year & Total Costs$^*$ & Cost/Event$^*$& Cost/Year$^*$ & Variance/Event$^*$ \\ \hline
Drought          & 30     & 0.7         & \$334,8    & \$11,2    & \$7,6    & 12,2         \\
Flooding         & 41     & 0.9         & \$190,2    & \$4,6     & \$4,3    & 7,2           \\
Freeze           & 9      & 0.2         & \$36,0     & \$4,0     & \$0,8    & 2,2           \\
Severe Storm     & 177    & 4.0         & \$423,1    & \$2,4     & \$9,6    & 1,9           \\
Tropical Cyclone & 60     & 1.4         & \$1359,0   & \$22,7    & \$30,9   & 38,5          \\
Wildfire         & 21     & 0.5         & \$135,5    & \$6,5     & \$3,1    & 7,7          \\
Winter Storm     & 22     & 0.5         & \$97,1     & \$4,4     & \$2,2    & 5,6           \\ \hline
All Disasters    & 360    & 8.2         & \$2575,7   & \$7,2     & \$58,5   & 17,9      \\ \hline \hline
\end{tabular}%
}
\end{table}
\begin{figure}[h!]
    \centering
    \includegraphics[width=0.40\textwidth]{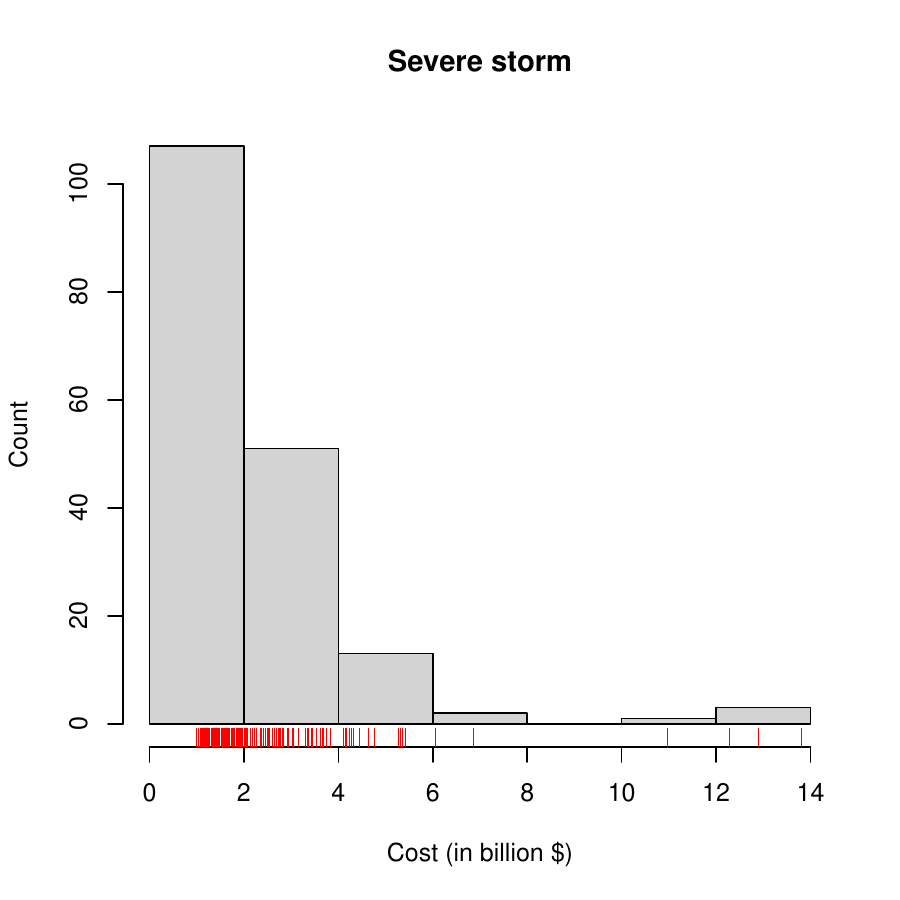}
\includegraphics[width=0.40\textwidth]{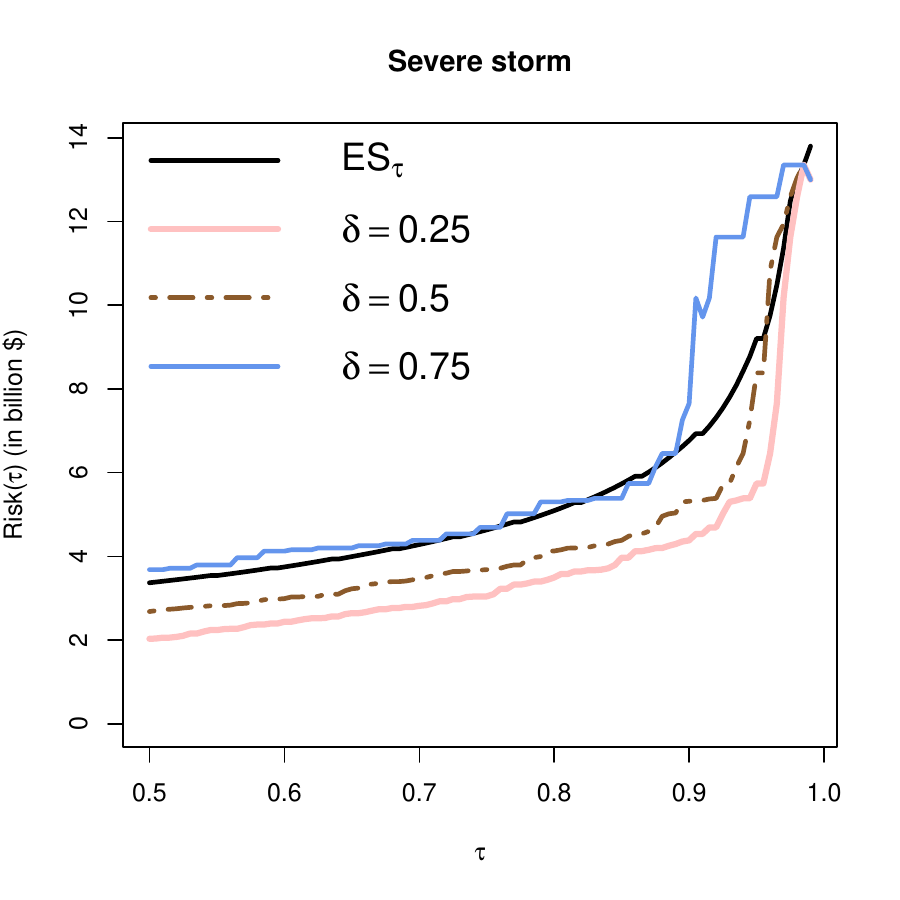} \\
 \includegraphics[width=0.40\textwidth]{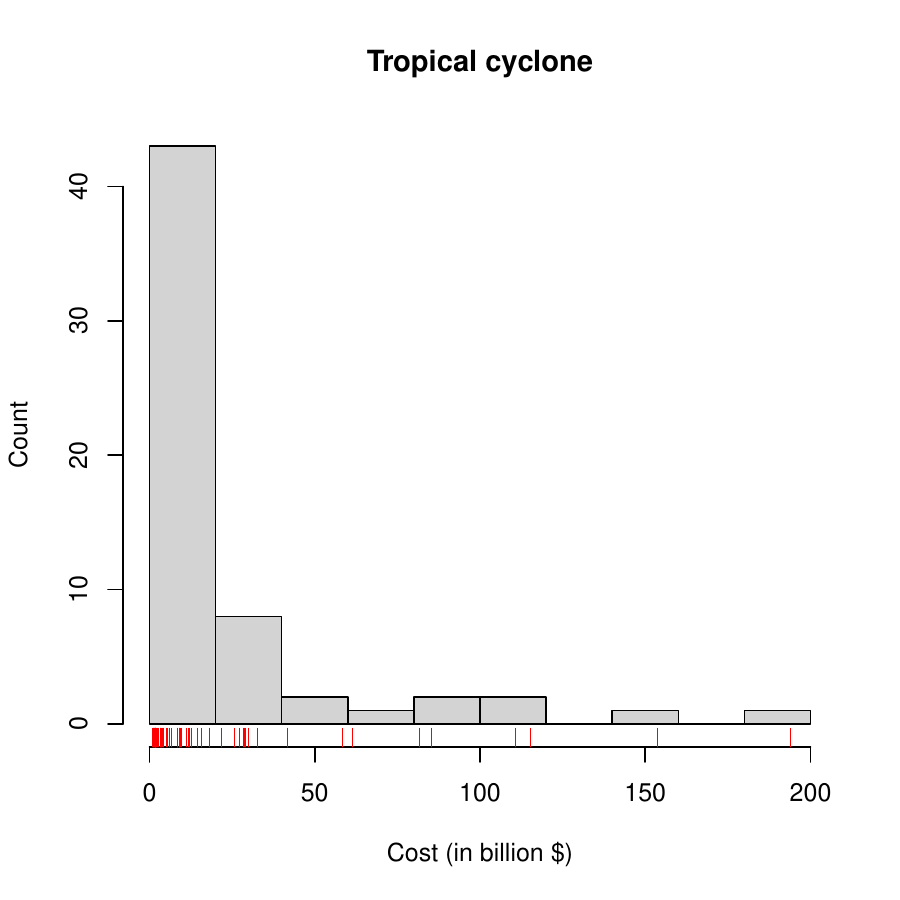} 
\includegraphics[width=0.40\textwidth]{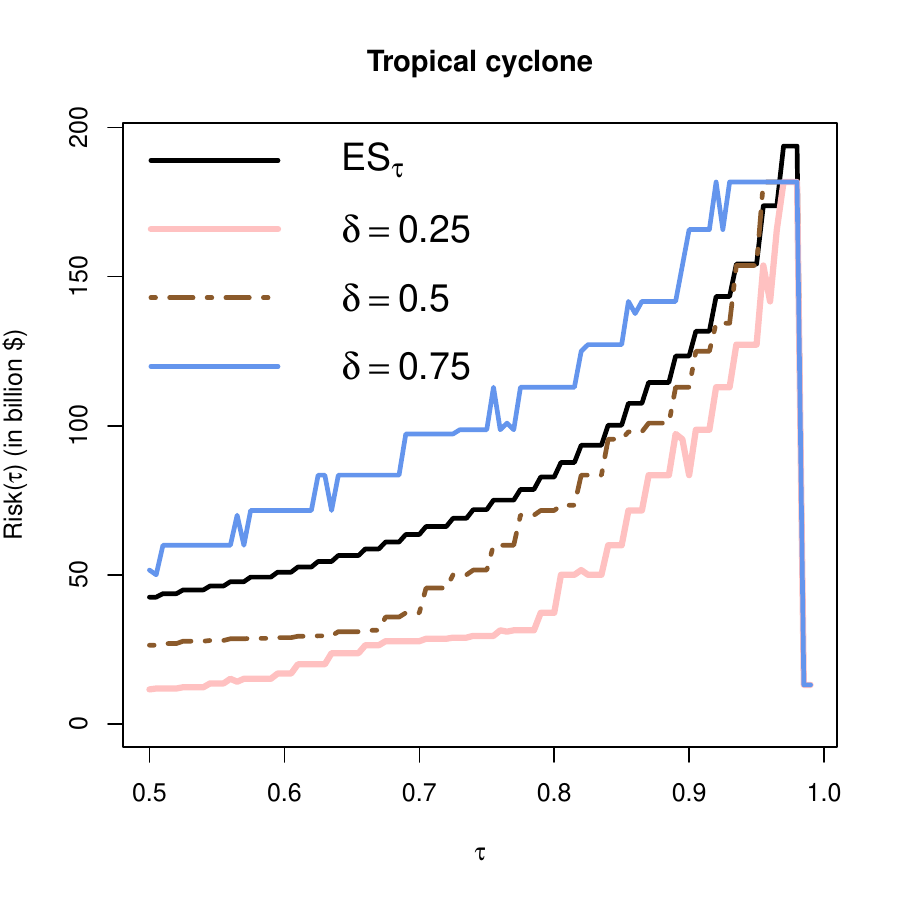} \\  
 \includegraphics[width=0.40\textwidth]{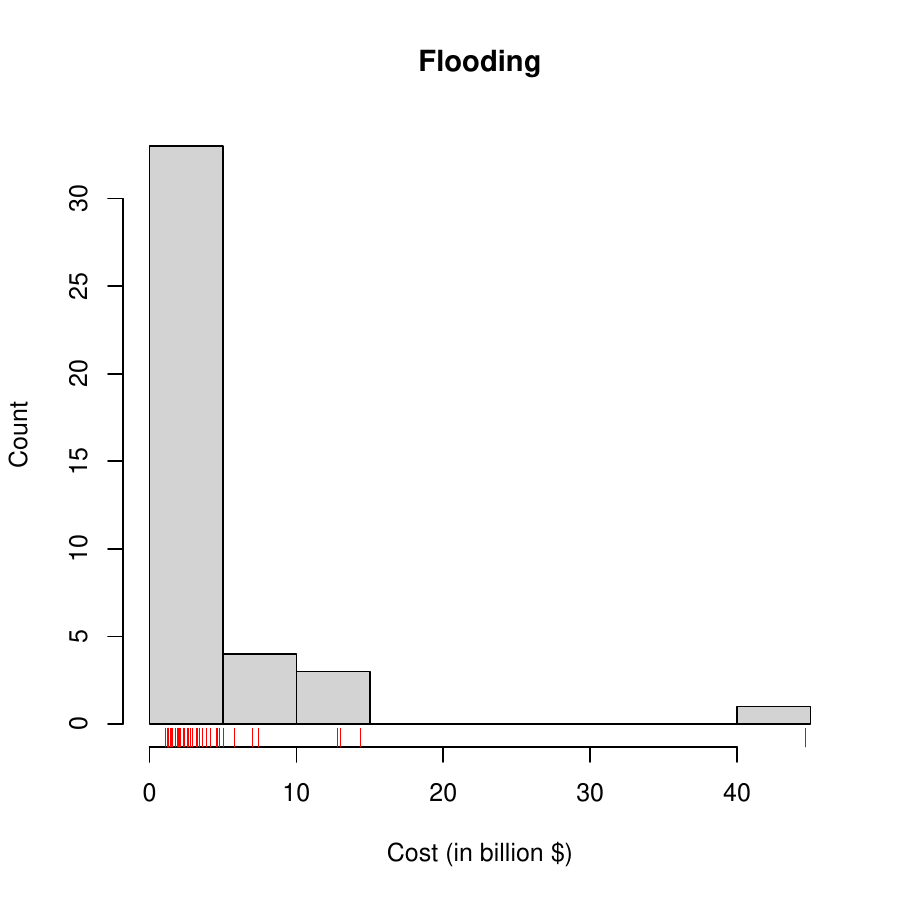} 
   \includegraphics[width=0.40\textwidth]{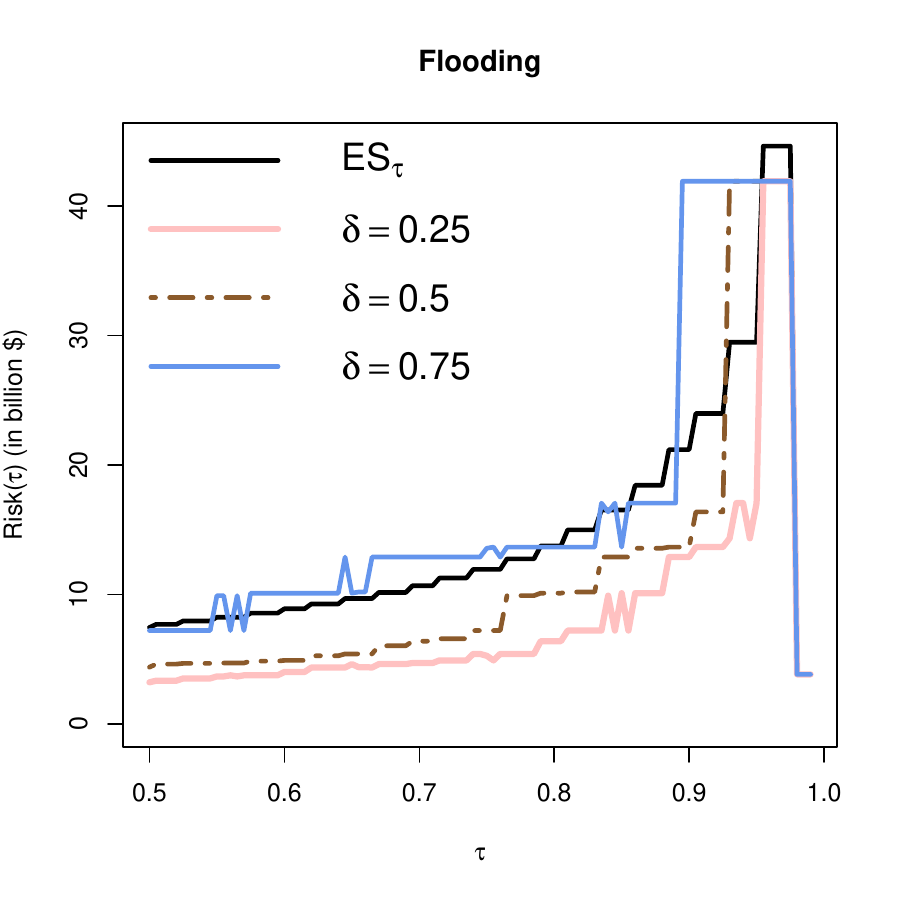}  
      \vspace*{-0.52 cm}
    
    \noindent
    \caption{US disaster data.  Left: histograms and scatter plots of different disaster types. Right: estimates of the mean and various ($\delta=0.25,0.5,0.75$) quantiles of $X_{\sD_{\btau}}$ with $D_{\btau}$ corresponding to expected shortfall.}
    \label{fig:histogramAndQuantilesESV2}
\end{figure}

\begin{figure}[h]
\centering 
         \includegraphics[width=0.48\textwidth]{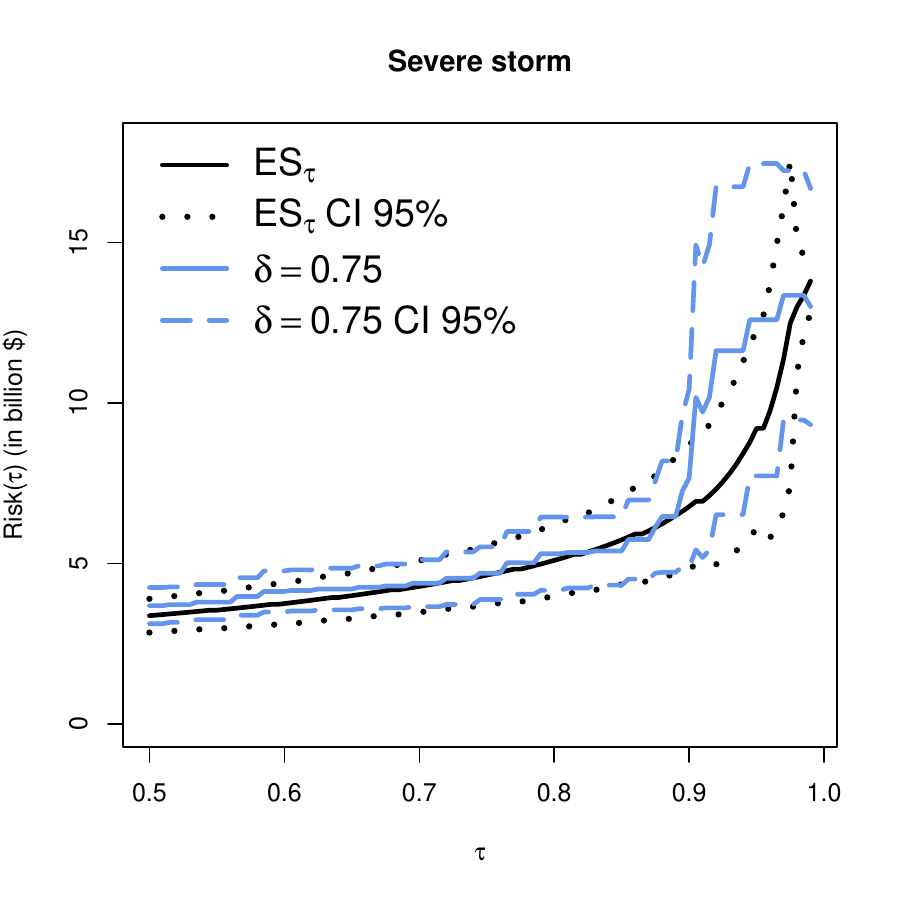}
       \includegraphics[width=0.48\textwidth]
{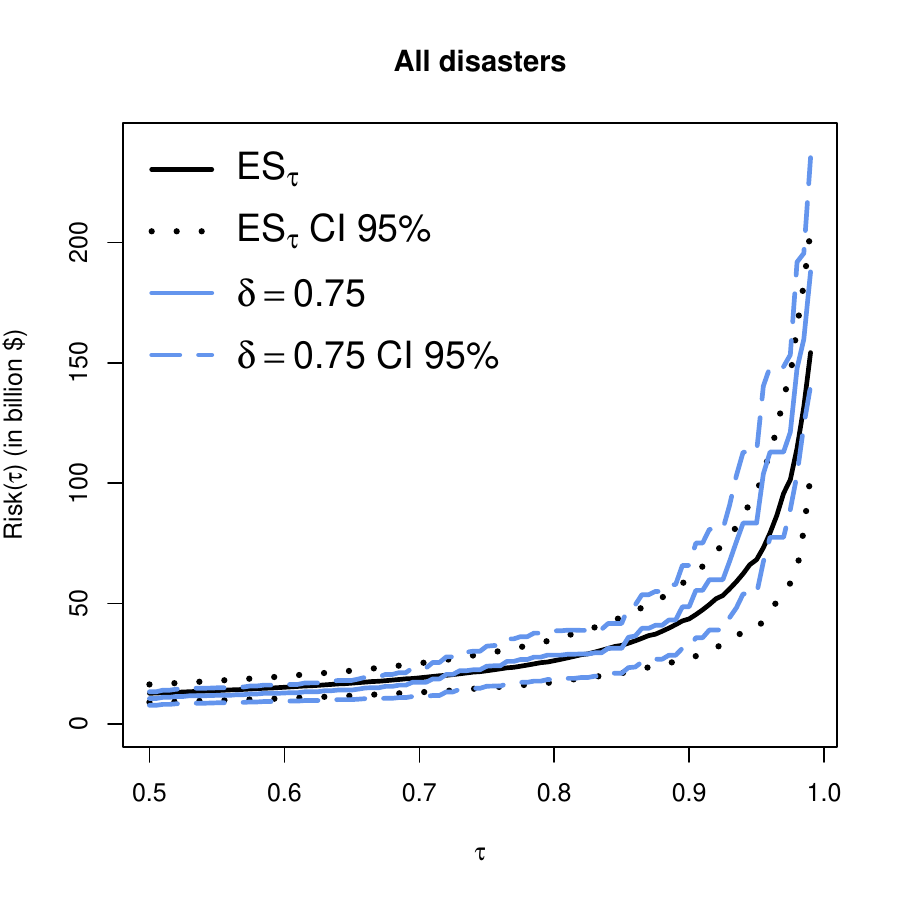}
    \vspace*{-0.68 cm}
    
    \noindent
    \caption{US disaster data. Estimated $\mbox{ES}_{\btau}$ and $0.75$th quantile curves, together with 95\% asymptotic pointwise confidence intervals (in respectively dotted and dashed lines). Left panel: for  severe storms. Right panel: for  all disasters.}
    \label{fig:confidenceIntervalsSevereStorms}
\end{figure}
\begin{figure}[h!]
    \centering
\includegraphics[width=0.48\textwidth]{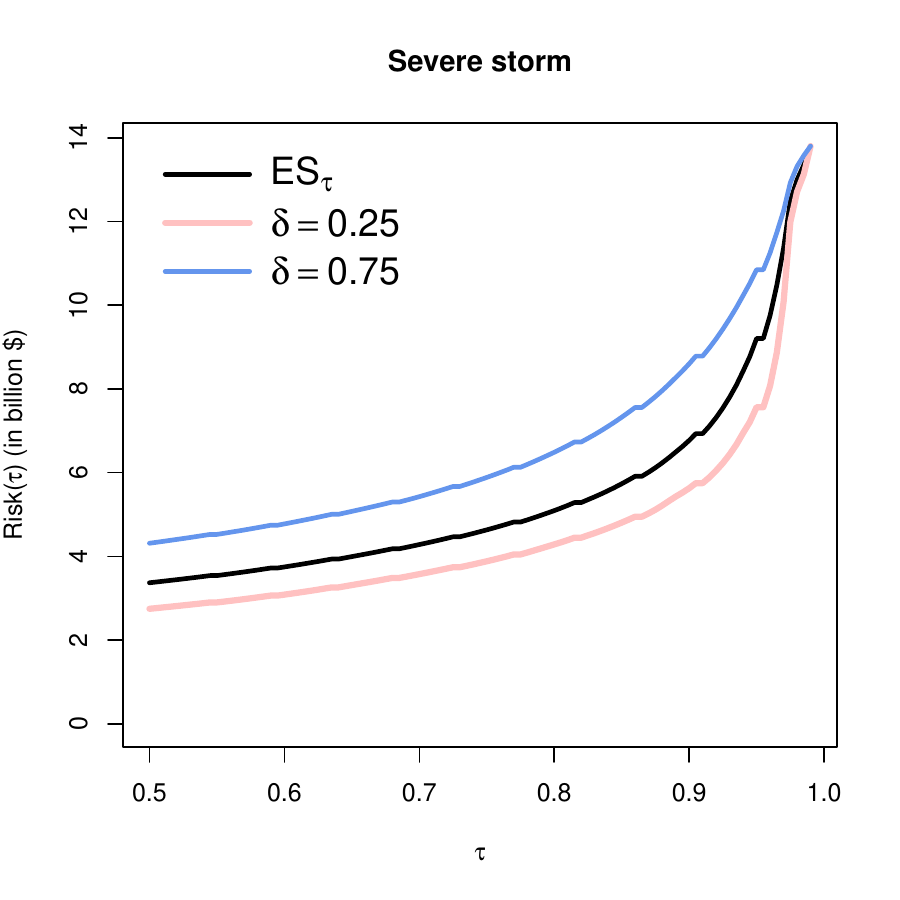}
       \includegraphics[width=0.48\textwidth]
{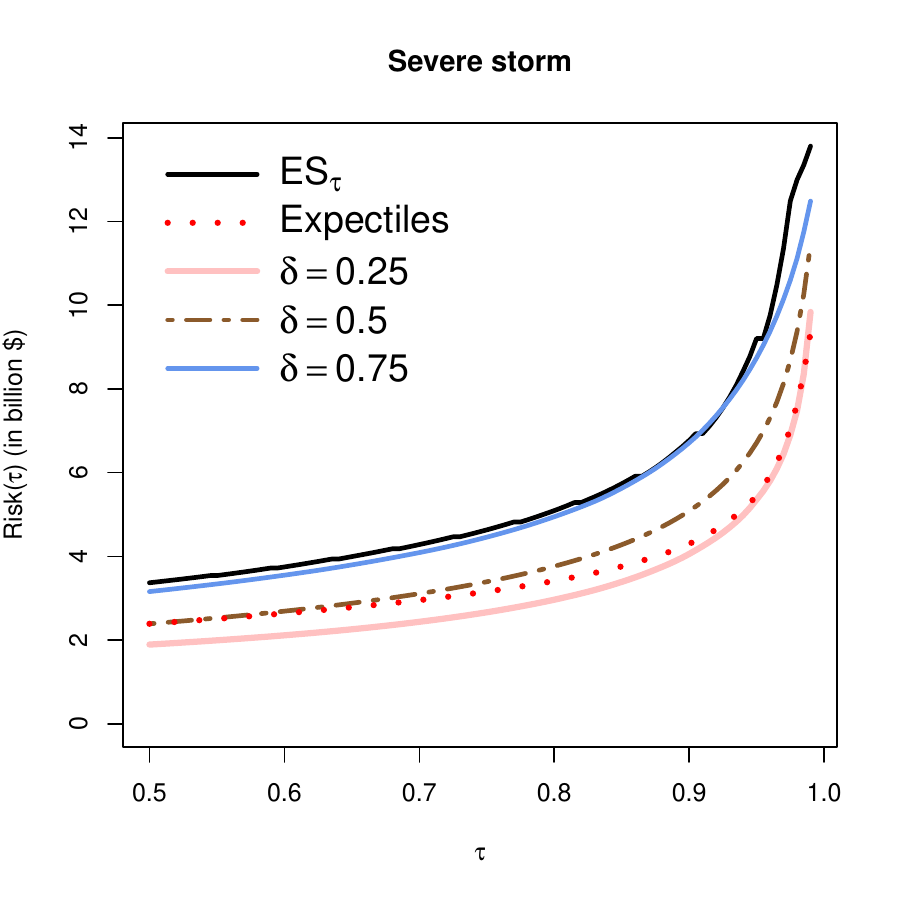}
      \vspace*{-0.68 cm}
    
    \noindent
    \caption{US disaster data. Severe storms. Left: estimates of various ($\delta=0.25,0.75$) expectiles of $X_{\sD_{\btau}}$ with $D_{\btau}$ corresponding to expected shortfall. Right: estimates of various ($\delta=0.25,0.5,0.75$) expectiles of $X_{\sD_{\btau}}$ with $D_{\btau}=K_{\btau}$, i.e. corresponding to extremiles. In dotted line, estimates of the usual expectiles. }
    \label{fig:ExpectilesESandExpectilesExtremilesV2}
\end{figure}

We firstly consider the distorted random variable $X_{\sD_{\btau}}$ using $D_{\tau}(u)= (1-\tau)^{-1} (u-\tau) 1_{\{ u \ge \tau \}}$, the Expected Shortfall distortion function (see entry 5 in Table \ref{DistrFunctions}). 
We evaluate the risks $\rho_{\delta}$ attributed to $X_{\sD_{\btau}}$, using (i) the mean risk (denoted by $\mbox{ES}_{\tau}$ in the figures)--using square loss; and (ii) the risk associated to the $\delta$th quantile --using quantile loss. 
The right panels of Figure \ref{fig:histogramAndQuantilesESV2} display estimates of these risks, as a function of $\tau$. Note that the estimates break down, i.e., become zero, at $\tau=n/(n+1)$. This is to be expected from the fact that $D_{\tau}$ is the cumulative distribution function corresponding to $\mbox{U}[\tau,1]$, and the lack of ranked data beyond $n (n+1)^{-1}$. For the three different disaster types the estimates of $\text{Median}\left ( X_{\sD_{\btau}} \right )$ lie below the estimates of $\text{Mean}\left ( X_{\sD_{\btau}} \right )$ (denoted $\mbox{ES}_{\btau}$ on the plots)  when $\btau\leq 0.9$. This means that the distribution of  $X_{\sD_{\btau}} $ is right-skewed. For the cases of floodings and severe storms, the estimates of the $\delta=0.75$ quantiles are close to the estimates of $\text{ES}_{\btau}$, 
although floodings are roughly two and a half times as risky/costly.  

To illustrate the uncertainty that comes with the estimates, we calculated also approximate 95\%  pointwise confidence intervals for the severe storm category,  relying on the asymptotic results established in Sections  \ref{sec:estimation--GeneralCase} and \ref{sec:estimation--ApplicationsANResult}. 
In Figure \ref{fig:confidenceIntervalsSevereStorms} (left panel) we depict these approximate pointwise 95\%  confidence intervals, for both  $\mbox{ES}_{\tau}$  and the   0.75th quantile.
The construction of these approximate confidence intervals relies
on estimators for the asymptotic variance. For the estimate of  $\mbox{ES}_{\tau}$, we
obtain  the approximate variance by substituting the empirical cumulative distribution
function into expression \eqref{eq:asymptoticVarianceSquaredLoss}. For the asymptotic variance
of the estimated $\delta$th quantile, we use expression \eqref{eq:quantileLossEstimatorAsymptoticVariance}. In our implementation we apply the kernel density estimator from the R package \texttt{stats} in
combination with the Sheather \& Jones bandwidth selection method. More specifically, we use the command \texttt{density} with bandwidth selector 
\texttt{bw.SJ}.
To prevent division by zero in expression \eqref{eq:quantileLossEstimatorAsymptoticVariance}, we need to ensure that $\widehat{f}_X(T_n^*)\not =  0$. To
achieve this, we increase the Sheather \& Jones bandwidth at each point where
we estimate the density, ensuring that the neighbourhood determined by the
bandwidth contains at least a fixed proportion of observations. We set the
threshold at 10\% of the total number of observations. For the severe storm category, this
corresponds to a minimum of 18 observations. 
 Similar plots for the other two values of $\delta$ (quantile curves) are to be found in Figure S3 in Section S5 of the Supplementary Material. 
 In the right panel of Figure \ref{fig:confidenceIntervalsSevereStorms} we display the estimated $\mbox{ES}_{\tau}$ and 0.75th quantile curves together with the approximate pointwise 95\% confidence intervals, but now for all disasters together. A histogram and scatter plot for data on all disasters (all seven categories) is depicted in the left panel of Figure \ref{fig:allDisaster}. The discrepancy between the estimated $\mbox{ES}_{\btau}$ curve and the quantile curve, in Figure \ref{fig:confidenceIntervalsSevereStorms},  is less pronounced when considering all disasters together.

It is also interesting to evaluate the estimated risk curves keeping in mind the theoretical results established in Theorem \ref{GenPropDistRisk}. Note that 
since $D_{\btau}$ is decreasing in $\btau$, the risk $\rho_{\delta}$ is expected to be an increasing function in $\tau$, according to Theorem \ref{GenPropDistRisk}(v).
As can be seen from Figure \ref{fig:histogramAndQuantilesESV2} (right panels) there is no guarantee that the estimated curves are increasing (although the overall increasing trend is clear).   
Furthermore, since the quantile risk is monotone increasing in $\delta$, Theorem \ref{GenPropDistRisk}(vi) ensures that this is also the case for the risk applied to the distorted random variable $X_{\sD_{\btau}}$.  From visual inspection of the estimated curves in Figure \ref{fig:histogramAndQuantilesESV2}  it seems that there is no issue of crossing quantile curves in this example (at least not before the occurence of the breakdown point).  
Developing estimation procedures that would guarantee that estimators  also present the associated theoretical properties is an open research question.  

Secondly, we estimate expectile risk measures associated to $X_{\sD_{\btau}}$, meaning that we take the expectile loss function. 
In the left panel of Figure \ref{fig:ExpectilesESandExpectilesExtremilesV2}, we display estimates of various expectiles ($\delta=0.25,0.75$) of $X_{\sD_{\btau}}$. In this figure and in Figure \ref{fig:allDisaster}, we display the black curve corresponding to the estimate $\mbox{ES}_{\btau}$ as a reference curve. This in fact corresponds to the estimated risk curve for the $\delta=0.5$th expectile risk. Within each disaster type the curves exhibit a high degree of similarity. 
The parameter $\delta$ allows us to obtain more (or less) conservative risk estimates. As in our previous risk analysis, there is a similar behaviour for the disasters of floodings and severe storms. We only present the plots for the disaster severe storms here. Similar plots for tropical cyclone and flooding disaster categories are displayed in Section S5 of the Supplementary Material. 
By Theorem \ref{GenPropDistRisk}(v), we know that the population risk $\rho_{\delta}$ is an increasing function in  $\tau$, for each expectile level $\delta$. The estimated risk curves also clearly exhibit this increasing behaviour. Theorem \ref{GenPropDistRisk}(vi) also 
states that there should be no crossings of the population risk curves  
for different $\delta$-values. Also the estimated risk curves do not show crossings (before the  breakdown point).
\begin{figure}[h!]
    \centering
\includegraphics[width=0.48\textwidth]{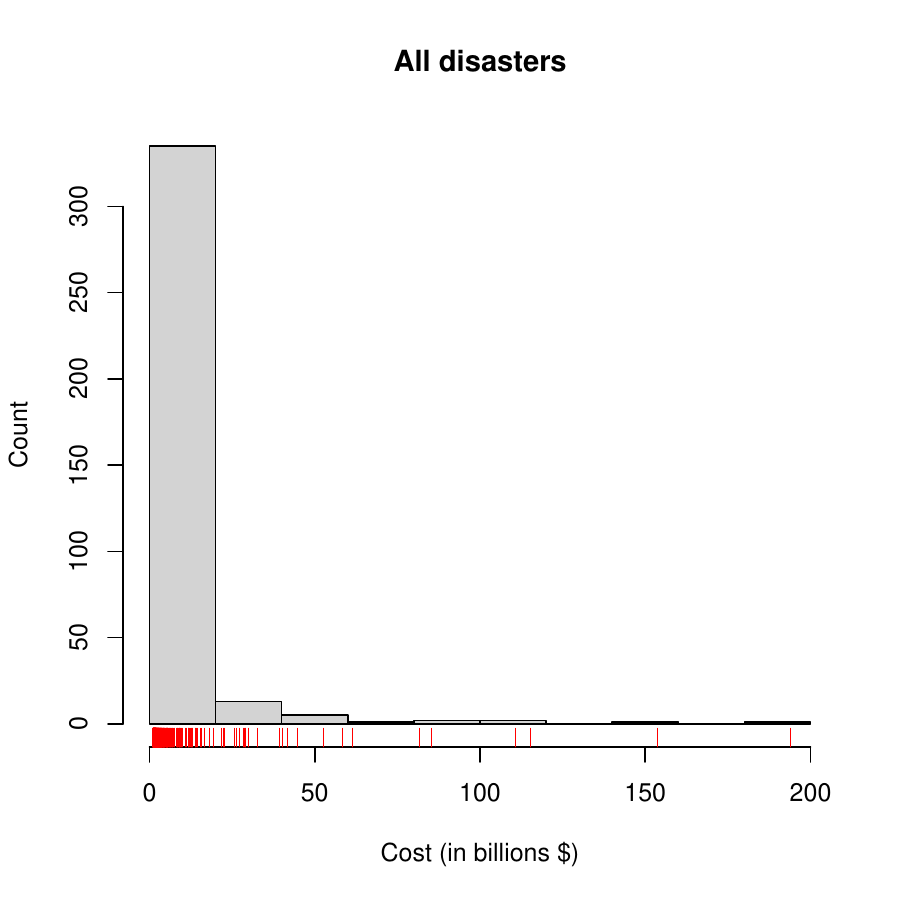} \includegraphics[width=0.48\textwidth]{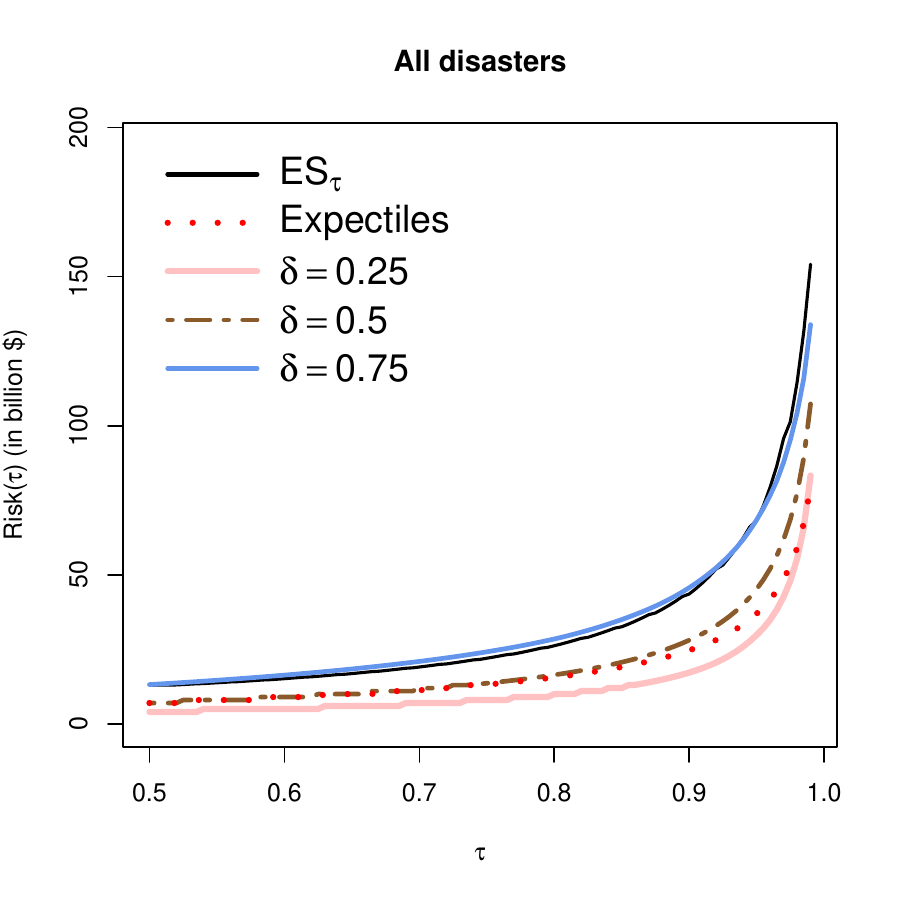}
    \vspace*{-0.44 cm}
    
    \noindent
    \caption{US disaster data. Top left: histogram and scatterplot of all disasters. Top right: estimates of various ($\delta=0.25,0.5,0.75$) expectiles of $X_{\sD_{\btau}}$ with $D_{\btau}=K_{\btau}$, i.e. corresponding to extremiles. In  dotted line, estimates of the usual expectiles.}
    \label{fig:allDisaster}
\end{figure}
\medskip 

Thirdly, we look into extremiles as risk measures. Therefore we take $D_{\btau}=K_{\btau}$, and consider expectile risks of the distorted random variable 
$X_{\sD_{\btau}}$. 
The right panel of Figure  \ref{fig:ExpectilesESandExpectilesExtremilesV2} displays estimates of different expectile risk curves ($\delta=0.25,0.5,0.75$) of $X_{\sD_{\btau}}$ with $D_{\btau}=K_{\btau}$, which gives estimated extremile risk curves for $X$.  We only present the estimates for the disaster category severe storms. 
In black, we also display estimates of the usual expectile curve (as function of $\tau$). In contrast to all previous estimated curves in this example, we now observe smooth estimated curves. In addition, there is no breakdown point. We note that the $\delta=0.5$th expectile of $X_{\sD_{\btau}}$ coincides with the usual extremile. Note again that the estimated curves exhibit a high shape similarity. 
For low $\tau$-values the estimates of the usual expectiles (in black) and the usual extremiles almost coincide. For larger $\btau$-values, expectiles are more conservative than extremiles. The cumulative distribution function   $D_{\tau}=K_{\tau}$ is decreasing in $\tau$. Hence, again by 
Theorem \ref{GenPropDistRisk}(v), we know that the population risk $\rho_{\delta}$ is an increasing function in  $\tau$. The estimated risk curves also show this monotone behaviour. Also the non-crossingness property stated in Theorem \ref{GenPropDistRisk}(vi) seems to hold for the estimated curves. 
Figure S4 in the Supplementary Material presents the estimated expectile and extremile curves for floodings, tropical storms as well as for all disasters together.
\begin{table}[htb]
\centering
\caption{Estimated risks for severe storms, floodings and tropical storms, for various risk measures.}
\label{tab:USDisasterComparison}
\vspace*{0.2 cm}

\noindent
\hspace*{-0.6 cm}
\begin{tabular}{|l|rrrr|rrrr|}
\hline
   & \multicolumn{4}{c}{$\tau=0.85$} & \multicolumn{4}{c|}{$\tau=0.95$}\\[1.2 ex]
\multicolumn{1}{|c|}{pair $(D_{\btau}, \ell_{\delta})$}&   Severe & Tropical  & Flooding & All &  Severe & Tropical  & Flooding & All\\ 
   & storm & cyclone &  & disasters &   storm & cyclone & & disasters\\
 \hline
$D_\tau \sim U[\tau,1]$  &              &                  &          &             &&&& \\ \cline{1-1}
$\quad$ Squared loss   &5.73         & 100.20           & 16.53    & 32.61                                & 9.21         & 154.23           & 29.49    & 68.23         \\[1.2 ex]
$\quad$ Quantile loss $\delta=0.25$   & 3.97         & 59.91            & 10.10    & 11.88                  & 5.74         & 127.20           & 17.06    & 35.99         \\
$\quad$ Quantile loss $\delta=0.5$    &  4.38         & 95.48            & 12.89    & 17.30                   & 8.38         & 153.75           & 41.92    & 48.63         \\
$\quad$ Quantile loss $\delta=0.75$    &5.39         & 127.20           & 13.66    & 31.39                 & 12.59        & 181.75           & 41.92    & 83.43         \\[1.2 ex]
$\quad$ Expectile loss $\delta=0.25$    &4.81         & 81.65            & 12.80    & 22.12               & 7.56         & 138.60           & 21.92    & 51.64         \\
$\quad$ Expectile loss $\delta=0.75$    &7.32         & 122.84           & 23.56    & 49.81           & 10.85        & 170.06           & 37.07    & 91.34         \\[1.4 ex] \cline{1-1}
$D_\tau=K_\tau$          &              &                  &          &              &&&& \\ \cline{1-1}
$\quad$ Expectile loss $\delta=0.25$     &3.39         & 41.36            & 7.39     & 12.40          & 5.36         & 89.60            & 14.82    & 29.83         \\
$\quad$ Expectile loss $\delta=0.5$      & 4.28         & 63.08            & 10.86    & 20.72               & 6.71         & 114.35           & 20.41    & 45.24         \\
$\quad$ Expectile loss $\delta=0.75$      &5.63         & 90.73            & 16.43    & 35.04              & 8.74         & 140.32           & 29.13    & 68.45         \\[1.4 ex] \cline{1-1}
$D_\tau\sim U[0,1]$                   &&&&    &              &                  &          &               \\[1.4 ex] \cline{1-1}
$\quad$ Expectile loss $\delta=0.95$     &5.46         & 88.48            & 16.84    & 37.02        & 5.46         & 88.48            & 16.84    & 37.02   \\
\hline     
\end{tabular}
\end{table}

\medskip 

In a final analysis, we again combine the data of all seven classes of disasters. 
In Figure \ref{fig:allDisaster}, and Figure S4 (bottom panel) in the Supplementary Material, we display similar estimated risk curves, now based on all 360 observations. Table \ref{tab:disasterSummary} shows that over half of the total cost is due to tropical cyclones. Nevertheless, the shape of the estimated risk curves of all disasters together appear to be somewhat similar to the estimated risk curves for  floodings and severe storms. 
To compare various risks estimates, we present in Table \ref{tab:USDisasterComparison} the values of the estimated risks for each of the separate disaster categories, as well as the estimated risks for all disasters together. We quantify the various risks for two specific values of $\tau$, namely $\tau=0.85$, and $\tau=0.95$. Note that, for all risk estimates (i.e., looking row by row), the risks are systematically highest for tropical cyclones, and lowest for severe storms. The estimated risks for flooding are about a factor 3 larger than these for severe storms, and a  factor 6 to 7 smaller than these for tropical cyclones.
Our framework not only allows to estimate the various risks, it also provides approximate confidence intervals for the risks. In Table \ref{tab: TableRisksAndCI} we provide this information for some of the risk measures for severe storms.  
As expected the approximate CIs are wider for larger values of $\tau$. For a given $\tau$, the CIs are wider for higher-level risk measures (i.e. higher values of $\delta$). 
\begin{table}[h!]
\centering
 \caption{Some estimated risks for severe storms, with the approximate 95\% confidence interval (CI) and its length.}
\label{tab: TableRisksAndCI}
\vspace*{0.2 cm}

\noindent
\hspace*{-0.6 cm}
\begin{tabular}{|l|ccc|ccc|}
\hline
  & \multicolumn{3}{c}{$\tau=0.85$} & \multicolumn{3}{c|}{$\tau=0.95$} \\[1.2 ex]
 \multicolumn{1}{|l|}{pair $(D_{\btau}, \ell_{\delta})$}  &   risk  &  &  &    risk  &  &  \\
$\quad D_\tau \sim U[\tau,1]$ & estimate & CI & length CI & estimate & CI & length CI  \\
 \hline
$\quad$ Squared loss   &5.73 & $[4.35, 7.12]$ & 2.77 & 9.21 & $[6.04, 12.37]$ & 6.33          \\[1.2 ex]
$\quad$ Quantile loss $\delta=0.25$   & 3.97 & $[3.39, 4.55]$ & 1.16 & 5.74 & $[4.51, 6.97]$ & 2.46 \\
$\quad$ Quantile loss $\delta=0.5$    &   4.38 & $[3.65, 5.11]$ & 1.46 & 8.38 & $[4.83, 11.94]$ & 7.11\\
$\quad$ Quantile loss $\delta=0.75$    &5.39  & $[4.32, 6.45]$ & 2.22 & 12.59 & $[7.73, 17.46]$ & 9.73       
 \\[1.2 ex]
\hline 
\end{tabular}
 \end{table}
               

\section{Concluding remarks and further discussion} \label{ConclusionDiscussion}

This paper studies general functionals, called generalized extremiles, that are solutions to an optimization problem involving as crucial ingredient a pair $(D_{\btau}, \ell_{\delta})$ with $D_{\btau}$ a cumulative distribution function with support $[0,1]$, and with $\ell_{\delta}$ a loss function. The main contribution in the paper consists of providing estimators for the generalized extremile, and proving their consistency and asymptotic normality. 
By allowing for a general pair $(D_{\btau}, \ell_{\delta})$ several functionals studied in the literature fall under this framework, and hence we also provide statistical inference for all these. Cases for which estimators and asymptotic normality results for them are available in the literature are limited, and for these cases we showed that our general theorem includes these results as special cases. 

Several interesting research questions can be raised. A first question is that one can wonder under which conditions the pair $(D_{\btau}, \ell_{\delta})$ would lead to a coherent risk measure itself. Some partial answers are to be found in Section \ref{sec: GenExtremilesProperties}. Furthermore, in Section~\ref{sec:risks of distorted rvs} we derive properties such as e.g. monotonicity or coherence of distorted risk measures based on properties of the base risk measure and the utilized distortion function.
We provide an expression for the asymptotic variance for the estimator of the generalized extremile. Such a result allows to construct approximate confidence intervals for it. This however requires an estimator for the asymptotic variance. Although plug-in estimators are straightforward, it still remains to be established which are sufficient conditions for such a plug-in estimator to be consistent.
Another interesting aspect is the behaviour of the asymptotic variance as a function of the parameters $\btau$ and $\delta$. For certain functionals it would be interesting to look into such behaviour. Lastly regarding the asymptotic normality result, this has been established under a set of sufficient conditions. From the simulation study however we know that the result continues to 
hold in cases even when some of these conditions are violated. It would be interesting to look for a minimal set of conditions.
A final issue concerns the qualitative properties to be expected from a generalized extremile, such as monotonicity as a function of $\btau$, or non-crossingness for various values of $\delta$. See Section \ref{sec: realdata} and the discussion therein. 
An interesting open research question involves estimation of the generalized extremile which would guarantee these properties to hold for the estimators. 
\vspace*{0.4 cm}

\noindent
\textbf{Acknowledgement}
This work was initiated while the third author was visiting the KU Leuven.
The first and second author gratefully acknowledge support from the Research
Fund KU Leuven [C16/20/002 project].
The third author acknowledges support from NSERC through Discovery Grant RGPIN-2020-05784.


\renewcommand{\theequation}{\Alph{section}.\arabic{equation}}
\renewcommand{\thesection}{\Alph{section}} 
\renewcommand{\thesubsection}{\Alph{section}.\arabic{subsection}} 
\renewcommand{\thetable}{\Alphabetic{table}}  
\renewcommand{\thefigure}{\Alphabetic{figure}}
\setcounter{example}{0}
\setcounter{section}{0}
\setcounter{subsection}{0}
\setcounter{equation}{0}
\setcounter{figure}{0}
\setcounter{table}{0}

\vspace*{0.38 cm}
\newpage
\noindent
\textbf{{\Large Appendices}}
\section{Proofs of propositions in Section \ref{sec: GenExtremilesProperties}}\label{App: ProofsPropositions}
\subsection*{Proof of Proposition \ref{prop:signSymmetry}}
 We have $\widetilde{d}_{\btau}(t)=d_{\btau}(1-t)$ and by continuity of $F_X$ also $F_X(y)=$ $1-F_{-X}(-y)$ for all $y\in \mathbb{R}$. Defining
$$
L_{\btau,\delta}\left(c ; Y, D_{\btau},\ell_{\delta}\right)=\mathbb{E}\left[d_{\btau}\left(F_Y(Y)\right)(\ell_{\delta}(Y, c)-\ell_{\delta}(Y, 0))\right]
$$
and setting $Y=-X$ we have
$$
\begin{aligned}
L_{\btau,\delta}\left(c ; Y, \widetilde{D}_{\btau}, \ell_{\delta}\right) &=\mathbb{E}\left[\widetilde{d}_{\btau}\left(F_{-X}(-X)\right)(\ell_{\delta}(-X, c)-\ell_{\delta}(-X, 0))\right] \\
&=\mathbb{E}\left[d_{\btau}\left(1-F_{-X}(-X)\right)(\ell_{\delta}(X,-c)-\ell_{\delta}(X, 0))\right] \\
&=\mathbb{E}\left[d_{\btau}\left(F_X(X)\right)(\ell_{\delta}(X,-c)-\ell_{\delta}(X, 0))\right] \\
&=L_{\btau,\delta}\left(-c ; X, D_{\btau}, \ell_{\delta}\right).
\end{aligned}
$$
The minimizer $e_{\btau,\delta}\left(-X ; \widetilde{D}_{\btau}, \ell_{\delta}\right)$ of $L_{\btau,\delta}\left(c ;-X, \widetilde{D}_{\btau}, \ell_{\delta} \right)$ is therefore the negative of $e_{\btau,\delta}\left(X ; D_{\btau}, \ell_{\delta} \right)$ which minimizes $L_{\btau,\delta}\left(c ; X, D_{\btau}, \ell_{\delta} \right)$.
\qed

\subsection*{Proof of Proposition \ref{prop:transformations}}
Setting $Y=a X+b$, with $a>0$,  we have $F_Y(x)=F_X((x-b) / a)$ and hence $F_Y(Y)=F_X(X)$. Defining $L(c)=\mathbb{E}[d_{\btau}(F_Y(Y))(\ell_{\delta}(Y, c)-\ell_{\delta}(Y, 0))]$ we then get, based on the assumed properties of $\ell_{\delta}$, that
\begin{align*}
L(a c+b) & =\mathbb{E}[d_{\btau}(F_Y(Y))(\ell_{\delta}(Y, a c+b)-\ell_{\delta}(Y, 0))] \\
& =\mathbb{E}[d_{\btau}(F_X(X))(\ell_{\delta}(a X+b, a c+b)-\ell_{\delta}(a X+b, 0))] \\
& =\mathbb{E}[d_{\btau}(F_X(X))(a^k \ell_{\delta}(X, c)-\ell_{\delta}(a X+b, 0)+\ell_{\delta}(a X, 0)-\ell_{\delta}(a X, 0))]\\
&=a^k \mathbb{E}[d_{\btau}(F_X(X))(\ell_{\delta}(X,c)-\ell_{\delta}(X,0))]+\mathbb{E}[d_{\btau}(F_X(X))(\ell_{\delta}(aX,0)-\ell_{\delta}(aX+b,0))].
\end{align*}

Based on our assumptions the last equation (and hence $L(ac+b)$) is finite and by definition minimized for $c_0=e_{\btau,\delta}(X;D_{\btau},\ell_{\delta})$.  Hence, 
the minimal value of $L(c)$ is $L(ac_0+b)$. This value is attained for $c=ac_0+b.$ This shows 
$e_{\btau,\delta}(Y;D_{\btau},\ell_{\delta})=a e_{\btau,\delta}(X;D_{\btau},\ell_{\delta})+b.$\\

For $a<0$ set $\widetilde{a}=-a >0$. Applying the first part of the proposition then yields $e_{\btau,\delta}(\widetilde{a}(-X)+b; D_{\btau},\ell_{\delta})=\widetilde{a}e_{\btau,\delta}(-X;D_{\btau},\ell_{\delta})+b.$  Under the additional assumption we use Proposition \ref{prop:signSymmetry} to find 
$e_{\btau,\delta}(-X;D_{\btau},\ell_{\delta})=-e_{\btau,\delta}(X;\widetilde{D}_{\btau},\ell_{\delta})$, which gives $\widetilde{a}e_{\btau,\delta}(-X;D_{\btau},\ell_{\delta})+b= ae_{\btau,\delta}(X;\widetilde{D}_{\btau},\ell_{\delta})+b$.
\qed 

\subsection*{Proof of Proposition \ref{prop:kTransformation}}
Consider $\alpha(\cdot)$ a strictly increasing function. The quantity 
    $e_{\btau,\delta}(\alpha(X);D_{\btau},\ell_{\delta})$ is the minimizer of 
    $$\mathbb{E}[d_{\btau}(F_{\alpha(X)}(\alpha(X))(\ell_{\delta}(\alpha(X),c)-\ell_{\delta}(\alpha(X),0))]=\mathbb{E}[d_{\btau}(F_X(X))(\widetilde{\ell}_{\delta}(X,c)-\widetilde{\ell}_{\delta}(X,0)].$$
    We used here that $F_{\alpha(X)}(z)=F_X(\alpha^{-1}(z))$. Note that the latter expected value is by definition minimized by $e_{\btau,\delta}(X;D_{\btau},\widetilde{\ell}_{\delta})$ which proves the first statement.\\

    For the second statement, when considering $\alpha(\cdot)$ being strictly decreasing, we note that
    \begin{align*}
        \mathbb{E}[d_{\btau}(F_{\alpha(X)}(\alpha(X))(\ell_{\delta}(\alpha(X),c)-\ell_{\delta}(\alpha(X),0))]&=\mathbb{E}[d_{\btau}(1-F_{X}(X))(\ell_{\delta}(\alpha(X),c)-\ell_{\delta}(\alpha(X),0))]\\
        &=\mathbb{E}[\widetilde{d}_{\btau}(F_{X}(X))(\widetilde{\ell}_{\delta}(X,c)-\widetilde{\ell}_{\delta}(X,0))].
    \end{align*}
    For the first equality we used that since $\alpha$ is strictly decreasing and $F_X$ is continuous, it follows that  $F_{\alpha(X)}(z)=1-F_X(\alpha^{-1}(z))$. The first expected value is minimized when $c=e_{\btau,\delta}(\alpha(X);D_{\btau},\ell_{\delta})$. The latter expected value is minimized by $e_{\btau,\delta}(X;\widetilde{D}_{\btau},\widetilde{\ell_{\delta}})$.   
\qed 

\subsection*{Proof of Proposition \ref{prop:MeanSymmetry}}
 Mean symmetry is a standard property for minimizers of expected loss, see, for example, Proposition 2.1 in \cite{HerrmannEtAl2020} where the arguments only need to be adapted slightly to cover more general shift invariant loss functions. This leads directly to $$\frac{1}{2}(e_{\btau,\delta}(X ; D_{\btau}, \ell_{\delta})-e_{\btau,\delta}(-X ; D_{\btau}, \ell_{\delta}))=\mathbb{E}[X].$$
 Under the additional assumptions on $\ell_{\delta}$ and $F_X$ we use Proposition \ref{prop:signSymmetry} to obtain the second statement.
\qed

\subsection*{Proof of Proposition \ref{prop:ComonotonicAdditivity}}
In case of the absolute value loss, the generalized extremile is a fixed quantile of $X+Y$, where we have for comonotonic random variables $F_{X+Y}^{-1}(t)=$ $F_X^{-1}(t)+F_Y^{-1}(t)$. In case of the squared loss we can use the fact that
$$e_{\btau,\delta}(X;D_{\btau},\ell_{\delta})=\mathbb{E}[X_{\sD_{\btau}}]=\int_0^1 F_X^{-1}(u)d_{\btau}(u)\mathrm{d}u.$$
Indeed, this gives
\begin{align*}
e_{\btau,\delta}(X+Y ; D_{\btau}, \ell_{\delta})  =\int_0^1 F_{X+Y}^{-1}(u) d_{\btau}(u)\mathrm{d}u  &=\int_0^1(F_X^{-1}(u)+F_Y^{-1}(u)) d_{\btau}(u)  \mathrm{d}u \\
& =\int_0^1 F_X^{-1}(u) d_{\btau}(u) \mathrm{d} u+\int_0^1 F_Y^{-1}(u) d_{\btau}(u) \mathrm{d}u \\
& =e_{\btau,\delta}(X ; D_{\btau}, \ell_{\delta})+e_{\btau,\delta}(Y ; D_\tau, \ell_{\delta}).
\end{align*}
\vspace*{-1 cm}

\noindent
\hfill{\qed}

\subsection*{Proof of Proposition \ref{prop:injectiveTransformation}}
    Applying Proposition \ref{prop:equivalentDefinition} we know that 
    $$L(c):=\mathbb{E}[\ell_{\delta}(X_{\sD_{\btau}},c)-\ell_{\delta}(X_{\sD_{\btau}},0)]$$
    is minimal in $c=e_{\btau,\delta}(X;D_{\btau},\ell_{\delta})$.
    Hence, $$L^*(c):=\mathbb{E}[\ell_{\delta}(X_{\sD_{\btau}},\beta^{-1}(c))-\ell_{\delta}(X_{\sD_{\btau}},0)]$$
    is minimal for $c$ satisfying $\beta^{-1}(c)=e_{\btau,\delta}(X;D_{\btau},\ell_{\delta})$. Or thus $L^*(c)$ is minimal for $c=\beta(e_{\btau,\delta}(X;D_{\btau},\ell_{\delta}))$. Observe further that
    $$L^*(c)=\mathbb{E}[\ell_{\delta}(X_{\sD_{\btau}},\beta^{-1}(c))-\ell_{\delta}(X_{\sD_{\btau}},\beta^{-1}(0))]+\mathbb{E}[\ell_{\delta}(X_{\sD_{\btau}},\beta^{-1}(0))-\ell_{\delta}(X_{\sD_{\btau}},0)].$$
    The right-hand side is well defined and is by definition minimized by $e_{\btau,\delta}(X;D_{\btau},\widetilde{\ell}_{\delta})$, concluding the proof.
\qed
\smallskip

\section{Proofs of results in Section \ref{sec:estimation}}\label{App: ProofsConsistencyANresults}
\subsection{Proof of Theorem \ref{TheorySquareLossEstimation}}\label{App: ProofsConsistencyANresultsSquareLoss}
\subsubsection*{Preliminary results}
Before proving Theorem \ref{TheorySquareLossEstimation}, we establish a  theorem that is  an adaptation of results in \cite{Wellner1977}, which is needed in our context, where we avoid evaluation of $d_{\btau}$ in 0 or 1. The result in \hyperref[theorem:WellnerReplacement]{Theorem}   \ref{theorem:WellnerReplacement} is also of independent interest. 


\begin{theorem}\label{theorem:WellnerReplacement}
    Let $X_1,\cdots,X_n$ be a sample from $X$, where $\mathbb{E}[\lvert X \rvert^\kappa]<\infty$ for some $\kappa>0$. Suppose that 
    $$\lvert J^*(t)\rvert \leq M^*[t(1-t)]^{-1+\frac{1}{\kappa}+\gamma}, \quad 0<t<1$$
    for some $\gamma>0$ and $M^*>0$, where $J^*:(0,1)\to \mathbb{R}$ is continuous almost everywhere. Then 
    $$T_n^*:= \frac{1}{n}\sum_{i=1}^n J^*\left (\frac{i}{n+1}\right )X_{i:n} \xrightarrow[]{a.s.}\mu^*:=\int_0^1 J^*(t)F_{X}^{-1}(t)\mathrm{d}t, \quad \text{ as } n\to \infty.$$
\end{theorem}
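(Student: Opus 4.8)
The plan is to compare $T_n^*$ with the ``standard'' $L$-statistic $\widetilde{T}_n := \tfrac{1}{n}\sum_{i=1}^n J^*(\tfrac{i}{n})X_{i:n}$ (or an integral version $\int_0^1 J^*(t)F_n^{-1}(t)\,\mathrm{d}t$), for which almost sure convergence to $\mu^*$ is exactly the content of the Wellner (1977) strong law for functions of order statistics under the stated moment and growth conditions. So the real work is to show that replacing the evaluation points $i/n$ by $i/(n+1)$ — which is precisely the device used throughout the paper to avoid evaluating $d_{\btau}$ at $0$ and $1$ — does not change the almost sure limit. First I would record the reduction to nonnegative $X$ by splitting $X = X_+ - X_-$ (equivalently splitting $J^*$), so that order statistics and the weights interact monotonically; the growth bound on $|J^*|$ is symmetric in $t \leftrightarrow 1-t$, so both pieces are covered.

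Next I would bound the difference
\[
\bigl|T_n^* - \widetilde{T}_n\bigr| \;\le\; \frac{1}{n}\sum_{i=1}^n \Bigl|J^*\!\left(\tfrac{i}{n+1}\right) - J^*\!\left(\tfrac{i}{n}\right)\Bigr|\,|X_{i:n}|.
\]
Since $\tfrac{i}{n} - \tfrac{i}{n+1} = \tfrac{i}{n(n+1)}$, the two arguments lie within $O(1/n)$ of each other, both in the interior of $(0,1)$ for $1\le i\le n$. The key technical point is that, although $J^*$ is only continuous almost everywhere, the bound $|J^*(t)| \le M^*[t(1-t)]^{-1+1/\kappa+\gamma}$ controls it near the endpoints; in the bulk, away from $0$ and $1$, one can use the fact that the set of discontinuities has measure zero together with local (uniform on compacta interior to $(0,1)$) control. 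A cleaner route may be to avoid pointwise comparison of $J^*$ altogether: write both $T_n^*$ and $\widetilde{T}_n$ as integrals of $F_n^{-1}$ against $J^*$ composed with slightly different step-distribution functions on $[0,1]$ (one with jumps at $i/(n+1)$, one at $i/n$), and then invoke the uniform integrable-envelope argument from Wellner's proof, where the dominating function $c\,[t(1-t)]^{-1+1/\kappa+\gamma}$ is integrable against the true quantile function $F_X^{-1}$ by Hölder and the hypothesis $\mathbb{E}|X|^\kappa<\infty$. Either way, a Glivenko–Cantelli / SLLN-for-order-statistics estimate shows $X_{i:n} \approx F_X^{-1}(i/n)$ uniformly enough, almost surely, and the endpoint terms ($i$ near $1$ and near $n$) are handled by the growth bound exactly as in Wellner's original argument.

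Finally I would assemble: $\widetilde{T}_n \to \mu^*$ a.s.\ by Wellner (1977), and $|T_n^* - \widetilde{T}_n| \to 0$ a.s.\ by the estimate above, giving $T_n^* \to \mu^*$ a.s. I expect the main obstacle to be the endpoint contributions combined with the mere ``almost everywhere'' continuity of $J^*$: near $t=0$ and $t=1$ the weights $J^*(i/(n+1))$ and $J^*(i/n)$ can each be large (of order $n^{1-1/\kappa-\gamma}$), so one cannot naively use continuity there; instead the difference of the corresponding extreme order-statistic contributions must be controlled by a Borel–Cantelli argument, dominating $\sum_i |X_{i:n}|[\,\text{weight}\,]$ over the $o(n)$ boundary indices using $\mathbb{E}|X|^\kappa<\infty$ and the integrability of the envelope — which is, in essence, re-running the relevant tail estimate from Wellner's proof but for the perturbed evaluation points. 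This is routine but is where the care is needed; the interior comparison, by contrast, reduces to uniform continuity of $J^*$ on compact subintervals together with $\max_{1\le i\le n}|X_{i:n}| = o(n^{1/\kappa})$ a.s.
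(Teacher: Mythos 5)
Your high-level strategy --- reduce to Wellner's (1977) strong law for linear functions of order statistics and treat the shift from $i/n$ to $i/(n+1)$ as a perturbation --- is in the right spirit, but your primary route has a genuine gap, and it is not the route the paper takes. The gap is in the interior comparison: you bound $|T_n^*-\widetilde T_n|$ by $\frac{1}{n}\sum_i|J^*(\tfrac{i}{n+1})-J^*(\tfrac{i}{n})|\,|X_{i:n}|$ and then invoke ``local (uniform on compacta) control'' of $J^*$ away from the endpoints. But the hypothesis is only that $J^*$ is continuous \emph{almost everywhere}; this does not give uniform continuity, or even local uniform control, on any compact subinterval (the discontinuity set may be dense, e.g.\ a jump at every rational), so the pointwise differences $|J^*(\tfrac{i}{n+1})-J^*(\tfrac{i}{n})|$ need not be small for most indices $i$, and the estimate does not close without an additional structural assumption (monotonicity, bounded variation, or Riemann-type oscillation control) that is not available. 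A second, smaller problem is that your baseline statistic $\widetilde T_n$ with weights $J^*(i/n)$ involves $J^*(1)$, which is not defined on the domain $(0,1)$ --- avoiding exactly this is the reason for the $i/(n+1)$ rescaling in the first place.

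The paper avoids any such comparison. It defines the step score function $J_n(t)=J^*(\tfrac{i}{n+1})$ for $t\in(\tfrac{i-1}{n},\tfrac{i}{n}]$ and verifies the hypotheses of Wellner's Theorem 3 \emph{directly for this $J_n$}: the heart of the proof is the elementary but careful computation that $|J_n(t)|\le 4^{\theta}M^*[t(1-t)]^{-\theta}$ with $\theta=1-1/\kappa-\gamma$, i.e.\ that evaluating the envelope at $i/(n+1)$ rather than at $t$ costs only a universal factor $4^{\theta}$, uniformly in $i$, $n$ and in particular near the endpoints. Wellner's theorem then gives $T_n^*-\int_0^1 J_n(t)F_X^{-1}(t)\,\mathrm{d}t\to 0$ a.s., and the a.e.\ continuity of $J^*$ is used only at the very end, to get $J_n\to J^*$ a.e.\ and hence $\int_0^1 J_n F_X^{-1}\,\mathrm{d}t\to\mu^*$ by dominated convergence against the integrable envelope $M^2[t(1-t)]^{-1+\gamma}$. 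Your proposed ``cleaner route'' of writing $T_n^*$ as an integral of $F_n^{-1}$ against a step score and invoking the integrable-envelope argument is essentially this, but the envelope bound for the perturbed evaluation points is precisely the step you would still have to supply; as written, your argument does not contain it.
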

\begin{proof}
    We start by applying Theorem 3 by \cite{Wellner1977} with $g=F_X^{-1}$, $b_1=b_2=1-\frac{1}{\kappa}-\gamma$. For each $n \in \mathbb{N}$ we define a function $J_n$ on $[0,1]$:  let  $J_n(t)=J^*(\frac{i}{n+1})$ for $\frac{i-1}{n}<t\leq \frac{i}{n}$, for $i=1, \dots, n$, and $J_n(0)=J^*(\frac{1}{n+1})$.  To apply Theorem 3 of \cite{Wellner1977}, we need to verify the following assumptions for a constant $M>0$
    \begin{itemize}
        \item[(a)] $\lvert F_X^{-1}(t)\rvert \leq M[t(1-t)]^{-\frac{1}{\kappa}}$ for all $t\in (0,1)$;
        \item[(b)] $\lvert J_n(t)\rvert \leq M[t(1-t)]^{-1+\frac{1}{\kappa}+\gamma}$ for all $t\in (0,1)$;
        \item[(c)] $\int_0^1 M [t(1-t)]^{\frac{1}{\kappa}+\frac{\gamma}{2}}\mathrm{d}F_X^{-1}(t)<\infty$.
    \end{itemize}
   For item (a), we distinguish some cases: $X$ having as support a bounded interval, the whole real line or a half-real line, say. Firstly, suppose that $X$ is bounded, i.e. has a finite lower and upper bound. Then $F_X^{-1}(t)$ is bounded for every $t\in (0,1)$. This implies $t\lvert F_X^{-1}(t)\rvert^\kappa \xrightarrow[]{}0$ as $t\xrightarrow[]{}0$. Similarly, $(1-t)\lvert F_X^{-1}(t)\rvert^\kappa \xrightarrow[]{}0$ as $t\xrightarrow[]{}1$. Secondly, suppose that $X$ has support the whole real line, i.e. is unbounded (no finite lower and upper bound). Then $\lvert F_X^{-1}(t)\rvert \xrightarrow[]{} +\infty$ as $t\xrightarrow[]{}0$, hence by using $\mathbb{E}[\lvert X \rvert^\kappa]<\infty$, we obtain
   \begin{equation}\label{eq:quantileBound}
    0 \le   t\lvert F_X^{-1}(t)\rvert^\kappa \leq \int_0^t \lvert F_X^{-1}(s)\rvert^\kappa \mathrm{d}s = \int_{-\infty}^{F_X^{-1}(t)}\lvert u \rvert^\kappa \mathrm{d}F_X(u) \xrightarrow{} 0 \text{ as } t \xrightarrow[]{} 0. 
   \end{equation}
    A similar reasoning gives $(1-t)\lvert F_X^{-1}(t)\rvert^\kappa \xrightarrow[]{}0$ as $t\xrightarrow[]{}1$. Thirdly, when the support of $X$ is unbounded at one side, then the arguments from the previous two cases can be combined. In conclusion, for all cases, we proved item (a) holds true. \\
    
    We now show item (b). When $-1+\frac{1}{\kappa}+\gamma\geq0$ the result is immediate since in this case $J^*$ is bounded by a constant and hence also $J_n$. We thus consider $-1+\frac{1}{\kappa}+\gamma<0$.   
    By assumption we have $J^*(t)\leq J_B^*(t):=M^*[t(1-t)]^{-\theta}$ where we denote $\theta=1-\frac{1}{\kappa}-\gamma$ to ease notation. Based on $J_B^*$, we define for every $n\in \mathbb{N}$ a function $J_{n,B}(t)=J^*_B(\frac{i}{n+1})$ for $\frac{i-1}{n}<t\leq \frac{i}{n}$ and $J_{n,B}(0)=J^*_B(\frac{1}{n+1})$. It is clear that $J_n(t)\leq J_B^*(t)$ for $t\in (0,1)$. Hence, it suffices to show $\lvert J_{n,B}(t)\rvert \leq K [t(1-t)]^{-\theta}$ for some constant $K$. We show that this holds with  $K=4^\theta M^*$. \\

    The difficulty lies in bounding $\lvert J_{n,B}(t)\rvert$ for $t$ close to the boundaries of $(0,1)$. For $t$ within a closed neighbourhood of $\frac{1}{2}$, the result is immediate. 
 Take $t \in (0,1)$, there exists a sequence $(i_n)_n \in \mathbb{N}$ 
 with $1\leq i_n \leq n$ such that $t\in (\frac{i_n-1}{n},\frac{i_n}{n}]$ for every $n$. Suppose that $\frac{i_n}{n}\leq \frac{1}{2}$, then $J_{n,B}(t)=J^*_B(\frac{i_n}{n+1})\leq J^*_B(t)\leq 4^\theta M^*[t(1-t)]^{-\theta}$ for $t \in (\frac{i_n-1}{n},\frac{i_n}{n+1}]$. Now consider $t\in (\frac{i_n}{n+1},\frac{i_n}{n}]$ and note $4^\theta M^*[\frac{i_n}{n}(1-\frac{i_n}{n})]^{-\theta}\leq 4^\theta M^*[t(1-t)]^{-\theta}$. Hence it suffices to show $J_{n,B}(t)=J^*_B(\frac{i_n}{n+1})\leq 4^\theta M^*[\frac{i_n}{n}(1-\frac{i_n}{n})]^{-\theta}$. Note 
    \begin{align*}
      &J^*_B\left (\frac{i_n}{n+1}\right )=M^*\Big[\frac{i_n(n+1-i_n)}{(n+1)^2}\Big]^{-\theta} \\
      &4^\theta M^*\Big[\frac{i_n}{n}\Big (1-\frac{i_n}{n}\Big )\Big]^{-\theta}=4^\theta M^* \Big[\frac{i_n(n-i_n)}{n^2}\Big]^{-\theta},
    \end{align*}
    or thus we need
    $$\Big[\frac{(n+1)^2}{i_n(n+1-i_n)}\cdot \frac{i_n(n-i_n)}{n^2}\Big]^\theta \leq 4^\theta.$$
    This is immediate since 
    $$\frac{(n+1)^2}{i_n(n+1-i_n)}\cdot \frac{i_n(n-i_n)}{n^2}\leq \frac{(n+1)^2}{n^2}\leq 4.$$
    Now suppose $\frac{1}{2}<\frac{i_n-1}{n}$, then $J_{n,B}(t)\leq J^*_B(t)\leq 4^\theta M^*[t(1-t)]^{-\theta}$ for $t \in [\frac{i_n}{n+1},\frac{i_n}{n}]$. Consider $t\in (\frac{i_n-1}{n},\frac{i_n}{n+1}]$ and note $4^\theta M^*[\frac{i_n-1}{n}(1-\frac{i_n-1}{n})]^{-\theta}\leq 4^\theta M^*[t(1-t)]^{-\theta}$. It suffices to show $J_{n,B}(t)=J^*_B(\frac{i_n}{n+1})\leq 4^\theta M^*[\frac{i_n-1}{n}(1-\frac{i_n-1}{n})]^{-\theta}$.
    Note that $4^\theta M^*[\frac{i_n-1}{n}(1-\frac{i_n-1}{n})]^{-\theta}=4^\theta M^* \Big[ \frac{(i_n-1)(n-i_n+1)}{n^2} \Big]^{-\theta}$. We thus need
    $$\Big[ \frac{n^2}{(i_n-1)(n-i_n+1)}\cdot \frac{i_n(n-i_n)}{n^2} \Big]^{\theta}\leq 4^\theta.$$
    This is immediate since $\frac{n}{i_n-1}<2$, and by noting that consequently 
    $$  \frac{n^2}{(i_n-1)(n-i_n+1)}\cdot \frac{i_n(n-i_n)}{n^2} \leq \frac{2}{(n-i_n+1)}\frac{i_n(n-i_n)}{n}\leq \frac{2(n-i_n)}{(n-i_n+1)}\leq 2.$$\\

    We are left to show item (c). Using integration by parts we obtain 
    \begin{align*}
        \Bigg \lvert \int_0^1 M [t(1-t)]^{\frac{1}{\kappa}+\frac{\gamma}{2}}\mathrm{d}F_X^{-1}(t) \Bigg \rvert&= \Bigg \rvert M[t(1-t)]^{\frac{1}{\kappa}+\frac{\gamma}{2}}F_X^{-1}(t) \Big\lvert_0^1+M\int_0^1   F_X^{-1}(t) \frac{\mathrm{d}}{\mathrm{d}t}[t(1-t)]^{\frac{1}{\kappa}+\frac{\gamma}{2}}  \mathrm{d}t \Bigg \rvert \\
        &\leq 0+ M\int_0^1 [t(1-t)]^{-\frac{1}{\kappa}} \Big \lvert  \frac{\mathrm{d}}{\mathrm{d}t}[t(1-t)]^{\frac{1}{\kappa}+\frac{\gamma}{2}} \Big \rvert \mathrm{d}t\\
        &\leq M' \int_0^1 [t(1-t)]^{\frac{\gamma}{2}-1}\mathrm{d}t < \infty \text{ since } -1+\frac{\gamma}{2}>-1.
    \end{align*}
    
    All together, we proved that the assumptions required to apply Theorem 3 by \cite{Wellner1977} are satisfied.
    This implies $T_n-\int_0^1 J_n(t)F_X^{-1}(t)\mathrm{d}t \xrightarrow[]{a.s.} 0$, as $n \to \infty$.
    We are left to show that $\int_0^1 J_n(t)F_X^{-1}(t)\mathrm{d}t\xrightarrow[]{a.s.}\mu^*$. To this end, we claim that $J^*(t)=\lim_{n\to \infty} J_n(t)$ for almost all $t \in (0,1)$.  Take $t\in (0,1)$ arbitrary, we assume that $J^*$ is continuous in $t$. 
    There exists a sequence $(i_n)_n \in \mathbb{N}$ 
 with $1\leq i_n \leq n$ such that $t\in (\frac{i_n-1}{n},\frac{i_n}{n}]$ for every $n$.
     Take $\epsilon>0$ arbitrary, by the continuity of $J^*$ in $t$, there exists some $n_0$ such that for all $n>n_0$ one has 
    $$ \lvert J^*(t)-J_n(t)\rvert =\left \lvert J^*(t)-J^*\left (\frac{i_n}{n+1}\right ) \right \rvert   < \epsilon.$$
    As $J^*$ is continuous almost everywhere, this demonstrates our assertion. We know that 
    $\lvert J_n(t) F_X^{-1}\rvert \leq M^2[t(1-t)]^{-1+\gamma},$
    which is integrable. Furthermore, we have shown that $J_n(t)F_X^{-1}(t)\xrightarrow[]{} J^*(t)F_X^{-1}(t)$ for almost all $t\in (0,1)$. 
    Hence, by Lebesgue's dominated convergence theorem (see \cite{Ziemer2017} p. 149)
    $$\int_0^1 J_n(t)F_X^{-1}(t)\mathrm{d}t \xrightarrow[]{} \mu^*<\infty \text{ as } n \to \infty.$$
   This concludes the proof.
\end{proof}
\medskip 

A second preliminary result that is helpful, is stated in Lemma 
\ref{lemma:WellnerReplacement}. This will serve directly for establishing asymptotic properties of the estimator $\widehat{e}_{\tau, n}^{\sL}$ defined in Section  \ref{SquareLossGeneralDtau}.

\begin{lemma}\label{lemma:WellnerReplacement}
     Let $X_1,\cdots,X_n$ be a sample from $X$, where $\mathbb{E}[\lvert X \rvert^\kappa]<\infty$ for some $\kappa>0$. Suppose that 
   $J^*:(0,1)\to \mathbb{R}$ is a (uniformly) bounded function that is continuous almost everywhere. Consider $T_n^*:= \frac{1}{n}\sum_{i=1}^n J^*(t_{i,n})X_{i:n},$
   where $t_{i,n}\in (\frac{i-1}{n+1},\frac{i}{n+1})$.
   Then $$T_n^*\xrightarrow[]{a.s.}\mu^*:=\int_0^1 J^*(t)F_{X}^{-1}(t)\mathrm{d}t, \quad \text{ as } n\to \infty.$$
\end{lemma}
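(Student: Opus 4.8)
# Proof Proposal for Lemma \ref{lemma:WellnerReplacement}

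\textbf{Overall strategy.} The plan is to deduce the lemma from Theorem \ref{theorem:WellnerReplacement}, rather than re-running the Wellner machinery from scratch. The key observation is that a bounded, almost-everywhere continuous weight function $J^*$ is a special case of the weight functions covered by Theorem \ref{theorem:WellnerReplacement}: if $|J^*(t)| \le B$ for all $t \in (0,1)$, then trivially $|J^*(t)| \le B \le B\,[t(1-t)]^{-1+\frac1\kappa+\gamma}$ for any $\gamma$ with $-1+\frac1\kappa+\gamma \ge 0$, because $[t(1-t)]^{-1+\frac1\kappa+\gamma} \ge 1$ on $(0,1)$ when the exponent is nonnegative (note $t(1-t)\le \frac14 \le 1$). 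Since $\kappa>0$ is given, one can always pick such a $\gamma>0$. Thus the hypotheses of Theorem \ref{theorem:WellnerReplacement} hold for this $J^*$, and that theorem already gives the conclusion when the evaluation points are exactly $t_{i,n} = \tfrac{i}{n+1}$.

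\textbf{Handling general evaluation points.} The only gap is that Lemma \ref{lemma:WellnerReplacement} permits $t_{i,n}$ to be \emph{any} point in $(\tfrac{i-1}{n+1},\tfrac{i}{n+1})$, not just the right endpoint. First I would reduce to the case $t_{i,n}=\tfrac{i}{n+1}$ by a splitting argument: write
$$
T_n^* = \frac{1}{n}\sum_{i=1}^n J^*\!\left(\tfrac{i}{n+1}\right) X_{i:n} + \frac{1}{n}\sum_{i=1}^n \left[J^*(t_{i,n}) - J^*\!\left(\tfrac{i}{n+1}\right)\right] X_{i:n}.
$$
The first term converges a.s.\ to $\mu^*$ by Theorem \ref{theorem:WellnerReplacement} (using the observation above). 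It remains to show the second (error) term converges to $0$ a.s. Inspecting the proof of Theorem \ref{theorem:WellnerReplacement}, one sees it actually establishes the stronger statement that $T_n - \int_0^1 J_n(t)F_X^{-1}(t)\,\mathrm{d}t \to 0$ a.s., where $J_n$ is the step function interpolating the weights; the same Wellner Theorem 3 applies verbatim with $J_n$ replaced by the step function $\widetilde J_n(t) := J^*(t_{i,n})$ on $(\tfrac{i-1}{n},\tfrac{i}{n}]$, since $\widetilde J_n$ satisfies the same uniform bound $|\widetilde J_n| \le B$. Hence $T_n^* - \int_0^1 \widetilde J_n(t) F_X^{-1}(t)\,\mathrm{d}t \to 0$ a.s.

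\textbf{Convergence of the deterministic integral.} The final step is to show $\int_0^1 \widetilde J_n(t) F_X^{-1}(t)\,\mathrm{d}t \to \mu^* = \int_0^1 J^*(t) F_X^{-1}(t)\,\mathrm{d}t$. For this I would argue pointwise a.e.\ convergence of $\widetilde J_n$ to $J^*$: for a.e.\ $t$ (namely every $t$ at which $J^*$ is continuous), one has $t_{i_n,n} \to t$ as $n\to\infty$ because $t_{i_n,n} \in (\tfrac{i_n-1}{n+1},\tfrac{i_n}{n+1})$ and $t\in(\tfrac{i_n-1}{n},\tfrac{i_n}{n}]$, so both $t$ and $t_{i_n,n}$ lie within $O(1/n)$ of each other; continuity of $J^*$ at $t$ then gives $\widetilde J_n(t)\to J^*(t)$. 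Since $|\widetilde J_n(t) F_X^{-1}(t)| \le B\,|F_X^{-1}(t)|$, and $\int_0^1 |F_X^{-1}(t)|\,\mathrm{d}t = \mathbb{E}[|X|] < \infty$ (which follows from $\mathbb{E}[|X|^\kappa]<\infty$ together with the integrability reasoning in Theorem \ref{theorem:WellnerReplacement}; if $\kappa < 1$ one instead uses the bound $|F_X^{-1}(t)| \le M[t(1-t)]^{-1/\kappa}$ established there, noting $\int_0^1 [t(1-t)]^{-1/\kappa}\,\mathrm{d}t$ may diverge, so one should pass through the bound and dominated convergence exactly as in the closing paragraph of the proof of Theorem \ref{theorem:WellnerReplacement}), Lebesgue's dominated convergence theorem yields the claim. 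Combining the three pieces gives $T_n^* \xrightarrow{a.s.} \mu^*$.

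\textbf{Anticipated main obstacle.} The only genuinely delicate point is the integrability / domination needed in the last step when $\kappa$ is small, since then $F_X^{-1}$ need not be integrable on its own and one must lean on the $[t(1-t)]^{-1/\kappa}$ growth bound and the $[t(1-t)]^{\frac1\kappa+\frac\gamma2}$ weighting trick already used inside the proof of Theorem \ref{theorem:WellnerReplacement}. If desired, this can be sidestepped entirely by simply invoking the conclusion $\int_0^1 J_n(t)F_X^{-1}(t)\,\mathrm{d}t \to \mu^*$ proven there and reproducing its dominated-convergence argument with $J_n$ replaced by $\widetilde J_n$ verbatim, since the uniform bound $B$ on $J^*$ is all that is used. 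Everything else is bookkeeping.
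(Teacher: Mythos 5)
Your proposal follows essentially the same route as the paper's own proof: apply Wellner's Theorem 3 to the step function $\widetilde J_n(t)=J^*(t_{i,n})$ on $(\tfrac{i-1}{n},\tfrac{i}{n}]$, use boundedness of $J^*$ to dispose of the weight condition, and then pass from $\int_0^1 \widetilde J_n(t)F_X^{-1}(t)\,\mathrm{d}t$ to $\mu^*$ via pointwise convergence at continuity points of $J^*$ (using $\lvert t-t_{i_n,n}\rvert=O(1/n)$) together with dominated convergence. The initial splitting into the $t_{i,n}=\tfrac{i}{n+1}$ term plus an error term is redundant, since you end up applying Wellner directly to $\widetilde J_n$ anyway, exactly as the paper does.

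One concrete slip to fix: in the reduction "bounded $\Rightarrow$ hypotheses of Theorem~\ref{theorem:WellnerReplacement}" you assert that $[t(1-t)]^{-1+\frac{1}{\kappa}+\gamma}\ge 1$ when the exponent is \emph{nonnegative}. The inequality goes the other way: since $t(1-t)\le \tfrac14<1$, one has $x^{e}\le 1$ for $e\ge 0$ and $x^{e}\ge 1$ for $e\le 0$. So the bound $B\le B\,[t(1-t)]^{-1+\frac{1}{\kappa}+\gamma}$ requires the exponent to be \emph{nonpositive}, i.e.\ $\gamma\le 1-\tfrac{1}{\kappa}$, which forces $\kappa>1$. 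The same implicit restriction is present in the paper's proof (its "condition (b) is immediate since $J^*$ is bounded" likewise needs $-1+\tfrac{1}{\kappa}+\gamma\le 0$), and it is harmless in all applications because Theorem~\ref{TheorySquareLossEstimation} assumes $\kappa>1$; indeed for $\kappa\le 1$ the integral $\mu^*$ need not even converge for bounded $J^*$ (take $J^*\equiv 1$ and $X$ Cauchy). But as written your justification is backwards, and your claimed coverage of the full range $\kappa>0$ does not go through; the same caveat applies to the domination step at the end, where the dominating function $B\lvert F_X^{-1}\rvert$ is integrable only when $\kappa>1$. With $\gamma$ chosen so that the exponent is nonpositive (and $\kappa>1$), the rest of your argument is correct and matches the paper's.
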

\begin{proof}
We apply Theorem 3 by \cite{Wellner1977} with $g=F_X^{-1}$, $b_1=b_2=1-\frac{1}{\kappa}-\gamma$ for some $\gamma>0$. Define for every $n\in \mathbb{N}$ a function $J_n(t)=J^*(t_{i,n})$ for $\frac{i-1}{n}<t\leq \frac{i}{n}$ and $J_n(0)=J^*(t_{1,n})$. To apply Theorem 3 of \cite{Wellner1977}, we need to verify the following assumptions for a constant $M$
    \begin{itemize}
        \item[(a)] $\lvert F_X^{-1}(t)\rvert \leq M[t(1-t)]^{-\frac{1}{\kappa}}$ for all $t\in (0,1)$;
        \item[(b)] $\lvert J_n(t)\rvert \leq M[t(1-t)]^{-1+\frac{1}{\kappa}+\gamma}$ for all $t\in (0,1)$;
        \item[(c)] $\int_0^1 M [t(1-t)]^{\frac{1}{\kappa}+\frac{\gamma}{2}}\mathrm{d}F_X^{-1}(t)<\infty$.
    \end{itemize}
    Item (a) and (c) follow by the same reasoning as in the proof of Theorem \ref{theorem:WellnerReplacement}. Note that item (b), the boundedness of $J_n$ is now immediate since $J^*$ is bounded. We can complete the proof in the same way as in the proof of Theorem \ref{theorem:WellnerReplacement}.
    We only need a small adaptation in the argument for $J^*(t)=\lim_{n\to \infty} J_n(t)$ for almost every $t\in (0,1)$. Take $t\in (0,1)$ arbitrary, we assume that $J^*$ is continuous in $t$. There exists a sequence $(i_n)_n \in \mathbb{N}$ 
 with $1\leq i_n \leq n$ such that $t\in (\frac{i_n-1}{n},\frac{i_n}{n}]$ for every $n$. Take $\epsilon>0$ arbitrary. By continuity of $J^*$ in $t$, there exists a $\delta>$ such that for all $y \in (0,1)$ for which $\lvert t-y\rvert<\delta$ we have $\lvert J^*(t)-J^*(y)\rvert<\epsilon$.
    Note that $J_n(t)=J^*(t_{i_n,n})$ for $t\in (\frac{i_n-1}{n},\frac{i_n}{n}]$, now 
    $$\lvert t- t_{i_n,n}\rvert \leq \Big\lvert \frac{i_n}{n}-\frac{i_n-1}{n+1}\Big\rvert \leq \frac{2}{n+1}.$$
    Hence, we can find $n_0$ such that for all $n>n_0$ we have $\lvert t- t_{i_n,n}\rvert<\delta$ and thus $\lvert J^*(t)-J_n(t)\rvert<\epsilon$.
    As $J^*$ is continuous almost everywhere, this demonstrates our assertion and concludes the proof.
\end{proof}
\medskip 

Equipped with these preliminary results we prove Theorem \ref{TheorySquareLossEstimation}.

\subsubsection*{Proof of Theorem \ref{TheorySquareLossEstimation}}
 We consider the three estimators separately. 
 
 For $\widehat{e}_{\btau,n}^{\sLM}$ the consistency result follows directly from Theorem \ref{theorem:WellnerReplacement}. For the asymptotic normality we can use Theorem 1 (ii) from p. 664 in \cite{ShorackWellner2009}.\\

Next we study the estimator $\widehat{e}_{\btau,n}^{\sM}$. Since $d_{\btau}$ is integrable, the Riemann sum
    $(n+1)^{-1}\sum_{i=1}^{n+1} d_{\btau}(\frac{i}{n+1})$
    converges to $\int_{0}^{1}d_{\btau}(u)\mathrm{d}u=1$, as $n \to \infty$.
    As $d_{\btau}$ is a density, we can assume $d_{\btau}(1)=0$, this implies 
    also  $(n+1)^{-1}\sum_{i=1}^{n} d_{\btau}(\frac{i}{n+1})$ converges to $1$ and hence also $\frac{1}{n}\sum_{i=1}^{n} d_{\btau}(\frac{i}{n+1})=\frac{n+1}{n}\cdot \Big(\frac{1}{n+1}\sum_{i=1}^{n} d_{\btau}(\frac{i}{n+1})\Big) $ converges to $1$.
    Using 
     $$\widehat{e}^{\sM}_{\btau,n}=\frac{\widehat{e}^{\sLM}_{\btau,n}}{\frac{1}{n} \sum_{i=1}^n d_{\btau}(\frac{i}{n+1})},$$
     we obtain the almost sure convergence of $\widehat{e}^{\sM}_{\btau}$ as a fraction of almost sure converging sequences (see p. 24 in \cite{Serfling1980}).\\

     Note that $\widehat{e}^{\sLM}_{\btau,n}$ converges in distribution, and that $n^{-1} \sum_{i=1}^n d_{\btau}(\frac{i}{n+1})$ converges in probability to a constant. Hence we obtain the asymptotic normality by Slutsky's  Theorem (see p. 19 in \cite{Serfling1980}).\\

    Finally, we turn to the estimator  $\widehat{e}_{\btau,n}^{\sL}$.
    Denote $$\widehat{e}_{\btau,n}^{\sL}=\frac{n}{n+1}\cdot \Big(\frac{1}{n} \sum_{i=1}^n w_{i, n} X_{i:n}\Big) \qquad \text{ with } w_{i,n}=(n+1) \Big( D_{\btau}(i/(n+1))-D_{\btau}((i-1)/(n+1)) \Big).$$
    Since $D_{\btau}$ is continuously differentiable on $(0,1)$ with derivative $d_{\btau}$, we have by the mean value theorem
    $$\forall i \in \{1,\cdots,n\}, \exists t_{i,n} \in \Big(\frac{i-1}{n+1},\frac{i}{n+1}\Big), \quad \mbox{such that} \,\,  w_{i,n}=d_{\btau}(t_{i,n}).$$
    In case $d_{\btau}$ is bounded, we can apply Lemma \ref{lemma:WellnerReplacement} to obtain consistency for $n^{-1} \sum_{i=1}^n w_{i, n} X_{i:n}$. It is clear this implies consistency for $\widehat{e}_{\btau,n}^{\sL}$. The asymptotic normality follows by Theorem 1(ii) on p. 664 in \cite{ShorackWellner2009}. 
    The assumptions required for this latter theorem are satisfied. Indeed, Assumptions 1 on p. 662 in \cite{ShorackWellner2009} is satisfied since $d_{\btau}$ is bounded. Furthermore, Assumption 2' on p. 664 in \cite{ShorackWellner2009} is satisfied using the possibility to extend the interval mentioned by the authors.\\

    Suppose on the other hand that $d_{\btau}$ is Lipschitz, with constant $A >0$. 
    Consider 
    \begin{align}
        \Big\lvert \widehat{e}_{\btau,n}^{\sLM}- \frac{1}{n} \sum_{i=1}^n w_{i, n} X_{i:n}\Big \rvert &=\left \lvert \frac{1}{n}\sum_{i=1}^n \left [\, d_{\btau}\left  (\frac{i}{n+1}\right )-d_{\btau}(t_{i,n})\, \right ]X_{i:n}  \right \rvert \label{eq:squareLossConsistencyProof}\\
        &\leq \frac{A}{n}\sum_{i=1}^n \frac{\lvert X_{i:n} \rvert }{n+1}\leq \frac{A}{n+1}\Big( \frac{1}{n}\sum_{i=1}^n \lvert X_{i:n}\rvert \Big)\xrightarrow[]{a.s.} 0,\nonumber
    \end{align}
if $\mathbb{E}[\lvert X \rvert]<\infty$. As before, this implies consistency of  $\widehat{e}_{\btau,n}^{\sL}$. The asymptotic normality can be shown in the same way by adding a factor $\sqrt{n}$ to (\ref{eq:squareLossConsistencyProof}).\\

For the consistency results, we require $\kappa>1$, this ensures $\left | e_{\btau}(X; D_{\btau})\right | <\infty$, i.e. the existence of the target quantity. To see this note that 
\[
    \lvert e_{\btau}(X; D_{\btau}) \rvert = \Big \lvert \int_{0}^{1}F_X^{-1}(t)d_{\btau}(t)\mathrm{d}t\Big \rvert 
    \leq \int_0^1 M^2 [t(1-t)]^{-1+\gamma}\mathrm{d}t<\infty.
\]
We used that $\mathbb{E}[\lvert X\rvert^\kappa]<\infty$ implies $\lvert F_X^{-1}(t)\rvert \leq M[t(1-t)]^{-\frac{1}{\kappa}}$ (see (\ref{eq:quantileBound}) in the proof of Theorem \ref{theorem:WellnerReplacement}).\\

For the asymptotic normality results, we require $\kappa>2$. By Lemma 1 on p. 663 in \cite{ShorackWellner2009} this ensures that $\sigma_{\btau}^2<\infty$.
\qed

\subsection{Proof of theoretical results in Section \ref{sec:estimation--GeneralCase}}\label{App: ProofsConsistencyANresultsGenLoss}
\subsubsection*{Proof of Lemma \ref{lemma:consistency}}
Given that $c\mapsto \ell_\delta(x,c)$ is convex for every $x$, we know that $c \mapsto \ell_\delta'(x,c)$ is non-decreasing.
By definition, it follows that also $\lambda_F(c),\lambda_{F,n}(c)$ and $\lambda_{\widehat{F}_n}^*(c)$ are non-decreasing in $c$. Note
\begin{align*}
    \lvert \lambda_{\widehat{F}_n}^*(c)-\lambda_{F,n}(c) \rvert &= \Big\lvert \frac{1}{n}\sum_{i=1}^n \ell_{\delta}'(X_i,c)[d_{\btau}(\widehat{F}_n(X_i))-d_{\btau}(F_X(X_i))]\Big\rvert\\
    &\leq  \frac{1}{n}\sum_{i=1}^n \Big\lvert  \ell_{\delta}'(X_i,c) \Big\rvert \cdot \lvert d_{\btau}(\widehat{F}_n(X_i))-d_{\btau}(F_X(X_i)) \rvert \\
    &\leq \frac{1}{n}\sum_{i=1}^n \Big\lvert  \ell_{\delta}'(X_i,c) \Big\rvert \cdot \sup_{x\in \mathbb{R}} \lvert d_{\btau}(\widehat{F}_n(x))-d_{\btau}(F_X(x)) \rvert \\
    &\leq \frac{1}{n}\sum_{i=1}^n \Big \lvert  \ell_{\delta}'(X_i,c) \Big \rvert \cdot  M \sup_{x\in \mathbb{R}} \lvert \widehat{F}_n(x)-F_X(x) \rvert\\
    &\xrightarrow{\text{a.s.}} M\cdot \mathbb{E}\Big [ \Big \lvert \ell_{\delta}'(X,c) \Big \rvert \Big] \cdot 0 = 0
\end{align*}
Herein we used the Lipschitz continuity of $d_{\btau}$ (with Lipschitz constant $M >0$), and the strong law of large numbers.
This implies for each $c$ that $\lambda_{\widehat{F}_n}^*(c)-\lambda_{F,n}(c)\xrightarrow{\text{a.s.}} 0$. Furthermore, by the finiteness of $\lambda_F(c)$ we obtain, by again applying the strong law of large numbers, that $\lambda_{F,n}(c)\xrightarrow{\text{a.s.}} \lambda_F(c)$. In total, we demonstrated that
$$\lambda_{\widehat{F}_n}^*(c)- \lambda_{F}(c)=
(\lambda_{\widehat{F}_n}^*(c)-\lambda_{F,n}(c))
+(\lambda_{F,n}(c)-\lambda_{F}(c))
\xrightarrow{\text{a.s.}} 0, \qquad \mbox{as }  n \to \infty.$$

Now take $\epsilon>0$ arbitrary.  Observe that  $\lambda_F(t_0-\epsilon)\leq 0
\leq \lambda_F(t_0+\epsilon)$. By the above we have almost surely
\begin{align*}
    &\lambda_{\widehat{F}_n}^*(t_0-\epsilon) \xrightarrow{} \lambda_F(t_0-\epsilon)\leq 0\\
    &\lambda_{\widehat{F}_n}^*(t_0+\epsilon) \xrightarrow{} \lambda_F(t_0+\epsilon)\geq 0.
\end{align*}
Meaning 
\begin{equation*}
\lim_{n\to \infty}P( \lambda_{\widehat{F}_m}^*(t_0-\epsilon) \leq 0 \leq  \lambda_{\widehat{F}_m}^*(t_0+\epsilon), \text{ all }m\geq n  )=1.    
\end{equation*}
By the monotonicity of $\lambda_{\widehat{F}_n}^*(c)$, this means that for almost all solution sequences $T_n^*$ we have from some $n$ on that 
$$t_0-\epsilon \leq T_n^* \leq t_0+\epsilon.$$
Or thus we get
\[ \lim_{n\to \infty}P( t_0-\epsilon \leq T_m^*\leq t_0+\epsilon, \text{ all }m\geq n  )=1\qquad 
\Longleftrightarrow \qquad \lim_{n\to \infty}P(\lvert T_m^*-t_0\rvert \leq \epsilon, \text{ all }m\geq n  )=1, 
\]
which shows $T_n^*$ converges almost surely to $t_0$. By the convexity assumption, we know by Proposition \ref{prop: convexLossDeriviativeZero} that $t_0=e_{\btau,\delta}(X;D_{\btau},\ell_{\delta})$ (keeping also in mind Assumption (\textbf{A1})2.).
\qed
\medskip

\subsubsection*{Proof of Lemma \ref{lemma:almostSureConvergence}} 
When $\widehat{F}_n=F_n$ the empirical cumulative distribution function, i.e. $F_n(x)=(n+1)^{-1}\sum_{i=1}^n 1_{\{X_i\leq x\}}$ we can write 
    $$\lambda_{F_n}^*(c)=\frac{1}{n}\sum_{i=1}^n d_{\btau}\left(\frac{i}{n+1}\right)h_1(X_{i:n},c)-\frac{1}{n}\sum_{i=1}^n d_{\btau}\left(\frac{i}{n+1}\right)h_2(X_{i:n},c).$$
    
If $\mathbb{E}[\lvert h_j(X,c) \rvert^r]<\infty$ for $j=1,2$ for some $r>0$, then Theorem B.1. gives for $j=1,2$
     \begin{equation*}
             \frac{1}{n}\sum_{i=1}^n d_{\btau}\left(\frac{i}{n+1}\right)
             h_j(X_{i:n},c)
             \xrightarrow[]{a.s.} \int_0^1 d_{\btau}(t)F_{h_j(X,c)}^{-1}(t)\mathrm{d}t,  \quad \mbox{as } n \to \infty ,
     \end{equation*}
    where $F_{h_j(X,c)}^{-1}(t)$ is the quantile function of the random variable $h_j(X,c)$ for $j=1,2$.\\
     
Since we assume left continuity of $x\mapsto h_j(x,c)$ we have by p. 136 in \cite{Hosseini2009} that $F_{h_j(X,c)}^{-1}(t)=h_j(F_X^{-1}(t),c)$ for $j=1,2$. This implies 
   $$\lambda_{F_n}^*(c) \xrightarrow[]{a.s.} \int_0^1 d_{\btau}(t)[h_1(F_X^{-1}(t),c)-h_2(F_X^{-1}(t),c)]\mathrm{d}t=\lambda_F(c), \quad \mbox{as } n \to \infty.$$
   We can complete the proof in exactly the same way as  in Lemma \ref{lemma:consistency}. 
\qed 
\medskip

\subsubsection*{Proof of Lemma \ref{lemma:asympLambda} }  
As before, we will use the fact that we can write $\lambda_{F_n}^*(c)$ as
$$\lambda_{F_n}^*(c)=\frac{1}{n}\sum_{i=1}^n d_{\btau}\left( \frac{i}{n+1} \right)\ell_\delta'(X_{i:n},c).$$
     By Theorem 1(ii) on p. 664 in \cite{ShorackWellner2009} we obtain asymptotic normality of $\lambda_{F_n}^*(t_0)$, with asymptotic variance
    \begin{equation}\label{eq:lemmaAsympLambdaEq1}
        \int_0^1 \int_0^1 [\min\{x,y\}-xy]d_{\btau}(x)d_{\btau}(y)\mathrm{d}\ell_{\delta}'(F_X^{-1}(x),t_0)\mathrm{d}\ell_{\delta}'(F_X^{-1}(y),t_0).
    \end{equation}
    It is clear that this expression is equivalent to (\ref{eq: sigmat01}). 
    Since $x\mapsto h_j(x,c)$ is non-decreasing and left continuous we have $F_{h_j(X,t_0)}^{-1}(x)=h_j(F_X^{-1}(x),t_0)$ for $j=1,2$ (see p. 136 in \cite{Hosseini2009}). This implies
\[
        \ell_{\delta}'(F_X^{-1}(x),t_0)= h_1(F_X^{-1}(x),t_0)-h_2(F_X^{-1}(x),t_0)
        =F_{h_1(X,t_0)}^{-1}(x)-F_{h_2(X,t_0)}^{-1}(x).
\]
The above shows that, with denoting  $k(x,y)=[\min\{x,y\}-xy]d_{\btau}(x)d_{\btau}(y)$, 
\begin{align*}
    \int_0^1\int_0^1 k(x,y)\mathrm{d}\ell_{\delta}'(F_X^{-1}(x),t_0)\mathrm{d}\ell_{\delta}'(F_X^{-1}(y),t_0)&=\int_0^1 \int_0^1 k(x,y)\mathrm{d}F_{h_1(X,t_0)}^{-1}(x)\mathrm{d}F_{h_1(X,t_0)}^{-1}(y)\\
    &\hspace{1cm}+\int_0^1 \int_0^1 k(x,y)\mathrm{d}F_{h_2(X,t_0)}^{-1}(x)\mathrm{d}F_{h_2(X,t_0)}^{-1}(y)\\&\hspace{1cm}-\int_0^1 \int_0^1 k(x,y)\mathrm{d}F_{h_1(X,t_0)}^{-1}(x)\mathrm{d}F_{h_2(X,t_0)}^{-1}(y)\\
    &\hspace{1cm}-\int_0^1 \int_0^1 k(x,y)\mathrm{d}F_{h_2(X,t_0)}^{-1}(x)\mathrm{d}F_{h_1(X,t_0)}^{-1}(y).
\end{align*}
After last transformations, involving changing the integration variables $F_{h_2(X,t_0)}^{-1}(x)$ and $_{h_1(X,t_0)}^{-1}(y)$, to (say) $u$ and $v$, it follows that the asymptotic variance formula given in (\ref{eq:lemmaAsympLambdaEq1}) is also equal to (\ref{eq: sigmat02}). Lemma 1 on p. 663 in \cite{ShorackWellner2009} guarantees that $\sigma_{t_0}^2<\infty$.
\qed

\medskip

\subsubsection*{Proof of Lemma \ref{lemma:asympLambdaSecondVersion}} 
Theorem 2 on p. 1452 in \cite{Wozabal2009} leads to the desired result, by applying  this theorem with $\mathbb{M}=\mathbb{R}$. In the notation of \cite{Wozabal2009}, we take $c_{ni}(m)=d_{\btau}\left(\frac{i}{n+1}\right)$ for every $m\in \mathbb{M}$ and $h(t)=\ell_\delta'(t,c)$. As we consider $m\mapsto c_{ni}(m)=d_{\btau}\left(\frac{i}{n+1}\right)$  to be a constant function in $m$, we obtain $\lVert c_{ni}\rVert_{\mathbb{M}}:=\sup_{m\in \mathbb{M}}\lvert c_{ni}(m)\rvert = \left 
\lvert d_{\btau}\left(\frac{i}{n+1}\right)\right \rvert < \infty $ (by Assumption \textbf{(A8)}). \\

Next, we define for every $m\in \mathbb{M}$ the function $J_n^{(m)}(t)=d_{\btau}\left(\frac{i}{n+1}\right)$ for $\frac{i-1}{n}<t\leq \frac{i}{n}$ for $i=1,\cdots,n$ with $J_{n}^{(m)}(0)=d_{\btau}\left( \frac{1}{n+1} \right)$. By further defining $J=d_{\btau}$, it is noted that $J$ and $J_n^{(m)}$ are uniformly bounded for every $m\in \mathbb{M}$, hence the first item of Assumption 1 in \cite{Wozabal2009} is satisfied with $b_1=b_2=0$. Also item 2 of Assumption 1 in \cite{Wozabal2009} is satisfied, due to Assumptions \textbf{(A3)} and \textbf{(A4)}.\\

By our choices we obtain $\lVert J(s)-J(t)\rVert_{\mathbb{M}}=\lvert d_{\btau}(s)-d_{\btau}(t)\lvert$ and hence the equicontinuity condition in Theorem 2 by \cite{Wozabal2009} is satisfied by of the continuity of $d_{\btau}$, except at possibly a finite number of points.  Finally, the  uniform convergence assumption in \cite{Wozabal2009} is satisfied by our Assumption \textbf{(A8)}. Indeed, take $\epsilon>0$ and $x\in [0,1]$ arbitrary (but excluding possibly a finite number of points). We can assume there exists a neighborhood $(x_0,x_1)$ around $x$ such that $d_{\btau}$ is uniformly continuous in it. Now consider a neighborhood $(u_0,u_1)$ which is a strict subset of $(x_0,x_1)$ and contains $x$. By the uniform continuity of $d_{\btau}$, there exists a $\delta>0$ such that $\forall y,z\in (u_0,u_1): \lvert y-z\rvert <\delta \implies \lvert d_{\btau}(y)-d_{\btau}(z)\rvert < \epsilon$. Define $n_0\in \mathbb{N}$ such that $\frac{1}{n_0}<\min\{u_0-x_0,x_1-u_1,\delta\}$. Take $v\in(u_0,u_1)$ arbitrary. There exists a sequence $(i_n)_n\in \mathbb{N}$ with $1\leq i_n\leq n$ such that $v \in (\frac{i_n-1}{n},\frac{i_n}{n}]$ for every $n$. 
Now observe that for every $n\in \mathbb{N}$ with $n>n_0$ 
$$\lvert J_n^{(m)}(v)-J(v)\rvert =\left \lvert d_{\btau}\left( \frac{i_n}{n+1} \right) -d_{\btau}(v) \right \rvert <\epsilon, $$
hence the local uniform convergence is satisfied.
This shows all the assumptions required for Theorem 2 in \cite{Wozabal2009} are satisfied.\\

We end this proof by noting that Lemma 1 on p. 663 in \cite{ShorackWellner2009} guarantees $\sigma^2_{t_0}<\infty$.
\qed 
\medskip

\section{Proof of Theorem \ref{GenPropDistRisk}} \label{App: RisksComposition}

\begin{itemize}
\item[(i)]
We start with the positive homogeneity. First, we show that $(aX)_{\sD} \stackrel{d}{=} a X_{\sD}$. This follows directly 
    by noting that
    $$F_{(aX)_{\sD}}(y)=D(F_{aX}(y))=D\left (F_X\left (y/a\right )\right ),$$
    and
   $$F_{a X_{\sD}}(y)=F_{X_{\sD}}(y/a)=D(F_X(y/a)).$$
    Using this, we know that    $\rho((aX)_{\sD})=\rho(a X_{\sD})$.
    By the assumption that $\rho$ is positive homogeneous, the claim is proven.    
\item[(iii)]
For the monotonicity, we recall that $X \leq Y$ means $P(X \leq Y)=1$ and is equivalent with $\forall x \in \mathbb{R}: F_{X}(x)\geq F_{Y}(x).$ As $D$ is an increasing function it follows directly 
    that $\forall x \in \mathbb{R}: D(F_{X}(x))\geq D(F_{Y}(x))$ and hence that $X_{\sD} \leq Y_{\sD}$.
\item[(iv)]
Consider $\rho$ a coherent risk measure, then it follows by Proposition 5 on p. 1501 in \cite{LiuSchiedWang2021} that also the composition risk measure is coherent when $D$ is convex. 
     This concludes the proof.
\item[(ii)]
For the translation invariance, it suffices to note
    $$F_{(X+c)_{\sD}}(y)=D(F_{X+c}(y))=D(F_X(y-c)),$$
    and
    $$F_{X_{\sD}+c}(y)=P(X_{\sD}+c\leq y)=D(F_X(y-c)).$$
\item[(v)]
To prove this property, it is enough to demonstrate that if $D_1 \le D_2$, then $X_{\sD_2} \leq X_{\sD_1}$. Note that $X_{\sD_2} \leq X_{\sD_1}$  is ensured when $D_{2}(F_X(x))\geq D_{1}(F_X(x)) $ for all $x\in \mathbb{R}$. This holds by the fact that $D_1 \le D_2$.
\item[(vi)]
The proof of this property is trivial. 

\end{itemize}


\newpage
\noindent
\renewcommand{\theequation}{S\arabic{section}.\arabic{equation}}
\renewcommand{\thesection}{S\arabic{section}} 
\renewcommand{\thesubsection}{S\arabic{section}.\arabic{subsection}}
\renewcommand{\theexample}{S\arabic{example}} 
\renewcommand{\thetable}{S\arabic{table}}  
\renewcommand{\thefigure}{S\arabic{figure}}
\setcounter{page}{1}
\setcounter{example}{0}
\setcounter{section}{0}
\setcounter{subsection}{0}
\setcounter{equation}{0}
\setcounter{figure}{0}
\setcounter{table}{0}
\begin{center}
{\large {Supplementary Material} \\
to }\\
{\Large{\bf Generalized extremiles and risk measures of distorted random variables
}}
\vspace*{0.4 cm}

\noindent
by 
\vspace*{0.24 cm}

\noindent
{\large Dieter Debrauwer,  Ir\`ene Gijbels and Klaus Herrmann}
\end{center}

\section{Some additional loss functions} \label{supsec: LossFunctions} 
In this section we briefly discuss additional loss functions. For each we indicate to which quantity of the distorted random variable $X_{\sD_{\btau}}$ they lead to. 
\begin{itemize}
\item[$\bullet$]
Consider the loss function $\ell(x,c)=(1-c)^2 1_{\{ x\geq0 \}}x -c^2 x 1_{\{x<0 \}}$. Then 
\[
    \int \ell_{\delta}(x,c)\mathrm{d} D_{\btau} (F_X(x))=(1-c)^2 \int_{0}^{+\infty} x \mathrm{d} D_{\btau} (F_X(x))-c^2 \int_{-\infty}^{0} x \mathrm{d} D_{\btau} (F_X(x)).
\]
Hence 
\[
   \frac{\mathrm{d}}{\mathrm{d} c}\Big(  \int \ell_\delta(x,c)\mathrm{d} D_{\btau} (F_X(x))\Big) = 2(c-1)\int_{0}^{+\infty} x \mathrm{d} D_{\btau} (F_X(x))-2c\int_{-\infty}^{0} x \mathrm{d} D_{\btau} (F_X(x)). 
\]
Equating to zero results in the observation that the generalized extremile is equal to the value $c$ for which 
$$c=\frac{\mathbb{E}[X_{\sD_{\btau}} \cdot 1_{\{X_{\sD_{\btau}}\ge 0\}}]}{\mathbb{E}[X_{\sD_{\btau}} \cdot 1_{\{X_{\sD_{\btau}} \ge 0\}}]-\mathbb{E}[X_{\sD_{\btau}} \cdot 1_{\{X_{\sD_{\btau}}  <  0\}}]}=\frac{\mathbb{E}[X_{\sD_{\btau}}\cdot 1_{\{X_{\sD_{\btau}}\ge 0\}}]}{\mathbb{E}[\lvert X_{\sD_{\btau}} \rvert ]}.$$ 
Note that the function $c\mapsto \ell_\delta(x,c)$ is convex in $c$, for every $x$.
    Furthermore, the map $c\mapsto \ell_\delta(x,c)$ is  differentiable for all $c\neq x$. The derivative is equal to $\ell'_\delta(x,c)=-2(1-c)x1_{\{x\geq 0\}}-2cx\delta 1_{\{x<0\}}$.
\item[$\bullet$]
Consider $\ell(x,c)=2 1_{\{x<c\}}(c-x)+(x-c)^2$. Then 
\[
  \int \ell_\delta(x,c)\mathrm{d} D_{\btau} (F_X(x))=2 \int_{-\infty}^c  (c-x)\mathrm{d} D_{\btau} (F_X(x))
  + \int_{-\infty}^{+\infty} (x-c)^2 \mathrm{d} D_{\btau} (F_X(x)).  
\]
Hence 
\begin{align*}  
\frac{\mathrm{d}}{\mathrm{d} c}\Big(  \int \ell_\delta(x,c)\mathrm{d} D_{\btau} (F_X(x))\Big)=&   2 \int_{-\infty}^{c}\mathrm{d} D_{\btau} (F_X(x)) \\& \quad -2 \int_{-\infty}^{+\infty} x \mathrm{d} D_{\btau} (F_X(x))+2c \int_{-\infty}^{+\infty}\mathrm{d}D_{\btau} (F_X(x)).
\end{align*}
Equating to zero results in the observation that the generalized extremile is equal to the value $c$ for which 
$$F_{X_{\sD_{\btau}}}(c)-\mathbb{E}[X_{\sD_{\btau}}]+c=0
\quad \Longleftrightarrow \quad D_{\btau}\left (F_X(c) \right ) -\mathbb{E}[X_{\sD_{\btau}}]+c=0.$$ 
To be observed is that the  $c\mapsto \ell_\delta(x,c)$ is convex in $c$, for every $x$. Furthermore, the map $c\mapsto \ell_\delta(x,c)$ is  differentiable for all $c\neq x$. The derivative  with respect to $c$ equals $\ell'_\delta(x,c)=21_{\{x<c\}}-2(x-c)$.
\item[$\bullet$]
\cite{CaiAndWang2019} consider  a loss function (for positive random variables) of the form 
$$\ell_\delta(x,c)=g(x) \phi_1((I(x)-c)_+)+h(x)\phi_2(x-I(x)+c).$$
The notation $(x)_+=\max\{x,0\}$ is used. Furthermore,  $\phi_1,\phi_2,g,h$ are functions $\mathbb{R}^+ \to [0,\infty)$, and $I$ is a measurable function satisfying 
$$0\leq I(x)\leq x \text{ for any } x\geq 0.$$
The qualitative properties of $\ell_{\delta}(x,c)$ are determined by the functions
$\phi_1,\phi_2,g,h$ and $I$ involved. 
\end{itemize} 

\section{Regarding the equicontinuity condition in the asymptotic normality result of Theorem \ref{theorem: asymptoticNormalityEmpirical}} 
\subsection{Squared loss}
For square loss, we know $\ell_{\delta}'(x,c)=-2(x-c)$, and $t_0=\mathbb{E}[X_{\sD_{\btau}}]$. This gives
\begin{align*}
& \sup_{\lvert c - t_0 \rvert \leq \omega_n} \sqrt{n} \lvert \lambda_{F_n}^*(c)-\lambda_{F_n}^*(t_0)-\lambda_F(c) \rvert \\
& \qquad   = \sup_{\lvert c - t_0 \rvert \leq \omega_n} \left \lvert \frac{1}{\sqrt{n}} \sum_{i=1}^n d_{\btau}(F_n(X_i))[-2(X_i-c) \right .\\
&\qquad \qquad \left . +2(X_i-t_0)]+2 \sqrt{n}\int d_{\btau}(F_X(x))(x-c)\mathrm{d}F_X(x) \right \rvert \\
&\qquad = \sup_{\lvert c - t_0 \rvert \leq \omega_n} \left \lvert \frac{1}{\sqrt{n}} \sum_{i=1}^n d_{\btau}(F_n(X_i)) (2(c-t_0)) + 2\sqrt{n} \int d_{\btau}(F_X(x)) x \mathrm{d}F_X(x)- 2\sqrt{n}c \, \right \rvert \\
 &\qquad = \sup_{\lvert c - t_0 \rvert \leq \omega_n} \left \lvert \frac{1}{\sqrt{n}} \sum_{i=1}^n d_{\btau}(F_n(X_i)) (2(c-t_0)) +  2\sqrt{n}(t_0-c) \right \rvert \\
&\qquad \leq \sup_{\lvert c - t_0 \rvert \leq \omega_n} 2\sqrt{n} \lvert c-t_0 \rvert \cdot \left \lvert \frac{1}{n}\sum_{i=1}^n d_{\btau}\left (\frac{i}{n+1}\right )-1 \right \rvert \\
&\qquad \leq 2 \sqrt{n} \omega_n \left \lvert  \frac{1}{n}\sum_{i=1}^n d_{\btau}\left (\frac{i}{n+1}\right )-1 \right \rvert.
\end{align*}

Suppose $d_{\btau}$ is Lipschitz continuous with constant $M$. Then
\begin{align*}
    \left \lvert \frac{1}{n} \sum_{i=1}^n \left (d_{\btau}\left (\frac{i}{n+1}\right)-1 \right ) \right \rvert =  \left \lvert \frac{1}{n} \sum_{i=1}^n \left (d_{\btau}\left (\frac{i}{n+1}\right )-\int_{0}^1 d_{\btau}(x)\mathrm{d}x\right) \right \rvert 
    &=\Big \lvert \sum_{i=1}^n \Big(\frac{1}{n}d_{\btau}\left (\frac{i}{n+1}\right)- \int_{\frac{i-1}{n}}^{\frac{i}{n}} d_{\btau}(x)\mathrm{d}x \Big)  \Big \rvert\\
    &\leq \sum_{i=1}^n \left \lvert  \frac{1}{n}d_{\btau}\left (\frac{i}{n+1}\right )- \int_{\frac{i-1}{n}}^{\frac{i}{n}} d_{\btau}(x)\mathrm{d}x\right \rvert  \\
    &\leq \sum_{i=1}^n \int_{\frac{i-1}{n}}^{\frac{i}{n}}  \Big \lvert d_{\btau}\left (\frac{i}{n+1}\right)-d_{\btau}(x)  \Big \rvert \mathrm{d}x\\
    &\leq \sum_{i=1}^n \frac{M}{n^2}= \frac{M}{n}.
\end{align*}

Hence 
$$\sup_{\lvert c - t_0 \rvert \leq \omega_n} \sqrt{n} \lvert \lambda_{F_n}^*(c)-\lambda_{F_n}^*(t_0)-\lambda_F(c) \rvert \leq \Big \lvert   2\omega_n \frac{M}{\sqrt{n}} \Big \rvert \to 0 \text{ when } \omega_n \to 0.$$\\

\subsection{Loss function in entry G2 of Table \ref{LossFunctions2A}}
Consider the loss function $\ell_{\delta}(x,c)=-c\lvert x - b \rvert^\delta +\frac{c^2}{2}$ corresponding to $t_0=\mathbb{E}[\lvert X_{\sD_{\btau}} -b \rvert^\delta]$. Here, $\ell_{\delta}'(x,c)=-\lvert x-b\rvert^\delta +c$.  We have  
\begin{align*}
    \sqrt{n} \lvert \lambda_{F_n}^*(c)-\lambda_{F_n}^*(t_0)-\lambda_F(c) \rvert =\Big \lvert \frac{1}{\sqrt{n}} \sum_{i=1}^n d_{\btau}(F_n(X_i))(c-t_0) &\hspace{0.2cm}+  \sqrt{n} \int_{-\infty}^{+\infty} \lvert x -b\rvert^\delta d_{\btau}(F_X(x))f_X(x)\mathrm{d}x -\sqrt{n}  c\Big \rvert\\
    &=\Big \lvert \frac{1}{\sqrt{n}} \sum_{i=1}^n d_{\btau}(F_n(X_i))(c-t_0)-\sqrt{n}(c-t_0) \Big \rvert\\
    &= \sqrt{n}\lvert c-t_0 \rvert \cdot \Big \lvert \frac{1}{n} \sum_{i=1}^n d_{\btau}(F_n(X_i))-1 \Big \rvert.
\end{align*}
By the same reasoning as in the case of square loss, we obtain the equicontinuity of $\ell_{\delta}(x,c)$. This is under the assumption that $d_{\btau}$ is Lipschitz continuous.\\

\subsection{Loss function $\ell_{\delta}(x,c)=(1-c)^2 1_{\{x \geq 0\}}x-c^2 x 1_{\{x<0\}}$} \label{ThirdLossFunction}
This loss function corresponds  to $t_0=\frac{\mathbb{E}[X_{\sD_{\btau}} 1_{\{X_{\sD_{\btau}} \geq 0\}}]}{\mathbb{E}[ \lvert X_{\sD_{\btau}} \rvert]}$. Here, $\ell_{\delta}'(x,c)=2c \lvert x \rvert -2x 1_{\{x\geq 0\}}$. We have
\begin{align*}
    \sqrt{n} \lvert \lambda_{F_n}^*(c)-\lambda_{F_n}^*(t_0)-\lambda_F(c) \rvert &  =\Big \lvert \frac{1}{\sqrt{n}} \sum_{i=1}^n d_{\btau}(F_n(X_i))[2\lvert X_i \rvert (c-t_0)]\\ 
    &\qquad  -2 c \sqrt{n} \int_{-\infty}^{+\infty} \lvert x \rvert d_{\btau}(F_X(x))f_X(x)\mathrm{d} x -2\sqrt{n}  \int_0^{+\infty} x d_{\btau}(F_X(x))f_X(x)\mathrm{d}x  \Big \rvert\\
    & =\Big \lvert \frac{1}{\sqrt{n}} \sum_{i=1}^n d_{\btau}(F_n(X_i))[2\lvert X_i \rvert (c-t_0)]\\ 
    & \qquad -2 c \sqrt{n} \int_{-\infty}^{+\infty} \lvert x \rvert d_{\btau}(F_X(x))f_X(x)\mathrm{d}x -2\sqrt{n}  t_0 \int_{-\infty}^{+\infty} \lvert x \rvert d_{\btau}(F_X(x))f_X(x)\mathrm{d}x  \Big \rvert\\
    &=\Big \lvert \frac{1}{\sqrt{n}} \sum_{i=1}^n d_{\btau}(F_n(X_i))[2\lvert X_i \rvert (c-t_0)]-2 (c-t_0) \sqrt{n} \int_{-\infty}^{+\infty} \lvert x \rvert d_{\btau}(F_X(x))f_X(x)\mathrm{d}x  \Big \rvert\\
    & = 2\lvert c-t_0 \rvert \cdot  \Big \lvert \frac{1}{\sqrt{n}} \sum_{i=1}^n d_{\btau}(F_n(X_i))\lvert X_i \rvert - \sqrt{n} \int_{-\infty}^{+\infty} \lvert x \rvert d_{\btau}(F_X(x))f_X(x)\mathrm{d}x  \Big \rvert ,
\end{align*}
where we used the definition of $t_0$ in the second equality.
It is clear that 
\begin{align}
    & P \left ( \sup_{ \lvert c - t_0 \rvert \leq \omega_n} \sqrt{n} \lvert \lambda_{F_n}^*(c)-\lambda_{F_n}^*(t_0)-\lambda_F(c) \rvert \geq \epsilon \right ) \notag \\  
    &\leq P\left ( 2 \omega_n  \lvert \frac{1}{\sqrt{n}} \sum_{i=1}^n d_{\btau}(F_n(X_i))\lvert X_i \rvert - \sqrt{n} \int_{-\infty}^{+\infty} \lvert x \rvert d_{\btau}(F_X(x))f_X(x)\mathrm{d}x   \rvert \geq \epsilon \right ) \notag \\
    &\leq P \left ( 2\omega_n  \lvert  \frac{1}{\sqrt{n}} \sum_{i=1}^n (d_{\btau}(F_n(X_i))-d_{\btau}(F_X(X_i))\lvert X_i \rvert   \,  \rvert \geq \frac{\epsilon}{2}\right  ) \label{eq:equicontSpecialLossTerm1}\\
    &\quad +  P\left  ( 2\omega_n \sqrt{n} \lvert \frac{1}{n} \sum_{i=1}^n d_{\btau}(F_X(X_i)) \lvert X_i \rvert - \int_{-\infty}^{+\infty} \lvert x \rvert d_{\btau}(F_X(x))f_X(x)\mathrm{d}x  \,   \rvert \geq \frac{\epsilon}{2} \right ). \label{eq:equicontSpecialLossTerm2}
\end{align}

We consider the latter two terms separately. For the first term (\ref{eq:equicontSpecialLossTerm1}) we have under the assumption that $d_{\btau}$ is Lipschitz continuous
\begin{align*}
    & P\left ( 2\omega_n \Big \lvert  \frac{1}{\sqrt{n}} \sum_{i=1}^n (d_{\btau}(F_n(X_i))-d_{\btau}(F_X(X_i))\lvert X_i \rvert   \,  \rvert  \geq \frac{\epsilon}{2}\right )\\ 
    &\leq P\left ( 2\omega_n M \frac{1}{\sqrt{n}} \sum_{i=1}^n \lvert F_n(X_i)-F_X(X_i)\rvert \cdot \lvert X_i\rvert \geq \frac{\epsilon}{2} \right )\\
    &\leq P\left ( 2\omega_n M \sqrt{n} \sup_{x\in \mathbb{R}}\lvert F_n(x)-F_X(x)\rvert \cdot \frac{1}{n} \sum_{i=1}^n \lvert X_i\rvert \geq \frac{\epsilon}{2} \right ):=P(\lvert A_n \cdot B_n \rvert  \geq \frac{\epsilon}{2} ),
\end{align*}
where $A_n:=2\omega_n M \sqrt{n} \sup_{x\in \mathbb{R}}\lvert F_n(x)-F_X(x)\rvert $ and $B_n:=n^{-1}\sum_{i=1}^n \lvert X_i \rvert$.
We know by the weak law of large numbers that $B_n \xrightarrow[]{P} \mathbb{E}[\lvert X \rvert]$, if $\mathbb{E}[\lvert X \rvert]<\infty$. Hence, it suffices to show $A_n \xrightarrow[]{P} 0$.
Denote $F^*_n(x)=n^{-1} \sum_{i=1}^n 1_{\{X_i\leq x\}}$ and observe
\begin{equation}
    P(\lvert A_n \rvert \geq \epsilon)\leq P\left (2\omega_n M \sqrt{n} \sup_{x\in \mathbb{R}} \lvert F^*_n(x)-F_X(x)\rvert \geq \frac{\epsilon}{2}\right )+P\left (2\omega_n M \sqrt{n} \sup_{x\in \mathbb{R}} \lvert F^*_n(x)-F_n(x)\rvert \geq \frac{\epsilon}{2}\right ).
    \label{eq: TwoTerms}
\end{equation}
The first term goes to zero by Theorem A from p. 59 in \cite{Serfling1980}, i.e. we have
$$P\left (\sup_{x\in \mathbb{R}}\lvert 
    F^*_n(x)-F_X(x)\rvert\geq \frac{\epsilon}{2\omega_n M \sqrt{n}}\right )
    \leq C e^{-2n \cdot \frac{\epsilon^2}{4\omega_n^2 M^2 n}}. $$
For the second term, note that $\sup_{x\in \mathbb{R}} \lvert F^*_n((x)-F_n(x)\rvert=(n+1)^{-1}$, and therefore $2\omega_n M \sqrt{n} \sup_{x\in \mathbb{R}} \lvert F^*_n(x)-F_n(x)\rvert = 2 \omega_n M \sqrt{n}(n+1)^{-1}$ which tends to zero as $n \to \infty$. Hence, the second term in \eqref{eq: TwoTerms} is zero for $n$ sufficiently large. 
 In total we showed $P(\lvert A_n \rvert \geq \epsilon)\xrightarrow[]{}0$ as $n\to \infty$.\\

We now consider the second term (\ref{eq:equicontSpecialLossTerm2}). Denote 
$Y_n=n^{-1} \sum_{i=1}^n d_{\btau}(F_X(X_i))\lvert X_i \rvert$.
If $\text{Var}\{d_{\btau}(F_X(X))\lvert X \rvert \}<\infty$, then by the central limit theorem we obtain as $n\to \infty$
\begin{align*}
    &P\Big( 2\omega_n \sqrt{n} \Big \lvert \frac{1}{n} \sum_{i=1}^n d_{\btau}(F_X(X_i))\lvert X_i \rvert - \int_{-\infty}^{+\infty} \lvert x \rvert d_{\btau}(F_X(x))f_X(x)\mathrm{d}x  \Big \rvert \geq \frac{\epsilon}{2} \Big)\\
    &=P\Big( 2\omega_n \sqrt{n} \lvert Y_n - \mathbb{E}[Y_n]\lvert \geq \frac{\epsilon}{2} \Big) =P\Big( \sqrt{n}\lvert Y_n - \mathbb{E}[Y_n]\lvert \geq \frac{\epsilon}{4\omega_n} \Big) \rightarrow 0.
\end{align*}
Hence, also the second -term (\ref{eq:equicontSpecialLossTerm2}) goes to zero as $n\to \infty$.\\

In summary, we proved that for every sequence $\omega_n \to 0$
$$\sup_{\lvert c - t_0 \rvert \leq \omega_n}\sqrt{n} \lvert \lambda_{F_n}^*(c)-\lambda_{F_n}^*(t_0)-\lambda_F(c) \rvert \xrightarrow{P} 0,$$
and hence the equicontinuity Assumption \textbf{(A10)} is satisfied for this loss function, under the condition that $d_{\btau}$ is Lipschitz continuous with  $\mathbb{E}[\lvert X \rvert]<\infty$ and $\text{Var}\{d_{\btau}(F_X(X))\lvert X \rvert \}<\infty$.

\subsection{Loss function $\ell_{\delta}(x,c)=\lvert x \rvert (c-x)^2$}
For this loss function the generalized extremile equals $t_0=\frac{\mathbb{E}[X_{\sD_{\btau}}^2]}{\mathbb{E}[\lvert X_{\sD_{\btau}} \rvert]}$. Here, $\ell_{\delta}'(x,c)=2\lvert x \rvert (c-x)$.  We have  
\begin{align*}
    &\sqrt{n} \lvert \lambda_{F_n}^*(c)-\lambda_{F_n}^*(t_0)-\lambda_F(c) \rvert\\
    & \qquad  =\Big \lvert \frac{1}{\sqrt{n}} \sum_{i=1}^n d_{\btau}F_n(X_i))2\lvert X_i\rvert (c-t_0)\\ 
    & \qquad  \qquad -  2c\sqrt{n} \int_{-\infty}^{+\infty} \lvert x \rvert d_{\btau}(F_X(x))f_X(x)\mathrm{d}x + 2\sqrt{n} \int_{-\infty}^{+\infty}x^2 d_{\btau}(F_X(x))f_X(x)\mathrm{d}x\Big \rvert\\
    &\qquad  =\Big \lvert \frac{1}{\sqrt{n}} \sum_{i=1}^n d_{\btau}(F_n(X_i))2\lvert X_i\rvert (c-t_0)-  2c\sqrt{n} \int_{-\infty}^{+\infty} \lvert x \rvert d_{\btau}(F_X(x))f_X(x)\mathrm{d}x + 2\sqrt{n} t_0 \Big \rvert\\
    &\qquad  =2 \lvert c-t_0\rvert \cdot \Big \lvert \frac{1}{\sqrt{n}} \sum_{i=1}^n d_{\btau}(F_n(X_i)) \lvert X_i \rvert - \sqrt{n} \int_{-\infty}^{+\infty} \lvert x \rvert d_{\btau}(F_X(x))f_X(x)\mathrm{d}x  \Big \rvert.
\end{align*}
For the second equality, we used the definition of $t_0$. The result follows by the same reasoning as in the case of the loss function $\ell_{\delta}(x,c)=(1-c)^2 1_{\{x \geq 0\}}x-c^2 x 1_{\{x<0\}}$ in Section \ref{ThirdLossFunction}. Hence, we obtain equicontinuity under the assumption that $d_{\btau}$ is Lipschitz continuous;  and the assumptions that $\mathbb{E}[\lvert X \rvert]<\infty$  and $\text{Var}\{d_{\btau}(F_X(X))\lvert X \rvert \}<\infty$.\\

\subsection{Esscher loss function (entry 6 in Table \ref{LossFunctions1A})}
Consider the Esscher loss $\ell_{\delta}(x,c)=(c-x)^2 e^{\delta x}$, corresponding to $t_0=\frac{\mathbb{E}[X_{\sD_{\btau}} e^{\delta X_{\sD_{\btau}}}]}{\mathbb{E}[e^{\delta X_{\sD_{\btau}}}]}$. Here, $\ell_{\delta}'(x,c)=2(c-x)e^{\delta x}$. We have 
\begin{align*}
    &\sqrt{n} \lvert \lambda_{F_n}^*(c)-\lambda_{F_n}^*(t_0)-\lambda_F(c) \rvert \\
    & \qquad =\Big \lvert \frac{1}{\sqrt{n}} \sum_{i=1}^n d_{\btau}(F_n(X_i))2 e^{\delta X_i} (c-t_0)\\ 
    &\qquad \qquad -  2c\sqrt{n} \int_{-\infty}^{+\infty} e^{\delta x} d_{\btau}(F_X(x))f_X(x)\mathrm{d}x + 2\sqrt{n} \int_{-\infty}^{+\infty}x e^{\delta x}d_{\btau}(F_X(x))f_X(x)\mathrm{d}x\Big \rvert\\
    &\qquad  =\Big \lvert \frac{1}{\sqrt{n}} \sum_{i=1}^n d_{\btau}(F_n(X_i))2 e^{\delta X_i} (c-t_0)-  2c\sqrt{n} \int_{-\infty}^{+\infty} e^{\delta x}d_{\btau}(F_X(x))f_X(x)\mathrm{d}x + 2\sqrt{n} t_0 \Big \rvert\\
    &\qquad  =2 \lvert c-t_0\rvert \cdot \Big \lvert \frac{1}{\sqrt{n}} \sum_{i=1}^n d_{\btau}(F_n(X_i)) e^{\delta X_i} - \sqrt{n} \int_{-\infty}^{+\infty} e^{\delta x} d_{\btau}(F_X(x))f_X(x)\mathrm{d}x  \Big \rvert.
\end{align*}
We again used the definition of $t_0$. The result now follows by a similar reasoning as in the case of the loss function $\ell_{\delta}(x,c)=(1-c)^2 1_{\{x \geq 0\}}x-c^2 x 1_{\{x<0\}}$ in Section \ref{ThirdLossFunction}. We obtain equicontinuity under the assump\-tion that $d_{\btau}$ is Lipschitz continuous; and that $ \mathbb{E}[\lvert e^{\delta X} \rvert]<\infty$ and $\text{Var}\{d_{\btau}(F_X(X))e^{\delta X} \}<\infty$.\\

\subsection{Loss function $\ell_{\delta}(x,c)=\frac{1}{\delta} e^{\delta c}-e^{-\delta x }c$}
The generalized extremile associated to this loss function is $t_0=\frac{1}{\delta} \log(\mathbb{E}[e^{-\delta X_{\sD_{\btau}}}])$. Here, $\ell_{\delta}'(x,c)=e^{\delta c}-e^{-\delta x}$. We have 
\begin{align*}
    &\sqrt{n} \lvert \lambda_{F_n}^*(c)-\lambda_{F_n}^*(t_0)-\lambda_F(c) \rvert =\Big \lvert \frac{1}{\sqrt{n}} \sum_{i=1}^n d_{\btau}(F_n(X_i)) (e^{\delta c}-e^{\delta t_0})\\ &\hspace{1cm}-  \sqrt{n}e^{\delta c}+ \sqrt{n} \int_{-\infty}^{+\infty} e^{-\delta x} d_{\btau}(F_X(x))f_X(x)\mathrm{d}x \Big \rvert\\
    &=\Big \lvert \frac{1}{\sqrt{n}} \sum_{i=1}^n d_{\btau}(F_n(X_i)) (e^{\delta c}-e^{\delta t_0})-  \sqrt{n}e^{\delta c}+ \sqrt{n} e^{t_0 \delta} \Big \rvert \\
    &=\lvert e^{\delta c}-e^{\delta t_0} \rvert \cdot \Big \lvert \frac{1}{\sqrt{n}} \sum_{i=1}^n d_{\btau}(F_n(X_i)) -  \sqrt{n} \Big \rvert.
\end{align*}
Note that $\sup_{\lvert c-t_0 \rvert \leq \omega_n } \lvert e^{\delta c}-e^{\delta t_0} \rvert \leq \sup_{\lvert c-t_0 \rvert \leq \omega_n }  K \lvert \delta c -\delta t_0 \rvert \leq K \delta \omega_n$, for some constant $K$. Indeed, this is because any continuously differentiable function is locally Lipschitz. The result follows by a similar reasoning as in the case of square loss.

\setcounter{equation}{0}

\section{Section \ref{sec:estimation--ApplicationsANResult}: further explanation regarding Remark \ref{remark:expectileVariance}}
\label{supsec: ANExpectiles}
When $D_{\btau}$ is the cumulative distribution function corresponding to a uniform random variable, the variance formula given in Corollary \ref{cor:expectilesAsymptoticVariance} reduces to (\ref{eq:expectilesRemarkVariance}). Recall when $D_{\btau}(u)=u$ for $u\in [0,1]$ then $d_{\btau}(u)=1$ for $u \in [0,1]$ and $X_{\sD_{\btau}}=X$. Also $D_{\btau}\left ( F_X(t_0)\right)=F_X(t_0)$. Hence the denominator in Corollary \ref{cor:expectilesAsymptoticVariance} reduces to 
\[
\left ( - 4 \delta F_X(t_0) + 2 \delta + 2 F_X(t_0) \right )^2 
= 4 \left [ \delta \left ( 1- F_X(t_0)\right ) + (1-\delta) F_X(t_0)   \right ]^2.
\]
It remains to look at the  the numerator of (\ref{eq:expectilesRemarkVariance}).
Note that $$\mathbb{E}\left [ \left (  I_{\delta}(t_0,X)\right )^2\right ]=\text{Var}(I_{\delta}(t_0,X))+\left (\mathbb{E}[I_{\delta}(t_0,X)]\right )^2.$$
By Example 4 on p. 43 in \cite{BelliniEtAl2014} we know that the second term is equal to zero. Observe, 
\begin{align}
    \text{Var}(I_{\delta}(t_0,X))&=\delta^2 \text{Var}((X-t_0)1_{\{X \geq t_0\}})+(1-\delta)^2 \text{Var}((X-t_0) 1_{\{X<t_0\}} \label{eq:remarkVarianceExpectiles}\\ &\text{\quad}-2\delta(1-\delta)\text{cov}((X-t_0)1_{\{X \geq t_0\}},(X-t_0) 1_{\{X<t_0\}})\nonumber.
\end{align}
In our calculations, we will make use of two equality's (see pp. 116--117 in \cite{Shorack2000})
$$\text{Cov}(X,Y)=\int_{-\infty}^{+\infty} \int_{-\infty}^{+\infty} [F_{X,Y}(x,y)-F_X(x)F_Y(y)]\mathrm{d}x \mathrm{d}y $$
and
$$\text{Var}(X)=\int_{-\infty}^{+\infty} \int_{-\infty}^{+\infty} [F_{X}(\min\{x,y\})-F_X(x)F_X(y)]\mathrm{d}x \mathrm{d}y.$$
Note that $F_{(X-t_0)1_{\{X \geq t_0\}}}(u)=F_{X}(u+t_0)1_{\{u>0\}}$. Hence, we have
\begin{align*}
   \text{Var}((X-t_0)1_{\{X \geq t_0\}})&=\int_{-\infty}^{+\infty} \int_{-\infty}^{+\infty} [F_{X}(\min\{x,y\}+t_0)1_{\{\min\{x,y\}>0\}}\\
    &\hspace{2.5cm}-F_{X}(x+t_0)F_{X}(y+t_0)1_{\{x>0\}}1_{\{y>0\}}]\mathrm{d}x\mathrm{d}y\\
    &=\int_{-\infty}^{+\infty} \int_{-\infty}^{+\infty} [F_{X}(\min\{x-t_0,y-t_0\}+t_0) 1_{\{\min\{x-t_0,y-t_0\}>0\}}\\ &\hspace{2.5cm}-F_{X}(x)F_{X}(y)1_{\{x>t_0\}}1_{\{y>t_0\}}]\mathrm{d}x\mathrm{d}y\\
    &=\int_{t_0}^{+\infty} \int_{t_0}^{+\infty} [F_{X}(\min\{x,y\})-F_{X}(x)F_{X}(y)]\mathrm{d}x\mathrm{d}y.
\end{align*}
For the second equality we used a change of variables.\\

Similarly, by using $F_{(X-t_0)1_{\{X<t_0\}}}(u)=F_X(u+t_0)1_{\{u<0\}}+1_{\{u>0\}}$ we obtain
\begin{align*}
    \text{Var}((X-t_0)1_{\{X < t_0\}})&=\int_{-\infty}^{+\infty} \int_{-\infty}^{+\infty} \Big[ F_X(\min\{x,y\}+t_0)1_{\{\min\{x,y\}<0 \}}+1_{\{\min\{x,y\}>0 \}}\\ &\hspace{1cm} -(F_X(x+t_0) 1_{\{x<0\}}+1_{\{x>0\}})(F_X(y+t_0) 1_{\{y<0\}}+1_{\{y>0\}})\Big] \mathrm{d}x\mathrm{d}y\\
    &=\int_{-\infty}^{+\infty} \int_{-\infty}^{+\infty} \Big[ 1_{\{\min\{x,y\}>0\}}-1_{\{x>0\}}1_{\{y>0\}}\Big] \mathrm{d}x\mathrm{d}y\\ &\hspace{1cm}+ \int_{-\infty}^{+\infty} \int_{-\infty}^{+\infty} \Big[ F_X(\min\{x,y\}+t_0)1_{\{\min\{x,y\}<0 \}}\\
    &\hspace{2cm}-F_X(x+t_0)1_{\{x<0\}}F_X(y+t_0)1_{\{y<0\}}\Big] \mathrm{d}x\mathrm{d}y\\
    &\hspace{1cm}-\int_{-\infty}^{+\infty} \int_{-\infty}^{+\infty} F_X(x+t_0)1_{\{x<0\}}1_{\{y>0\}}\mathrm{d}x\mathrm{d}y\\ &\hspace{1cm}-\int_{-\infty}^{+\infty} \int_{-\infty}^{+\infty} F_X(y+t_0)1_{\{x>0\}}1_{\{y<0\}}\mathrm{d}x\mathrm{d}y\\
    &=\int_{-\infty}^{0} \int_{-\infty}^{0} \Big[F_X(\min\{x,y\}+t_0)-F_X(x+t_0)F_X(y+t_0)\Big]\mathrm{d}x\mathrm{d}y\\
    &=\int_{-\infty}^{t_0} \int_{-\infty}^{t_0} \Big[ F_X(\min\{x,y\})-F_X(x)F_X(y)\Big]\mathrm{d}x\mathrm{d}y.
\end{align*}
Most parts of the long expression in the second equality cancel out. The first term of the expression is equal to zero. The second term can be split into three parts. Two of those parts cancel out with the third and fourth term. The remaining part is given in the third equality. For the last equality, we again made use of a change of variables.\\

Observe that 
$$F_{(X-t_0)1_{\{X \geq t_0\}},(X-t_0)1_{\{X <t_0\}}}(x,y)=\begin{cases}
    0 &\hspace{0.5cm} \text{ if } x<0\\
    F_X(x+t_0) &\hspace{0.5cm} \text{ if } x,y>0\\
    F_X(y+t_0) &\hspace{0.5cm} \text{ if } x>0 \text{ and } y<0.
\end{cases}$$
Because of this, we obtain
\begin{align*}
    &\text{Cov}((X-t_0)1_{\{X \geq t_0\}},(X-t_0) 1_{\{X<t_0\}})\\
    &=\int_{0}^{+\infty} \int_{0}^{+\infty} F_X(x+t_0)\mathrm{d}x\mathrm{d}y + \int_{-\infty}^{0} \int_{0}^{+\infty} F_X(y+t_0)\mathrm{d}x\mathrm{d}y\\ &\hspace{2cm}-\int_{-\infty}^{+\infty} \int_{-\infty}^{+\infty} \Big[ F_X(x+t_0)1_{\{x>0\}}(F_X(y+t_0)1_{\{y<0\}}+1_{\{y>0\}})\Big] \mathrm{d}x\mathrm{d}y\\
    &=\int_{0}^{+\infty} \int_{0}^{+\infty} F_X(x+t_0)\mathrm{d}x\mathrm{d}y + \int_{-\infty}^{0} \int_{0}^{+\infty} F_X(y+t_0)\mathrm{d}x\mathrm{d}y\\ &\hspace{2cm}-\Big(\int_{-\infty}^{0} \int_{0}^{+\infty} 
    F_X(x+t_0)F_X(y+t_0)\mathrm{d}x\mathrm{d}y + \int_{0}^{+\infty} \int_{0}^{+\infty}
    F_X(x+t_0)\mathrm{d}x\mathrm{d}y \Big)\\
    &=\int_{-\infty}^{t_0} \int_{t_0}^{+\infty} [ F_X(y)-F_X(x)F_X(y)]\mathrm{d}x\mathrm{d}y.
\end{align*}
For the last equality, we again made use of a change of variables.\\

Plugging all parts into equation (\ref{eq:remarkVarianceExpectiles}) proves the stated equality.

\section{Simulation studies: additional table and figure}\label{supsec: Simulations}
\subsubsection*{Simulation study 1}
Table \ref{table:MSEtableNormalAndExp} summarizes the simulation results by providing  estimated bias, variance and MSE  of the estimator $T_n^*$. The results for samples drawn from a standard normal distribution are displayed in the first three rows of Table \ref{table:MSEtableNormalAndExp}; and these for samples drawn from a standard exponential distribution in the last three rows.
Note that the mean squared error increases as $\tau$ increases. The largest part of the MSE seems to be due to the variance, at least for the larger values of $\tau$, for which performance is also a bit worse for the exponential distribution. In Figure \ref{FigureAVarNormalVsExponential} the asymptotic variance of the estimator $T_n^*$ is depicted for each of the two underlying distributions, revealing indeed that for values of $\tau$ larger than (approximately) 0.30, the asymptotic variances for the exponential distribution are substantially larger than these for a normal distribution. 
\begin{table}[h]
\caption{Simulation study 1. Mean squared error (MSE), bias and variance of the estimator $T_n^*$. For accompanying results, see Figure \ref{fig:densityEstimates}.}
\label{table:MSEtableNormalAndExp}
\vspace*{0.2 cm}

\noindent
\resizebox{\textwidth}{!}{%
\begin{tabular}{l|lr|lr|lr}
                       & \multicolumn{2}{c|}{$\tau=0.1$} & \multicolumn{2}{c|}{$\tau=0.9$} & \multicolumn{2}{c}{$\tau=0.95$} \\  \cline{2-7} 
\multicolumn{1}{r|}{} & \multicolumn{1}{c}{$n=50$} & \multicolumn{1}{c|}{$n=800$} & \multicolumn{1}{c}{$n=50$} & \multicolumn{1}{c|}{$n=800$} & \multicolumn{1}{c}{$n=50$} & \multicolumn{1}{c}{$n=800$} \\ [0.2cm]
& \multicolumn{6}{c}{$\mathcal{N}(0;1)$} \\ \hline                 
MSE                    & $2.78 \cdot 10^{-2}$                & $1.71 \cdot 10^{-3} $                & $9.66\cdot 10^{-2}$                & $6.41 \cdot 10^{-3}$                & $1.59 \cdot 10^{-1}$                & $1.08 \cdot 10^{-2}$                  \\
Bias                   & $3.53 \cdot 10^{-2}$                 & $1.49 \cdot 10^{-3}$                & $-5.54 \cdot 10^{-2}$                & $2.30 \cdot 10^{-4}$              & $-1.31 \cdot 10^{-1}$                & $-6.43 \cdot 10^{-3}$               \\
Variance               &  $2.65 \cdot 10^{-2}$                & $1.70 \cdot 10^{-3}$                 &   $9.35 \cdot 10^{-2}$                &  $6.41 \cdot 10^{-3}$               & $1.41 \cdot 10^{-1}$                 & $1.08 \cdot 10^{-2}$           \\[0.2cm]
\hline \hline 
\multicolumn{7}{c}{}\\[0.02cm]
 & \multicolumn{6}{c}{$\mbox{Expo}(1)$} \\ \cline{2-7}         
MSE                    &  $4.77 \cdot 10^{-3}$                 & $3.03 \cdot 10^{-4}$                  & $6.89 \cdot 10^{-1}$                 & $4.42 \cdot 10^{-2}$                    & $1.23$                   & $8.71 \cdot 10^{-2}$                    \\
Bias                   &  $1.79 \cdot 10^{-2}$                  & $4.08 \cdot 10^{-4}$                  & $-1.63 \cdot 10^{-1}$                 & $-2.47 \cdot 10^{-2}$                   &  $-3.81 \cdot 10^{-1}$                  &  $-4.34 \cdot 10^{-2}$                  \\
Variance               &  $4.45 \cdot 10^{-3}$                 & $3.02 \cdot 10^{-4}$                  & $6.62 \cdot 10^{-1}$                   & $4.36 \cdot 10^{-2}$                    & $1.07 $                  &  $8.52 \cdot 10^{-2}$                  
\end{tabular}%
}
\end{table}
\vspace*{-0.8 cm}

\noindent
\begin{figure}[h!]
    \centering
\includegraphics[width=0.52\textwidth]{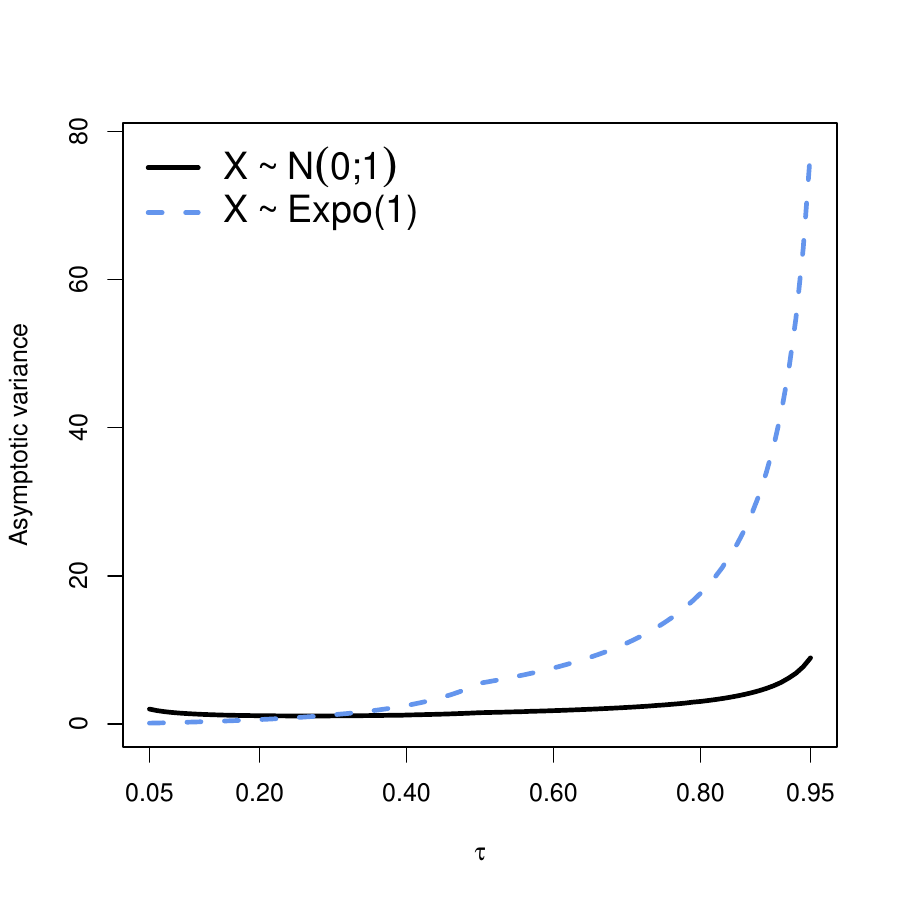}
\vspace*{-0.86 cm}

\noindent
\caption{Simulation study. The asymptotic variances of the estimator $T_n^*$, for an underlying normal distribution  and and an exponential distribution, as a function of $\tau$.}
\label{FigureAVarNormalVsExponential}. 

\end{figure}
\begin{figure}[htb]
    \centering
    \includegraphics[width=0.42\textwidth]{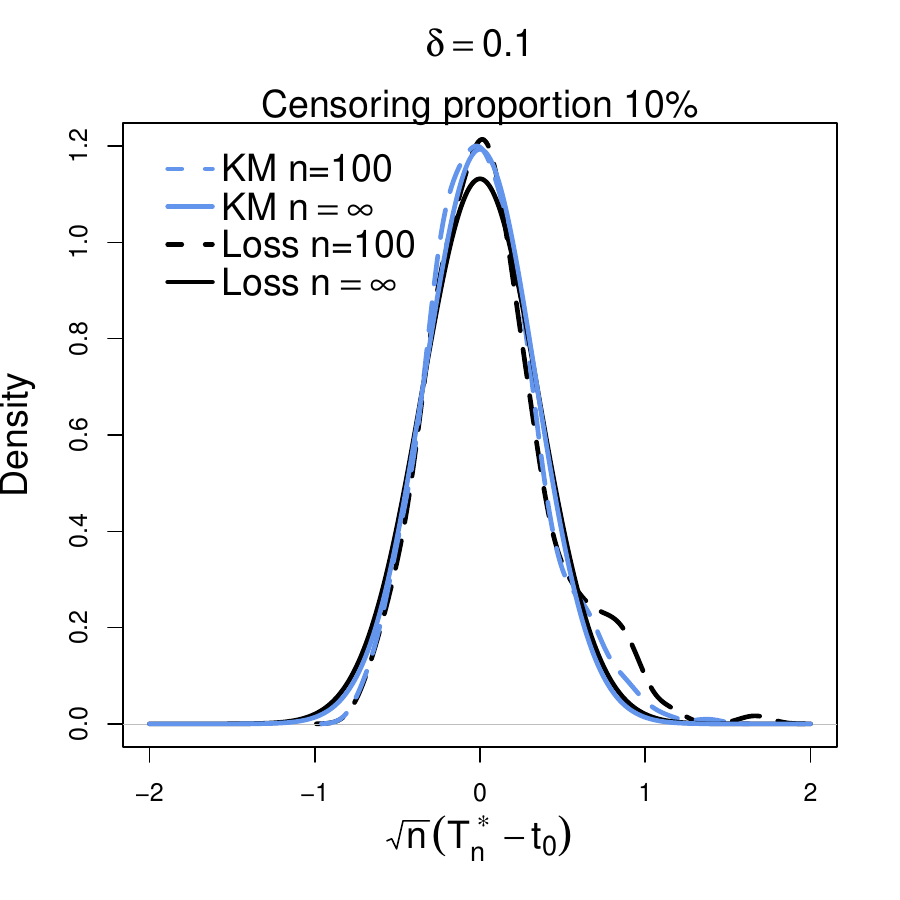}
    \includegraphics[width=0.42\textwidth]{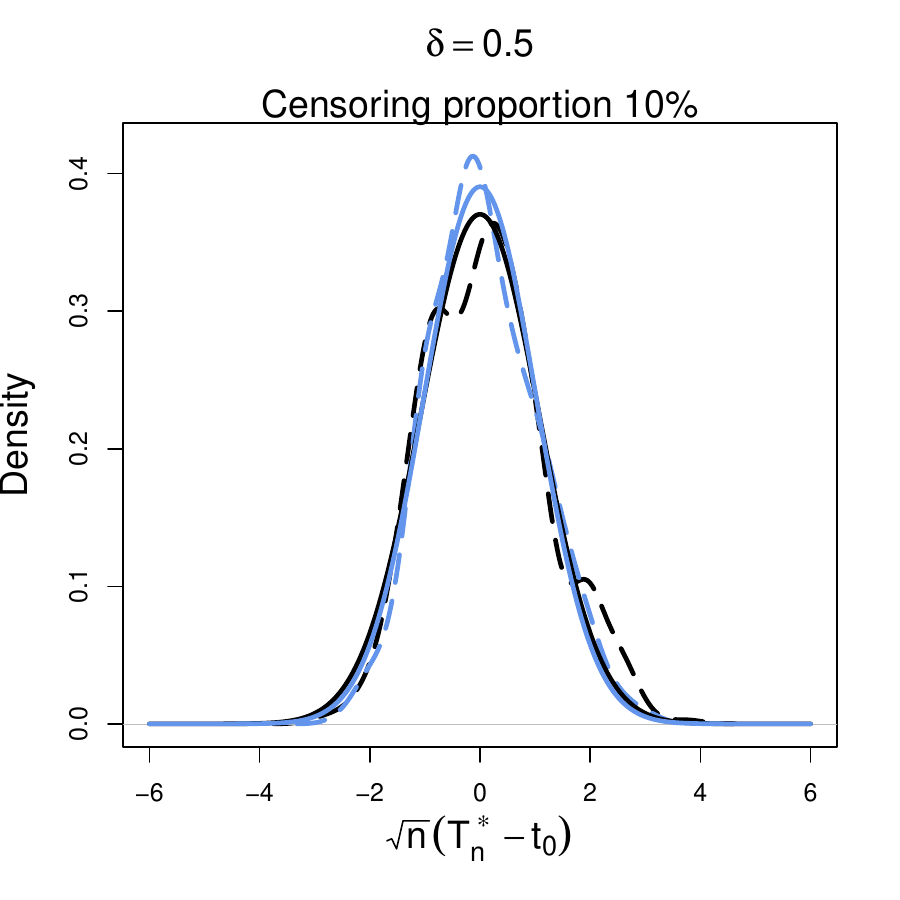}\\
        \includegraphics[width=0.42\textwidth]{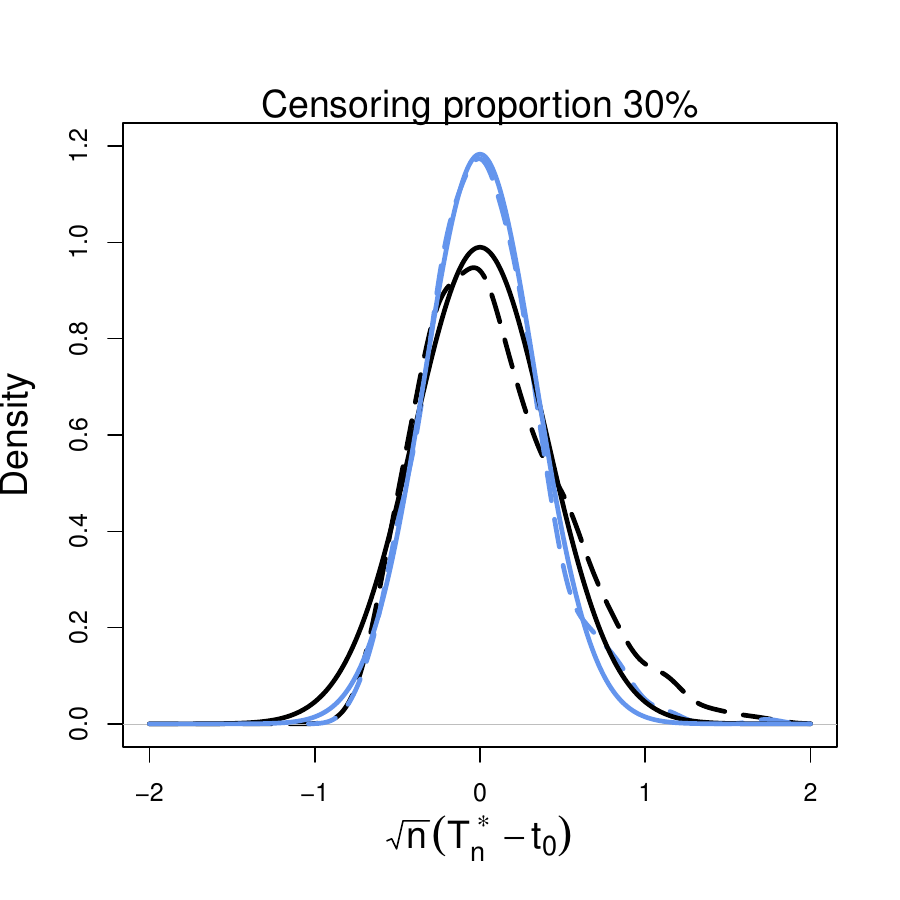}
   \includegraphics[width=0.42\textwidth]{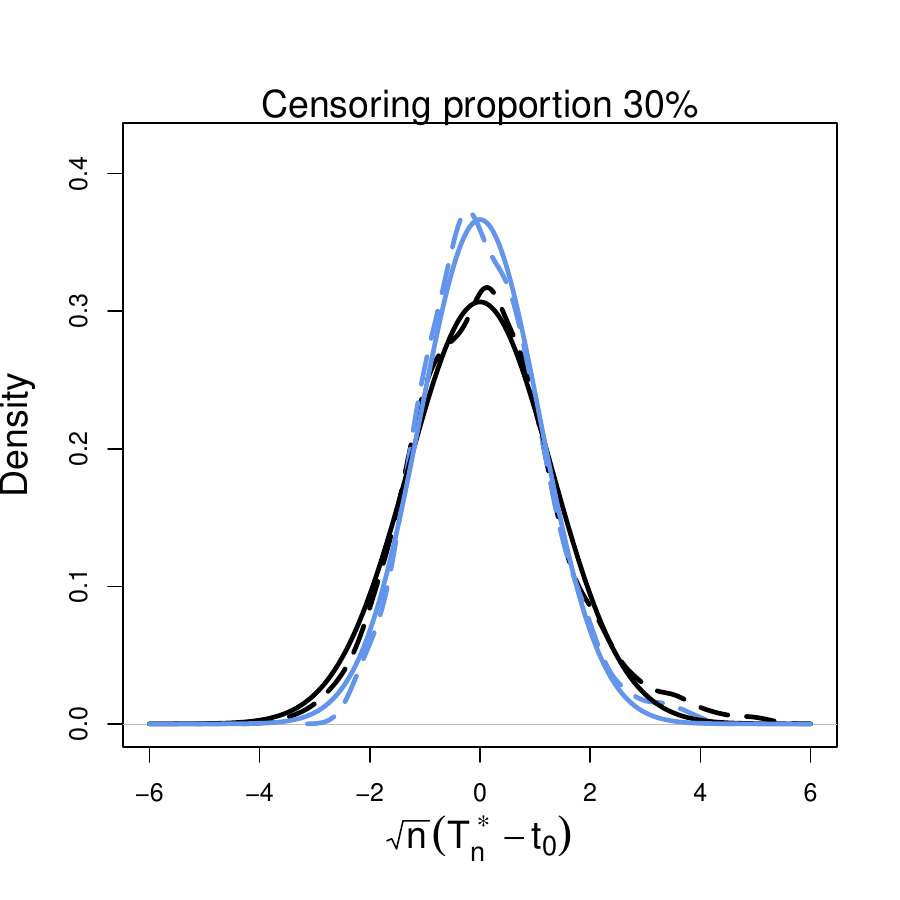}\\       
            \includegraphics[width=0.42\textwidth]{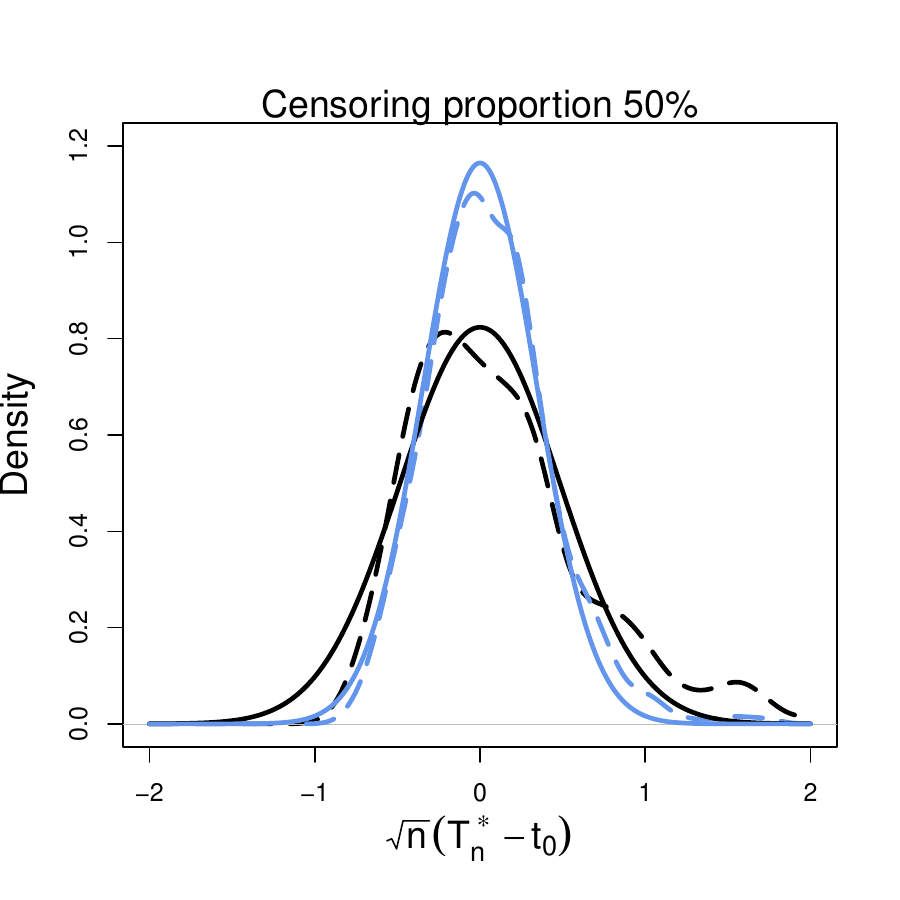}
            \includegraphics[width=0.42\textwidth]{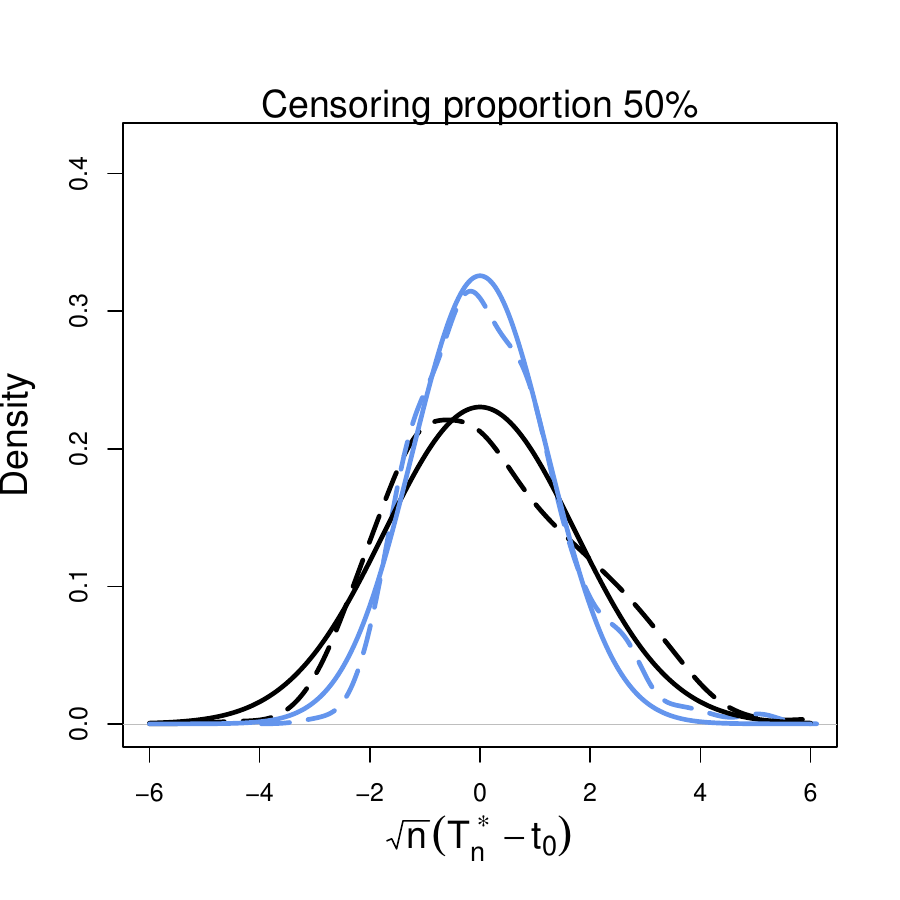}
    \vspace*{-0.44 cm}
    
    \noindent
    \caption{Simulation study 4. Finite-sample distribution of the (loss-based) estimator $T_n^*$ and of the KM-based estimator.
    Density estimates of 500 values of $\sqrt{n}(T_n^*-t_0)$ for $n=100$ in dashed black lines. The blue dashed lines are for the KM-based estimator. The solid black line is the normal density with variance equal to   (\ref{eq:asymptoticVarianceMyCensoring}). The solid blue line represents the asymptotic distribution of the KM-based estimator.}
    \label{fig:adaptiveQuantileLossDensityPlots}
\end{figure}
\subsubsection*{Simulation study 4}
Figure \ref{fig:adaptiveQuantileLossDensityPlots} displays in dashed black lines the density estimates (based on 500 values) of our loss-based estimator $\sqrt{n}(T_n^*-t_0)$ for $n=100$. The solid black line is the normal density with variance equal to (\ref{eq:asymptoticVarianceMyCensoring}).
The dashed line and solid blue line represent the finite-sample distribution and the asymptotic distribution of the KM-based estimator.
The left (respectively right) panels are results for estimating the  $0.1$th
(respectively $\delta=0.5$)  quantile. The different rows display results for different censoring proportions.

Since overall the dashed black lines are quite close to the solid black lines, it appears that our loss-based estimator is asymptotically normal with variance as in (\ref{eq:asymptoticVarianceMyCensoring}). For the different censoring proportions, the asymptotic variance of both estimators is larger when $\delta$ is $0.5$. The KM-based estimator shows a smaller (finite-sample) variance compared to the loss-based estimator. As the censoring proportion increases, the (finite-sample) distributions of the two different estimators become more distinct.  As to be expected, there is an increase in variance for both estimators when the censoring proportion increases.

\FloatBarrier
\section{Real data example: additional plots}
\label{supsec: RealDataExample}

\begin{figure}[htb]
    \centering
\includegraphics[width=0.48\textwidth]{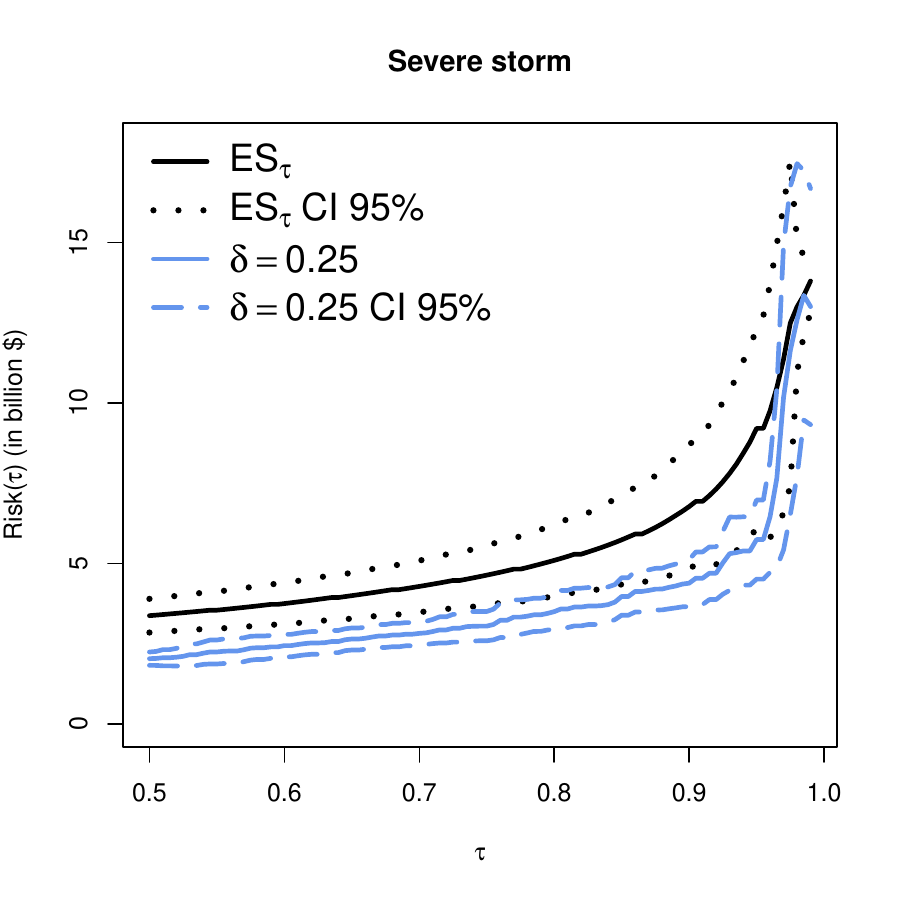}
\includegraphics[width=0.48\textwidth]{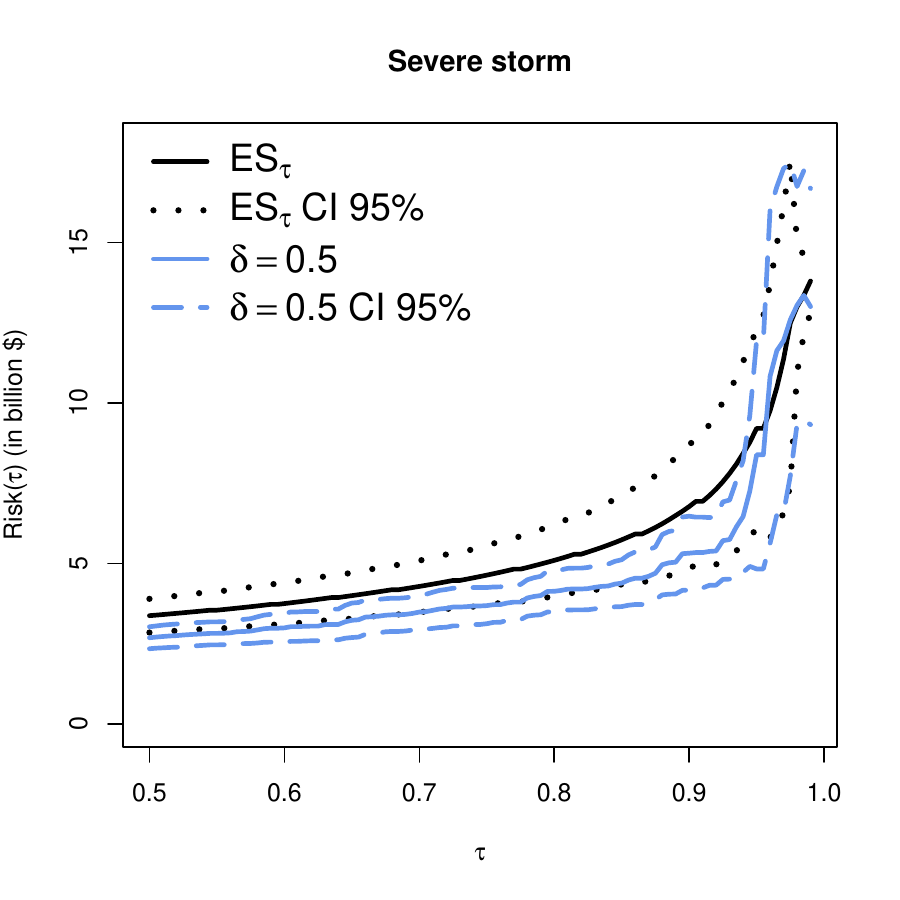}
\vspace*{-0.68 cm}

\noindent
    \caption{US disaster data. Estimates of severe storms, as in the top right panel of Figure \ref{fig:histogramAndQuantilesESV2}, together with 95\% asymptotic pointwise confidence intervals for the estimated $0.25$th and $0.50$h  quantile curves. The asymptotic pointwise confidence intervals are constructed using (\ref{eq:quantileLossEstimatorAsymptoticVariance}). }
    \label{fig:SevereStormsCISTwoOtherQuantiles}
\end{figure}
\begin{figure}[htb]
    \centering
    \includegraphics[width=0.44\textwidth]{tropicalCycloneQuantilesES.pdf}
        \includegraphics[width=0.44\textwidth]{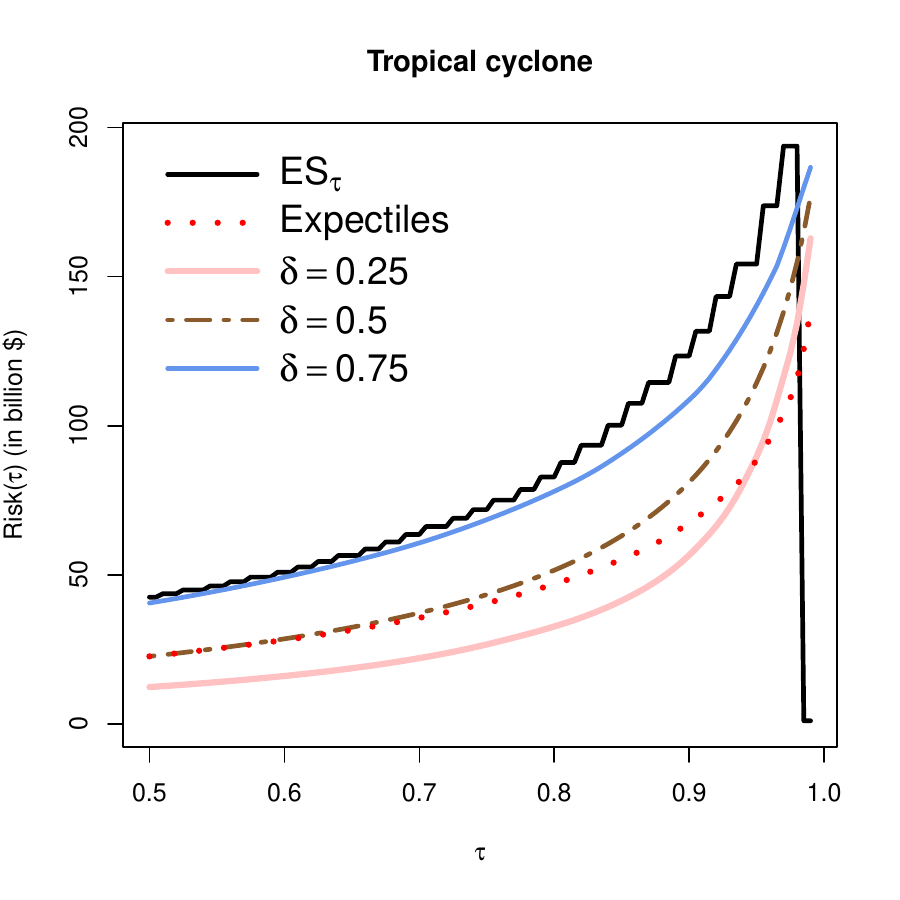}\\[0.02 ex]
 \includegraphics[width=0.44\textwidth]{floodingQuantilesES.pdf}
        \includegraphics[width=0.44\textwidth]{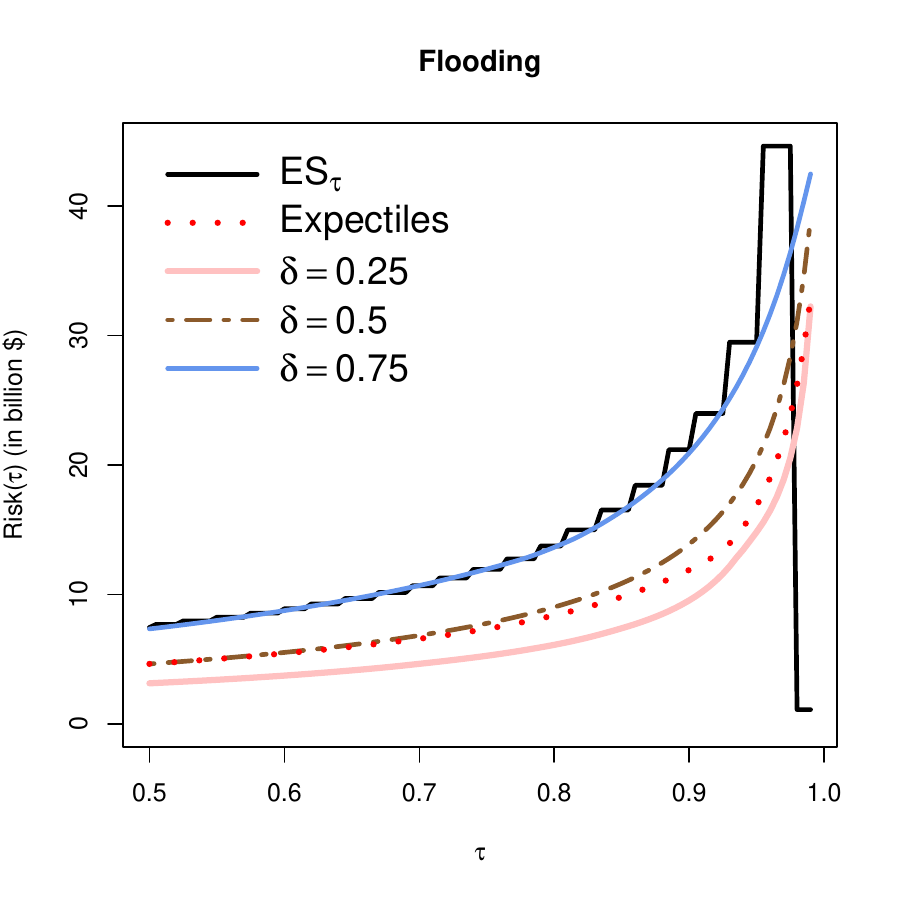}\\ [0.02 ex]
         \includegraphics[width=0.44\textwidth]{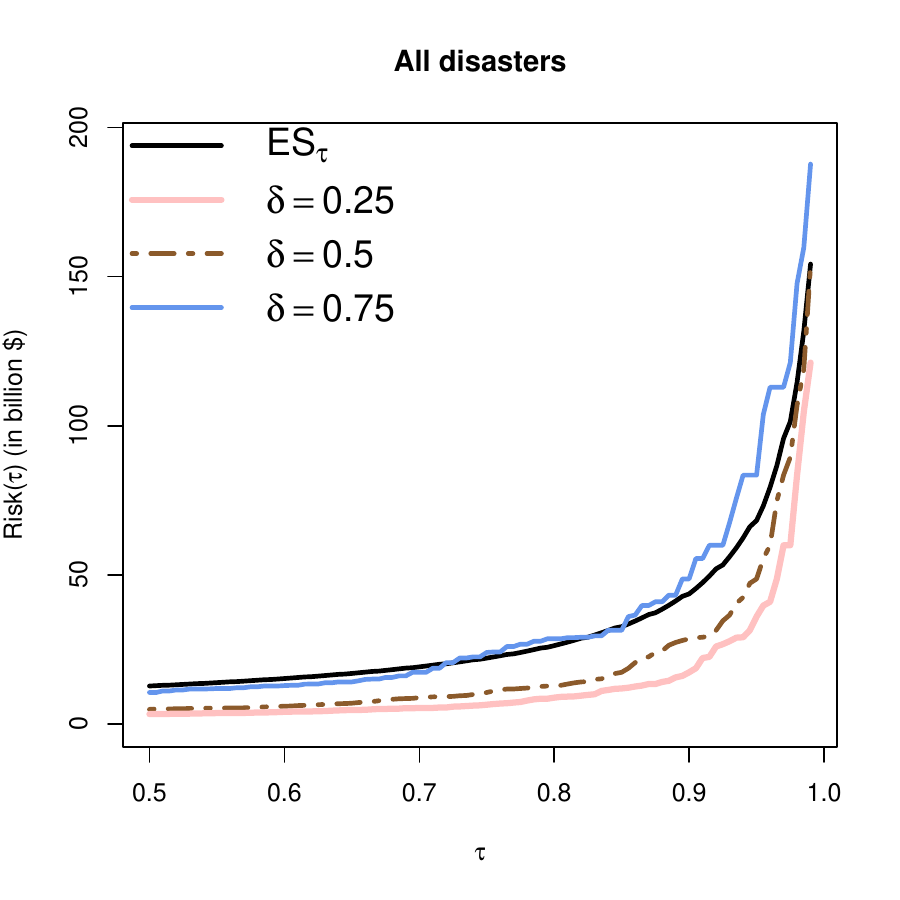}
 \includegraphics[width=0.44\textwidth]{allDisastersExpectilesExtremiles.pdf}       
      \vspace*{-0.86 cm}
    
    \noindent
    \caption{US disaster data. Tropical cyclones (top panels), Floodings (middle panels) and all disasters (bottom panels). Left: estimates of various ($\delta=0.25, 0.5, 0.75$) expectiles of $X_{\sD_{\btau}}$ with $D_{\btau}$ corresponding to expected shortfall. Right: estimates of various ($\delta=0.25,0.5,0.75$) expectiles of  $X_{\sD_{\btau}}$ with $D_{\btau}=K_{\btau}$, i.e. corresponding to extremiles. In black, estimates of the usual expectiles. }
    \label{fig:ExpectilesAndExtremilesESFlooding}
\end{figure}
\FloatBarrier

\end{document}